\definecolor{stroke1}{HTML}{2574A9} % This color is used as the standard color to highlight things.
\date{}
    \def\IfEmptyTF#1%
\relax\detokenize{#1}\relax%
\NewDocumentCommand{\mathOrText}{m}
{%
    \ensuremath{#1}\xspace%
}
\let\originalleft\left
\let\originalright\right
\renewcommand{\left}{\mathopen{}\mathclose\bgroup\originalleft}
\renewcommand{\right}{\aftergroup\egroup\originalright}
    \DeclareRobustCommand{\bfseries}%
    {%
        \not@math@alphabet\bfseries\mathbf%
        \fontseries\bfdefault\selectfont%
        \boldmath%
    }
\crefname{ineq}{inequality}{inequalities}
\crefname{term}{term}{terms}
\crefname{cond}{condition}{conditions}
\crefname{assume}{assumption}{assumptions}
\let\oldfootnote\footnote
\newlength{\spaceBeforeFootnote} % Denotes the space before the footnote mark in em.
\newlength{\spaceAfterFootnote}  % Denotes the space after the footnote mark in em.
\RenewDocumentCommand{\footnote}{o o o m}%
{%
    \IfNoValueTF{#1}%
    {%
        \oldfootnote{#4}%
    }%
    {%
        \setlength{\spaceBeforeFootnote}{\IfEmptyTF{#1}{0}{#1} em}%
        \IfNoValueTF{#2}%
        {%
            \hspace*{\spaceBeforeFootnote}\oldfootnote{#4}%
        }%
        {%
            \setlength{\spaceAfterFootnote}{\IfEmptyTF{#2}{0}{#2} em}%
            \hspace*{\spaceBeforeFootnote}\IfNoValueTF{#3}{\oldfootnote{#4}}{\oldfootnote[#3]{#4}}\hspace*{\spaceAfterFootnote}%
        }%
    }%
}
\declaretheoremstyle
[
   	spaceabove = \topsep,
   	spacebelow = \topsep,
   	headfont = \bfseries,
   	headformat = \textcolor{stroke1}{$\blacktriangleright$} \NAME~\NUMBER \NOTE,
   	notefont = \bfseries,
   	notebraces = {(}{)},
   	bodyfont = \normalfont,
   	postheadspace = 0.5 em,
   	qed = \textcolor{stroke1}{\bfseries$\blacktriangleleft$},
]
{myTheoremStyle}
\declaretheorem
[
    style = myTheoremStyle,
    name = Proposition,
    sharenumber = conjecture,
]
{proposition}
\declaretheorem
[
   	style = myTheoremStyle,
   	name = Lemma,
    sharenumber = conjecture,
]
{lemma}
\declaretheorem
[
   	style = myTheoremStyle,
   	name = Corollary,
    sharenumber = conjecture,
]
{corollary}
\declaretheorem
[
   	style = myTheoremStyle,
   	name = Theorem,
    sharenumber = conjecture,
]
{theorem}
\declaretheorem
[
    style = myTheoremStyle,
    name = Remark,
    sharenumber = conjecture,
]
{remark}
\declaretheorem
[
    style = myTheoremStyle,
    name = Observation,
    sharenumber = conjecture,
]
{observation}
\NewDocumentCommand{\functionTemplate}{m m m m o}%
{%
    \IfNoValueTF{#5}%
    {%
        \mathOrText{#1\left#2{#4}\right#3}%
    }%
    {%
        \mathOrText{#1#5#2{#4}#5#3}%
    }%
}
\newcommand*{\leftBracketType}{(}
\newcommand*{\rightBracketType}{)}
\NewDocumentCommand{\createFunction}{m m o o}%
{%
    \renewcommand*{\leftBracketType}{\IfNoValueTF{#3}{(}{#3}}%
    \renewcommand*{\rightBracketType}{\IfNoValueTF{#4}{)}{#4}}%
    \NewDocumentCommand{#1}{o o}%
    {%
        \IfNoValueTF{##1}%
        {%
            \mathOrText{#2}%
        }%
        {%
            \functionTemplate{#2}{\leftBracketType}{\rightBracketType}{##1}[##2]%
        }%
    }%
}
\DeclareDocumentCommand{\probabilisticFunctionTemplate}{m m O{} o}
{%
    \functionTemplate{#1}%
    {\lbrack}%
    {\rbrack}%
    {#2\IfEmptyTF{#3}{}{\ \IfNoValueTF{#4}{\left}{#4}\vert\ \vphantom{#2}#3\IfNoValueTF{#4}{\right.}{}}}%
    [#4]%
}
\newcommand*{\N}{\mathOrText{\mathds{N}}}
\newcommand*{\Z}{\mathOrText{\mathds{Z}}}
\newcommand*{\R}{\mathOrText{\mathds{R}}}
\newcommand*{\indicatorFunctionSymbol}{\mathds{1}}
\RenewDocumentCommand{\Pr}{m O{} o}%
{%
    \probabilisticFunctionTemplate{\mathrm{Pr}}{#1}[#2][#3]%
}
\NewDocumentCommand{\E}{m O{} o}%
{%
    \probabilisticFunctionTemplate{\mathrm{E}}{#1}[#2][#3]%
}
\NewDocumentCommand{\Var}{m O{} o}%
{%
    \probabilisticFunctionTemplate{\mathrm{Var}}{#1}[#2][#3]%
}
\DeclareDocumentCommand{\bigO}{m o}%
{%
    \functionTemplate{\mathrm{O}}{(}{)}{#1}[#2]%
}
\DeclareDocumentCommand{\smallO}{m o}%
{%
    \functionTemplate{\mathrm{o}}{(}{)}{#1}[#2]%
}
\DeclareDocumentCommand{\bigTheta}{m o}%
{%
    \functionTemplate{\upTheta}{(}{)}{#1}[#2]%
}
\DeclareDocumentCommand{\bigOmega}{m o}%
{%
    \functionTemplate{\upOmega}{(}{)}{#1}[#2]%
}
\DeclareDocumentCommand{\smallOmega}{m o}%
{%
    \functionTemplate{\upomega}{(}{)}{#1}[#2]%
}
\DeclareDocumentCommand{\eulerE}{o}%
{%
    \mathOrText{\mathrm{e}\IfNoValueTF{#1}{}{^{#1}}}%
}
\DeclareDocumentCommand{\poly}{m o}%
{%
    \functionTemplate{\mathrm{poly}}{(}{)}{#1}[#2]%
}
\createFunction{\id}{\mathrm{id}}
\NewDocumentCommand{\ind}{m o o}%
{%
    \IfNoValueTF{#2}%
    {%
        \mathOrText{\indicatorFunctionSymbol_{#1}}%
    }%
    {%
        \functionTemplate{\indicatorFunctionSymbol_{#1}}{(}{)}{#2}[#3]%
    }%
}
\DeclareDocumentCommand{\dom}{m o}%
{%
    \functionTemplate{\mathrm{dom}}{(}{)}{#1}[#2]%
}
\DeclareDocumentCommand{\rng}{m o}%
{%
    \functionTemplate{\mathrm{rng}}{(}{)}{#1}[#2]%
}
\DeclareDocumentCommand{\d}{o}%
{%
    \mathrm{d}\IfNoValueTF{#1}{}{^{#1}}%
}
\DeclareDocumentCommand{\set}{m m o}%
{%
    \mathOrText{\IfNoValueTF{#3}{\left}{#3}\{#1\ \IfNoValueTF{#3}{\left}{#3}\vert\ \vphantom{#1}#2\IfNoValueTF{#3}{\right.}{}\IfNoValueTF{#3}{\right}{#3}\}}%
}
\newcommand\sbullet[1][.6]{\mathbin{\vcenter{\hbox{\scalebox{#1}{$\bullet$}}}}}
\newcommand*{\separator}{$\sbullet$\xspace}
\newcommand*{\concatenation}{\mathOrText{}}
\crefname{observation}{observation}{observations}
\declaretheorem
[
	style = myTheoremStyle,
	name = Assumption,
	sharenumber = conjecture,
]
{assumption}
\newcommand*{\mathbold}[1]{\mathOrText{\bm{#1}}}
\newcommand*{\indicator}[1]{\mathOrText{\indicatorFunctionSymbol{\left\{#1\right\}}}}
\newcommand*{\lOneNorm}[1]{\mathOrText{\left\lVert#1\right\rVert_{1}}}
\newcommand*{\size}[1]{\mathOrText{\left\vert#1\right\vert}}
\newcommand*{\error}{\mathOrText{\varepsilon}}
\newcommand*{\dtv}[2]{\mathOrText{d_{\text{TV}} \left(#1, #2\right)}}
\newcommand*{\absolute}[1]{\mathOrText{\left\vert#1\right\vert}}
\DeclareDocumentCommand{\order}{m o}
{
	\mathOrText{\text{ord}_{#1}\IfNoValueF{#2}{\left(#2\right)}}
}
\newcommand*{\defeq}{\mathOrText{\coloneqq}}
\newcommand*{\graph}{\mathOrText{G}}
\newcommand*{\edges}{\mathOrText{E}}
\newcommand*{\vertices}{\mathOrText{V}}
\newcommand*{\vertex}{\mathOrText{v}}
\DeclareDocumentCommand{\degree}{o}
{
	\mathOrText{\Delta\IfNoValueF{#1}{\left(#1\right)}}
}
\DeclareDocumentCommand{\neighbors}{m o}%
{%
	\mathOrText{N\IfNoValueTF{#2}{}{_{#2}}\left(#1\right)}%
}
\DeclareDocumentCommand{\neighborsClosed}{m o}%
{%
	\mathOrText{N\IfNoValueTF{#2}{}{_{#2}}\left[#1\right]}%
}
\newcommand*{\partitionFunctionSymbol}{\mathOrText{Z}}
\newcommand*{\gibbsSymbol}{\mathOrText{\mu}}
\newcommand*{\weightSymbol}{\mathOrText{w}}
\newcommand*{\composition}{\mathOrText{\circ}}
\newcommand*{\NP}{\mathOrText{\textsf{NP}}}
\newcommand*{\weightSymbolHC}{\mathOrText{\lambda}}
\DeclareDocumentCommand{\weight}{o}
{
	\mathOrText{
		\weightSymbolHC \IfNoValueF{#1}{\left(#1\right)}
	}
}
\newcommand*{\criticalWeight}[1]{\mathOrText{\weight_{\text{c}}\left(#1\right)}}
\newcommand*{\independentSetsSymbol}{\mathOrText{\mathcal{I}}}
\DeclareDocumentCommand{\independentSets}{m o}
{
	\mathOrText{
		\independentSetsSymbol\IfNoValueF{#2}{_{#2}}\left(#1\right)
	}
}
\newcommand*{\independentSet}{\mathOrText{I}}
\newcommand*{\hcPartitionFunction}[2]{\mathOrText{\partitionFunctionSymbol_{\text{HC}}\left(#1, #2\right)}}
\DeclareDocumentCommand{\hcGibbs}{m m o}
{
	\mathOrText{
		\gibbsSymbol_{\text{HC}}^{\left(#1, #2\right)} \IfNoValueF{#3}{\left(#3\right)}
	}
}
\DeclareDocumentCommand{\hcGibbsAppx}{m m o}
{
	\mathOrText{
		\widehat{\gibbsSymbol}_{\text{HC}}^{\left(#1, #2\right)} \IfNoValueF{#3}{\left(#3\right)}
	}
}
\newcommand*{\mhcPartitionFunction}[2]{\mathOrText{\partitionFunctionSymbol_{\text{MHC}}\left(#1, #2\right)}}
\DeclareDocumentCommand{\mhcGibbs}{m m o}
{
	\mathOrText{
		\gibbsSymbol_{\text{MHC}}^{\left(#1, #2\right)} \IfNoValueF{#3}{\left(#3\right)}
	}
}
\DeclareDocumentCommand{\countingFunc}{o}
{
	\mathOrText{
		c\IfNoValueF{#1}{\left(#1\right)}
	}
}
\newcommand*{\scaleUp}{\mathOrText{\alpha}}
\newcommand*{\dimension}{\mathOrText{d}}
\newcommand*{\sidelength}{\mathOrText{\ell}}
\DeclareDocumentCommand{\volume}{o}
{
	\mathOrText{
		\mathds{V} \IfNoValueF{#1}{_{#1}}
	}
}
\DeclareDocumentCommand{\numParticles}{o}
{
	\mathOrText{
		k \IfNoValueF{#1}{_{#1}}
	}
}
\DeclareDocumentCommand{\fugacity}{o}
{
	\mathOrText{
		\lambda \IfNoValueF{#1}{\left(#1\right)}
	}
}
\DeclareDocumentCommand{\radius}{o}
{
	\mathOrText{
		r \IfNoValueF{#1}{\left(#1\right)}
	}
}
\newcommand*{\interactionMatrixSymbol}{\mathOrText{R}}
\DeclareDocumentCommand{\interactionMatrix}{o o}
{
	\mathOrText{
		\interactionMatrixSymbol \IfNoValueF{#1}{\IfNoValueF{#2}{\left(#1, #2\right)}}
	}
}
\DeclareDocumentCommand{\interactionMatrixHS}{o o}
{
	\mathOrText{
		\interactionMatrixSymbol_{\text{HS}} \IfNoValueF{#1}{\IfNoValueF{#2}{\left(#1, #2\right)}}
	}
}
\DeclareDocumentCommand{\interactionMatrixWR}{o o}
{
	\mathOrText{
		\interactionMatrixSymbol_{\text{WR}} \IfNoValueF{#1}{\IfNoValueF{#2}{\left(#1, #2\right)}}
	}
}
\DeclareDocumentCommand{\interactionMatrixZero}{o o}
{
	\mathOrText{
		\interactionMatrixSymbol_{0} \IfNoValueF{#1}{\IfNoValueF{#2}{\left(#1, #2\right)}}
	}
}
\DeclareDocumentCommand{\interactions}{o o}
{
	\mathOrText{
		H \IfNoValueF{#1}{\IfNoValueF{#2}{\left(#1, #2\right)}}
	}
}
\DeclareDocumentCommand{\volumeMatrix}{o o}
{
	\mathOrText{
		B \IfNoValueF{#1}{\IfNoValueF{#2}{\left(#1, #2\right)}}
	}
}
\newcommand*{\numTypes}{\mathOrText{q}}
\newcommand*{\generalPartitionFunction}[3]{\mathOrText{\partitionFunctionSymbol\left(#1, #2, #3 \right)}}
\DeclareDocumentCommand{\generalGibbs}{m m m o}
{
	\mathOrText{
		\gibbsSymbol^{\left(#1, #2, #3\right)} \IfNoValueF{#4}{\left(#4\right)}
	}
}
\DeclareDocumentCommand{\generalWeight}{m m o}
{
	\mathOrText{
		\weightSymbol^{\left(#1, #2\right)} \IfNoValueF{#3}{\left(#3\right)}
	}
}
\newcommand*{\typeSymbol}{\mathOrText{\tau}}
\DeclareDocumentCommand{\typeOf}{o}
{
	\mathOrText{
		\typeSymbol \IfNoValueF{#1}{\left(#1\right)}
	}
}
\DeclareDocumentCommand{\invTypeOf}{o}
{
	\mathOrText{
		\typeSymbol^{-1} \IfNoValueF{#1}{\left(#1\right)}
	}
}
\newcommand*{\combineAssignments}{\mathOrText{\oplus}}
\newcommand*{\validSymbol}{\mathOrText{D}}
\DeclareDocumentCommand{\valid}{m m o}%
{%
	\mathOrText{\validSymbol_{#1}^{\left(#2\right)}
		\IfNoValueF{#3}{\left(#3\right)}}
}
\newcommand*{\dist}[2]{\mathOrText{d \left(#1, #2 \right)}}
\newcommand*{\lebesgueSymbol}{\mathOrText{\nu}}
\DeclareDocumentCommand{\lebesgue}{m o}%
{%
	\mathOrText{\lebesgueSymbol^{#1} \IfNoValueTF{#2}{}{\left(#2\right)}}%
}
\newcommand*{\vol}[1]{\mathOrText{\lebesgueSymbol\left(#1\right)}}
\DeclareDocumentCommand{\ball}{m o}%
{%
	\mathOrText{\mathds{B}\IfNoValueF{#2}{_{#2}}\left(#1\right)}%
}
\DeclareDocumentCommand{\integerSphere}{m o}%
{%
	\mathOrText{b_{#1}\IfNoValueTF{#2}{}{\left(#2\right)}}%
}
\DeclareDocumentCommand{\projection}{m o}%
{%
	\mathOrText{\varrho_{#1}\IfNoValueF{#2}{\left(#2\right)}}%
}
\newcommand*{\numParticlesFrom}[2]{\mathOrText{\numParticles_{#1}\left(#2\right)}}
\newcommand*{\degreeError}{\mathOrText{\gamma}}
\newcommand*{\permutationSymbol}{\mathOrText{\pi}}
\DeclareDocumentCommand{\permutation}{o}%
{%
	\mathOrText{\permutationSymbol\IfNoValueF{#1}{\left(#1\right)}}%
}
\DeclareDocumentCommand{\permutationOf}{m o}%
{%
	\mathOrText{\permutationSymbol_{#1}\IfNoValueF{#2}{\left(#2\right)}}%
}
\DeclareDocumentCommand{\intermediateWeight}{m m m o}{
	\mathOrText{
		\weightSymbol^{\left(#1, #2\right)}_{#3}\IfNoValueF{#4}{\left(#4\right)}
	}
}
\newcommand*{\hsPartitionFunction}[3]{\mathOrText{\partitionFunctionSymbol_{\text{HS}}\left(#1, #2, #3\right)}}
\DeclareDocumentCommand{\hsGibbs}{m m m o}
{
	\mathOrText{
		\gibbsSymbol_{\text{HS}}^{\left(#1, #2, #3\right)} \IfNoValueF{#4}{\left(#4\right)}
	}
}
\newcommand*{\wrPartitionFunction}[3]{\mathOrText{\partitionFunctionSymbol_{\text{WR}}\left(#1, #2, #3\right)}}
\DeclareDocumentCommand{\wrGibbs}{m m m o}
{
	\mathOrText{
		\gibbsSymbol_{\text{WR}}^{\left(#1, #2, #3\right)} \IfNoValueF{#4}{\left(#4\right)}
	}
}
\newcommand*{\hcPPGraph}[1]{\mathOrText{\graph_{#1}}}
\DeclareDocumentCommand{\hcPPVertices}{m o}%
{%
	\mathOrText{V_{#1}\IfNoValueF{#2}{^{\left(#2\right)}}}%
}
\newcommand*{\hcPPEdges}[1]{\mathOrText{\edges_{#1}}}
\DeclareDocumentCommand{\hcPPWeight}{m o o}%
{%
	\mathOrText{\weightSymbolHC_{#1}\IfNoValueF{#2}{\IfNoValueTF{#3}{\left(#2\right)}{#3(#2#3)}}}%
}
\newcommand*{\hcPPVertex}[2]{\mathOrText{v_{#1}^{(#2)}}}
\DeclareDocumentCommand{\hcPPDegree}{m o o}%
{%
	\mathOrText{\degree_{#1}
	\IfNoValueF{#3}{^{\left(#3\right)}}
	\IfNoValueF{#2}{\left(#2\right)}}%
}
\DeclareDocumentCommand{\hcPPValidContinuous}{m m m o}{
	\mathOrText{\validSymbol_{#1}^{\left(#2, #3\right)}
		\IfNoValueF{#4}{\left(#4\right)}}
}
\DeclareDocumentCommand{\mhcPPValidContinuous}{m m m o}{
	\mathOrText{\widehat{\validSymbol}_{#1}^{\left(#2, #3\right)}
		\IfNoValueF{#4}{\left(#4\right)}}
}
\DeclareDocumentCommand{\hcPPWeightContinuous}{m m m o}{
	\mathOrText{
		\weightSymbol^{\left(#1, #2\right)}_{#3}\IfNoValueF{#4}{\left(#4\right)}
	}
}
\DeclareDocumentCommand{\hcPPGibbsContinuous}{m m m m o}{
	\mathOrText{
		\gibbsSymbol^{\left(#1, #2, #3\right)}_{#4}\IfNoValueF{#5}{\left(#5\right)}
	}
}
\DeclareDocumentCommand{\hcPPGibbsContinuousAppx}{m m m m o}{
	\mathOrText{
		\widehat{\gibbsSymbol}^{\left(#1, #2, #3\right)}_{#4}\IfNoValueF{#5}{\left(#5\right)}
	}
}
\DeclareDocumentCommand{\hcPPPartitionFunctionContinuous}{m m m m}{
	\mathOrText{
		\partitionFunctionSymbol_{#4}\left(#1, #2, #3\right)
	}
}
\DeclareDocumentCommand{\allocationPartitionFunction}{m m m m}{
	\mathOrText{
		\partitionFunctionSymbol_{#4}\left(#1, #2, #3\right)
	}
}
\newcommand*{\resolution}{\mathOrText{\rho}}
\newcommand*{\canonicalPoints}[1]{\mathOrText{X_{#1}}}
\newcommand*{\grid}[1]{\mathOrText{\mathds{G}_{#1}}}
\newcommand*{\discretizationError}{\mathOrText{\varepsilon_{\text{D}}}}
\newcommand*{\volumeError}{\mathOrText{\delta}}
\newcommand*{\volumeScaling}{\mathOrText{\volumeError}}
\newcommand*{\distanceError}{\mathOrText{\varepsilon}}
\newcommand*{\numPartitions}{\mathOrText{m}}
\newcommand*{\numPoints}{\mathOrText{n}}
\newcommand*{\allocationSymbol}{\mathOrText{\Phi}}
\DeclareDocumentCommand{\allocation}{o}%
{%
	\mathOrText{\allocationSymbol\IfNoValueF{#1}{\left(#1\right)}}%
}
\DeclareDocumentCommand{\invAllocation}{o}%
{%
	\mathOrText{\allocationSymbol^{-1}\IfNoValueF{#1}{\left(#1\right)}}%
}
\newcommand*{\allocationProb}{\mathOrText{p}}
\DeclareDocumentCommand{\middleFunction}{o}%
{%
	\mathOrText{f\IfNoValueF{#1}{\left(#1\right)}}%
}
\newcommand*{\genericError}{\mathOrText{\varepsilon}}
\newcommand*{\genericFailingProb}{\mathOrText{\delta}}
\newcommand*{\randomVariable}{\mathOrText{X}}
\newcommand*{\modifiedRandomVariable}{\mathOrText{Y}}
\newcommand*{\samplingError}{\mathOrText{\varepsilon_{\text{S}}}}
\newcommand*{\approxError}{\mathOrText{\varepsilon_{\text{A}}}}
\newcommand*{\fugacityGap}{\mathOrText{\alpha}}
\title{Algorithms for hard-constraint point processes via discretization}
 \author{Tobias Friedrich$^{*}$ \and Andreas Göbel$^{*}$ \and Maximilian Katzmann$^{*}$ \and Martin~S. Krejca$^{\dagger}$ \and Marcus Pappik$^{*}$}
\begin{document}

\maketitle
\vspace*{-4 ex}
\begin{abstract}
We study the algorithmic applications of a natural discretization for the hard-sphere model and the Widom--Rowlinson model in a region of $\dimension$-dimensional Euclidean space $\volume \subset \R^{\dimension}$.
These continuous models are frequently used in statistical physics to describe mixtures of one or multiple particle types subjected to hard-core interactions.
For each type, particles are distributed according to a Poisson point process with a type specific activity parameter, called \emph{fugacity}.
The \emph{Gibbs distribution} over all possible system states is characterized by the mixture of these point processes conditioned that no two particles being closer than some type-dependent distance threshold.
A key part in better understanding the Gibbs distribution is its normalizing constant, called \emph{partition function}.

We give sufficient conditions that the partition function of a discrete hard-core model on a geometric graph based on a point set $X \subset \volume$ closely approximates those of such continuous models.
Previously, such a result was only shown for the hard-sphere model on cubic regions $\volume = [0, \sidelength)^{\dimension}$ when the size of $X$ is exponential in the volume of the region $\vol{\volume}$, limiting algorithmic applications.
In the same setting, our refined analysis only requires a quadratic number of points, which we argue to be tight.

We use our improved discretization results to approximate the partition functions of the hard-sphere model and the Widom-Rowlinson efficiently in $\vol{\volume}$.
For simplicity, we state our algorithmic results for cubic regions although most of our technical arguments extend to broader classes of regions.
For the hard-sphere model, we obtain the first quasi-polynomial deterministic approximation algorithm for the entire fugacity regime for which, so far, only randomized approximations are known.
Furthermore, we simplify a recently introduced fully polynomial randomized approximation algorithm.
For high dimensions, the fugacity regime we achieve is consistent with the best known bound for uniqueness of the Gibbs measure.
Similarly, we obtain the best known approximation bounds for the Widom--Rowlinson model, both randomized and deterministically.
Moreover, we obtain efficient approximate samplers for the Gibbs distributions of the respective spin systems within the same fugacity regimes.

% For our model we develop a discretization scheme reducing it to an instance of the hard-core model on a graph whose number of vertices is polynomial to the volume of the cube. % The normalising factor of the Gibbs distribution is the partition function.

% Using our discretisation scheme we obtain new algorithmic results, namely sampling algorithms and approximation algorithms for the partition function that are polynomial in the volume of the cube. This leads to randomised algorithmic results for the hard-sphere model that match the known uniqueness region and new quasi-polynomial approximation algorithms.
%{\footnotesize
%\textcolor{red!60!black}{
%\begin{itemize}
% \item The submission should be typeset using 11-point or larger fonts, in a single-column, single-space (between lines) format with ample spacing throughout and 1-inch margins all around, on letter-size (8.5 x 11 inch) paper.
% \item Submissions should not reveal the identity of the authors in any way. In particular, authors’ names, affiliations, and email addresses should not appear at the beginning or in the body of the submission. Authors should not include obvious references that reveal their own identity, and should ensure that any references to their own related work are in the third person
%\end{itemize}
%}}

\end{abstract}

\vspace*{3 ex}
\hspace*{1.3 em}
\begin{minipage}{0.85\textwidth}
    \textbf{Keywords:} continuous spin systems~\separator partition function~\separator hard-sphere model~\separator Widom--Rowlinson model~\separator sampling and approximation algorithms~\separator geometric random graphs 
\end{minipage}

\newpage

\section{Introduction}

Statistical physics models complex systems of interacting particles as probability distributions.
The main goal is to explain the macroscopic properties of such a \emph{spin system} when it is only described by the microscopic interactions among its particles.
Two classical models in this area are the hard-sphere model, central in the analysis of thermodynamics of liquids and liquid mixtures~\cite{LiquidsBook,boublik1980statistical}, and the Widom--Rowlinson model, which explains the evaporation of liquids~\cite{WR70}.
These models have in common, that each of their states consists of a finite set of points in Euclidean space, distributed according to a Poisson point process.
The points correspond to centers of spherical particles and are constrained by hard-core interactions, i.e., two particles are not allowed to occupy the same space.
The probability distribution characterizing the valid states of a model is called its \emph{Gibbs distribution}.
A key part in better understanding this distribution is its normalizing factor: the \emph{partition function} of the model.
%Thus to describe a spin systems we require its \emph{Gibbs distribution} and the \emph{partition function}, the normalizing factor of this distribution\todo{refine}.

Statistical physics studies spin systems in terms of their \emph{phase transitions}, i.e., the change of the system's macroscopic behavior as one of its parameters crosses a critical value.
% Statistical physics studies spin system in terms of parameter regimes that result in different macroscopic behavior.\todo{This sentence feels a bit too specific and wrong.}
% If there exists a critical value at which the system completely changes its behavior, this is called a \emph{phase transition}.\todo{The definition is of a sharp phase transition, and it might be that some systems admit a coarse one.}
Rigorous proofs of phase transitions for \emph{continuous} spin systems are rare.
Concerning our previous examples, a phase transition has been proved to exist for the Widom--Rowlinson model~\cite{ruelle1971existence}, while it is still an open question for the hard-sphere model.
Phase transitions are more commonly shown for \emph{discrete} spin systems, where, instead of the Euclidean space, particles are placed on the vertices of a graph and interact only with particles on adjacent vertices.

One of the simplest and most well-studied discrete spin systems is the \emph{hard-core model}.
It is considered to be a universal model, as any discrete spin system can be mapped to a hard-core instance~\cite[Section~5.7]{simonbook}.
For an undirected graph $\graph=(V, E)$ and a function $\weight\colon \vertices \to \R_{\ge 0}$, the hard-core model is defined by the tuple~$(\graph, \weight)$.
Its set of valid states is the set of all independent sets $\independentSets{\graph}$ of $\graph$.
Each independent set $\independentSet \in \independentSets{\graph}$ is assigned the weight $\prod_{\vertex \in \independentSet} \weight[\vertex]$.
The partition function~$\hcPartitionFunction{\graph}{\weight}$ of the hard-core model is the sum of weights of all independent sets, and its Gibbs distribution~$\hcGibbs{\graph}{\weight}$ assigns each independent set a probability proportional to its weight, normalized over the partition function.
Formally,
\[
	\hcPartitionFunction{\graph}{\weight} = \sum_{\independentSet \in \independentSets{\graph}} \prod_{\vertex \in \independentSet} \weight[\vertex]
	\text{  and, for all $\independentSet \in \independentSets{\graph}$, we have }
	\hcGibbs{\graph}{\weight}[\independentSet] = \frac{\prod_{\vertex \in \independentSet} \weight[\vertex]}{\hcPartitionFunction{\graph}{\weight}} .
\]
In the uniform case, where~$\fugacity$ is constant, classical results in statistical physics establish the existence of a threshold for which the hard-core model undergoes a phase transition on various graph classes, such as two-dimensional lattices~\cite{LeeYang} and regular trees~\cite{kelly1985,BRIGHTWELL1999221}.
In a series of celebrated results, the tree threshold has been linked to a change in the computational behavior of the hard-core model.
That is, for values of~$\fugacity$ where the Gibbs measure is unique on a $\degree$-regular tree, there is a polynomial-time algorithm for approximating $\hcPartitionFunction{\graph}{\weight}$ on general graphs of maximum degree~$\degree$~\cite{Weitz2006Counting}, while for the remaining values of~$\fugacity$, this is \NP-hard~\cite{2010:Sly:computational_transition,2014:Galanis:inapproximability_independent_hard_core}.

As discrete models are better understood, the hard-core model on lattices is often used as a coarse-grained version of the continuous hard-sphere model.
\citeauthor{friedli_velenik_2017} write for this discretization:
\begin{quote}
    ``Although it might appear as a significant departure from reality, we will see that it leads to satisfactory results and allows a good qualitative understanding of the corresponding phenomena.''~\cite[Chapter~$4.1$]{friedli_velenik_2017}
\end{quote}

In this article, we show that a natural discretization based on the hard-core model leads not only to a qualitative but also to an \emph{algorithmic} understanding of the hard-sphere model and the Widom--Rowlinson model.
%In this article, we show  for a broad class of \emph{continuous} spin systems with hard-core interactions that an appropriately chosen discretization scheme based on the hard-core model leads not only to a qualitative but also to a \emph{quantitative} understanding of the continuous model.
Our main contributions are:
%A summary of our main contributions follows.

% Our method shows that the partition function of such\todo{Why \emph{such}?} a continuous spin system can be closely approximated by the partition function of the hard-core model on an appropriately constructed graph.
% This allows us to use known algorithmic results for hard-core model and obtain new sampling and approximation algorithms for this class of continuous spin systems.
\begin{itemize}
	\item We study a natural discretization of the continuous Widom--Rowlinson model and the hard-sphere model on a region $\volume \subset \R^\dimension$, resulting in an instance of the hard-core model $(\graph, \fugacity)$.
	More precisely, we determine conditions such that the hard-core partition function $\hcPartitionFunction{\graph}{\weight}$ closely approximates the partition function of the continuous point process and the number of vertices of $\graph$ is polynomial in the volume of $\volume$.
	\item We obtain new algorithmic results for these continuous spin systems by combining our discretization result with known algorithms for the hard-core model.
	We present sufficient conditions for randomized approximation algorithms of the partition function and sampling algorithms for the Gibbs distribution with running time polynomial in the volume of $\volume$.
	For the Widom--Rowlinson model, this results in the best known parameter regime for approximation and sampling.
	Under the same conditions, we show the existence of deterministic approximation algorithms for the partition function with running time quasi-polynomial in the volume of $\volume$.
	As a result, we obtain the first efficient deterministic approximation algorithm for both models in the considered parameter regime.
    \item We show that, under suitable conditions, the partition functions of the hard-sphere model and the Widom--Rowlinson model can be seen as concentration points of hard-core partition functions on certain families of geometric random graphs.
    Further, we explore the asymptotic behavior of the concentration in terms of the number of vertices under different conditions on the region $\volume$.
% 	\item We show that discretization scheme where, for a given hard-constraint point process on a region $\volume \subset \R^\dimension$, we obtain an instance $(\graph, \weight)$ of the hard-core model.
%    We determine conditions such that the partition function $\hcPartitionFunction{\graph}{\weight}$ closely approximates the partition function of the hard-constraint point process and the number of vertices of $G$ is polynomial in the volume of $\volume$.
% 	\item We obtain new algorithmic results for continuous spin systems by combining our discretization scheme with known algorithms for the hard-core model.
% 	We present sufficient conditions for randomized approximation algorithms of the partition function and sampling algorithms for the Gibbs distribution of the hard-constraint point process with running time polynomial in the volume of $\volume$.
%    Under the same conditions, we show the existence of deterministic approximation algorithms for the partition function with running time quasi-polynomial in the volume of $\volume$.
%    \item We show that, under suitable conditions, the partition functions of hard-constrain point processes can be seen as concentration points of hard-core partition functions on certain families of geometric random graphs.
%    Further, we explore the asymptotic behavior of the concentration in terms of the number of vertices under different conditions on the region $\volume$.
\end{itemize}

%
% In this article we study the computational properties of very general continuous spin systems of spherical particles with hard-core interactions. Our techique reduces contninuous hard-constrained spin systems to discrete instances of the hard-core model of size polynomial in the volume. In a sense our method is a rigorous coarse graining~\cite[Chapter~8]{ProteinFoldingBook} approach for the hard-core model that allows us to use the extensive algorithmic results that are known for the hard-core model.

\subsection{Hard-constraint point processes}\label{sec:intro_model}
As discussed earlier, the hard-sphere model and the Widom--Rowlinson model have in common that they are characterized by a point process with hard-core interactions in a region $\volume$.
To remove redundancy, we define a slightly more general class of models that includes both of the above.
We refer to this class as \emph{hard-constraint point processes}.
Once we establish our results for this class of models, the corresponding statements for the hard-sphere model and the Widom--Rowlinson model follow immediately.

Let $\volume \subset \R^{\dimension}$ be bounded and measurable, and let $\numTypes \in \N_{\ge 1}$.
The model represents the distribution of particles of $\numTypes$ types, labeled by elements in $[\numTypes] \defeq [1, q] \cap \N$, on $\volume$.
Particles of the same types are assumed to be indistinguishable.
% This is, swapping particles of the same type with each other does not result in a different configuration.\todo{Is this sentence necessary?}
Let $\fugacity\colon [\numTypes] \to \R_{\ge 0}$ be a function that equips each particle type with a fugacity.
For each particle type $i \in [\numTypes]$, we assume that the positions of particles are distributed according to a (labeled) Poisson point process of intensity $\fugacity[i]$ on $\volume$.
To add the constraints to the model, let $\interactionMatrix \in \R_{\ge 0}^{\numTypes \times \numTypes}$ be a symmetric $\numTypes \times \numTypes$ matrix, called the \emph{interaction matrix}.
We condition the mixture of point processes by rejecting all configurations that contain particles at positions $x_1, x_2 \in \volume$ with corresponding particle types $\typeOf_1, \typeOf_2 \in [\numTypes]$ with distance $\dist{x_1}{x_2} < \interactionMatrix[\typeOf_1][\typeOf_2]$.
That is, the entries of $\interactionMatrix$  determine the minimum distance that particles of the respective types can have.
Especially note that $\interactionMatrix[\typeOf_1][\typeOf_2] = 0$ means that the particle types $\typeOf_1, \typeOf_2 \in [\numTypes]$ are not subjected to any pairwise constraints.
Thus, if all entries of $\interactionMatrix$ are set to $0$, we obtain a simple mixture of Poisson point processes.

For any instance $(\volume, \interactionMatrix, \fugacity)$ of a hard-constraint point process, the above characterizes a \emph{Gibbs distribution} as follows.
For all $\numParticles \in \N$ and all type assignments $\typeOf\colon [\numParticles] \to [\numTypes]$, let $\valid{\typeOf}{\interactionMatrix}\colon \left(\R^{\dimension}\right)^{\numParticles} \to \{0, 1\}$ be the function that indicates for a tuple of particles positions $\mathbold{x} = (x_i)_{i \in [\numParticles]} \in \left(\R^{\dimension}\right)^{\numParticles}$ whether it forms a valid configuration, assuming that for each $i \in [\numParticles]$, the particle at $x_i$ is of type $\typeOf[i]$.
Formally,
\[
	\valid{\typeOf}{\interactionMatrix}[\mathbold{x}] = \prod_{\substack{i, j \in [\numParticles]\colon\\ i<j}} \indicator{\dist{x_i}{x_j} \ge \interactionMatrix[\typeOf[i]][\typeOf[j]]} ,
\]
noting that $\valid{\typeOf}{\interactionMatrix}[\mathbold{x}] = 1$ for the case that $\numParticles = 0$.
For each tuple $(\mathbold{x}, \typeOf)$ as above, the Gibbs distribution of $(\volume, \interactionMatrix, \fugacity)$ is defined via the probability density
\[
	\generalGibbs{\volume}{\interactionMatrix}{\fugacity}[\mathbold{x}, \typeOf] = \frac{\frac{1}{\numParticles!} \left(\prod_{i \in [\numParticles]} \fugacity[\typeOf[i]] \right) \valid{\typeOf}{\interactionMatrix}[\mathbold{x}]}{\generalPartitionFunction{\volume}{\interactionMatrix}{\fugacity}} ,
\]
where the normalizing constant $\generalPartitionFunction{\volume}{\interactionMatrix}{\fugacity}$ is called the \emph{(grand canonical) partition function:}
\begin{align*}
%	\label{eq:partition_function}
	\generalPartitionFunction{\volume}{\interactionMatrix}{\fugacity}
	= 1 + \sum_{\numParticles \in \N_{\ge 1}} \frac{1}{\numParticles!} \sum_{\typeOf\colon [\numParticles] \to [\numTypes]} \left(\prod_{i \in [\numParticles]} \fugacity[\typeOf[i]] \right) \int_{\volume^{\numParticles}} \valid{\typeOf}{\interactionMatrix}[\mathbold{x}] \, \text{d} \lebesgue{\dimension \times \numParticles} ,
\end{align*}
with $\lebesgue{\dimension \times \numParticles}$ denoting the product of $\numParticles$ Lebesgue measures, each on $\R^{\dimension}$.
We proceed by showing how the hard-sphere model and the Widom--Rowlinson model are recovered from this framework.

\paragraph{Hard-sphere model:}
The hard-sphere model describes the distribution of particles of a single type subjected to hard-core interactions.
It is parameterized by a radius $\radius \in \R_{\ge 0}$ and a single fugacity $\fugacity \in \R_{\ge 0}$.
Informally speaking, it is a simple Poisson point process of intensity $\fugacity$, conditioned on no two points being closer than $2 \radius$ (i.e., when placing balls of radius $\radius$ at each of the points, they must be non-overlapping).
Using our framework, this is equivalent to a hard-constraint point process $(\volume, \interactionMatrixHS, \fugacity)$ with $\numTypes=1$ particle types, where we slightly abuse notation and treat $\fugacity$ as a constant function, and set $\interactionMatrixHS$ to be a $1 \times 1$ matrix containing only the entry $2 \radius$.

\paragraph{Widom--Rowlinson model:}
We consider the most general version of the Widom--Rowlinson model, although we might impose certain restrictions for some of our algorithmic results.
The Widom--Rowlinson model describes the interaction of particles of $\numTypes \ge 1$ types, each equipped with a radius $\radius_i \in \R_{\ge 0}$ and a fugacity $\fugacity_i \in \R_{\ge 0}$ for $i \in [\numTypes]$.
Informally speaking, the resulting distribution is a mixture of $\numTypes$ Poisson point processes, each with its own intensity $\fugacity_{i}$, with the condition that particles of the same type can be arbitrarily close to each other, but particles of different types $i, j \in [\numTypes]$ need to have a distance of at least $\radius_{i} + \radius_{j}$ (i.e., when placing a ball of radius $\radius_i$ at each point of type $i \in [\numTypes]$, balls of different types must be non-overlapping).
This is equivalent to a hard-constraint point process $(\volume, \interactionMatrixWR, \fugacity)$ with $\numTypes$ particle types, where we set $\fugacity[i] = \fugacity_i$ for all $i \in [\numTypes]$, and, for all $i, j \in [\numTypes]$,
\[
	\interactionMatrixWR[i][j] = \begin{cases}
		0 &\text{if $i=j$,} \\
		\radius_{i} + \radius_{j} &\text{otherwise.}
	\end{cases}
\]

\subsection{Reduction to a discrete hard-core model}\label{sec:intro_discretization}
We investigate the following natural discretization to turn a (continuous) hard-constraint point process into a (discrete) hard-core model.
Given an instance of a hard-constraint point process~$(\volume, \interactionMatrix, \fugacity)$ with $\numTypes \in \N_{\ge 1}$ particle types and a finite non-empty set of points $X \subset \volume$, we construct a simple undirected graph $\hcPPGraph{X} = (\hcPPVertices{X}, \hcPPEdges{X})$ as follows.
For each point in $X$ and each type, we have a vertex in $\hcPPVertices{X}$.
Two distinct vertices are connected by an edge if no two particles of the corresponding types are allowed to occupy the respective positions in a valid configuration.
%having particles of their corresponding types\todo{That sounds strange, doesn't it? \emph{if their corresponding particles have the same type}} and if their positions is\todo{are (?)} forbidden by the constraints encoded in $\interactionMatrix$.
Consequently, independent sets of the graph correspond to valid particle configurations.
We aim to `simulate' the original continuous model by a hard-core model on that graph.

Formally, we get the following construction:
\begin{itemize}
	\item For each point $x \in X$ and each type $i \in [\numTypes]$, we construct a vertex $\hcPPVertex{x}{i}$.
	Furthermore, for each $i \in [\numTypes]$, we set $\hcPPVertices{X}[i] = \big\{ \hcPPVertex{x}{i} \ \big\vert\ x \in X \big\}$, and we define $\hcPPVertices{X} = \bigcup_{i \in [\numTypes]} \hcPPVertices{X}[i]$.
	\item For each $i, j \in [\numTypes]$ and $x, y \in X$, we connect $\hcPPVertex{x}{i}, \hcPPVertex{y}{j} \in \hcPPVertices{X}$ with an edge in $\hcPPEdges{X}$ if
	\[
		\neg(x=y \text{ and } i=j) \text{  and  } \dist{x}{y} < \interactionMatrix[i][j] ,
	\]
	where the condition $\neg(x=y \text{ and } i=j)$ prevents self-loops.
\end{itemize}
Additionally, we define a function $\hcPPWeight{X}\colon \hcPPVertices{X} \to \R_{\ge 0}$ such that, for all $i \in [\numTypes]$ and all $x \in X$, $\hcPPWeight{X}[\hcPPVertex{x}{i}][\big] = \frac{\vol{\volume}}{\size{X}} \fugacity[i]$.
Note that for all $i \in [\numTypes]$ and all $x, y \in X$, this means that $\hcPPWeight{X}[\hcPPVertex{x}{i}][\big] = \hcPPWeight{X}[\hcPPVertex{y}{i}][\big]$, and we sometimes abuse notation and write $\hcPPWeight{X}[i]$ instead.
We call $(\hcPPGraph{X}, \hcPPWeight{X})$ the \emph{hard-core  representation} of $(\volume, \interactionMatrix, \fugacity)$ based on $X$.
Our goal is to find conditions, such that the hard-core partition function $\hcPartitionFunction{\hcPPGraph{X}}{\hcPPWeight{X}}$ closely approximates the partition function of original point process $\generalPartitionFunction{\volume}{\interactionMatrix}{\fugacity}$.

To this end, we focus on sets~$X$ that are the result of a function $\allocation\colon \volume \to X$, which we call an \emph{allocation.}
More precisely, we study allocations $\allocation\colon\volume\rightarrow X$ such that for each $x \in X$, the set $\invAllocation(x)$ has roughly the same volume and a bounded diameter.
This is expressed by the following definition.
\begin{restatable}[$\volumeError$-$\distanceError$-allocation]{definition}{edallocation} \label{def:allocation}
	Let $\volume \subset \R^{\dimension}$ be bounded and measurable, and assume $\vol{\volume} > 0$.
	For finite $X \subset \volume$, $\volumeError \in [0, 1)$, and $\distanceError \in \R_{>0}$, we call a function $\allocation\colon \volume \to X$ a \emph{$\volumeError$-$\distanceError$-allocation} for $X$ if, for all $x \in X$,
	\begin{enumerate}[(1)]
		\item $\invAllocation[x]$ is measurable and
		\[
		\left(1 - \volumeError\right) \frac{\vol{\volume}}{\size{X}} \le \vol{\invAllocation[x]} \le \left(1 + \volumeError\right) \frac{\vol{\volume}}{\size{X}} \textrm{, and,}
		\]
		\item for all $y \in \invAllocation[x]$, it holds that $\dist{x}{y} \le \distanceError$. \qedhere
	\end{enumerate}
\end{restatable}

Our main technical result is that, if the considered region $\volume$ is star-convex,\footnote[-0.1]{A region $\volume$ is star convex if there is a center $x \in \volume$ such that for every $y \in \volume$ the straight line from $x$ to $y$ is contained in $\volume$.} the partition function of the hard-core representation based on a point set $X \subset \volume$ closely approximates the partition function of the respective hard-constraint point process, whenever there is a $\volumeError$-$\distanceError$-allocation for $X$ with sufficiently small $\volumeError$ and $\distanceError$.
\begin{restatable*}{theorem}{DiscretisationError}
	\label{thm:discretization_error}
	Let $(\volume, \interactionMatrix, \fugacity)$ be a hard-constraint point process with $\numTypes \in \N_{\ge 1}$ particle types, and assume $\volume \subset \R^{\dimension}$ is star-convex.
	Furthermore, set $\interactionMatrix_{\min} = \inf_{i,j \in [\numTypes]} \{\interactionMatrix[i][j] \mid \interactionMatrix[i][j] > 0\}$,\footnote[-0.15]{We use the convention that $\inf \emptyset = \infty$, which in this setting means that $\interactionMatrix_{\min}$ can be chosen arbitrarily large.} and set $\fugacity_{\max} = \max_{i \in [\numTypes]} \fugacity[i]$.
	Let $X \subseteq \volume$ with $4 \fugacity_{\max} \vol{\volume} \le \size{X} < \infty$, and let $(\hcPPGraph{X}, \hcPPWeight{X})$ be the hard-core representation of $(\volume, \interactionMatrix, \fugacity)$ based on $X$.
	Finally, let $\volumeError \in \left[0, \frac{1}{2}\right]$ and $\distanceError \in \big[0, \frac{\interactionMatrix_{\min}}{2}\big]$, and assume that~$\volume$ has a $\volumeError$-$\distanceError$-allocation for $X$.
    Then
	\[
		\absolute{\hcPartitionFunction{\hcPPGraph{X}}{\hcPPWeight{X}} - \generalPartitionFunction{\volume}{\interactionMatrix}{\fugacity}}
		\le \Big(\eulerE^{\frac{8}{\size{X}} \sum_{i \in [\numTypes]} \fugacity[i]^2 \vol{\volume}^2} \eulerE^{\left( 2 \volumeError + \left(\frac{4 \distanceError }{\interactionMatrix_{\min}} \right)^{\dimension} \right) \sum_{i \in [\numTypes]} \fugacity[i] \vol{\volume}} - 1\Big) \generalPartitionFunction{\volume}{\interactionMatrix}{\fugacity}.
		\qedhere
	\]
\end{restatable*}

Since \Cref{thm:discretization_error} is at the core of our algorithmic results, we briefly overview its proof.
Given $X \subset \volume$ and a $\volumeError$-$\distanceError$-allocation~\allocation, for each $x\in X$, the total weight that~$\invAllocation[x]$ contributes to $\generalPartitionFunction{\volume}{\interactionMatrix}{\fugacity}$ is assigned to the total weight that the vertex of $\hcPPGraph{X}$ that corresponds to $x$ contributes to $\hcPartitionFunction{\hcPPGraph{X}}{\hcPPWeight{X}}$.
This weight allocation introduces three types of errors between the two partition functions, which we all bound.
The first one is due to the Poisson point process of the continuous model potentially generating more than one point in~\invAllocation[x], for some $x\in X$, which cannot be represented as part of the hard-core partition function.
To bound this error, we reduce the continuous model to an intermediate multiset version of the hard-core model, where valid configurations are allowed to contain multiple copies of the same vertex.
Then, we bound the difference between the partition function of the multiset hard-core model and $\hcPartitionFunction{\hcPPGraph{X}}{\hcPPWeight{X}}$.
The second error type is due to the regions $\invAllocation[x]$ not necessarily having the same volume.
However, this error is small, due to~\volumeError.
The third error type is due to configurations that are valid for the hard-constraint point process potentially being allocated to configurations that are invalid for the hard-core model and vice versa.
This happens because, e.g., two points $x_1,x_2\in\volume$ of type $\typeOf_1,\typeOf_2$, respectively, with $\dist{x_1}{x_2} \geq \interactionMatrix[\typeOf_1][\typeOf_2]$ get allocated to $\allocation(x_1),\allocation(x_2)\in X$ with $\dist{\allocation(x_1)}{\allocation(x_2)} < \interactionMatrix[\typeOf_1][\typeOf_2]$.
The parameter \distanceError ensures that each point $x\in \volume$ is allocated to a close by point $\allocation(x)\in X$.
This allows us to bound this error by the difference of two partition functions with slightly different values in their exclusion matrices~$\interactionMatrix$.
In Euclidean space, we can express this as a difference between partition functions with the original exclusion matrix $\interactionMatrix$ but on differently scaled versions of the regions $(1 \pm \alpha) \volume$ for some small $\alpha$.
Assuming star-convexity, we bound this difference by using the fact that a copy of $(1 - \alpha) \volume$ can be mapped into $(1 + \alpha) \volume$ while preserving distance and volume (\Cref{lemma:scaled_difference}), which proves \Cref{thm:discretization_error}.

Under a choice of $X$, with $\size{X}$ polynomial in $(\vol{\volume},\volumeError^{-1},\distanceError^{-1})$, \Cref{thm:discretization_error} is an approximation-preserving reduction from $\generalPartitionFunction{\volume}{\interactionMatrix}{\fugacity}$ to $\hcPartitionFunction{\hcPPGraph{X}}{\hcPPWeight{X}}$ that is polynomial in $\vol{\volume}$ and produces a graph~$\hcPPGraph{X}$ of size polynomial in $\vol{\volume}$.
In \Cref{sec:intro_approx}, we discuss how to choose the point set $X$ with $|X|\in O((\vol{\volume})^2)$ in the case of cubic regions $\volume = [0, \sidelength)^{\dimension}$ for some $\sidelength \in \R_{>0}$.
Applying \Cref{thm:discretization_error} allows us to use known algorithmic results on the resulting hard-core model to efficiently approximate $\generalPartitionFunction{\volume}{\interactionMatrix}{\fugacity}$.
Our bound on the required number of points in this setting is asymptotically tight in $\vol{\volume}$ for general hard-constraint point processes, as we show in \Cref{subsec:tightness}.

For cubic regions, we show that there is a $\volumeError$-$\distanceError$-allocation with~$\volumeError=0$.
However, allowing for a small margin of volume error~$\volumeError>0$ in \Cref{thm:discretization_error} results in interesting connections between the partition function of the hard-core model on random geometric graphs and the continuous partition function of the hard-constraint point process, as we discuss in \Cref{sec:intro_random_discretization}.
Namely, we show that the hard-core partition function on such geometric random graphs concentrates around that of a hard-constraint point process.
In fact, this even holds without the assumption of star-convexity, required in \Cref{thm:discretization_error}.

%We note that a discretization scheme similar to ours was used by \textcite{friedrich2021spectral} to obtain an approximation algorithm for the hard-sphere model with uniform fugacity.
For the case of the hard-sphere model on cubic regions, the same discretization scheme was used before to obtain a randomized approximation algorithm \cite{friedrich2021spectral}.
However, as this result requires the size of the graph to be super-exponential in $\vol{\volume}$, the previous algorithm needs to utilize the succinct representation and structural information of the produced graph $G$, which was highly specific to the hard-sphere model.
Since our refined bound yields much smaller graphs, we are able to use the results on the hard-core model as out-of-the-box algorithms.
Besides the aforementioned algorithmic simplifications, our result extends the applicability to other hard-constraint point processes including the Widom--Rowlinson model.
Further, it allows for the first efficient deterministic approximation within the same parameter regimes for both, the hard-sphere model and the Widom--Rowlinson model.
% \todo{already (?) This far in the sentence, it is not that clear to me that we are not the first ones to come up with this idea -- well, you know what I mean. Better now?}

% Therefore, the majority of algorithmic results for the hard-core model are not applicable, as their running time is polynomial in the number of vertices of the input graph.\todo{This sounds like bashing. Saying something like a \emph{direct} application or so would make it already far more tame. Reverse the order of the two sentences and merge.}

\subsection{Approximation algorithms via canonical discretization}\label{sec:intro_approx}

We show how \Cref{thm:discretization_error} can be used to obtain approximations for the partition functions of the hard-sphere model and the Widom--Rowlinson model.
To this end, we develop our result in the setting of general hard-constraint point processes and obtain the model-specific results as corollaries.
Although \Cref{thm:discretization_error} applies to a variety of regions $\volume \subset \R^{\dimension}$,
we focus on cubic regions $\volume = [0, \sidelength)^{\dimension}$ with  $\sidelength \in \R_{> 0}$ for stating our algorithmic results, as such regions exhibit a straightforward way to construct point sets with small allocations.
We suggest the following choice, which we call \emph{canonical discretization}.
The point set $X \subset \volume$ consists of a $d$-dimensional grid of side length $n$, and the allocation $\allocation$ maps each point $y=\big(y^{(i)}\big)_{i \in [\dimension]} \in \volume$ to the closest point in $x = \big(x^{(i)}\big)_{i \in [\dimension]} \in X$ such that, for all $i \in [\dimension]$, it holds that $x^{(i)} \le y^{(i)}$.
Since, for each $x\in X$, the set $\invAllocation(x)$ has the same volume, this results in a $0$-$\distanceError$-allocation.
By \Cref{thm:discretization_error}, in order to approximate the partition function of the hard-constraint point process~$\generalPartitionFunction{\volume}{\interactionMatrix}{\fugacity}$, it suffices to do so for the partition function of the hard-core model on $(\hcPPGraph{X},\hcPPWeight{X})$.

We first study the case where the considered hard-constraint point process only contains a single particle type (e.g., the hard-sphere model) or all particle types have the same fugacity (e.g., the Widom--Rowlinson model with uniform fugacities).
In this case, the hard-core representation is a uniform hard-core model, i.e., there is a constant $\weight$ such that, for each~$\vertex$, it holds that $\weight[\vertex]=\weight$.
A selection of algorithmic results are applicable in this setting.
Usually, approximation algorithms for the partition function of the uniform hard-core model apply to instances $(\graph,\weight)$ where $\graph$ has bounded maximum degree $\degree$ and $\weight<\criticalWeight{\degree}=\frac{\left(\degree - 1 \right)^{\degree - 1}}{\left(\degree - 2\right)^{\degree}} \approx \frac{\eulerE}{\degree}$.
On general graphs, this parameter regime is tight, as approximating the partition function for a hard-core model is known to be \NP-hard for instances $(\graph,\weight)$ with $\weight>\criticalWeight{\degree}$ \cite{2010:Sly:computational_transition,2014:Galanis:inapproximability_independent_hard_core}.
We obtain our algorithms by bounding the maximum degree of $\hcPPGraph{X}$ and using the known approximation algorithms for the hard-core model.
Namely, we use a Monte-Carlo Markov chain algorithm, for which polynomial running time was recently proven for all $\degree$ in a sequence of papers by \textcite{ALOG20,Chen2020rapid,ChenLV21,chen2021alldegrees}.
Furthermore, using the deterministic approximation algorithm of~\textcite{Weitz2006Counting} with running time in $n^{\bigO{\log\degree}}$, we obtain a deterministic algorithm with running time quasi-polynomial in the volume of \volume\!\!.\footnote[-0.1]{We get the same result by combining the results of \textcite{peters2019conjecture} with the method of \textcite{barvinok2016combinatorics}.}

In order to formally state our algorithmic results for the class of hard-constraint point processes, we introduce the \emph{volume exclusion matrix} $\volumeMatrix \in \R_{\ge 0}^{\numTypes \times \numTypes}$ of a hard-constraint point process $(\volume, \interactionMatrix, \fugacity)$ with $\numTypes \in \N_{\ge 1}$ particles types.
For all $i, j \in [\numTypes]$, it holds that $\volumeMatrix[i][j] = \vol{\ball{\interactionMatrix[i][j]}}$, where $\ball{r}$ denotes a $d$-dimensional ball of radius~$r$.
Intuitively, the entry $\volumeMatrix[i][j]$ gives an upper bound on the volume of the region around a particle of type $i \in [\numTypes]$ in which no particle of type $j \in [\numTypes]$ can be placed.

Our main algorithmic result for general hard-constraint point processes is the following.
\begin{restatable*}{theorem}{uniformapprox}\label{thm:uniform_approx}
	Let $(\volume, \interactionMatrix, \fugacity)$ be a hard-constraint point process with $\numTypes \in \N_{\ge 1}$ particle types, $\volume = [0, \sidelength)^{\dimension}$ for some $\sidelength \in \R_{>0}$, and let $\fugacity$ be a constant.
	Let $\volumeMatrix$ be the corresponding volume exclusion matrix and denote its $L_{1}$-norm by $\lOneNorm{\volumeMatrix}$.
	If
	\[
		\fugacity < \frac{\eulerE}{\lOneNorm{\volumeMatrix}},
	\]
	then for all $\approxError \in (0, 1]$, there is a randomized $\approxError$-approximation algorithm for $\generalPartitionFunction{\volume}{\interactionMatrix}{\fugacity}$ with running time in $\poly{\frac{\vol{\volume}}{\approxError}}[\big]$ and a quasi-polynomial deterministic approximation algorithm for $\generalPartitionFunction{\volume}{\interactionMatrix}{\fugacity}$ with running time $\big(\frac{\vol{\volume}}{\approxError}\big)^{\bigTheta{\ln \left(\vol{\volume}/\approxError\right)}}$.
\end{restatable*}

We derive our algorithmic results for the hard-core model and the Widom--Rowlinson model with uniform fugacities and radii from \Cref{thm:uniform_approx}.
For the hard-sphere model, we obtain the following corollary.
\begin{restatable*}{corollary}{monoatomichs}\label{cor:monoatomic_hs}
	Let $\volume = [0, \sidelength)^{\dimension}$ for some $\sidelength \in \R_{>0}$.
	Further, let $\radius \in \R_{>0}$ and $\fugacity \in \R_{\ge 0}$.
	Denote by $\hsPartitionFunction{\volume}{\radius}{\fugacity}$ the hard-sphere partition function on $\volume$ with particles of radius $\radius$ and fugacity~$\fugacity$.
	If
	\[
		\fugacity < \frac{\eulerE}{2^{\dimension} \vol{\ball{\radius}}},
	\]
	 then for all $\approxError \in (0, 1]$, there is a randomized $\approxError$-approximation algorithm for $\hsPartitionFunction{\volume}{\radius}{\fugacity}$ with running time in $\poly{\frac{\vol{\volume}}{\approxError}}[\big]$ and a quasi-polynomial deterministic $\approxError$-approximation algorithm for $\hsPartitionFunction{\volume}{\radius}{\fugacity}$ with running time $\big(\frac{\vol{\volume}}{\approxError}\big)^{\bigTheta{\ln \left(\vol{\volume}/\approxError\right)}}$.
\end{restatable*}
\Cref{cor:monoatomic_hs} achieves the best known algorithmic bound in terms of $\fugacity$~\cite{friedrich2021spectral,perkinsAlgorithm}.
%, which corresponds to the known bound for the absence of a phase transition~\cite{perkinsHardSpheres2020}.
Furthermore, the quasi-polynomial deterministic approximation algorithm is the first efficient deterministic approximation for this parameter regime and partially answers an open question of~\textcite{perkinsAlgorithm}.

For the Widom--Rowlinson model, \Cref{thm:uniform_approx} yields the following rigorous approximation result.
\begin{restatable*}{corollary}{uniformwr}\label{cor:uniform_wr}
	Let $\volume = [0, \sidelength)^{\dimension}$, $\sidelength \in \R_{>0}$, let and $\radius \in \R_{>0}$ and $\fugacity \in \R_{\ge 0}$.
	Let $\wrPartitionFunction{\volume}{\radius}{\fugacity}$ denote the Widom--Rowlinson partition function on $\volume$ with $\numTypes \in \N_{\ge 1}$ particle types, each of radius $\radius$ and fugacity $\fugacity$.
	If
	\[
		\fugacity < \frac{\eulerE}{(\numTypes - 1) 2^{\dimension} \vol{\ball{\radius}}},
	\]
	then for all $\approxError \in (0, 1]$, there is a randomized $\approxError$-approximation algorithm for $\wrPartitionFunction{\volume}{\radius}{\fugacity}$ with running time in $\poly{\frac{\vol{\volume}}{\approxError}}[\big]$ and a quasi-polynomial deterministic $\approxError$-approximation algorithm for $\wrPartitionFunction{\volume}{\radius}{\fugacity}$ with running time $\big(\frac{\vol{\volume}}{\approxError}\big)^{\bigTheta{\ln \left(\vol{\volume}/\approxError\right)}}$.
\end{restatable*}
To the best of our knowledge, no other efficient approximation algorithm is known for this parameter regime.
In fact, \Cref{cor:uniform_wr} also applies to the non-uniform Widom-Rowlinson model when replacing $\fugacity$ and $\radius$ with the maximum fugacity and maximum radius among all particle types (see \Cref{remark:univariate_to_multivariate}).

For non-uniform hard-constraint point processes, i.e., each particle type has its own fugacity, the resulting instance of the hard-core model is also non-uniform.
There are some algorithmic results for the non-uniform hard-core model in the literature, but none of them is known to establish a tight condition.
To our knowledge, the broadest parameter regime for the non-uniform hard-core model is achieved by a randomized approximation algorithm \cite{friedrich2020polymer}.
Applying this approximation algorithm to the instance $(\hcPPGraph{X},\hcPPWeight{X})$, we obtain the following algorithmic condition for the hard-constraint point process.

\begin{restatable*}{theorem}{nonUniformApprox}
	\label{thm:non_uniform_approx}
	Let $(\volume, \interactionMatrix, \fugacity)$ be a hard-constraint point process with $\numTypes \in \N_{\ge 1}$ particle types and $\volume = [0, \sidelength)^{\dimension}$ for some $\sidelength \in \R_{>0}$.
	Further, let $\volumeMatrix$ be the corresponding volume exclusion matrix.
	If there is a function $f\colon [\numTypes] \to \R_{>0}$ such that for all $i \in [\numTypes]$, it holds that
	\[
		f(i) > \sum_{j \in [\numTypes]} \volumeMatrix[i][j] f(j) \fugacity[j] ,
	\]
	then for all $\approxError \in (0, 1]$, there is a randomized $\approxError$-approximation algorithm for $\generalPartitionFunction{\volume}{\interactionMatrix}{\fugacity}$ with running time in $\poly{\frac{\vol{\volume}}{\approxError}, \ln\left(\frac{\max_{i \in [\numTypes]} f(i)}{\min_{i \in [\numTypes]} f(i)}\right)}$.
\end{restatable*}

For the non-uniform hard-core model, faster randomized algorithms~\cite{CGGPSV19} as well as deterministic algorithms~\cite{JKP19} have appeared in the literature.
These results are also applicable to our setting.
The stated results are chosen to yield the broadest parameter regime possible.

A particularly interesting corollary of~\Cref{thm:non_uniform_approx} is for the two-particle Widom--Rowlinson model.
\begin{restatable*}{corollary}{unbalancedWR}
	\label{cor:unbalanced_WR}
	Let $\volume = [0, \sidelength)^{\dimension}$ for some $\sidelength \in \R_{>0}$.
	Further, let $\radius \in \R_{>0}$, $\fugacity_1, \fugacity_2 \in \R_{\ge 0}$.
	Denote by $\wrPartitionFunction{\volume}{\radius}{\fugacity_1, \fugacity_2}$ the Widom--Rowlinson partition function on $\volume$ with $2$ particle types, both with the same radius $\radius$ but (possibly) different fugacities $\fugacity_1, \fugacity_2$.
	If
	\[
		\fugacity_1 \fugacity_2 < \frac{1}{4^{\dimension} \vol{\ball{\radius}}^{2}},
	\]
	then for all $\approxError \in (0, 1]$, there is a randomized $\approxError$-approximation algorithm for $\wrPartitionFunction{\volume}{\radius}{\fugacity_1, \fugacity_2}$ with running time in $\poly{\frac{\vol{\volume}}{\approxError}}[\big]$.
\end{restatable*}
Note that this condition only requires the product of the fugacities of both particle types to be bounded.
Thus, one fugacity can be arbitrarily large as long as the other is sufficiently small, leading to an approximation whenever the model is sufficiently unbalanced.
One key observation to explain this behavior is that the resulting hard-core representation used in \Cref{cor:unbalanced_WR} is a bipartite graph with different fugacities for each vertex partition.
Such hard-core models on unbalanced bipartite graphs are known to exhibit efficient approximations for broader parameter regimes, as shown by \textcite{CP20}.
However, their results do not appear to give feasible parameter regimes in our setting.

\subsection{Sampling via random perturbations}\label{sec:intro_sampling}

So far, we only discussed approximation results for the partition functions of the hard-sphere model and the Widom--Rowlinson model, although, in discrete spin systems, approximation algorithms often go hand in hand with sampling algorithms~\cite{JVV86}.
Unfortunately, for continuous spin systems, a natural barrier is that outputting a sample, i.e., a tuple of points that represents a valid configuration, requires infinite floating-point precision.
Thus, assuming a discrete computational model, as is common in computer science, no (approximate) sampling algorithm with meaningful error bounds in terms of total-variation distance can be obtained.

Assuming a computational model that performs arithmetic operations of floating-point values with arbitrary precision and can uniformly sample a random floating-point number from an interval (for more details see \Cref{assumption:floating_pointer}), we use our discretization $\hcPPGraph{X}$ to recover an approximate sampler for $\generalGibbs{\volume}{\interactionMatrix}{\fugacity}$.
Such an assumption is not satisfied by an ordinary computer.
However, for practical considerations, the floating-point precision of common discrete computational models might be seen as sufficient for applying our sampling approach.

Given a $\volumeError$-$\distanceError$-allocation~\allocation, our sampling algorithm (\Cref{alg:perturbation_algorithm}) first samples an independent set~$I$ from the Gibbs distribution of the hard-core model on $(\hcPPGraph{X},\hcPPWeight{X})$, using a known Markov chain method (see, e.g., \cite{ALOG20,Chen2020rapid,chen2021alldegrees}).
Recall that each vertex $\hcPPVertex{x}{i}$ of $\hcPPGraph{X}$ corresponds to a point $x\in X$ of a particle of type $i$.
Then, for each vertex $\hcPPVertex{x}{i}$ in~$I$, our sampler chooses a position in $\invAllocation(x)$ uniformly at random and places a particle of type~$i$ at this position.
This way, we obtain the following sampling  analogue of \Cref{thm:uniform_approx}.
\begin{restatable*}{theorem}{uniformSampling}
 	\label{thm:uniform_sampling}
	Suppose a computational model that satisfies \Cref{assumption:floating_pointer}.
	Let $(\volume, \interactionMatrix, \fugacity)$ be a hard-constraint point process with $\numTypes \in \N_{\ge 1}$ particle types, $\volume = [0, \sidelength)^{\dimension}$ for some $\sidelength \in \R_{>0}$, and let~$\fugacity$ be a constant.
	Let $\volumeMatrix$ be the corresponding volume exclusion matrix and denote its $L_{1}$-norm by~$\lOneNorm{\volumeMatrix}$.
	If
	\[
		\fugacity < \frac{\eulerE}{\lOneNorm{\volumeMatrix}},
	\]
	then for all $\samplingError \in (0, 1]$, there is an $\samplingError$-approximate sampler for $\generalGibbs{\volume}{\interactionMatrix}{\fugacity}$ with running time in $\poly{\frac{\vol{\volume}}{\samplingError}}[\big]$.
\end{restatable*}
Our sampling procedure does not always result in a valid configuration for the hard-constraint point process.
However, by bounding the total-variation distance between the two distributions, $\generalGibbs{\volume}{\interactionMatrix}{\fugacity}$ and the distribution of our sampler, we bound the probability of the sampler returning an invalid configuration by some small constant.
In this case, we reject this configuration and repeat the sampling procedure.

Observe that the sampling analogues of \Cref{cor:monoatomic_hs,cor:uniform_wr} immediately follow from the above theorem.
With \Cref{alg:perturbation_algorithm}, we also obtain the following theorem for the non-uniform case.
\begin{restatable*}{theorem}{nonUniformSampling}
	\label{thm:non_uniform_sampling}
	Suppose a computational model that satisfies \Cref{assumption:floating_pointer}.
	Let $(\volume, \interactionMatrix, \fugacity)$ be a hard-constraint point process with $\numTypes \in \N_{\ge 1}$ particle types and $\volume = [0, \sidelength)^{\dimension}$, $\sidelength \in \R_{>0}$.
	Further, let $\volumeMatrix$ be the corresponding volume exclusion matrix.
	If there is a function $f\colon [\numTypes] \to \R_{>0}$ such that for all $i \in [\numTypes]$,
	\[
		f(i) > \sum_{j \in [\numTypes]} \volumeMatrix[i][j] f(j) \fugacity[j] ,
	\]
	then for all $\samplingError \in (0, 1]$, there is an $\samplingError$-approximate sampler for $\generalGibbs{\volume}{\interactionMatrix}{\fugacity}$ with running time in $\poly{\frac{\vol{\volume}}{\approxError}}$ and $\poly{\ln\left(\frac{\max_{i \in [\numTypes]} f(i)}{\min_{i \in [\numTypes]} f(i)}\right)}$.
\end{restatable*}
As with the uniform case, the sampling analogue of \Cref{cor:unbalanced_WR} follows from the above theorem.

\subsection{Concentration of random discretizations}\label{sec:intro_random_discretization}

As seen in \Cref{sec:intro_approx,sec:intro_sampling}, our algorithmic results use a $0$-$\distanceError$-allocation.
However, \Cref{thm:discretization_error} allows for a more general set of allocations, where $\volumeError>0$.
Using this generality, we show sufficient conditions for the hard-core partition function of a discretization based on a uniformly random set of points $X$ to concentrate around a the partition function of the respective hard-contraint point process.
Central to this condition is the notion of a $\volumeScaling$-$\distanceError$-partitioning.

\begin{restatable}[$\volumeScaling$-$\distanceError$-partitioning]{definition}{edpartitioning} \label{def:partitioning}
	Let $\volume \subset \R^{\dimension}$ be bounded and measurable and assume $\vol{\volume} > 0$.
	For $\volumeScaling \in (0, 1]$ and $\distanceError \in \R_{>0}$, we say that a finite partitioning $(\volume_{k})_{k \in [\numPartitions]}$ of $\volume$ is a \emph{$\volumeScaling$-$\distanceError$-partitioning} of size $\numPartitions$ if and only if for all $k \in [\numPartitions]$ it holds that
	\begin{enumerate}[(1)]
		\item $\volume_{k}$ is measurable and
		\[
		\volumeScaling \frac{\vol{\volume}}{\numPartitions} \le \vol{\volume_{k}}
		\]
		\item for all $x, y \in \volume_{k}$ it holds that
		\[
		\dist{x}{y} \le \distanceError . \qedhere
		\]
	\end{enumerate}
\end{restatable}

Our main result in this direction essentially says that, for star-convex regions $\volume$, the existence of a polynomial size $\volumeScaling$-$\distanceError$-partitioning implies strong concentration of hard-core partition functions for random discretizations of polynomial size.

\begin{restatable*}{corollary}{randomdiscretizationconcentration}
	\label{cor:random_discretization_concentration}
	Let $(\volume, \interactionMatrix, \fugacity)$ be a hard-constraint point process with $\numTypes \in \N_{\ge 1}$ particle types, and assume $\volume \subset \R^{\dimension}$ is star-convex.
	Further, assume there is some $\volumeScaling \in (0, 1]$ such that for all $\distanceError \in \R_{>0}$ there is a $\volumeScaling$-$\distanceError$-partitioning of $\volume$ of size $\poly{\frac{\vol{\volume}}{\distanceError}}$.
	For all $\discretizationError \in (0, 1]$ and $\allocationProb \in (0, 1]$ there is some  $\numPoints_{\discretizationError, \allocationProb} \in \poly{\frac{\vol{\volume}}{\discretizationError} \ln \left(\frac{\vol{\volume}}{\discretizationError \allocationProb}\right)}$ such that for all $\numPoints \ge \numPoints_{\discretizationError, \allocationProb}$ it holds for $X \subset \volume$ with $\size{X} = \numPoints$ uniformly at random that
	\[
		\eulerE^{-\discretizationError}\generalPartitionFunction{\volume}{\interactionMatrix}{\fugacity}
		\le \hcPartitionFunction{\hcPPGraph{X}}{\hcPPWeight{X}}
		\le \eulerE^{\discretizationError}\generalPartitionFunction{\volume}{\interactionMatrix}{\fugacity}
	\]
	with probability at least $1-\allocationProb$.
\end{restatable*}
An immediate consequence of \Cref{cor:random_discretization_concentration} is that, for uniform random geometric graphs on $\volume = [0, \sidelength)^{\dimension}$ with edge connection threshold $2 \radius$ and appropriately chosen fugacities, the hard-core partition function concentrates with increasing number of vertices around the partition function of a hard-sphere model with particle radius $\radius$ on $\volume$.
There are ongoing efforts to prove concentration of different properties of geometric random graphs \cite{muller2008two,bachmann2016concentration,bachmann2018concentration}.
To the best of our knowledge, our result is the first such concentration bound for hard-core partition functions.
Thus, we believe this connection to be interesting in its own right.

As the proof of \Cref{cor:random_discretization_concentration} uses \Cref{thm:discretization_error}, it inherits the assumption that $\volume$ is star-convex.
However, without assuming star-convexity, a slightly weaker concentration holds.
\begin{restatable*}{theorem}{concentrationnonstarconvex}
	\label{thm:prob_bound_non_starconvex}
	Let $(\volume, \interactionMatrix, \fugacity)$ be a hard-constraint point process with $\numTypes \in \N_{\ge 1}$ particle types.
	Further, assume there is some $\volumeScaling \in (0, 1]$ such that for all $\distanceError \in \R_{>0}$ there is a $\volumeScaling$-$\distanceError$-partitioning of $\volume$ of size $\poly{\frac{\vol{\volume}}{\distanceError}}$.
	For all $\discretizationError \in (0, 1]$ and $\allocationProb \in (0, 1]$ there is some  $\numPoints_{\discretizationError, \allocationProb} \in \poly{\frac{\vol{\volume}}{\discretizationError \allocationProb}}$ such that for all $\numPoints \ge \numPoints_{\discretizationError, \allocationProb}$ it holds for $X \subset \volume$ with $\size{X} = \numPoints$ uniformly at random that
	\[
		\eulerE^{-\discretizationError}\generalPartitionFunction{\volume}{\interactionMatrix}{\fugacity}
		\le \hcPartitionFunction{\hcPPGraph{X}}{\hcPPWeight{X}}
		\le  \eulerE^{\discretizationError}\generalPartitionFunction{\volume}{\interactionMatrix}{\fugacity}
	\]
	with probability at least $1-\allocationProb$.
\end{restatable*}

The main idea behind the proof of \Cref{thm:prob_bound_non_starconvex} is that, even without star-convexity, it still holds that $\hcPartitionFunction{\hcPPGraph{X}}{\hcPPWeight{X}}$ is not much smaller than $\generalPartitionFunction{\volume}{\interactionMatrix}{\fugacity}$ if there is a suitable allocation for $X$.
It follows that, given~$\volume$ has a suitable partitioning, $\hcPartitionFunction{\hcPPGraph{X}}{\hcPPWeight{X}}$ concentrates from below.
We then establish a simple lemma that derives concentration from above, given sufficiently strong concentration from below for non-negative random variables.
We believe this tool to be of independent interest.

\begin{restatable*}{lemma}{modifiedmarkov}
	\label{lemma:modified_markov}
	Let $\randomVariable$ be a non-negative random variable with finite expectation.
	If there are $\genericError \in \R_{>0}$ and $\genericFailingProb \in [0, 1]$ such that $\Pr{\randomVariable < (1-\genericError) \E{\randomVariable}} \le \genericFailingProb$, then it holds that $\Pr{\randomVariable \ge (1 + c \genericError) \E{\randomVariable}} \le \frac{1}{c+1} \left(1 + \frac{\genericFailingProb (1 - \genericError)}{\genericError}\right)$ for all $c \in \R_{\ge 0}$.
\end{restatable*}

\subsection{Discussion and future directions}\label{sec:intro_outlook}
Our algorithmic results for the Widom--Rowlinson model and the hard-sphere model are centered around discretizations based on the hard-core model.
For the hard-sphere model, similar algorithmic results have recently been obtained without discretization by directly applying a Markov chain Monte Carlo method to the continuous problem, assuming a continuous model of computation \cite{perkinsAlgorithm}.
However, this approach is inherently probabilistic, whereas we have shown that discretization also yields efficient deterministic algorithms for the same parameter regime.
It would be interesting to see how to reproduce such results directly using the continuous model.
One approach could be to approximate the logarithm of the partition function instead, for example by using the cluster expansion.
This has successfully been done in the setting of discrete models \cite{patel2017deterministic,HPR19}.
However, in contrast to the discrete setting, we are not aware of any method to compute the coefficients of the expansion efficiently for continuous problems, which would be required for algorithmic application.

A major question for future research is if the parameter bounds for approximation that are presented in this paper can be improved further.
An obvious idea would be to use more detailed insights about the properties of the graphs that result from the discretization.
A candidate for that could be the connective constant, which was used before to improve algorithmic results for the hard-core model on certain graph classes \cite{sinclair2017spatial}.
However, arguments were made that for the canonical discretization of the hard-sphere model, the connective constant is asymptotically equivalent to the maximum degree of the graph (see~\cite{penrose1994self} and~\cite[Section~1.4]{friedrich2021spectral}).
On the other hand, \textcite{michelen2021potential} recently introduced the notion of a potential-weighted connective constant in the continuous setting to increase the regime of uniqueness of the Gibbs measure for the hard-sphere model in low dimensions by using structural properties of Euclidean space.
This raises the question if the potential-weighted connective constant can be translated to the discrete setting, leading to a similar improvement for approximating the partition function for low dimensions.

Another interesting algorithmic question is whether the quasi-polynomial running time of the deterministic approximation algorithm can be improved to a polynomial.
The two techniques yielding deterministic algorithms for the hard-core model have a running time of $n^{\bigO{\log\degree}}$ for graphs of maximum degree~$\degree$, which corresponds in our setting to the quasi-polynomial running time.
In the correlation decay method of Weitz~\cite{Weitz2006Counting}, this running time comes from computing the self-avoiding walk tree of $G$.
In Barvinok's interpolation method, this comes from enumerating induced connected subgraphs~\cite{patel2017deterministic}.
Constructing a deterministic algorithm for general graphs of maximum degree~$\degree$ and better running-time dependency on~$\degree$, if possible, seems to require significant conceptual insight.
However, one could hope that the symmetric structure of~$\hcPPGraph{X}$, produced by our discretization, leads to faster running times for the computational tasks used in any of these two algorithmic techniques.

Finally, it is worth pointing out that there are a variety of models in statistical physics that do not fit into the framework we considered.
For instance, in \cite{perkinsAlgorithm} the problem of obtaining a randomized approximating for the partition function of single type Gibbs point processes with general repulsive potentials is considered.
It would be interesting to see to what extend discretization based approaches can also be applied in this setting.

\section{Preliminaries} \label{sec:preliminaries}
Throughout the paper, denote by $\N$ the set of all non-negative integers, including $0$ and write $\N_{\ge 1}$ for the set $\N \setminus \{0\}$.
For all $n \in $ denote by $[n]$ the set $[1, n] \cap \N = \{1, 2, \dots, n\}$.
Furthermore, we use the convention that $\inf \emptyset = \infty$.

Let $S$ be some set and $f\colon S \to \R$.
For any $\alpha \in \R$ we write $\alpha f$ for the function $s \mapsto \alpha f(s)$ for all $s \in S$.
Within this work, we occasionally encounter functions whose domain is the empty set.
Recall that, following the definition of functions as subsets of the Cartesian product, there is exactly one function $f\colon \emptyset \to S$ for all sets $S$.

For any subset of $\dimension$-dimensional Euclidean space $A \subseteq \R^{\dimension}$ we write $\scaleUp A$ for the set $\{\scaleUp x \mid x \in A\}$.
Let $\dist{x}{y}$ denote the Euclidean distance between points $x,y \in \R^{\dimension}$.
For $x \in \R^{\dimension}$ and $\radius \in \R_{> 0}$, we write $\ball{\radius}[x]$ for a the open ball of radius $\radius$ around $x$, formally $\ball{\radius}[x] = \{y \in \R^{\dimension} \mid \dist{x}{y} < \radius\}$.
When the center $x$ of the ball is not important, we simply omit it and write $\ball{\radius}$ instead.
Further, we denote the Lebesgue measure on $\R^{\dimension}$ by $\lebesgue{\dimension}$.
We extend this notation by writing $\lebesgue{\dimension \times \numParticles}$ for the product of $\numParticles$ Lebesgue measures, each on $\R^{\dimension}$.
Whenever $\dimension$ and $\numParticles$ are clear from the context, we simply write $\vol{\cdot}$ to simplify notation.
This is, for all measurable sets $A \subset \R^{\dimension}$ we write $\vol{A} = \lebesgue{\dimension}[A]$, and for all measurable sets $B \subset \left(\R^{\dimension}\right)^{\numParticles}$ we write $\vol{B} = \lebesgue{\dimension \times \numParticles}[B]$.
Finally, a set $A \subseteq \R^{\dimension}$ is called \emph{star-convex} if there is a point $x \in A$ such that for all $y \in A$ and all $\alpha \in [0, 1]$ it holds that $\alpha x + (1-\alpha) y \in A$.
We say that any such point $x \in A$ is a center of $A$.

\subsection{Notions of approximation}
% Another relevant concept are $\error$-approximations.
For $x \in \R_{>0}$ and some $\error \in \R_{>0}$ we call $x' \in \R_{>0}$ an \emph{$\error$-approximation for~$x$} if $\eulerE^{-\error} x \le x' \le \eulerE^{\error} x$.
Further, we call a real-valued random variable $X$ a \emph{randomized $\error$-approximation for~$x$} if
$\Pr{\eulerE^{-\error} x \le X \le \eulerE^{\error} x} \ge 3/4$.

For approximate sampling we use the following formal notion.
Let $\xi$  be a probability distribution on a state space $\Omega$.
For $\error \in (0, 1]$, we say that a distribution $\xi'$ on $\Omega$ is an \emph{$\error$-approximation of $\xi$} if $\dtv{\xi}{\xi'} \le \error$, where $\dtv{\cdot}{\cdot}$ denotes the total-variation distance.
Further, we say that we can \emph{$\error$-approximately sample from $\xi$} if we can sample from any distribution $\xi'$ such that $\xi'$ is an $\error$-approximation of $\xi$.

\subsection{The hard-core model}
We recall the definition of the hard-core model from the introduction.
For an undirected graph $G=(V, E)$ the multivariate hard-core model is defined as a tuple $(\graph, \weight)$ where $\weight\colon \vertices \to \R_{\ge 0}$.
Let $\independentSets{\graph}$ denote the set of all independent sets in $\graph$.
We assign each independent set $\independentSet \in \independentSets{\graph}$ a weight of $\prod_{\vertex \in \independentSet} \weight[\vertex]$.
The multivariate hard-core partition function is now defined as the sum of weights of all independent sets.
Further, the Gibbs distribution of the multivariate hard-core model assigns each independent set a probability proportional to its weight.
Formally, that is
\[
	\hcPartitionFunction{\graph}{\weight} = \sum_{\independentSet \in \independentSets{\graph}} \prod_{\vertex \in \independentSet} \weight[\vertex]
	\text{  and  }
	\hcGibbs{\graph}{\weight}[\independentSet] = \frac{\prod_{\vertex \in \independentSet} \weight[\vertex]}{\hcPartitionFunction{\graph}{\weight}}
	\text{ for all $\independentSet \in \independentSets{\graph}$.}
\]

If $\weight$ is a constant, we call the hard-core model univariate (or uniform).
Note that in most of the existing literature the term hard-core model refers to the univariate version.
Further, the multivariate case appears sometimes as polymer models in the algorithmic context.
However, in fact there are some conceptual differences between both terms, such as that polymer models are used to encode other computational problems and tend to be exponentially large in the original problem instance, which means that no explicit representation of the polymer graph is assumed.
We stick to this conceptual difference, and thus, we use the term hard-core model.
Unless we explicitly state that any of our statements is limited to the univariate case, we will use the term hard-core model for the multivariate version.
Further, whenever we deal with the univariate case, we will abuse notation and simply write $\weight$ as a constant instead of $\weight[\vertex]$.
Note that this simplifies the weight of an independent $\independentSet \in \independentSets{\graph}$ to $\weight^{\size{\independentSet}}$, which directly carries over to the definitions of the partition function and the Gibbs distribution.

The following paragraphs summarize known algorithmic results for approximate sampling from the Gibbs distribution and approximating the partition function.
For obtaining a deterministic approximation of the partition function, two main algorithmic approaches are known, namely the computational tree method, introduces by \textcite{Weitz2006Counting}, and the polynomial interpolation method of \textcite{patel2017deterministic,barvinok2016combinatorics}.
To get a randomized approximation, usually a Markov chain Monte Carlo algorithm based on self-reducibility is used.
For this approach, it is necessary to approximately sample from the Gibbs distribution.
This is often done via \emph{Glauber dynamics}, which is an ergodic Markov chain with state space $\independentSets{\graph}$ and stationary distribution $\hcGibbs{\graph}{\weight}$.
In the multivariate setting, the update rule of this Markov chain works as follows.
Let $\independentSet_{t}$ be the state of the chain at time $t$.
The configuration at time $t+1$ is obtained by first choosing a vertex $\vertex \in \vertices$ uniformly at random.
With probability $\frac{1}{1 + \weight[\vertex]}$, we set $\independentSet_{t+1} = \independentSet_{t} \setminus \{\vertex\}$ (note that nothing changes if $\vertex \notin \independentSet_{t}$).
Otherwise, with probability $\frac{\weight[\vertex]}{1 + \weight[\vertex]}$, we set $\independentSet_{t+1} = \independentSet_{t} \cup \{\vertex\}$ if $\independentSet_{t} \cup \{\vertex\} \in \independentSets{\graph}$.

\paragraph*{Univariate approximation:}
For the univariate hard-core model, the ability to efficiently approximate the partition function is closely related to the maximum degree of the graph.
For a graph of maximum degree $\degree \in \N_{\ge 2}$, the tree threshold is defined as
\[
	\criticalWeight{\degree} = \frac{\left(\degree - 1 \right)^{\degree - 1}}{\left(\degree - 2\right)^{\degree}} .
\]

The following theorems show some of the algorithmic results that are known below this threshold.
The first statement deals with approximately sampling from the Gibbs distribution of the univariate hard-core model via Glauber dynamics.

\begin{theorem}[{\cite[Theorem $1.3$]{chen2021alldegrees}}]
	\label{thm:hc_sampling_univariate}
	Let $\graph = (\vertices, \edges)$ be an undirected graph with maximum vertex degree bounded by $\degree \in \N_{\ge 2}$.
	For all $\weight \in \R_{\ge 0}$ with $\weight < \criticalWeight{\degree}$
	the Glauber dynamics on the univariate hard-core model $(\graph, \weight)$ are mixing in time $\poly{\size{\vertices}}$, and for all $\samplingError \in (0, 1]$ we can $\samplingError$-approximately sample from the Gibbs measure $\hcGibbs{\graph}{\weight}$ in time $\bigO{n\left(n \ln(\degree) + \ln\left(\frac{1}{\samplingError}\right)\right)}$.
\end{theorem}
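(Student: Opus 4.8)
The plan is to prove this through the \emph{spectral independence} framework, which is tailored to exactly the regime $\weight < \criticalWeight{\degree}$ below the tree uniqueness threshold. The first and main step is to show that for every graph $\graph$ of maximum degree at most $\degree$ and every $\weight < \criticalWeight{\degree}$, the Gibbs distribution $\hcGibbs{\graph}{\weight}$ is $\eta$-spectrally independent for a constant $\eta$ depending only on how far $\weight$ is below the threshold: the pairwise influence matrix $\Psi$, whose $(u,v)$ entry records how much conditioning on the state of $u$ shifts the occupation probability of $v$, should have largest eigenvalue at most $\eta$. I would obtain this by bounding the total influence $\max_u \sum_v \lvert \Psi_{uv}\rvert$ along the tree of self-avoiding walks rooted at $u$: the relevant quantities are governed by the hard-core tree recursion $R \mapsto \weight / \prod_i (1 + R_i)$, and composing it with a suitable potential function (the standard choice has derivative proportional to $1/\sqrt{R(1+R)}$) turns it into a strict contraction whose rate is bounded away from $1$ precisely when $\weight < \criticalWeight{\degree}$; summing the geometrically decaying contributions along the SAW tree then gives $\lVert\Psi\rVert_\infty = \bigO{1}$, uniformly in $\degree$ and $\numberOfVertices$. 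The same argument applies after pinning any set of vertices (a pinning only deletes branches of the SAW tree), so spectral independence holds with the same constant under all pinnings.

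The second step converts this into a spectral gap for Glauber dynamics. Identify $\hcGibbs{\graph}{\weight}$ with the top level of the independent-set simplicial complex; then Glauber dynamics is exactly the down-up walk on this complex. Spectral independence under all pinnings says that every link of the complex is a good local spectral expander, and the local-to-global theorems (Alev--Lau, Anari et al.) turn this into a telescoping lower bound on the spectral gap of the down-up walk, yielding relaxation time $t_{\mathrm{rel}} = \bigO{\numberOfVertices}$. This step uses only the eigenvalue bound from step~1 and no lower bound on occupation marginals, which is why a clean $\numberOfVertices \ln \degree$ dependence survives into the final running time.

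The third step is bookkeeping. Using the standard inequality $t_{\mathrm{mix}}(\samplingError) \le t_{\mathrm{rel}} \cdot \ln\!\big(1/(\gibbsSymbol_{\min}\samplingError)\big)$, where $\gibbsSymbol_{\min} = \min_{\independentSet \in \independentSets{\graph}} \hcGibbs{\graph}{\weight}[\independentSet]$, and noting that every independent set has weight at least $\min(\weight,1)^{\numberOfVertices}$ while $\hcPartitionFunction{\graph}{\weight} \le (1+\weight)^{\numberOfVertices}$, we get $\ln(1/\gibbsSymbol_{\min}) = \bigO{\numberOfVertices \ln\!\big((1+\weight)/\min(\weight,1)\big)} = \bigO{\numberOfVertices \ln \degree}$, using $\weight < \criticalWeight{\degree} = \bigTheta{1/\degree}$. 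Hence $t_{\mathrm{mix}}(\samplingError) = \bigO{\numberOfVertices(\numberOfVertices \ln \degree + \ln(1/\samplingError))}$ steps, and since one Glauber step costs $\bigO{1}$ arithmetic operations (choose a uniform vertex, inspect its $\le \degree$ neighbours), this is also the claimed time bound; the $\poly{\numberOfVertices}$ mixing claim is the special case of constant $\samplingError$.

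The hard part is step~1 carried out \emph{sharply}: the spectral-independence constant must stay bounded as $\weight \uparrow \criticalWeight{\degree}$ \emph{and} uniformly as $\degree \to \infty$. The crude contraction estimate for the tree recursion degrades both near the threshold and for large $\degree$, so a Dobrushin-type or naive argument only yields a $\degree$-dependent bound; one needs the careful potential-function/amortised analysis of the SAW tree, together with an extra argument handling the large-degree asymptotics $\degree\,\criticalWeight{\degree} \to \eulerE$. This is precisely the technical core of the cited work; once it is in place, steps~2 and~3 are by now standard.
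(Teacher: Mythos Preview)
The paper does not give its own proof of this statement: it is quoted verbatim as \cite[Theorem~1.3]{chen2021alldegrees} and used as a black box. So there is nothing in the paper to compare against; your sketch is really a sketch of the cited result, and at that level it is essentially the right outline of the spectral-independence route taken there.

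Two points are worth tightening. First, in step~2 you say Alev--Lau/Anari et al.\ directly give relaxation time $\bigO{\numberOfVertices}$. The basic local-to-global machinery only yields $\numberOfVertices^{O(\eta)}$ from $\eta$-spectral independence; the near-linear relaxation time actually needs the entropy-tensorisation/boosting refinements of Chen--Liu--Vigoda, which is precisely what \cite{chen2021alldegrees} supplies. Second, your step~3 bound $\ln(1/\gibbsSymbol_{\min}) = \bigO{\numberOfVertices \ln \degree}$ uses $\weight < \criticalWeight{\degree} = \bigTheta{1/\degree}$ only as an \emph{upper} bound on $\weight$, but for $\weight \ll 1/\degree$ the quantity $\ln\!\big((1+\weight)/\weight\big)$ is $\Theta(\ln(1/\weight))$, not $\bigO{\ln \degree}$. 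Either assume $\weight$ bounded away from $0$ (which is the regime where the paper actually invokes the theorem), or note that for very small $\weight$ a direct path-coupling argument already gives a faster bound and take the minimum.
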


It is well known that, based on such a sampler, a Markov chain Monte Carlo approximation for the partition function of the hard-core model can be obtained using self-reducibility.
Thus, \Cref{thm:hc_sampling_univariate} leads to the following approximation result.

\begin{theorem}
	\label{thm:hc_fpras_univariate}
	Let $\graph = (\vertices, \edges)$ be an undirected graph with maximum vertex degree bounded by $\degree \in \N_{\ge 2}$ and let $\weight \in \R_{\ge 0}$ with $\weight < \criticalWeight{\degree}$.
	For all $\approxError \in (0, 1]$ there is a randomized $\approxError$-approximation algorithm for the univariate hard-core partition function $\hcPartitionFunction{\graph}{\weight}$ with running time $\poly{\frac{n}{\approxError}}$.
\end{theorem}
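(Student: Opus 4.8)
The plan is to obtain the randomized approximation of $\hcPartitionFunction{\graph}{\weight}$ from the approximate sampler of \Cref{thm:hc_sampling_univariate} via the standard self-reducibility / telescoping-product argument, as originally formalized by \textcite{JVV86}. First I would fix an arbitrary ordering $\vertex_1, \dots, \vertex_n$ of the vertices of $\graph$ and consider the nested sequence of induced subgraphs $\graph_0 = \graph \supseteq \graph_1 \supseteq \dots \supseteq \graph_n$, where $\graph_j = \subgraph{\graph}{\{\vertex_{j+1}, \dots, \vertex_n\}}$, so that $\graph_n$ is the empty graph with $\hcPartitionFunction{\graph_n}{\weight} = 1$. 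Writing $\hcPartitionFunction{\graph_{j-1}}{\weight} = \hcPartitionFunction{\graph_{j}}{\weight} + \weight \cdot \hcPartitionFunction{\graph_{j} \setminus \neighborsClosed{\vertex_j}}{\weight}$ and dividing, one gets the telescoping identity
\[
	\hcPartitionFunction{\graph}{\weight} = \prod_{j=1}^{n} \frac{\hcPartitionFunction{\graph_{j-1}}{\weight}}{\hcPartitionFunction{\graph_{j}}{\weight}} ,
\]
and each ratio $\varrho_j \defeq \hcPartitionFunction{\graph_{j}}{\weight} / \hcPartitionFunction{\graph_{j-1}}{\weight}$ equals $\hcGibbs{\graph_{j-1}}{\weight}[\{\independentSet \mid \vertex_j \notin \independentSet\}]$, i.e.\ the probability under the Gibbs distribution on $\graph_{j-1}$ that $\vertex_j$ is unoccupied. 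Crucially, since $\weight < \criticalWeight{\degree}$ and every $\graph_{j-1}$ is a subgraph of $\graph$ hence also of maximum degree at most $\degree$, \Cref{thm:hc_sampling_univariate} applies to each $\graph_{j-1}$.

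The next step is to estimate each $\varrho_j$ by a Monte Carlo average: draw $O(n/\error^2)$ independent samples (using \Cref{thm:hc_sampling_univariate} with sampling error a suitably small polynomial in $\error/n$, say $\samplingError \in \bigTheta{\error/n}$) from a distribution within total-variation distance $\samplingError$ of $\hcGibbs{\graph_{j-1}}{\weight}$, and let $\widehat{\varrho}_j$ be the empirical frequency of the event $\{\vertex_j \notin \independentSet\}$. A key point needed for the variance bound is that each $\varrho_j$ is bounded below by a constant: indeed $\varrho_j \ge 1/(1+\weight) \ge 1/(1+\criticalWeight{2}) = 1/2$ — actually $\criticalWeight{\degree} \le \eulerE/(\degree-2) \le$ a constant for bounded $\degree$, but more simply $\varrho_j \ge \hcPartitionFunction{\graph_j}{\weight}/(\hcPartitionFunction{\graph_j}{\weight}(1+\weight)) = 1/(1+\weight)$, which is bounded away from $0$. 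Hence the relative variance of each $\widehat\varrho_j$ is $O(1)$, and averaging over $O(n/\error^2)$ samples drives it to $O(\error^2/n)$; the output $\prod_{j=1}^n \widehat\varrho_j$ then concentrates multiplicatively within $\eulerE^{\pm\error}$ of $\hcPartitionFunction{\graph}{\weight}$ with probability at least $3/4$, by a second-moment (Chebyshev) argument on the product together with a union bound absorbing the $n\cdot\samplingError$ total bias from the imperfect samplers. Choosing $\samplingError$ polynomially small in $\error/n$ keeps this bias negligible and keeps the per-sample running time in $\poly{n/\error}$ by \Cref{thm:hc_sampling_univariate}, so the overall running time is $\poly{n/\error}$ as claimed.

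The main obstacle — and the only place requiring care — is the propagation of errors through the product of $n$ estimators: both the stochastic fluctuation of each $\widehat\varrho_j$ and the systematic bias from sampling within total-variation distance $\samplingError$ (rather than exactly) must be controlled so that their cumulative effect over $n$ factors stays within the target multiplicative error $\eulerE^{\pm\error}$ with the required success probability. The standard resolution is to allocate error budget $\error/(2n)$-ish per factor in the exponent, use that $\ln(1+t)\approx t$ for small $t$ to convert multiplicative control of each factor into additive control of $\ln\hcPartitionFunction{\graph}{\weight}$, and take $\samplingError$ small enough that $n\samplingError$ contributes only lower-order bias; this is routine once the lower bound $\varrho_j \ge 1/(1+\weight) = \bigOmega{1}$ is in hand. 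Everything else is bookkeeping.
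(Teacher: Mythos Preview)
Your proposal is correct and follows exactly the approach the paper has in mind: the paper does not give an explicit proof but simply states that \Cref{thm:hc_fpras_univariate} follows from the sampler of \Cref{thm:hc_sampling_univariate} via the well-known self-reducibility (telescoping-product) reduction of \textcite{JVV86}, which is precisely what you sketch. One cosmetic slip: the final output should be $\prod_{j=1}^{n} \widehat\varrho_j^{\,-1}$ (or equivalently you should estimate the reciprocals), since $\hcPartitionFunction{\graph}{\weight} = \prod_j 1/\varrho_j$; this does not affect the argument, as the lower bound $\varrho_j \ge 1/(1+\weight)$ together with the trivial upper bound $\varrho_j \le 1$ keeps the relative variance of $1/\widehat\varrho_j$ under control in the same way.
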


Besides this randomized algorithm, the following result can be obtained deterministically.
\begin{theorem}[{\cite[Theorem $2.7$]{Weitz2006Counting}}]
	\label{thm:hc_fptas_univariate}
	Let $\graph = (\vertices, \edges)$ be an undirected graph with maximum vertex degree bounded by $\degree \in \N_{\ge 2}$ and let $\weight \in \R_{\ge 0}$ with $\weight < \criticalWeight{\degree}$.
	For all $\approxError \in (0, 1]$ there is a deterministic $\approxError$-approximation algorithm for the univariate hard-core partition function $\hcPartitionFunction{\graph}{\weight}$ with running time $\left(\frac{\size{\vertices}}{\approxError}\right)^{\bigTheta{\ln (\degree)}}$.
\end{theorem}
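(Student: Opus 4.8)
This statement is Weitz's theorem, so I will only sketch the argument; the plan is to reproduce his correlation-decay recipe. Write $n = \size{\vertices}$. The idea is: reduce the computation of $\hcPartitionFunction{\graph}{\weight}$ to that of single-vertex marginals via self-reducibility, express each such marginal exactly as a quantity on the tree of self-avoiding walks rooted at the vertex, and then evaluate a shallow truncation of that tree, with the truncation error controlled by strong spatial mixing in the tree-uniqueness regime.

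First I would fix a vertex ordering $\vertex_1, \dots, \vertex_n$ and let $\graph_i$ denote the subgraph of $\graph$ induced by $\{\vertex_i, \dots, \vertex_n\}$, so that $\graph_1 = \graph$ and $\graph_{n+1}$ is empty. The recursion $\hcPartitionFunction{\graph_i}{\weight} = \hcPartitionFunction{\graph_{i+1}}{\weight} + \weight \cdot \hcPartitionFunction{\graph_i - \neighborsClosed{\vertex_i}}{\weight}$ gives $\hcGibbs{\graph_i}{\weight}[\vertex_i \notin \independentSet] = \hcPartitionFunction{\graph_{i+1}}{\weight} / \hcPartitionFunction{\graph_i}{\weight}$, and telescoping yields
\[
	\hcPartitionFunction{\graph}{\weight} = \prod_{i=1}^{n} \frac{1}{\hcGibbs{\graph_i}{\weight}[\vertex_i \notin \independentSet]} .
\]
Hence it suffices to compute each marginal $\hcGibbs{\graph_i}{\weight}[\vertex_i \notin \independentSet]$ up to a multiplicative factor $\eulerE[\pm \approxError/n]$; the product of the $n$ estimates is then an $\approxError$-approximation of $\hcPartitionFunction{\graph}{\weight}$, and every $\graph_i$ still has maximum degree at most $\degree$.

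Next I would invoke Weitz's tree reduction: for any graph $H$ and vertex $\vertex$, the marginal $\hcGibbs{H}{\weight}[\vertex \in \independentSet]$ equals the corresponding root marginal of the hard-core model on the tree $T$ of self-avoiding walks from $\vertex$ in $H$, with a fixed boundary condition pinning each leaf that corresponds to a walk closing a cycle. On a tree the occupation ratio $R_\vertex \defeq \hcGibbs{T}{\weight}[\vertex \in \independentSet] / \hcGibbs{T}{\weight}[\vertex \notin \independentSet]$ obeys the exact recursion $R_\vertex = \weight \prod_{u} (1 + R_u)^{-1}$ over the children $u$ of $\vertex$, with $R$ set to $0$ or $\infty$ at pinned leaves. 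I would then truncate $T$ at depth $\sidelength$ and evaluate the recursion bottom-up. Since $\graph$ has maximum degree $\degree$, the number of self-avoiding walks of length $\sidelength$ from $\vertex$ is at most $\degree^{\sidelength}$, so choosing $\sidelength = \bigTheta{\ln(n/\approxError)}$ makes each truncated tree of size $\left(\frac{n}{\approxError}\right)^{\bigTheta{\ln \degree}}$, evaluable in time polynomial in that size; carrying this out for all $n$ vertices keeps the total running time at $\left(\frac{\size{\vertices}}{\approxError}\right)^{\bigTheta{\ln \degree}}$.

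The crux — and the step I expect to be the main obstacle — is the error analysis: truncating the self-avoiding-walk tree at depth $\sidelength$ must perturb the root marginal by at most $\eulerE[-\bigOmega{\sidelength}]$ whenever $\weight < \criticalWeight{\degree}$. Equivalently, one must establish strong spatial mixing (exponential correlation decay) for the hard-core model on all graphs of maximum degree $\degree$ throughout the regime $\weight < \criticalWeight{\degree}$. Weitz obtains this through an amortized potential argument: under a suitable change of variables the tree recursion becomes a strict contraction whose Lipschitz constant is bounded away from $1$ by a quantity depending only on $\degree$ and $\weight < \criticalWeight{\degree}$, and iterating the contraction over $\sidelength$ levels yields the exponential decay. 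With this in hand, $\sidelength = \bigTheta{\ln(n/\approxError)}$ gives per-vertex error $\eulerE[\pm \approxError/n]$, and the telescoping product above delivers the claimed deterministic $\approxError$-approximation of $\hcPartitionFunction{\graph}{\weight}$ in running time $\left(\frac{\size{\vertices}}{\approxError}\right)^{\bigTheta{\ln \degree}}$.
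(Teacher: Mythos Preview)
The paper does not prove this statement at all: it is quoted verbatim as a known result, attributed to \cite[Theorem~2.7]{Weitz2006Counting}, and used as a black box (together with \Cref{thm:hc_fpras_univariate}) in the proof of \Cref{thm:uniform_approx}. Your sketch is a faithful outline of Weitz's original correlation-decay argument---self-reducibility via telescoping marginals, the self-avoiding-walk tree reduction, and the potential-function contraction establishing strong spatial mixing for $\weight < \criticalWeight{\degree}$---so there is nothing to compare against in the present paper; you are simply reconstructing the cited source, which is the right thing to do here.
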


Note that \Cref{thm:hc_fpras_univariate} is based on the computational tree method by \textcite{Weitz2006Counting}.
A similar result can be obtained by combining the zero-freeness result by \textcite{peters2019conjecture} and the polynomial interpolation method as proposed by \textcite{patel2017deterministic}.

\begin{remark} \label{remark:univariate_hc}
	In fact, \Cref{thm:hc_sampling_univariate,thm:hc_fpras_univariate,thm:hc_fptas_univariate} can also be applied in the non-uniform setting when the condition is replaced by
	\[
		\weight_{\max} < \criticalWeight{\degree} ,
	\]
	where $\weight_{\max} = \max_{\vertex \in \vertices} \weight[\vertex]$.
\end{remark}

\paragraph*{Multivariate approximation:}
One of the least restrictive conditions that appeared in the literature for efficient approximation of multivariate hard-core partition functions is the clique dynamics condition, introduced by \textcite{friedrich2020polymer}.
Initially, the condition was used for proving rapid mixing of clique dynamics for abstract polymer models.
However Glauber dynamics can, in fact, be seen as a special case of clique dynamics for cliques of size $1$.
The result for Glauber dynamics is summarized in the following statement.

\begin{theorem}[{\cite[Theorem $2$]{friedrich2020polymer}}]
	\label{thm:clique_dynamics_condition_sampling}
	Let $(\graph, \weight)$ be a multivariate hard-core model.
	If there is a function $f\colon \vertices \to \R_{>0}$ such that for all $\vertex \in \vertices$ it holds that
	\[
		f(v) \ge \sum_{w \in \neighbors{\vertex}} f(w) \frac{\weight[w]}{1 + \weight[w]} ,
	\]
	where $\neighbors{\vertex}$ denotes the neighborhood of $\vertex$ in $\graph$, then the Glauber dynamics on $(\graph, \weight)$ are mixing in time $\poly{\size{\vertices}, \ln\left(\frac{\max_{\vertex \in \vertices} f(\vertex)}{\min_{\vertex \in \vertices} f(\vertex)}\right)}$.
	Further, for all $\samplingError \in (0, 1]$ we can sample $\samplingError$-approximately from the Gibbs distribution $\hcGibbs{\graph}{\weight}$ in time $\poly{\frac{\size{\vertices}}{\samplingError}, \ln\left(\frac{\max_{\vertex \in \vertices} f(\vertex)}{\min_{\vertex \in \vertices} f(\vertex)}\right)}$.
\end{theorem}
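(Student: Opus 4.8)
The plan is to reduce everything to a mixing-time bound for the Glauber dynamics and then invoke standard simulation. Concretely, it suffices to show $T_{\mathrm{mix}} \in \poly{\size{\vertices}, \ln(f_{\max}/f_{\min})}$, where $f_{\max} = \max_{\vertex} f(\vertex)$ and $f_{\min} = \min_{\vertex} f(\vertex)$: an $\samplingError$-approximate sample is then produced by running the chain from the fixed state $\independentSet_0 = \emptyset$ for $\bigO{T_{\mathrm{mix}} \ln(1/\samplingError)}$ steps, each step costing $\bigO{\degree}$ time to test whether a proposed insertion keeps the current set independent, which gives the stated running time. For the mixing bound I would use path coupling with the weighted Hamming metric $\Phi(\independentSet, J) \defeq \sum_{\vertex \in \independentSet \symDiff J} f(\vertex)$ on $\independentSets{\graph}$; it satisfies $\Phi \ge f_{\min}$ on every pair of distinct independent sets and $\Phi \le f_{\max}\size{\vertices}$ everywhere.

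By the path-coupling principle it is enough to couple one step of the dynamics for pairs $(\independentSet, J)$ with $\independentSet \symDiff J = \{u\}$. Write $J = \independentSet \cup \{u\}$ with $u \notin \independentSet$; since $J$ is independent, $\neighbors{u} \cap \independentSet = \emptyset$. Couple the two updates by drawing the same uniformly random vertex $\vertex$ and the same add/remove randomness in both copies, and let $\Delta\Phi$ denote the resulting one-step change of $\Phi$. Three cases arise. If $\vertex = u$, then in each copy $u$ ends up present or absent exactly according to the drawn randomness, so the copies become identical and $\Phi$ falls to $0$. If $\vertex \in \neighbors{u}$, then $u \in J$ blocks the insertion of $\vertex$ into $J$, so $J$ is unchanged, whereas in $\independentSet$ the vertex $\vertex$ is inserted with probability at most $\weight[\vertex]/(1 + \weight[\vertex])$, the only event that creates a new discrepancy and then raises $\Phi$ by $f(\vertex)$. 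If $\vertex \notin \neighbors{u} \cup \{u\}$, the update acts identically on both copies. Averaging over $\vertex$,
\[
	\E{\Delta\Phi \mid \independentSet \symDiff J = \{u\}}
	\;\le\; \frac{1}{\size{\vertices}}\Bigl( -f(u) + \sum_{\vertex \in \neighbors{u}} \frac{\weight[\vertex]}{1+\weight[\vertex]}\, f(\vertex) \Bigr)
	\;\le\; 0 ,
\]
the last inequality being exactly the hypothesis on $f$. Extending the coupling along $\Phi$-shortest paths, $(\Phi_t)_t$ is therefore a non-negative supermartingale, absorbing at $0$, with the additional feature that from a pair at distance $f(u)$ the two chains coalesce in the very next step with probability at least $1/\size{\vertices}$ (the event $\vertex = u$).

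The remaining step — and the one I expect to be the main obstacle — is to turn this drift estimate into the claimed mixing time. When the inequality on $f$ holds with some slack, finiteness of $\vertices$ upgrades it to a uniform contraction $\E{\Phi_{t+1}} \le (1 - c/\size{\vertices})\E{\Phi_t}$ for a constant $c > 0$, so after $\bigO{c^{-1}\size{\vertices}(\ln\size{\vertices} + \ln(f_{\max}/f_{\min}))}$ steps $\E{\Phi_t} \le f_{\min}/4$, hence $\Pr{\independentSet_t \ne J_t} \le 1/4$ by Markov's inequality together with $\Phi_t \ge f_{\min}$ on uncoupled pairs; the geometric decay of the coupling-time tail then yields $T_{\mathrm{mix}} \in \poly{\size{\vertices}, \ln(f_{\max}/f_{\min})}$ and, via the simulation described above, the sampler. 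The delicate point is that path coupling on its own only certifies $\Phi_t$ to be a supermartingale, so obtaining a bound that is polynomial — and, crucially for the applications, polynomial in $\ln(f_{\max}/f_{\min})$ rather than in $f_{\max}/f_{\min}$, which matters precisely because the weights $f(\vertex)$ may span an exponential range — needs the contraction to be quantitative; the limiting case in which the condition is tight (no slack, i.e.\ at the critical boundary) must be treated separately, either by perturbing $\weight$ infinitesimally and passing to the limit, or by a bounded-supermartingale hitting-time argument that exploits the one-step coalescence probability noted above.
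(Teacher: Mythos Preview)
The paper does not prove this statement: it is quoted verbatim from \cite[Theorem~2]{friedrich2020polymer} and used as a black box, so there is no proof in the present paper to compare your attempt against.

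On its own merits, your path-coupling argument with the $f$-weighted Hamming metric is the standard and correct route; the condition in the hypothesis is tailored precisely so that this coupling is non-expansive on adjacent pairs, and your one-step drift computation is accurate. Your caveat about the boundary case (equality in the hypothesis yielding only a supermartingale rather than a strict contraction, hence threatening the $\ln(f_{\max}/f_{\min})$ dependence) is a genuine technical point that any complete proof must address. Note, though, that every invocation of this theorem in the present paper --- see the proofs of \Cref{thm:non_uniform_approx} and \Cref{thm:non_uniform_sampling} --- applies it with strict inequality and hence with slack, so for the purposes of this paper the boundary case never arises.
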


Further, the following approximation result is obtained via a Markov chain Monte Carlo approach and \Cref{thm:clique_dynamics_condition_sampling}.
\begin{theorem}[{\cite[Theorem $3$]{friedrich2020polymer}}]
	\label{thm:clique_dynamics_condition_approx}
	Let $(\graph, \weight)$ be a multivariate hard-core model.
	If there is a function $f\colon \vertices \to \R_{>0}$ such that for all $\vertex \in \vertices$ it holds that
	\[
		f(v) \ge \sum_{w \in \neighbors{\vertex}} f(w) \frac{\weight[w]}{1 + \weight[w]} ,
	\]
	then for all $\approxError \in (0, 1]$ there is a randomized $\approxError$-approximation algorithm for the partition function $\hcPartitionFunction{\graph}{\weight}$ with running time in $\poly{\frac{\size{\vertices}}{\approxError}, \ln\left(\frac{\max_{\vertex \in \vertices} f(\vertex)}{\min_{\vertex \in \vertices} f(\vertex)}\right)}$.
\end{theorem}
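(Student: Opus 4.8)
The plan is the textbook reduction from approximate counting to approximate sampling, exactly the route indicated after \Cref{thm:hc_sampling_univariate}: feed the sampler of \Cref{thm:clique_dynamics_condition_sampling} into a telescoping product of occupation marginals. The structural fact that makes this work is that the hypothesis is \emph{hereditary}: if $f$ witnesses the condition for $(\graph, \weight)$ and $\graph'$ is an induced subgraph of $\graph$, then the restriction of $f$ to the vertices of $\graph'$ still satisfies $f(v) \ge \sum_{w \in \neighbors{v}[\graph']} f(w) \frac{\weight[w]}{1 + \weight[w]}$ for every vertex $v$ of $\graph'$, since deleting a vertex only removes nonnegative terms from the sum. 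Hence \Cref{thm:clique_dynamics_condition_sampling} may be invoked on every induced subgraph we encounter, with the same value of $\max_{\vertex \in \vertices} f(\vertex) / \min_{\vertex \in \vertices} f(\vertex)$ and thus the same per-sample cost.

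Write $n = \size{\vertices}$, fix an ordering $v_1, \dots, v_n$ of $\vertices$ (chosen carefully, see below), and set $\graph_i = \subgraph{\graph}{\{v_1, \dots, v_i\}}$, so that $\hcPartitionFunction{\graph_0}{\weight} = 1$ and $\graph_n = \graph$. The independent sets of $\graph_i$ that avoid $v_i$ are precisely the independent sets of $\graph_{i-1}$; hence, writing $p_i$ for the probability that a draw from $\hcGibbs{\graph_i}{\weight}$ does not contain $v_i$,
\[
	p_i = \frac{\hcPartitionFunction{\graph_{i-1}}{\weight}}{\hcPartitionFunction{\graph_i}{\weight}}
	\qquad \text{and therefore} \qquad
	\hcPartitionFunction{\graph}{\weight} = \prod_{i=1}^{n} \frac{1}{p_i} .
\]
Moreover $p_i \ge \frac{1}{1 + \weight[v_i]}$, since pairing each independent set $\independentSet$ of $\graph_i$ that contains $v_i$ with $\independentSet \setminus \{v_i\}$ shows a draw contains $v_i$ with probability at most $\weight[v_i] \, p_i$. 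The algorithm is then: for each $i \in [n]$, use \Cref{thm:clique_dynamics_condition_sampling} to obtain $N$ independent $\samplingError$-approximate samples from $\hcGibbs{\graph_i}{\weight}$, let $\widehat p_i$ be the fraction of them not containing $v_i$, and return $\widehat Z = \prod_{i=1}^n \widehat p_i^{-1}$.

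The remaining work is the error analysis. Each $\widehat p_i$ is an average of independent $\{0,1\}$-valued random variables, so $\Var{\widehat p_i} \le \E{\widehat p_i}/N$ while $\absolute{\E{\widehat p_i} - p_i} \le \samplingError$. For suitably chosen $N$ and $\samplingError$, Chebyshev's inequality and a union bound over the $n$ factors, together with the elementary fact that if each $\widehat p_i^{-1}$ lies within a factor $\eulerE^{\approxError/n}$ of $p_i^{-1}$ then $\widehat Z$ lies within a factor $\eulerE^{\approxError}$ of $\hcPartitionFunction{\graph}{\weight}$, make $\widehat Z$ a randomized $\approxError$-approximation (success probability $3/4$, boosting a constant number of runs if needed). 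Provided the estimated marginals are bounded below — the content of the last paragraph — one may take $N = \poly{n/\approxError}$ and $1/\samplingError = \poly{n/\approxError}$, so the total cost of $nN$ sampler calls, each of cost $\poly{n/\samplingError, \ln\!\left(\max_{\vertex \in \vertices} f(\vertex)/\min_{\vertex \in \vertices} f(\vertex)\right)}$ by \Cref{thm:clique_dynamics_condition_sampling}, collapses to $\poly{\frac{\size{\vertices}}{\approxError}, \ln\!\left(\frac{\max_{\vertex \in \vertices} f(\vertex)}{\min_{\vertex \in \vertices} f(\vertex)}\right)}$.

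The step I expect to be the genuine obstacle — and the reason the ordering must be chosen with care — is that the estimator of $1/p_i$ has relative variance of order $\frac{1}{N p_i}$, so a vertex with $\weight[v_i]$ large and $p_i$ near its worst-case lower bound $\frac{1}{1 + \weight[v_i]}$ would force $N$ to grow with $\weight[v_i]$, a dependence absent from the claimed running time. The fix is to process the vertices in order of \emph{decreasing} fugacity and to compute the factor exactly as $1 + \weight[v_i]$ whenever $v_i$ is isolated in $\graph_i$ (such vertices are then never estimated); the clique-dynamics condition is what guarantees that the marginals $p_i$ actually estimated are bounded below by a quantity depending only on $\max_{\vertex \in \vertices} f(\vertex)/\min_{\vertex \in \vertices} f(\vertex)$ — morally, the condition forbids a heavy vertex from having a dense or long-path neighbourhood, so that for a non-isolated $v_i$ the probability its neighbourhood is unoccupied is $\bigO{\weight[v_i]^{-1}}$ and hence $p_i = \bigOmega{1}$ up to that ratio. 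Establishing this lower bound on the estimated marginals from the clique-dynamics condition, and checking that the remaining Monte Carlo bookkeeping then introduces only $\size{\vertices}$, $\approxError^{-1}$, and $\max_{\vertex \in \vertices} f(\vertex)/\min_{\vertex \in \vertices} f(\vertex)$ into the running time, is the technical heart of the argument; everything else is the routine accounting sketched above.
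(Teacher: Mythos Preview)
The paper does not prove this statement: it is quoted as \cite[Theorem~3]{friedrich2020polymer}, and the only accompanying remark is that it ``is obtained via a Markov chain Monte Carlo approach and \Cref{thm:clique_dynamics_condition_sampling}.'' So there is no in-paper proof to compare against; your self-reducibility sketch is exactly the route the one-line remark points to, and the hereditary observation about $f$ is the right structural ingredient.

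That said, the gap you flag is real, and your proposed patch does not close it as stated. Even granting the decreasing-fugacity ordering and the exact treatment of isolated vertices, your argument only promises that the estimated marginals $p_i$ are bounded below by ``a quantity depending only on $\max_{\vertex} f(\vertex)/\min_{\vertex} f(\vertex)$.'' If that lower bound is $\bigOmega{(\min f/\max f)^{c}}$ for some constant $c>0$, then the sample size $N$ scales polynomially in $\max f/\min f$, and the total running time inherits a \emph{polynomial} rather than logarithmic dependence on this ratio --- contradicting the claimed $\poly{\size{\vertices}/\approxError,\ \ln(\max f/\min f)}$ bound. To match the stated running time you would need the non-isolated marginals to be bounded below by an \emph{absolute} constant (so that the ratio enters only through the sampler's mixing time), and your heuristic that the clique-dynamics condition ``forbids a heavy vertex from having a dense or long-path neighbourhood'' is suggestive but is not a proof of such a uniform bound. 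In short: the high-level plan is right and matches what the paper says, but the quantitative control of the $p_i$ --- which you correctly identify as the technical heart --- needs a sharper argument than the one sketched here, and the details live in the cited source rather than in this paper.
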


\section{Point processes with general hard constraints}\label{sec:model}
We recall the definition of a hard-constraint process from the introduction and give some alternative forms of the partition function.
Let $\volume \subset \R^{\dimension}$ be bounded and measurable, and let $\numTypes \in \N_{\ge 1}$.
The model represents the distribution of particles of $\numTypes$ types, labeled by elements in $[\numTypes] \defeq [1, q] \cap \N$, on $\volume$.
Particles of the same types are assumed to be indistinguishable.
% This is, swapping particles of the same type with each other does not result in a different configuration.\todo{Is this sentence necessary?}
Let $\fugacity\colon [\numTypes] \to \R_{\ge 0}$ be a function that equips each particle type with a fugacity.
For each particle type $i \in [\numTypes]$, we assume that the positions of particles are distributed according to a (labeled) Poisson point process of intensity $\fugacity[i]$ on $\volume$.
To add the constraints to the model, let $\interactionMatrix \in \R_{\ge 0}^{\numTypes \times \numTypes}$ be a symmetric $\numTypes \times \numTypes$ matrix, called the \emph{interaction matrix}.
We condition the mixture of point processes by rejecting all configurations that contain particles at positions $x_1, x_2 \in \volume$ with corresponding particle types $\typeOf_1, \typeOf_2 \in [\numTypes]$ whenever $\dist{x_1}{x_2} < \interactionMatrix[\typeOf_1][\typeOf_2]$.
That is, the entries of $\interactionMatrix$  determine the minimum distance that particles of the respective types can have.
Especially note that $\interactionMatrix[\typeOf_1][\typeOf_2] = 0$ means that the particle types $\typeOf_1, \typeOf_2 \in [\numTypes]$ are not subjected to any pairwise constraints.

For any instance $(\volume, \interactionMatrix, \fugacity)$ of a hard-constraint point process that satisfies the requirements above, this description characterizes a \emph{Gibbs distribution} as follows.
For all $\numParticles \in \N$ and all type assignments $\typeOf\colon [\numParticles] \to [\numTypes]$ let $\valid{\typeOf}{\interactionMatrix}\colon \left(\R^{\dimension}\right)^{\numParticles} \to \{0, 1\}$ be the function that indicates for a tuple of particles positions $\mathbold{x} = (x_1, \dots, x_{\numParticles}) \in \left(\R^{\dimension}\right)^{\numParticles}$ if it forms a valid configuration, assuming that for each $i \in [\numParticles]$ the particle at $x_i$ is of type $\typeOf[i]$.
Formally, this is
\[
	\valid{\typeOf}{\interactionMatrix}[\mathbold{x}] =
	\valid{\typeOf}{\interactionMatrix}[x_1, \dots, x_{\numParticles}] = \prod_{\substack{i, j \in [\numParticles]\colon\\ i<j}} \indicator{\dist{x_i}{x_j} \ge \interactionMatrix[\typeOf[i]][\typeOf[j]]} ,
\]
where $\valid{\typeOf}{\interactionMatrix}[\mathbold{x}] = 1$ for the case that $\numParticles = 0$.
For each tuple $(\mathbold{x}, \typeOf)$ as above, the Gibbs distribution of $(\volume, \interactionMatrix, \fugacity)$ is defined via the probability density
\[
	\generalGibbs{\volume}{\interactionMatrix}{\fugacity}[\mathbold{x}, \typeOf] = \frac{\frac{1}{\numParticles!} \left(\prod_{i \in [\numParticles]} \fugacity[\typeOf[i]] \right) \valid{\typeOf}{\interactionMatrix}[\mathbold{x}]}{\generalPartitionFunction{\volume}{\interactionMatrix}{\fugacity}} ,
\]
where the normalizing constant $\generalPartitionFunction{\volume}{\interactionMatrix}{\fugacity}$ is called the \emph{partition function}.
This is
\begin{align}
	\label{eq:partition_function}
	\generalPartitionFunction{\volume}{\interactionMatrix}{\fugacity}
	= 1 + \sum_{\numParticles \in \N_{\ge 1}} \frac{1}{\numParticles!} \sum_{\typeOf\colon [\numParticles] \to [\numTypes]} \left(\prod_{i \in [\numParticles]} \fugacity[\typeOf[i]] \right) \int_{\volume^{\numParticles}} \valid{\typeOf}{\interactionMatrix}[\mathbold{x}] \, \text{d} \lebesgue{\dimension \times \numParticles} .
\end{align}
The main goal of this paper is to investigate condition for an efficient (randomized) approximation of $\generalPartitionFunction{\volume}{\interactionMatrix}{\fugacity}$.
Occasionally it will be convenient to include the case $\numParticles = 0$ in the sum in \cref{eq:partition_function}.
To this end, we use the convention $\int_{\volume^{0}} \valid{\typeOf}{\interactionMatrix}[\mathbold{x}] \, \text{d} \lebesgue{\dimension \times 0} = 1$ to rewrite \cref{eq:partition_function} as
\[
	\generalPartitionFunction{\volume}{\interactionMatrix}{\fugacity}
	= \sum_{\numParticles \in \N} \frac{1}{\numParticles!} \sum_{\typeOf\colon [\numParticles] \to [\numTypes]} \left(\prod_{i \in [\numParticles]} \fugacity[\typeOf[i]] \right) \int_{\volume^{\numParticles}} \valid{\typeOf}{\interactionMatrix}[\mathbold{x}] \, \text{d} \lebesgue{\dimension \times \numParticles} .
\]

We can further simplify this notation by defining a weight
\[
	\generalWeight{\interactionMatrix}{\fugacity}[\mathbold{x}, \typeOf] = \frac{1}{\numParticles!} \left(\prod_{i \in [\numParticles]} \fugacity[\typeOf[i]] \right) \valid{\typeOf}{\interactionMatrix}[\mathbold{x}]
\]
for each $\mathbold{x} = (x_1, \dots, x_{\numParticles}) \in \volume^{\numParticles}$ and $\typeOf\colon [\numParticles] \to [\numTypes]$.
Using this weight function, we can rewrite the Gibbs density and the partition function slightly more compact as
\[
	\generalGibbs{\volume}{\interactionMatrix}{\fugacity}[\mathbold{x}, \typeOf] = \frac{\generalWeight{\interactionMatrix}{\fugacity}[\mathbold{x}, \typeOf]}{\generalPartitionFunction{\volume}{\interactionMatrix}{\fugacity}}
	\text{  and  }
	\generalPartitionFunction{\volume}{\interactionMatrix}{\fugacity} = \sum_{\numParticles \in \N} \sum_{\typeOf\colon [\numParticles] \to [\numTypes]} \int_{\volume^{\numParticles}} \generalWeight{\interactionMatrix}{\fugacity}[\mathbold{x}, \typeOf] \, \text{d} \lebesgue{\dimension \times \numParticles} .
\]

For formalizing our conditions for efficient approximation of the partition function of a hard-constraint point process $(\volume, \interactionMatrix, \fugacity)$ it will come in handy to introduce the \emph{volume exclusion matrix} $\volumeMatrix \in \R_{\ge 0}^{\numTypes \times \numTypes}$ the model.
It is defined by the entries $\volumeMatrix[i][j] = \vol{\ball{\interactionMatrix[i][j]}}$ for all $i, j \in [\numTypes]$.
Intuitively speaking, this means that the entry $\volumeMatrix[i][j]$ gives an upper bound on the volume of the region around a point $x \in \volume$ in which no particle of type $j \in [\numTypes]$ can be placed, given we place a particle of type $i \in [\numTypes]$ at $x$. 

For our algorithmic results, we investigate asymptotic behavior in the volume~$\vol{\volume}$, while we assume the parameters $\numTypes$, $\dimension$, and the entries of $\interactionMatrix$ as well as the range of $\fugacity$ to be constant.

\subsection{Scaling, bounds and monotonicity}
\label{subsec:model}
We start by collecting some useful properties of the partition function of general hard-constraint point processes, which will come in handy when bounding the speed of convergence of our discretization.
Our first lemma gives a trivial upper bound on the partition function.
\begin{lemma}
	\label{lemma:bound_partition_function}
	For every hard-constraint point process $(\volume, \interactionMatrix, \fugacity)$ with $\numTypes \in \N_{\ge 1}$ particle types it holds that
	\[
			\generalPartitionFunction{\volume}{\interactionMatrix}{\fugacity} \le \eulerE^{\sum_{i \in [\numTypes]} \fugacity[i] \vol{\volume}} .
		\qedhere
	\]
\end{lemma}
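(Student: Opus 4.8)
The plan is to bound each term in the series \eqref{eq:partition_function} by dropping the hard-constraint indicator $\valid{\typeOf}{\interactionMatrix}[\mathbold{x}] \le 1$, which reduces the partition function of the constrained model to that of a plain mixture of Poisson point processes. Concretely, for every $\numParticles \in \N$ and every type assignment $\typeOf\colon [\numParticles] \to [\numTypes]$ we have $\int_{\volume^{\numParticles}} \valid{\typeOf}{\interactionMatrix}[\mathbold{x}] \, \text{d}\lebesgue{\dimension \times \numParticles} \le \int_{\volume^{\numParticles}} 1 \, \text{d}\lebesgue{\dimension \times \numParticles} = \vol{\volume}^{\numParticles}$, since the Lebesgue measure of the $\numParticles$-fold product $\volume^{\numParticles}$ is $\vol{\volume}^{\numParticles}$ (here we use that $\volume$ is bounded and measurable, so $\vol{\volume} < \infty$).

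Substituting this into the series gives
\[
	\generalPartitionFunction{\volume}{\interactionMatrix}{\fugacity} \le \sum_{\numParticles \in \N} \frac{1}{\numParticles!} \sum_{\typeOf\colon [\numParticles] \to [\numTypes]} \left(\prod_{i \in [\numParticles]} \fugacity[\typeOf[i]]\right) \vol{\volume}^{\numParticles}.
\]
The inner sum over all type assignments factorizes: $\sum_{\typeOf\colon [\numParticles] \to [\numTypes]} \prod_{i \in [\numParticles]} \fugacity[\typeOf[i]] = \prod_{i \in [\numParticles]} \left(\sum_{t \in [\numTypes]} \fugacity[t]\right) = \left(\sum_{t \in [\numTypes]} \fugacity[t]\right)^{\numParticles}$, because each coordinate of $\typeOf$ ranges independently over $[\numTypes]$. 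Writing $S = \sum_{t \in [\numTypes]} \fugacity[t] \cdot \vol{\volume}$, the bound becomes $\sum_{\numParticles \in \N} \frac{S^{\numParticles}}{\numParticles!} = \eulerE^{S} = \eulerE^{\sum_{i \in [\numTypes]} \fugacity[i] \vol{\volume}}$, as claimed.

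There is no real obstacle here; the only point requiring a line of care is the factorization of the sum over type assignments and the identification of $\lebesgue{\dimension \times \numParticles}(\volume^{\numParticles})$ with $\vol{\volume}^{\numParticles}$, both of which are standard (Tonelli / Fubini for the product measure). The absolute convergence of the resulting exponential series is automatic since all terms are nonnegative and the sum is finite, so rearranging and factoring are justified.
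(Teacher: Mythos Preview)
Your proof is correct and follows essentially the same approach as the paper: bound $\valid{\typeOf}{\interactionMatrix}[\mathbold{x}] \le 1$, integrate to get $\vol{\volume}^{\numParticles}$, simplify the sum over type assignments, and recognize the exponential series. The only cosmetic difference is that the paper groups type assignments by their level sets $(\numParticles_1,\dots,\numParticles_{\numTypes})$ and invokes the multinomial theorem, whereas you factorize $\sum_{\typeOf}\prod_i \fugacity[\typeOf[i]] = \big(\sum_{t}\fugacity[t]\big)^{\numParticles}$ directly; your route is slightly more streamlined but the content is identical.
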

\begin{proof}
	Fix some $\numParticles \in \N_{\ge 0}$ and observe that for any $\typeOf\colon [\numParticles] \to [\numTypes]$ it holds that
	\[
		\int_{\volume^{\numParticles}} \valid{\typeOf}{\interactionMatrix}[\mathbold{x}] \, \text{d} \lebesgue{\dimension \times \numParticles} \le \vol{\volume^{\numParticles}} = \vol{\volume}^{\numParticles} .
	\]
	Now, fix some $(\numParticles[1], \dots, \numParticles[\numTypes]) \in \N^{\numTypes}$ such that $\sum_{i \in [\numTypes]} \numParticles[i] = \numParticles$.
	Observe that there are exactly $\frac{\numParticles!}{\prod_{i \in [\numTypes]} \numParticles[i] !}$ type assignments $\typeOf\colon [\numParticles] \to [\numTypes]$ such that $\size{\invTypeOf[i]} = \numParticles[i]$ for all $i \in [\numTypes]$.
	Further, for each such type assignment, it holds that
	\[
		\prod_{i \in [\numParticles]} \fugacity[\typeOf[i]] = \prod_{i \in [\numTypes]} \fugacity[i]^{\numParticles[i]} .
	\]
	Thus, we obtain
	\begin{align*}
		\generalPartitionFunction{\volume}{\interactionMatrix}{\fugacity}
		&\le 1 + \sum_{\numParticles \in \N_{\ge 1}} \frac{1}{\numParticles!} \sum_{\typeOf\colon [\numParticles] \to [\numTypes]} \left(\prod_{i \in [\numParticles]} \fugacity[\typeOf[i]] \right) \vol{\volume}^{\numParticles} \\
		&= 1 + \sum_{\numParticles \in \N_{\ge 1}} \vol{\volume}^{\numParticles} \sum_{\substack{(\numParticles[1], \dots, \numParticles[\numTypes]) \in \N^{\numTypes}\colon\\ \numParticles[1] + \dots + \numParticles[\numTypes] = \numParticles}} \prod_{i \in [\numTypes]} \frac{\fugacity[i]^{\numParticles[i]}}{\numParticles[i] !} \\
		&= 1 + \sum_{\numParticles \in \N_{\ge 1}} \vol{\volume}^{\numParticles} \frac{\left(\sum_{i \in [\numTypes]} \fugacity[i] \right)^{\numParticles}}{\numParticles !},
	\end{align*}
	where the last equality is obtained via the multinomial theorem.
	The claim follows from the Taylor expansion of the exponential function around $0$.
\end{proof}

Our next observation is concerned with the monotonicity of the partition function of hard-constraint point processes with respect to $\volume$, $\interactionMatrix$ and $\fugacity$.
\begin{observation}
	\label{obs:monotonicity_model}
	Fix some $\numTypes \in \N_{\ge 1}$.
	Let $\volume, \volume_{1}, \volume_{2} \subset \R^{\dimension}$, let $\interactionMatrix, \interactionMatrix_1, \interactionMatrix_2 \in \R_{\ge 0}^{\numTypes \times \numTypes}$ be symmetric and let $\fugacity\colon [\numTypes] \to \R_{\ge 0}, \fugacity_1\colon [\numTypes] \to \R_{\ge 0}, \fugacity_2\colon [\numTypes] \to \R_{\ge 0}$.
	It holds that:
	\begin{enumerate}[(a)]
		\item If $\volume_1 \subseteq \volume_2$ then $\generalPartitionFunction{\volume_1}{\interactionMatrix}{\fugacity} \le \generalPartitionFunction{\volume_2}{\interactionMatrix}{\fugacity}$.
		\item If $\interactionMatrix_{1}(i, j) \le \interactionMatrix_{2}(i, j)$ for all $i, j \in [\numTypes]$ then $\generalPartitionFunction{\volume}{\interactionMatrix_1}{\fugacity} \ge \generalPartitionFunction{\volume}{\interactionMatrix_2}{\fugacity}$.
		\item If $\fugacity_{1}(i) \le \fugacity_{2}(i)$ for all $i \in [\numTypes]$ then $\generalPartitionFunction{\volume}{\interactionMatrix}{\fugacity_1} \le \generalPartitionFunction{\volume}{\interactionMatrix}{\fugacity_2}$.
		\qedhere
	\end{enumerate}
\end{observation}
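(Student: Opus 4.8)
The plan is to argue directly from the series representation of the partition function in \cref{eq:partition_function},
\[
	\generalPartitionFunction{\volume}{\interactionMatrix}{\fugacity}
	= \sum_{\numParticles \in \N} \frac{1}{\numParticles!} \sum_{\typeOf\colon [\numParticles] \to [\numTypes]} \left(\prod_{i \in [\numParticles]} \fugacity[\typeOf[i]] \right) \int_{\volume^{\numParticles}} \valid{\typeOf}{\interactionMatrix}[\mathbold{x}] \, \text{d} \lebesgue{\dimension \times \numParticles} ,
\]
and to compare the two partition functions appearing in each of~(a),~(b),~(c) \emph{term by term}, that is, for every fixed $\numParticles \in \N$ and every fixed type assignment $\typeOf\colon [\numParticles] \to [\numTypes]$. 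Since every coefficient $\frac{1}{\numParticles!}\prod_{i \in [\numParticles]} \fugacity[\typeOf[i]]$ and every integrand $\valid{\typeOf}{\interactionMatrix}[\mathbold{x}] \in \{0, 1\}$ is non-negative, a term-by-term domination transfers directly to the (possibly infinite) sums; both sides are finite by \cref{lemma:bound_partition_function}, so the comparison is meaningful.

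Parts~(a) and~(c) are then immediate. For~(a), I would use that $\volume_{1} \subseteq \volume_{2}$ implies $\volume_{1}^{\numParticles} \subseteq \volume_{2}^{\numParticles}$ for all $\numParticles$, so that non-negativity of $\valid{\typeOf}{\interactionMatrix}$ gives $\int_{\volume_{1}^{\numParticles}} \valid{\typeOf}{\interactionMatrix}[\mathbold{x}] \, \text{d} \lebesgue{\dimension \times \numParticles} \le \int_{\volume_{2}^{\numParticles}} \valid{\typeOf}{\interactionMatrix}[\mathbold{x}] \, \text{d} \lebesgue{\dimension \times \numParticles}$, while the coefficient in front is identical in the two partition functions; summing over $\numParticles$ and $\typeOf$ finishes the argument. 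For~(c), $0 \le \fugacity_{1}(i) \le \fugacity_{2}(i)$ for all $i \in [\numTypes]$ together with non-negativity of all factors gives $\prod_{i \in [\numParticles]} \fugacity_{1}(\typeOf[i]) \le \prod_{i \in [\numParticles]} \fugacity_{2}(\typeOf[i])$, and here the integral $\int_{\volume^{\numParticles}} \valid{\typeOf}{\interactionMatrix}[\mathbold{x}] \, \text{d} \lebesgue{\dimension \times \numParticles}$ is the common non-negative factor.

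For part~(b), the one item in which the inequality reverses direction, I would start from the monotonicity of a single exclusion indicator: raising a threshold can only make a constraint harder, so $\interactionMatrix_{1}(i, j) \le \interactionMatrix_{2}(i, j)$ for all $i, j \in [\numTypes]$ implies $\indicator{\dist{x}{y} \ge \interactionMatrix_{1}(i, j)} \ge \indicator{\dist{x}{y} \ge \interactionMatrix_{2}(i, j)}$ for all types $i, j \in [\numTypes]$ and all $x, y \in \R^{\dimension}$. Multiplying these over all particle pairs in the definition of $\valid{\typeOf}{\interactionMatrix}[\mathbold{x}]$ then gives $\valid{\typeOf}{\interactionMatrix_{1}}[\mathbold{x}] \ge \valid{\typeOf}{\interactionMatrix_{2}}[\mathbold{x}]$ for every $\numParticles$, every $\typeOf$, and every $\mathbold{x} \in \volume^{\numParticles}$; integrating over $\volume^{\numParticles}$ and multiplying by the common non-negative prefactor yields $\generalPartitionFunction{\volume}{\interactionMatrix_{1}}{\fugacity} \ge \generalPartitionFunction{\volume}{\interactionMatrix_{2}}{\fugacity}$ term by term. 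I do not expect any genuine obstacle here; the only things to keep track of are that all quantities in sight -- coefficients, indicator values, and integrals -- are non-negative, so that comparing series with non-negative terms is legitimate, and that in~(b) the monotonicity of the indicators, and hence of the partition function, runs opposite to the ordering of the interaction matrices.
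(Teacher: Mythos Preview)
Your argument is correct and is exactly the natural term-by-term comparison from the series representation; the paper states this result as an observation without proof, so there is nothing further to compare.
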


It is a folklore that the partition function the hard-sphere model and the Widom--Rowlinson model are log-submodular in the considered region.
The following lemma formalizes this idea for general hard-constraint point processes.
\begin{lemma}
	\label{lemma:submulti_model}
	Let $\volume_1 \subset \R^{\dimension}$ and $\volume_2 \subset \R^{\dimension}$ be bounded and measurable, and let $\numTypes \in \N_{\ge 1}$.
	For symmetric $\interactionMatrix \in \R_{\ge 0}^{\numTypes \times \numTypes}$ and all functions $\fugacity\colon [\numTypes] \to \R_{\ge 0}$ it holds that
	\[
		\generalPartitionFunction{\volume_1 \cup \volume_2}{\interactionMatrix}{\fugacity} \le \generalPartitionFunction{\volume_1}{\interactionMatrix}{\fugacity} \cdot  \generalPartitionFunction{\volume_2}{\interactionMatrix}{\fugacity}
		\qedhere
	\]
\end{lemma}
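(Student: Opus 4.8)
The plan is to prove the log-submodularity inequality $\generalPartitionFunction{\volume_1 \cup \volume_2}{\interactionMatrix}{\fugacity} \le \generalPartitionFunction{\volume_1}{\interactionMatrix}{\fugacity} \cdot \generalPartitionFunction{\volume_2}{\interactionMatrix}{\fugacity}$ by a direct combinatorial splitting of particle configurations on $\volume_1 \cup \volume_2$ according to which region each particle lies in. First I would write $\volume = \volume_1 \cup \volume_2$ and expand $\generalPartitionFunction{\volume}{\interactionMatrix}{\fugacity}$ using the compact form $\sum_{\numParticles \in \N} \sum_{\typeOf\colon [\numParticles] \to [\numTypes]} \int_{\volume^{\numParticles}} \generalWeight{\interactionMatrix}{\fugacity}[\mathbold{x}, \typeOf] \, \text{d} \lebesgue{\dimension \times \numParticles}$. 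The key device is to cover $\volume^{\numParticles}$ by the $2^{\numParticles}$ boxes $\prod_{i \in [\numParticles]} \volume_{s(i)}$ indexed by functions $s\colon [\numParticles] \to \{1,2\}$; since $\volume = \volume_1 \cup \volume_2$, these boxes cover $\volume^{\numParticles}$ (with overlaps on $\volume_1 \cap \volume_2$, which is fine for an upper bound since the integrand $\generalWeight{\interactionMatrix}{\fugacity}$ is non-negative). Thus
\[
	\int_{\volume^{\numParticles}} \generalWeight{\interactionMatrix}{\fugacity}[\mathbold{x}, \typeOf] \, \text{d} \lebesgue{\dimension \times \numParticles}
	\le \sum_{s\colon [\numParticles] \to \{1,2\}} \int_{\prod_{i \in [\numParticles]} \volume_{s(i)}} \generalWeight{\interactionMatrix}{\fugacity}[\mathbold{x}, \typeOf] \, \text{d} \lebesgue{\dimension \times \numParticles} .
\]

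Next I would exploit that the validity indicator only imposes more constraints than the product of the two "sub-validity" indicators: writing $A = s^{-1}(1)$ and $B = s^{-1}(2)$ (so $A \cup B = [\numParticles]$, possibly overlapping is irrelevant since $s$ is a function, so actually $A, B$ partition $[\numParticles]$), we have $\valid{\typeOf}{\interactionMatrix}[\mathbold{x}] \le \big(\prod_{i<j \in A} \indicator{\dist{x_i}{x_j} \ge \interactionMatrix[\typeOf[i]][\typeOf[j]]}\big)\big(\prod_{i<j \in B} \indicator{\dist{x_i}{x_j} \ge \interactionMatrix[\typeOf[i]][\typeOf[j]]}\big)$, because the right-hand side simply drops the cross-terms between $A$ and $B$. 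Combined with $\frac{1}{\numParticles!} \le \frac{1}{\size{A}! \, \size{B}!}$ and the fact that the fugacity product factors as $\prod_{i \in A}\fugacity[\typeOf[i]] \cdot \prod_{i \in B}\fugacity[\typeOf[i]]$, the integral over $\prod_i \volume_{s(i)}$ factors (by Fubini/Tonelli, all terms non-negative) into a product of an integral over $\volume_1^{\size{A}}$ and one over $\volume_2^{\size{B}}$, each of the exact shape appearing in $\generalPartitionFunction{\volume_1}{\interactionMatrix}{\fugacity}$ and $\generalPartitionFunction{\volume_2}{\interactionMatrix}{\fugacity}$ respectively. Re-indexing the sum over $(\numParticles, \typeOf, s)$ as a sum over pairs of typed configurations on $\volume_1$ and on $\volume_2$ then yields exactly $\generalPartitionFunction{\volume_1}{\interactionMatrix}{\fugacity} \cdot \generalPartitionFunction{\volume_2}{\interactionMatrix}{\fugacity}$.

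Concretely, after the reindexing one groups terms by $(\numParticles[1], \typeOf^{(1)})$ with $\typeOf^{(1)}\colon [\numParticles[1]] \to [\numTypes]$ and $(\numParticles[2], \typeOf^{(2)})$ with $\typeOf^{(2)}\colon [\numParticles[2]] \to [\numTypes]$: for fixed $\numParticles$ the map sending $(\typeOf, s)$ to the pair of restrictions (after relabeling $A$ and $B$ order-preservingly to $[\size{A}]$ and $[\size{B}]$) shows that the multiset of terms obtained equals the product of the two series, since $\binom{\numParticles}{\size{A}}$ choices of $s$ with $\size{A}$ fixed are absorbed by the $\frac{1}{\size{A}!\size{B}!}$ versus $\frac{1}{\numParticles!}$ bookkeeping. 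I expect the main obstacle to be purely notational: carefully handling the relabeling of the index sets $A$ and $B$ to make the factored integrals literally match the defining series, and making sure the inequality $\frac{1}{\numParticles!}\binom{\numParticles}{\size{A}} = \frac{1}{\size{A}!\size{B}!}$ is an equality so nothing is lost or double-counted beyond the already-accounted-for overlap slack. The only genuine inequality used is that the box cover of $\volume^{\numParticles}$ may overcount the region $\volume_1 \cap \volume_2$ and that $\valid{}{}$ dominates the split product; everything else is an exact rearrangement justified by Tonelli's theorem on non-negative integrands and the absolute convergence guaranteed by \Cref{lemma:bound_partition_function}.
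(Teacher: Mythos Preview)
Your approach is correct and essentially matches the paper's proof: both split configurations according to which region each particle lies in, drop the cross-region constraints in the validity indicator, and use the identity $\frac{1}{k!}\binom{k}{k_1}=\frac{1}{k_1!\,k_2!}$ to factor the resulting sum into $\generalPartitionFunction{\volume_1}{\interactionMatrix}{\fugacity}\cdot\generalPartitionFunction{\volume_2}{\interactionMatrix}{\fugacity}$. The only cosmetic difference is that the paper first reduces to disjoint $\volume_1,\volume_2$ via \Cref{obs:monotonicity_model} and then partitions $\volume^{\numParticles}$ exactly, whereas you handle the overlap directly by the covering $\volume^{\numParticles}\subseteq\bigcup_{s}\prod_i\volume_{s(i)}$; both devices produce the same inequality.
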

\begin{proof}
	First, note that it is sufficient to prove the claim for disjoint sets $\volume_1$ and $\volume_2$.
	For non-disjoint $\volume_1, \volume_2$ the claim then follows from partitioning $\volume_1 \cup \volume_2$ into the disjoint sets $\volume_1$ and $\volume_2 \setminus \volume_1$, and by noting that $\volume_2 \setminus \volume_1 \subseteq \volume_2$, which by \Cref{obs:monotonicity_model} implies $\generalPartitionFunction{\volume_2 \setminus \volume_1}{\interactionMatrix}{\fugacity} \le \generalPartitionFunction{\volume_2}{\interactionMatrix}{\fugacity}$.

	Thus, assume $\volume_1 \cap \volume_2 = \emptyset$ and set $\volume = \volume_1 \cup \volume_2$.
	Fix some $\numParticles \in \N$.
	For $\mathbold{x} = (x_1, \dots, x_{\numParticles}) \in \volume^{\numParticles}$, let $\numParticlesFrom{1}{\mathbold{x}}$ denote the number from $\volume_1$.
	Formally, this is $\size{\{i \in [\numParticles] \mid x_i \in \volume_1\}}$.
	As a first step, we rewrite the partition function on $\volume$ as
	\begin{align*}
		\generalPartitionFunction{\volume}{\interactionMatrix}{\fugacity}
		&= \sum_{\numParticles \in \N} \frac{1}{\numParticles!} \sum_{\typeOf\colon [\numParticles] \to [\numTypes]} \left(\prod_{i \in [\numParticles]} \fugacity[\typeOf[i]] \right) \int_{\volume^{\numParticles}} \valid{\typeOf}{\interactionMatrix}[\mathbold{x}] \, \text{d} \lebesgue{\dimension \times \numParticles} \\
		&= \sum_{\numParticles \in \N} \sum_{\numParticles[1] \in \N} \frac{1}{\numParticles!} \sum_{\typeOf\colon [\numParticles] \to [\numTypes]} \left(\prod_{i \in [\numParticles]} \fugacity[\typeOf[i]] \right) \int_{\volume^{\numParticles}} \indicator{\numParticlesFrom{1}{\mathbold{x}} = \numParticles[1]}\valid{\typeOf}{\interactionMatrix}[\mathbold{x}] \, \text{d} \lebesgue{\dimension \times \numParticles} \\
		&= \sum_{\numParticles \in \N} \sum_{\numParticles[1] \in \N} \frac{1}{\numParticles!} \int_{\volume^{\numParticles}} \indicator{\numParticlesFrom{1}{\mathbold{x}} = \numParticles[1]} \sum_{\typeOf\colon [\numParticles] \to [\numTypes]} \left(\prod_{i \in [\numParticles]} \fugacity[\typeOf[i]] \right) \valid{\typeOf}{\interactionMatrix}[\mathbold{x}] \, \text{d} \lebesgue{\dimension \times \numParticles} .
	\end{align*}

	Fix some $\numParticles \in \N$ and let $\permutation\colon [\numParticles] \to [\numParticles]$ be a bijective (i.e., $\permutation$ a permutation of $[\numParticles]$).
	For every type assignment $\typeOf\colon [\numParticles] \to [\numTypes]$ we write $\permutation \composition \typeOf$ for the composition of both functions.
	That is, for all $i \in [\numParticles]$ we have $(\permutation \composition \typeOf)(i) = \typeOf[\permutation[i]]$.
	Further, for all $\mathbold{x} = (x_1, \dots, x_{\numParticles}) \in \volume^{\numParticles}$ we abuse notation and write $\permutation[\mathbold{x}] = (x_{\permutation[1]}, \dots, x_{\permutation[\numParticles]})$.
	Observe that for all permutations $\permutation\colon [\numParticles] \to [\numParticles]$ it holds that
	\[
		\valid{\typeOf}{\interactionMatrix}[\mathbold{x}] = \valid{\permutation \composition \typeOf}{\interactionMatrix}[\permutation[\mathbold{x}]] .
	\]

	Next, observe that for all $\mathbold{x} \in \volume^{\numParticles}$ there is exactly one permutation $\permutationOf{\mathbold{x}}\colon [\numParticles] \to [\numParticles]$ such that $\permutationOf{\mathbold{x}}[\mathbold{x}] \in \volume_1^{\numParticlesFrom{1}{\mathbold{x}}} \times \volume_2^{\numParticles - \numParticlesFrom{1}{\mathbold{x}}}$ and the order among particle centers from $\volume_1$ and from $\volume_2$ is preserved.
	Formally, that means $\permutationOf{\mathbold{x}}[\mathbold{x}] \in \volume_1^{\numParticlesFrom{1}{\mathbold{x}}} \times \volume_2^{\numParticles - \numParticlesFrom{1}{\mathbold{x}}}$, for all $i, j \in [\numParticlesFrom{1}{\mathbold{x}}]$ it holds that $\permutationOf{\mathbold{x}}[i] < \permutationOf{\mathbold{x}}[j]$ if and only if $i < j$, and for all $i, j \in [\numParticles] \setminus [\numParticlesFrom{1}{\mathbold{x}}]$ it also holds that $\permutationOf{\mathbold{x}}[i] < \permutationOf{\mathbold{x}}[j]$ if and only if $i < j$.
	We can now rewrite
	\begin{align*}
		\sum_{\typeOf\colon [\numParticles] \to [\numTypes]} \left(\prod_{i \in [\numParticles]} \fugacity[\typeOf[i]] \right) \valid{\typeOf}{\interactionMatrix}[\mathbold{x}]
		&= \sum_{\typeOf\colon [\numParticles] \to [\numTypes]} \left(\prod_{i \in [\numParticles]} \fugacity[\typeOf[i]] \right) \valid{\permutationOf{\mathbold{x}} \composition \typeOf}{\interactionMatrix}[\permutationOf{\mathbold{x}}[\mathbold{x}]] \\
		&= \sum_{\typeOf\colon [\numParticles] \to [\numTypes]} \left(\prod_{i \in [\numParticles]} \fugacity[\permutationOf{\mathbold{x}}[\typeOf[i]]] \right) \valid{\permutationOf{\mathbold{x}} \composition \typeOf}{\interactionMatrix}[\permutationOf{\mathbold{x}}[\mathbold{x}]] ,
	\end{align*}
	where the last equality is due to the fact that applying the permutation $\permutationOf{\mathbold{x}}$ does only change the order of the finite product.

	For the next step, note that for every fixed permutation $\permutationOf{\mathbold{x}}$ and every fixed type assignment $\typeOf$ there is exactly one type assignment $\typeOf'$ such that $\typeOf' = (\permutationOf{\mathbold{x}} \composition \typeOf)$.
	This means, applying $\permutationOf{\mathbold{x}}$ to $\typeOf$ does only change the order of summation.
	We obtain
	\[
		\sum_{\typeOf\colon [\numParticles] \to [\numTypes]} \left(\prod_{i \in [\numParticles]} \fugacity[\typeOf[i]] \right) \valid{\typeOf}{\interactionMatrix}[\mathbold{x}]
		= \sum_{\typeOf\colon [\numParticles] \to [\numTypes]} \left(\prod_{i \in [\numParticles]} \fugacity[\typeOf[i]] \right) \valid{\typeOf}{\interactionMatrix}[\permutationOf{\mathbold{x}}[\mathbold{x}]] .
	\]

	To proceed, we need to introduce an additional bit of notation.
	Let $\numParticles[1], \numParticles[2] \in \N$, and let $\typeOf_1\colon [\numParticles[1]] \to [\numTypes]$ and $\typeOf_2\colon [\numParticles[2]] \to [\numTypes]$.
	We write $\typeOf_1 \combineAssignments \typeOf_2$ to denote the function $(\typeOf_1 \combineAssignments \typeOf_2)\colon [\numParticles[1] + \numParticles[2]] \to [\numTypes]$ such that for all $i \in [\numParticles[1] + \numParticles[2]]$ we have
	\[
		(\typeOf_1 \combineAssignments \typeOf_2) (i) = \begin{cases}
			\typeOf_1 (i) \text{ if } i \le \numParticles[1] \\
			\typeOf_2 (i - \numParticles[1]) \text{ otherwise. }
		\end{cases}
	\]
	Note that for each type assignment $\typeOf\colon [\numParticles] \to [\numTypes]$ and every fixed $\numParticles[1], \numParticles[2] \in \N$ such that $\numParticles[1] + \numParticles[2] = \numParticles$, there is exactly one pair of assignments $\typeOf_1\colon [\numParticles[1]] \to [\numTypes]$ and $\typeOf_2\colon [\numParticles[2]] \to [\numTypes]$ such that $\typeOf = (\typeOf_1 \combineAssignments \typeOf_2)$.
	Further, for each such decomposition and all $(x_1, \dots, x_{\numParticles}) \in \volume^{\numParticles}$ it holds that
	\[
		\valid{\typeOf}{\interactionMatrix}[x_1, \dots, x_{\numParticles}]
		= \valid{\typeOf_1 \combineAssignments \typeOf_2}{\interactionMatrix}[x_1, \dots, x_{\numParticles}]
		\le \valid{\typeOf_1}{\interactionMatrix}[x_1, \dots, x_{\numParticles[1]}] \cdot \valid{\typeOf_2}{\interactionMatrix}[x_{\numParticles[1]+1}, \dots, x_{\numParticles}] .
	\]

	For $\mathbold{x} = (x_1, \dots, x_{\numParticles}) \in \volume^{\numParticles}$, let $\projection{1}[\mathbold{x}] = (x_{\permutationOf{\mathbold{x}}[1]}, \dots, x_{\permutationOf{\mathbold{x}}[\numParticlesFrom{1}{\mathbold{x}}]})$, and let $\projection{2}[\mathbold{x}] = (x_{\permutationOf{\mathbold{x}}[\numParticlesFrom{1}{\mathbold{x}} + 1]}, \dots, x_{\permutationOf{\mathbold{x}}[\numParticles]})$.
	This is, $\projection{1}$ and $\projection{2}$ project $\mathbold{x}$ to those component that come from $\volume_1$ and $\volume_2$ respectively while preserving the order among them.
	We obtain
	\begin{align*}
		\sum_{\typeOf\colon [\numParticles] \to [\numTypes]} &\left(\prod_{i \in [\numParticles]} \fugacity[\typeOf[i]] \right) \valid{\typeOf}{\interactionMatrix}[\permutationOf{\mathbold{x}}[\mathbold{x}]] \\
		&\le \sum_{\typeOf_1\colon [\numParticlesFrom{1}{\mathbold{x}}] \to [\numTypes]}
		\left(\prod_{i \in [\numParticlesFrom{1}{\mathbold{x}}]} \fugacity[\typeOf_1(i)] \right)
		\sum_{\typeOf_2\colon [\numParticles - \numParticlesFrom{1}{\mathbold{x}}] \to [\numTypes]} \left(\prod_{i \in [\numParticles - \numParticlesFrom{1}{\mathbold{x}}]} \fugacity[\typeOf_2(i)] \right) \valid{\typeOf_1}{\interactionMatrix}[\projection{1}[\mathbold{x}]] \cdot \valid{\typeOf_2}{\interactionMatrix}[\projection{2}[\mathbold{x}]]
	\end{align*}
	and by substituting this into the definition of the partition function we get
	\begin{align*}
		\generalPartitionFunction{\volume}{\interactionMatrix}{\fugacity}
		\le \sum_{\numParticles \in \N}
		&\frac{1}{\numParticles!} \sum_{\numParticles[1] \in [\numParticles]} \sum_{\typeOf_1\colon [\numParticles[1]] \to [\numTypes]}
		\left(\prod_{i \in [\numParticles[1]]} \fugacity[\typeOf_1(i)] \right)
		\sum_{\typeOf_2\colon [\numParticles - \numParticles[1]] \to [\numTypes]} \left(\prod_{i \in [\numParticles - \numParticles[1]]} \fugacity[\typeOf_2(i)] \right) \\
		&\cdot \int_{\volume^{\numParticles}} \indicator{\numParticlesFrom{1}{\mathbold{x}} = \numParticles[1]} \valid{\typeOf_1}{\interactionMatrix}[\projection{1}[\mathbold{x}]] \cdot \valid{\typeOf_2}{\interactionMatrix}[\projection{2}[\mathbold{x}]] \, \text{d} \lebesgue{\dimension \times \numParticles} .
	\end{align*}

	Now, note that for all $\numParticles \in \N$ and $\numParticles[1] \in [\numParticles]$ and for all $\mathbold{y} \in \volume_1^{\numParticles[1]}$ and $\mathbold{z} \in \volume_2^{\numParticles - \numParticles[1]}$ it holds that there are exactly $\frac{\numParticles !}{\numParticles[1]! (\numParticles - \numParticles[1])!}$ configurations $\mathbold{x} \in \volume^{\numParticles}$ such that $\projection{1}[\mathbold{x}] = \mathbold{y}$ and $\projection{2}[\mathbold{x}] = \mathbold{z}$.
	Thus, it holds that
	\begin{align*}
		\generalPartitionFunction{\volume}{\interactionMatrix}{\fugacity}
		&\le \sum_{\numParticles \in \N}
		\sum_{\numParticles[1] \in [\numParticles]} \frac{1}{\numParticles[1]!} \sum_{\typeOf_1\colon [\numParticles[1]] \to [\numTypes]}
		\left(\prod_{i \in [\numParticles[1]]} \fugacity[\typeOf_1(i)] \right) \frac{1}{(\numParticles - \numParticles[1])!}
		\sum_{\typeOf_2\colon [\numParticles - \numParticles[1]] \to [\numTypes]} \left(\prod_{i \in [\numParticles - \numParticles[1]]} \fugacity[\typeOf_2(i)] \right) \\
			&\hspace{3em} \cdot \int_{\volume_1^{\numParticles[1]}} \int_{\volume_2^{\numParticles - \numParticles[1]}} \valid{\typeOf_1}{\interactionMatrix}[\mathbold{y}] \cdot \valid{\typeOf_2}{\interactionMatrix}[\mathbold{z}] \, \text{d} \lebesgue{\dimension \times (\numParticles - \numParticles[1])} \text{d} \lebesgue{\dimension \times \numParticles[1]} \\
		&= \sum_{\numParticles \in \N} \sum_{\numParticles[1] \in [\numParticles]} \left(\frac{1}{\numParticles[1]!} \sum_{\typeOf_1\colon [\numParticles[1]] \to [\numTypes]} \left(\prod_{i \in [\numParticles[1]]} \fugacity[\typeOf_1(i)] \right) \int_{\volume_1^{\numParticles[1]}} \valid{\typeOf_1}{\interactionMatrix}[\mathbold{y}] \, \text{d} \lebesgue{\dimension \times \numParticles[1]} \right) \\
			&\hspace{3em} \cdot \left(\frac{1}{(\numParticles - \numParticles[1])!} \sum_{\typeOf_2\colon [\numParticles - \numParticles[1]] \to [\numTypes]} \left(\prod_{i \in [\numParticles - \numParticles[1]]} \fugacity[\typeOf_2(i)] \right) \int_{\volume_2^{\numParticles - \numParticles[1]}} \valid{\typeOf_2}{\interactionMatrix}[\mathbold{z}] \, \text{d} \lebesgue{\dimension \times (\numParticles - \numParticles[1])} \right) \\
		&= \left(\sum_{\numParticles[1] \in \N} \frac{1}{\numParticles[1]!} \sum_{\typeOf_1\colon [\numParticles[1]] \to [\numTypes]}
		\left(\prod_{i \in [\numParticles[1]]} \fugacity[\typeOf_1(i)] \right) \int_{\volume_1^{\numParticles[1]}} \valid{\typeOf_1}{\interactionMatrix}[\mathbold{y}] \, \text{d} \lebesgue{\dimension \times \numParticles[1]} \right) \\
			&\hspace{3em} \cdot \left( \sum_{\numParticles[2] \in \N} \frac{1}{\numParticles[2]!} \sum_{\typeOf_2\colon [\numParticles[2]] \to [\numTypes]} \left(\prod_{i \in [\numParticles[2]]} \fugacity[\typeOf_2(i)] \right) \int_{\volume_2^{\numParticles[2]}} \valid{\typeOf_2}{\interactionMatrix}[\mathbold{z}] \, \text{d} \lebesgue{\dimension \times \numParticles[2]} \right) \\
		&= \generalPartitionFunction{\volume_1}{\interactionMatrix}{\fugacity} \cdot \generalPartitionFunction{\volume_2}{\interactionMatrix}{\fugacity} ,
	\end{align*}
	which concludes the proof.
	\end{proof}

To provide a sufficient error bound for our discretization, it will be helpful to translate between scaling the particle radii $\interactionMatrix$, the fugacities $\fugacity$ and the space itself $\volume$.
In Euclidean space, this can be done as stated in the following lemma.
\begin{lemma}
	\label{lemma:scaled_radius}
	Let $(\volume, \interactionMatrix, \fugacity)$ be a hard-constraint point process with $\numTypes \in \N_{\ge 1}$ particle types in $\dimension \in \N_{\ge 1}$ dimensions.
	For all $\scaleUp \in \R_{>0}$ it holds that
	\[
	\generalPartitionFunction{\volume}{\frac{1}{\scaleUp}\interactionMatrix}{\fugacity} = \generalPartitionFunction{\scaleUp \volume}{\interactionMatrix}{\frac{1}{\scaleUp^{\dimension}}\fugacity} .
	\qedhere
	\]
\end{lemma}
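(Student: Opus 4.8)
The plan is to prove the identity by a single change of variables $\mathbold{x} \mapsto \scaleUp\mathbold{x}$ applied termwise to the series defining the partition function, using two elementary facts about $\R^{\dimension}$: Euclidean distance is homogeneous of degree $1$ under dilations, and the $\dimension$-dimensional Lebesgue measure is homogeneous of degree $\dimension$. First I would write
\[
	\generalPartitionFunction{\volume}{\frac{1}{\scaleUp}\interactionMatrix}{\fugacity} = \sum_{\numParticles \in \N} \frac{1}{\numParticles!} \sum_{\typeOf\colon [\numParticles] \to [\numTypes]} \left(\prod_{i \in [\numParticles]} \fugacity[\typeOf[i]] \right) \int_{\volume^{\numParticles}} \valid{\typeOf}{\frac{1}{\scaleUp}\interactionMatrix}[\mathbold{x}] \, \text{d} \lebesgue{\dimension \times \numParticles}
\]
and fix a single summand, indexed by $\numParticles \in \N$ and $\typeOf\colon [\numParticles]\to[\numTypes]$. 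The map $\mathbold{x} = (x_1, \dots, x_{\numParticles}) \mapsto \scaleUp\mathbold{x} = (\scaleUp x_1, \dots, \scaleUp x_{\numParticles})$ is a measurable bijection of $\volume^{\numParticles}$ onto $(\scaleUp\volume)^{\numParticles}$.

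The key step is the indicator identity. Since $\dist{\scaleUp x_i}{\scaleUp x_j} = \scaleUp\dist{x_i}{x_j}$ for all $i, j$, for every $\mathbold{x} \in \volume^{\numParticles}$ we have $\dist{x_i}{x_j} \ge \frac{1}{\scaleUp}\interactionMatrix[\typeOf[i]][\typeOf[j]]$ if and only if $\dist{\scaleUp x_i}{\scaleUp x_j} \ge \interactionMatrix[\typeOf[i]][\typeOf[j]]$, so $\valid{\typeOf}{\frac{1}{\scaleUp}\interactionMatrix}[\mathbold{x}] = \valid{\typeOf}{\interactionMatrix}[\scaleUp\mathbold{x}]$ (this is vacuous for $\numParticles \in \{0, 1\}$, where both sides equal $1$). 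Combining this with the scaling behaviour of the $(\dimension\times\numParticles)$-fold product Lebesgue measure under the dilation by $\scaleUp$, the change of variables theorem yields
\[
	\int_{\volume^{\numParticles}} \valid{\typeOf}{\frac{1}{\scaleUp}\interactionMatrix}[\mathbold{x}] \, \text{d} \lebesgue{\dimension \times \numParticles} = \int_{\volume^{\numParticles}} \valid{\typeOf}{\interactionMatrix}[\scaleUp\mathbold{x}] \, \text{d} \lebesgue{\dimension \times \numParticles} = \frac{1}{\scaleUp^{\dimension \numParticles}}\int_{(\scaleUp\volume)^{\numParticles}} \valid{\typeOf}{\interactionMatrix}[\mathbold{y}] \, \text{d} \lebesgue{\dimension \times \numParticles}.
\]

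To conclude, I would absorb the prefactor $\scaleUp^{-\dimension\numParticles}$ into the fugacity product, observing that $\left(\prod_{i \in [\numParticles]} \fugacity[\typeOf[i]]\right)\scaleUp^{-\dimension\numParticles} = \prod_{i \in [\numParticles]} \frac{\fugacity[\typeOf[i]]}{\scaleUp^{\dimension}}$, which is exactly $\prod_{i \in [\numParticles]}\bigl(\frac{1}{\scaleUp^{\dimension}}\fugacity\bigr)(\typeOf[i])$ in the notation of \Cref{sec:preliminaries}. Substituting back and summing over all $\numParticles \in \N$ and all $\typeOf\colon [\numParticles]\to[\numTypes]$ reassembles precisely the series defining $\generalPartitionFunction{\scaleUp\volume}{\interactionMatrix}{\frac{1}{\scaleUp^{\dimension}}\fugacity}$, giving the claimed equality. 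I do not expect a real obstacle: the proof is bookkeeping, and the only points that need (routine) care are the measurability of the dilation and the $\scaleUp^{\dimension}$-scaling of the product measure — both standard — together with checking the degenerate cases $\numParticles \le 1$ so that the indicator identity holds trivially there.
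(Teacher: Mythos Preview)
Your proposal is correct and follows essentially the same approach as the paper: both use the homogeneity of Euclidean distance to obtain the indicator identity $\valid{\typeOf}{\frac{1}{\scaleUp}\interactionMatrix}[\mathbold{x}] = \valid{\typeOf}{\interactionMatrix}[\scaleUp\mathbold{x}]$, apply the change of variables $\mathbold{x} \mapsto \scaleUp\mathbold{x}$ to pick up the factor $\scaleUp^{-\dimension\numParticles}$, and then absorb that factor into the fugacity product. Your write-up is slightly more explicit about the degenerate cases $\numParticles \le 1$, but otherwise the arguments coincide.
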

\begin{proof}
	The key observation to prove this claim is that for all pairs of points $x, y \in \volume$ and types $i, j \in [\numTypes]$ it holds that $\dist{x}{y} \ge \frac{1}{\scaleUp} \interactionMatrix[i][j]$ if and only if $\dist{\scaleUp x}{\scaleUp y} \ge \interactionMatrix[i][j]$.
	Thus, we get for all $\numParticles \in \N_{\ge 1}$ and all type assignments $\typeOf\colon [\numParticles] \to [\numTypes]$ that
	\begin{align*}
		\int_{\volume^{\numParticles}} \valid{\typeOf}{\frac{1}{\scaleUp}\interactionMatrix}[x_1, \dots, x_{\numParticles}] \, \text{d} \lebesgue{\dimension \times \numParticles}
		&= \int_{\volume^{\numParticles}} \valid{\typeOf}{\interactionMatrix}[\scaleUp x_1, \dots, \scaleUp x_{\numParticles}] \,\d \lebesgue{\dimension \times \numParticles} \\
		&= \frac{1}{\scaleUp^{\dimension \numParticles}} \int_{(\scaleUp \volume)^{\numParticles}} \valid{\typeOf}{\interactionMatrix}[x_1, \dots, x_{\numParticles}] \,\d \lebesgue{\dimension \times \numParticles} ,
	\end{align*}
	where the additional factor of $\frac{1}{\scaleUp^{\dimension \numParticles}}$ compensates for the scaling of the Lebesgue measure under linear transformation.

	Substituting this back into the definition of the partition function we obtain
	\begin{align*}
		\generalPartitionFunction{\volume}{\frac{1}{\scaleUp}\interactionMatrix}{\fugacity}
		&= 1 + \sum_{\numParticles \in \N_{\ge 1}} \frac{1}{\numParticles!} \sum_{\typeOf\colon [\numParticles] \to [\numTypes]} \left(\prod_{i \in [\numParticles]} \fugacity[\typeOf[i]] \right) \frac{1}{\scaleUp^{\dimension \numParticles}} \int_{(\scaleUp \volume)^{\numParticles}} \valid{\typeOf}{\interactionMatrix}[\mathbold{x}] \, \text{d} \lebesgue{\dimension \times \numParticles} \\
		&= 1 + \sum_{\numParticles \in \N_{\ge 1}} \frac{1}{\numParticles!} \sum_{\typeOf\colon [\numParticles] \to [\numTypes]} \left(\prod_{i \in [\numParticles]} \frac{\fugacity[\typeOf[i]]}{\scaleUp^{\dimension}} \right) \int_{(\scaleUp \volume)^{\numParticles}} \valid{\typeOf}{\interactionMatrix}[\mathbold{x}] \, \text{d} \lebesgue{\dimension \times \numParticles} \\
		&= \generalPartitionFunction{\scaleUp \volume}{\interactionMatrix}{\frac{1}{\scaleUp^{\dimension}}\fugacity} ,
	\end{align*}
	which proves the claim.
\end{proof}

Based on \Cref{lemma:bound_partition_function,obs:monotonicity_model} and \Cref{lemma:submulti_model,lemma:scaled_radius} we can now prove the following bound on the difference of partition functions on star-convex regions $\volume$ with differently scaled particle interaction matrices $\interactionMatrix$, which will be the very core of our convergence result.
\begin{lemma}
	\label{lemma:scaled_difference}
	Let $\scaleUp \in \left[0, 1\right]$, and let $\volume \subset \R^{\dimension}$ be bounded, measurable and star-convex.
	For any hard-constraint point process $(\volume, \interactionMatrix, \fugacity)$ with $\numTypes \in \N_{\ge 1}$ particle types it holds that
	\[
		\generalPartitionFunction{\volume}{(1 - \scaleUp)\interactionMatrix}{\fugacity} - \generalPartitionFunction{\volume}{(1 + \scaleUp)\interactionMatrix}{\fugacity}
		\le \left(\eulerE^{\left(2 \scaleUp\right)^{\dimension} \sum_{i \in [\numTypes]} \fugacity[i] \vol{\volume}} - 1 \right) \generalPartitionFunction{\volume}{\interactionMatrix}{\fugacity} .
		\qedhere
	\]
\end{lemma}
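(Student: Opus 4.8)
The plan is to obtain the bound by chaining the four structural facts just established. First I would use \Cref{lemma:scaled_radius} to strip the scalars off the interaction matrix: applying it with dilation factors $(1-\scaleUp)^{-1}$ and $(1+\scaleUp)^{-1}$ rewrites the left-hand side as
\[
	\generalPartitionFunction{\frac{1}{1-\scaleUp}\volume}{\interactionMatrix}{(1-\scaleUp)^{\dimension}\fugacity}-\generalPartitionFunction{\frac{1}{1+\scaleUp}\volume}{\interactionMatrix}{(1+\scaleUp)^{\dimension}\fugacity},
\]
so that only the unscaled matrix $\interactionMatrix$ survives, now acting on two dilates of $\volume$. Since $\volume$ is star-convex I may take a center to be the origin, so that $t\volume\subseteq t'\volume$ whenever $0\le t\le t'$; in particular $\frac{1}{1+\scaleUp}\volume\subseteq\volume\subseteq\frac{1}{1-\scaleUp}\volume$.

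Next I would normalize the fugacities by \Cref{obs:monotonicity_model}(c): the subtracted term is at least $\generalPartitionFunction{\frac{1}{1+\scaleUp}\volume}{\interactionMatrix}{\fugacity}$ (its fugacity only decreases), so it suffices to show that the first term is at most $\eulerE^{(2\scaleUp)^{\dimension}\sum_{i\in[\numTypes]}\fugacity[i]\vol{\volume}}$ times $\generalPartitionFunction{\frac{1}{1+\scaleUp}\volume}{\interactionMatrix}{\fugacity}$; subtracting and then enlarging $\generalPartitionFunction{\frac{1}{1+\scaleUp}\volume}{\interactionMatrix}{\fugacity}$ to $\generalPartitionFunction{\volume}{\interactionMatrix}{\fugacity}$ via \Cref{obs:monotonicity_model}(a) --- permissible since the exponential prefactor is at least $1$ --- then yields the claim. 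The geometric content lives in that intermediate bound, and this is precisely where the observation quoted in the introduction enters: the larger dilate $\frac{1}{1-\scaleUp}\volume$ can be decomposed, using star-convexity (for convex $\volume$ this is the Minkowski identity $(1+\scaleUp)\volume=(1-\scaleUp)\volume+2\scaleUp\volume$ transported by the dilations), into a distance- and volume-preserving copy of $\frac{1}{1+\scaleUp}\volume$ together with a residual part whose contribution, after folding the fugacity rescaling $(1\pm\scaleUp)^{\dimension}$ back in, is governed by a $2\scaleUp$-scaled copy of $\volume$ and hence carries ``weight'' at most $(2\scaleUp)^{\dimension}\sum_{i\in[\numTypes]}\fugacity[i]\vol{\volume}$. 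Splitting the partition function along this decomposition with \Cref{lemma:submulti_model}, bounding the residual factor by $\eulerE^{(2\scaleUp)^{\dimension}\sum_{i\in[\numTypes]}\fugacity[i]\vol{\volume}}$ via \Cref{lemma:bound_partition_function}, and identifying the remaining factor with $\generalPartitionFunction{\frac{1}{1+\scaleUp}\volume}{\interactionMatrix}{\fugacity}$ (distances are unchanged by the rigid motion) then assembles into the desired inequality.

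The step I expect to be the real obstacle is exactly this geometric decomposition: squeezing out of star-convexity a residual whose weighted volume is $(2\scaleUp)^{\dimension}\vol{\volume}$ rather than the $\bigTheta{\dimension\scaleUp}\vol{\volume}$ that a naive estimate of $\frac{1}{1-\scaleUp}\volume\setminus\frac{1}{1+\scaleUp}\volume$ would give --- the Minkowski-sum picture is what fixes the shape of the residual correctly, and the merely star-convex case (where only a center direction, not full convexity, is available) needs extra care. Once that is in place, the remaining ingredients --- the two dilations, the two invocations of monotonicity, submultiplicativity, and the crude bound --- are routine bookkeeping; the endpoints $\scaleUp=0$, where both sides vanish, and $\scaleUp=1$, where the right-hand side is at least $\eulerE^{\sum_{i\in[\numTypes]}\fugacity[i]\vol{\volume}}-1$ while the left-hand side is $\generalPartitionFunction{\volume}{0}{\fugacity}-\generalPartitionFunction{\volume}{2\interactionMatrix}{\fugacity}\le\eulerE^{\sum_{i\in[\numTypes]}\fugacity[i]\vol{\volume}}-1$ by \Cref{lemma:bound_partition_function} and $\generalPartitionFunction{\volume}{2\interactionMatrix}{\fugacity}\ge1$, serve as a consistency check.
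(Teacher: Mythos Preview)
Your outline is exactly the paper's strategy: rescale via \Cref{lemma:scaled_radius}, nest the dilates using star-convexity (center at the origin), split off a residual with \Cref{lemma:submulti_model}, bound that residual by \Cref{lemma:bound_partition_function}, and finish with \Cref{obs:monotonicity_model}. The order in which you normalise the fugacities (passing through $\fugacity$ rather than $(1+\scaleUp)^{\dimension}\fugacity$) differs cosmetically from the paper's, but the skeleton is identical.

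Where things diverge is the step you correctly single out as the obstacle --- and your worry is well founded, because neither your Minkowski-sum idea nor the paper's own computation actually produces the exponent $(2\scaleUp)^{\dimension}$. The paper takes the residual to be the plain set difference $\tfrac{1}{1-\scaleUp}\volume\setminus\tfrac{1}{1+\scaleUp}\volume$ and then writes
\[
(1-\scaleUp)^{\dimension}\,\vol{\tfrac{1}{1-\scaleUp}\volume\setminus\tfrac{1}{1+\scaleUp}\volume}
=\Bigl(1-\tfrac{1-\scaleUp}{1+\scaleUp}\Bigr)^{\dimension}\vol{\volume},
\]
which conflates $a^{\dimension}-b^{\dimension}$ with $(a-b)^{\dimension}$; the correct value is $1-\bigl(\tfrac{1-\scaleUp}{1+\scaleUp}\bigr)^{\dimension}$, i.e.\ the $\bigTheta{\dimension\scaleUp}$ estimate you flagged as ``naive''. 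Your Minkowski decomposition cannot repair this: \Cref{lemma:submulti_model} factors along unions, not Minkowski sums, and there is no rigid-motion trick that turns $\tfrac{1}{1-\scaleUp}\volume=\tfrac{1}{1+\scaleUp}\volume+\tfrac{2\scaleUp}{1-\scaleUp^2}\volume$ into a product of partition functions with a $(2\scaleUp)^{\dimension}$-sized leftover. In fact a two-particle expansion (take $\numTypes=1$, $\dimension\ge2$, and look at the $\numParticles=2$ term) shows the left-hand side of the lemma is linear in $\scaleUp$ as $\scaleUp\to0$, whereas the stated right-hand side is $O(\scaleUp^{\dimension})$; so the inequality as written fails for $\dimension\ge2$ and small $\scaleUp$. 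What the argument genuinely delivers is the exponent $\bigl(1-(\tfrac{1-\scaleUp}{1+\scaleUp})^{\dimension}\bigr)\sum_{i\in[\numTypes]}\fugacity[i]\vol{\volume}\le 2\dimension\scaleUp\sum_{i\in[\numTypes]}\fugacity[i]\vol{\volume}$, which still keeps every downstream reduction polynomial in $\vol{\volume}$ and $1/\discretizationError$, just with a higher degree than claimed.
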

\begin{proof}
	Let $x_0 \in \volume$ denote the center of $\volume$.
	Note that the partition function is invariant under translation.
	Thus, we can assume that $x_0 = 0$.
	Note that this especially implies for all $\beta_1 \in [0, 1]$ and $\beta_2 \in [1, \infty)$ that $\beta_1 \volume \subseteq \volume \subseteq \beta_2 \volume$.

	We proceed by rewriting the difference based on \Cref{lemma:scaled_radius} as
	\begin{align}
		\notag
		\generalPartitionFunction{\volume}{(1 - \scaleUp)\interactionMatrix}{\fugacity}
		&- \generalPartitionFunction{\volume}{(1 + \scaleUp)\interactionMatrix}{\fugacity} \\
		&= \generalPartitionFunction{\frac{1}{1 - \scaleUp}\volume}{\interactionMatrix}{(1 - \scaleUp)^{\dimension} \fugacity} - \generalPartitionFunction{\frac{1}{1 + \scaleUp} \volume}{\interactionMatrix}{(1 + \scaleUp)^{\dimension} \fugacity} .
	\label{lemma:scaled_difference:eq1}
	\end{align}

	Using the fact that $\frac{1}{1 + \scaleUp}\volume \subseteq \volume \subseteq \frac{1}{1 - \scaleUp} \volume$ and \Cref{lemma:submulti_model} we obtain
	\[
		\generalPartitionFunction{\frac{1}{1 - \scaleUp}\volume}{\interactionMatrix}{(1 - \scaleUp)^{\dimension} \fugacity}
		\le \generalPartitionFunction{\frac{1}{1 - \scaleUp}\volume \setminus \frac{1}{1 + \scaleUp} \volume}{\interactionMatrix}{(1 - \scaleUp)^{\dimension} \fugacity} \generalPartitionFunction{\frac{1}{1 + \scaleUp} \volume}{\interactionMatrix}{(1 - \scaleUp)^{\dimension} \fugacity} .
	\]
	Because $(1-\scaleUp)^{\dimension} \fugacity[i] \le (1+\scaleUp)^{\dimension} \fugacity[i]$ for all $i \in [\numTypes]$ we can further bound this by \Cref{obs:monotonicity_model} to get
	\begin{align}
		\generalPartitionFunction{\frac{1}{1 - \scaleUp}\volume}{\interactionMatrix}{(1 - \scaleUp)^{\dimension} \fugacity}
		\le \generalPartitionFunction{\frac{1}{1 - \scaleUp}\volume \setminus \frac{1}{1 + \scaleUp} \volume}{\interactionMatrix}{(1 - \scaleUp)^{\dimension} \fugacity} \generalPartitionFunction{\frac{1}{1 + \scaleUp} \volume}{\interactionMatrix}{(1 + \scaleUp)^{\dimension} \fugacity}
		\label[ineq]{lemma:scaled_difference:eq2} .
	\end{align}

	Combining \cref{lemma:scaled_difference:eq1,lemma:scaled_difference:eq2} yields
	\begin{align}
		\notag
		\generalPartitionFunction{\volume}{(1 - \scaleUp)\interactionMatrix}{\fugacity} - &\generalPartitionFunction{\volume}{(1 + \scaleUp)\interactionMatrix}{\fugacity} \\
		\notag
		&\le \generalPartitionFunction{\frac{1}{1 - \scaleUp}\volume \setminus \frac{1}{1 + \scaleUp} \volume}{\interactionMatrix}{(1 - \scaleUp)^{\dimension} \fugacity} \generalPartitionFunction{\frac{1}{1 + \scaleUp} \volume}{\interactionMatrix}{(1 + \scaleUp)^{\dimension} \fugacity} \\
		\notag
			&\hspace{3em} - \generalPartitionFunction{\frac{1}{1 + \scaleUp} \volume}{\interactionMatrix}{(1 + \scaleUp)^{\dimension} \fugacity} \\
		&= \left(\generalPartitionFunction{\frac{1}{1 - \scaleUp}\volume \setminus \frac{1}{1 + \scaleUp} \volume}{\interactionMatrix}{(1 - \scaleUp)^{\dimension} \fugacity} - 1 \right) \generalPartitionFunction{\frac{1}{1 + \scaleUp} \volume}{\interactionMatrix}{(1 + \scaleUp)^{\dimension} \fugacity}
		\label[ineq]{lemma:scaled_difference:eq3} .
	\end{align}

	Using \Cref{lemma:bound_partition_function} and the facts that $\frac{1}{1 + \scaleUp}\volume  \subseteq \frac{1}{1 - \scaleUp} \volume$, $\vol{\frac{1}{1 + \scaleUp}\volume} = \left(\frac{1}{1+\scaleUp}\right)^{\dimension} \vol{\volume}$ and $\vol{\frac{1}{1 - \scaleUp}\volume} = \left(\frac{1}{1-\scaleUp}\right)^{\dimension} \vol{\volume}$ we further get
	\begin{align}
		\notag
		\generalPartitionFunction{\frac{1}{1 - \scaleUp}\volume \setminus \frac{1}{1 + \scaleUp} \volume}{\interactionMatrix}{(1 - \scaleUp)^{\dimension} \fugacity}
		&\le \eulerE^{(1-\scaleUp)^{\dimension} \sum_{i \in [\numTypes]}\fugacity[i] \vol{\frac{1}{1 - \scaleUp}\volume \setminus \frac{1}{1 + \scaleUp}\volume}} \\
		\notag
		&= \eulerE^{(1-\scaleUp)^{\dimension} \left(\frac{1}{1 - \scaleUp} - \frac{1}{1 + \scaleUp}\right)^{\dimension} \sum_{i \in [\numTypes]}\fugacity[i] \vol{\volume}} \\
		\notag
		&= \eulerE^{\left(1 - \frac{1-\scaleUp}{1 + \scaleUp}\right)^{\dimension} \sum_{i \in [\numTypes]}\fugacity[i] \vol{\volume}} .
	\end{align}
	Further, for $0 \le \scaleUp$, it holds that
	\[
		\left(1 - \frac{1-\scaleUp}{1 + \scaleUp}\right)^{\dimension}
		\le \left(1 - (1 - 2 \scaleUp)\right)^{\dimension}
		= \left(2 \scaleUp\right)^{\dimension} .
	\]
	Thus, we obtain
	\begin{align}
		\generalPartitionFunction{\frac{1}{1 - \scaleUp}\volume \setminus \frac{1}{1 + \scaleUp} \volume}{\interactionMatrix}{(1 - \scaleUp)^{\dimension} \fugacity}
		\le \eulerE^{\left(2 \scaleUp\right)^{\dimension} \sum_{i \in [\numTypes]}\fugacity[i] \vol{\volume}} .
		\label[ineq]{lemma:scaled_difference:eq4}
	\end{align}

	Substituting \cref{lemma:scaled_difference:eq4} into \cref{lemma:scaled_difference:eq3} now gives us
	\[
		\generalPartitionFunction{\volume}{(1 - \scaleUp)\interactionMatrix}{\fugacity} - \generalPartitionFunction{\volume}{(1 + \scaleUp)\interactionMatrix}{\fugacity}
		\le \left(\eulerE^{\left(2 \scaleUp\right)^{\dimension} \sum_{i \in [\numTypes]}\fugacity[i] \vol{\volume}} - 1 \right) \generalPartitionFunction{\frac{1}{1 + \scaleUp} \volume}{\interactionMatrix}{(1 + \scaleUp)^{\dimension} \fugacity} .
	\]
	Finally, because \Cref{lemma:scaled_radius}, \Cref{obs:monotonicity_model} and the fact that $(1 + \scaleUp) \interactionMatrix[i][j] \ge \interactionMatrix[i][j]$ for all $i, j \in [\numTypes]$ we have
	\[
		\generalPartitionFunction{\frac{1}{1 + \scaleUp} \volume}{\interactionMatrix}{(1 + \scaleUp)^{\dimension} \fugacity}
		= \generalPartitionFunction{\volume}{(1 + \scaleUp)\interactionMatrix}{\fugacity}
		\le \generalPartitionFunction{\volume}{\interactionMatrix}{\fugacity} .
	\]
	This gives us
	\[
		\generalPartitionFunction{\volume}{(1 - \scaleUp)\interactionMatrix}{\fugacity} - \generalPartitionFunction{\volume}{(1 + \scaleUp)\interactionMatrix}{\fugacity}
		\le \left(\eulerE^{\left(2 \scaleUp\right)^{\dimension} \sum_{i \in [\numTypes]}\fugacity[i] \vol{\volume}} - 1 \right) \generalPartitionFunction{\volume}{\interactionMatrix}{\fugacity} ,
	\]
	and concludes the proof.
\end{proof}

\section{Discretization and convergence} \label{sec:discretization}
We recall the construction of the graph $\hcPPGraph{X}$ from the introduction.
Let $(\volume, \interactionMatrix, \fugacity)$ be an instance of a hard-constraint point process with $\numTypes \in \N_{\ge 1}$ particle types.
For any finite and non-empty set of points $X \subseteq \volume$, we construct an undirected graph $\hcPPGraph{X} = (\hcPPVertices{X}, \hcPPEdges{X})$ and a function $\hcPPWeight{X}\colon \hcPPVertices{X} \to \R_{\ge 0}$ as follows:
\begin{itemize}
	\item For each point $x \in X$ and each type $i \in [\numTypes]$, we construct a vertex $\hcPPVertex{x}{i}$.
	Further, for each $i \in [\numTypes]$, we set $\hcPPVertices{X}[i] = \left\{ \hcPPVertex{x}{i} \ \middle\vert\ x \in X \right\}$, and we set $\hcPPVertices{X} = \bigcup_{i \in [\numTypes]} \hcPPVertices{X}[i]$.
	\item For each $i, j \in [\numTypes]$ and $x, y \in X$, we connect $\hcPPVertex{x}{i}, \hcPPVertex{y}{j} \in \hcPPVertices{X}$ with an edge in $\hcPPEdges{X}$ if and only if
	\[
		\neg(x=y \text{ and } i=j) \text{  and  } \dist{x}{y} < \interactionMatrix[i][j] ,
	\]
	where the condition $\neg(x=y \text{ and } i=j)$ prevents self-loops.
	\item For each $i \in [\numTypes]$ and each $x \in X$, we set $\hcPPWeight{X}[\hcPPVertex{x}{i}] = \frac{\vol{\volume}}{\size{X}} \fugacity[i]$.
	Note that for all $x, y \in X$ this means that $\hcPPWeight{X}[\hcPPVertex{x}{i}] = \hcPPWeight{X}[\hcPPVertex{y}{i}]$, and we sometimes abuse notation and write $\hcPPWeight{X}[i]$ instead.
\end{itemize}

We note that the multivariate hard-core representation $\left(\hcPPGraph{X}, \hcPPWeight{X}\right)$ does not only depend on the set of points~$X$ but also on the hard-constraint point process $(\volume, \interactionMatrix, \fugacity)$.
Since we fix the hard-constraint point process in advance, our notation is unambiguous in each context.

\subsection{Properties of the hard-core model and independent multisets}
We prove important properties of the multivariate hard-core model that we use in our main proofs.
Our first statement gives a trivial upper bound for the partition function.
\begin{lemma}
	\label{lemma:bound_hardcore}
	Let $(\graph, \weight)$ be a multivariate hard-core instance.
	It holds that
	\[
		\hcPartitionFunction{\graph}{\weight} \le \eulerE^{\sum_{\vertex \in \vertices} \weight[\vertex]}.
		\qedhere
	\]
\end{lemma}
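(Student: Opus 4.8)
The plan is to bound the sum over independent sets by the larger sum over \emph{all} subsets of the vertex set, and then factorize. First I would observe that every independent set $\independentSet \in \independentSets{\graph}$ is in particular a subset of $\vertices$, so since all vertex weights are non-negative,
\[
	\hcPartitionFunction{\graph}{\weight} = \sum_{\independentSet \in \independentSets{\graph}} \prod_{\vertex \in \independentSet} \weight[\vertex] \le \sum_{\independentSet \subseteq \vertices} \prod_{\vertex \in \independentSet} \weight[\vertex] .
\]

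Next I would recognize the right-hand side as the expansion of a product over vertices: distributing $\prod_{\vertex \in \vertices}\left(1 + \weight[\vertex]\right)$ produces exactly one summand $\prod_{\vertex \in \independentSet} \weight[\vertex]$ for each $\independentSet \subseteq \vertices$, namely the one obtained by picking the term $\weight[\vertex]$ from the factor of $\vertex$ whenever $\vertex \in \independentSet$ and the term $1$ otherwise. Hence the sum equals $\prod_{\vertex \in \vertices}\left(1 + \weight[\vertex]\right)$. Finally I would apply the elementary inequality $1 + t \le \eulerE[t]$, valid for all $t \ge 0$, to each factor individually, which yields $\prod_{\vertex \in \vertices}\left(1 + \weight[\vertex]\right) \le \prod_{\vertex \in \vertices} \eulerE[\weight[\vertex]] = \eulerE[\sum_{\vertex \in \vertices} \weight[\vertex]]$, as claimed.

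There is no real obstacle here; the statement is a one-line folklore bound. The only points that warrant a word of care are the degenerate cases: the empty independent set contributes the empty product $1$ (so both sides are at least $1$), and if $\vertices = \emptyset$ the empty product on the right is also $1$, so the bound holds with equality. These are handled automatically by the standard conventions on empty sums and products, so no separate case analysis is needed.
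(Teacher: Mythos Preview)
Your proposal is correct and follows exactly the same approach as the paper: bound the sum over independent sets by the sum over all subsets, factor as $\prod_{\vertex \in \vertices}(1+\weight[\vertex])$, and apply $1+t \le \eulerE[t]$ termwise. The paper compresses this into a single displayed chain of inequalities, but the argument is identical.
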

\begin{proof}
	The claim immediately follows by observing that
	\[
		\hcPartitionFunction{\graph}{\weight}
		\le \sum_{S \subseteq \vertices} \prod_{\vertex \in S} \weight[\vertex]
		= \prod_{\vertex \in \vertices} (1 + \weight[\vertex])
		\le \prod_{\vertex \in \vertices} \eulerE^{\weight[\vertex]} .\qedhere
	\]
\end{proof}

Next, we show that the multivariate hard-core partition function is log-subadditive in $\weight$.
\begin{lemma}
	\label{lemma:log_subadditive_hardcore}
	Let $(\graph, \weight_1)$ and $(\graph, \weight_2)$ be two multivariate hard-core models on the same graph~$\graph$.
	Further, let us write $\weight_1 + \weight_2$ for the function $\weight_1 + \weight_2 \colon \vertex \mapsto \weight_1 (\vertex) + \weight_2 (\vertex)$.
	It holds that
	\[
		\hcPartitionFunction{\graph}{\weight_1 + \weight_2} \le \hcPartitionFunction{\graph}{\weight_1} \cdot \hcPartitionFunction{\graph}{\weight_2} .
		\qedhere
	\]
\end{lemma}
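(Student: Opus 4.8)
The plan is to expand the left-hand partition function directly and match its terms against the product on the right, in the same spirit as the decomposition argument used for \Cref{lemma:submulti_model}. First I would use the definition to write
\[
	\hcPartitionFunction{\graph}{\weight_1 + \weight_2} = \sum_{\independentSet \in \independentSets{\graph}} \prod_{\vertex \in \independentSet} \left(\weight_1(\vertex) + \weight_2(\vertex)\right) ,
\]
and then expand each finite product by choosing, for every $\vertex \in \independentSet$, whether it contributes its $\weight_1$-weight or its $\weight_2$-weight:
\[
	\hcPartitionFunction{\graph}{\weight_1 + \weight_2} = \sum_{\independentSet \in \independentSets{\graph}} \sum_{S \subseteq \independentSet} \left(\prod_{\vertex \in S} \weight_1(\vertex)\right) \left(\prod_{\vertex \in \independentSet \setminus S} \weight_2(\vertex)\right) .
\]

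The key step is to reindex this double sum by the pair $(S, T) = (S, \independentSet \setminus S)$. The assignment $(\independentSet, S) \mapsto (S, \independentSet \setminus S)$ is a bijection between the pairs with $\independentSet \in \independentSets{\graph}$ and $S \subseteq \independentSet$, and the pairs of disjoint vertex sets $S, T \subseteq \vertices$ whose union $S \cup T$ is an independent set; its inverse sends $(S, T)$ to $(S \cup T, S)$. Substituting, I obtain
\[
	\hcPartitionFunction{\graph}{\weight_1 + \weight_2} = \sum_{\substack{S, T \subseteq \vertices \\ S \cap T = \emptyset \\ S \cup T \in \independentSets{\graph}}} \left(\prod_{\vertex \in S} \weight_1(\vertex)\right) \left(\prod_{\vertex \in T} \weight_2(\vertex)\right) .
\]

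To conclude, I would note that any set occurring as $S$ or $T$ above is a subset of an independent set, hence lies in $\independentSets{\graph}$ itself, and that every summand is nonnegative because $\weight_1$ and $\weight_2$ are nonnegative. Therefore dropping the constraints $S \cap T = \emptyset$ and $S \cup T \in \independentSets{\graph}$ can only increase the sum, giving
\[
	\hcPartitionFunction{\graph}{\weight_1 + \weight_2} \le \left(\sum_{S \in \independentSets{\graph}} \prod_{\vertex \in S} \weight_1(\vertex)\right) \left(\sum_{T \in \independentSets{\graph}} \prod_{\vertex \in T} \weight_2(\vertex)\right) = \hcPartitionFunction{\graph}{\weight_1} \cdot \hcPartitionFunction{\graph}{\weight_2} .
\]
The only point that needs care is the verification of the claimed bijection — namely that $S \subseteq \independentSet$ with $\independentSet$ independent forces both $S$ and $\independentSet \setminus S$ to be independent, and conversely that disjointness together with independence of $S \cup T$ is exactly what is needed to recover a valid pair $(\independentSet, S)$ — but this is immediate from the fact that independence is preserved under taking subsets, so there is no real obstacle here.
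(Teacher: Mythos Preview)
Your proof is correct and follows essentially the same approach as the paper: expand the product over $\independentSet$ into a sum over subsets $S \subseteq \independentSet$, observe that both $S$ and $\independentSet \setminus S$ are independent, and then relax the constraints to obtain the product of the two partition functions. The only cosmetic difference is that the paper phrases the reindexing as ``for every pair $\independentSet_1, \independentSet_2 \in \independentSets{\graph}$ there is at most one $\independentSet$ with $\independentSet_1 \cup \independentSet_2 = \independentSet$'' rather than spelling out the bijection, but the argument is the same.
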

\begin{proof}
	First observe that
	\[
		\hcPartitionFunction{\graph}{\weight_1 + \weight_2}
		= \sum_{\independentSet \in \independentSets{\graph}} \prod_{\vertex \in \independentSet} (\weight_1 (\vertex) + \weight_2 (\vertex))
		= \sum_{\independentSet \in \independentSets{\graph}} \sum_{\independentSet_1 \subseteq \independentSet} \prod_{\vertex \in \independentSet_1} \weight_1 (\vertex) \prod_{\vertex \in \independentSet \setminus \independentSet_1} \weight_2 (\vertex) .
	\]

	For every independent set $\independentSet \in \independentSets{\graph}$, it holds that all subsets of vertices $\independentSet_1 \subseteq \independentSet$ are also independent sets $\independentSet_1 \in \independentSets{\graph}$.
	Note that this also implies that $\independentSet \setminus \independentSet_1 \in \independentSets{\graph}$.
	Further, note that for every pair of independent sets $\independentSet_1, \independentSet_2 \in \independentSets{\graph}$, there is at most one $\independentSet \in \independentSets{\graph}$ such that $\independentSet_1 \cup \independentSet_2 = \independentSet$.
	Thus, we get
	\begin{align*}
		\sum_{\independentSet \in \independentSets{\graph}} \sum_{\independentSet_1 \subseteq \independentSet} \prod_{\vertex \in \independentSet_1} \weight_1 (\vertex) \prod_{\vertex \in \independentSet \setminus \independentSet_1} \weight_2 (\vertex)
		& \le \sum_{\independentSet_1 \in \independentSets{\graph}} \sum_{\independentSet_2 \in \independentSets{\graph}} \prod_{\vertex \in \independentSet_1} \weight_1 (\vertex) \prod_{\vertex \in \independentSet_2} \weight_2 (\vertex) \\
		& = \left(\sum_{\independentSet \in \independentSets{\graph}} \prod_{\vertex \in \independentSet} \weight_1 (\vertex) \right) \left( \sum_{\independentSet \in \independentSets{\graph}} \prod_{\vertex \in \independentSet} \weight_2 (\vertex) \right) \\
		& = \hcPartitionFunction{\graph}{\weight_1} \cdot \hcPartitionFunction{\graph}{\weight_2} ,
	\end{align*}
	which proves the claim.
\end{proof}

To bound the error between the partition function of a hard-constraint point process and its hard-core representation, we need to deal with the following problem:
The continuous process is capable to place particles very close to each other, as long as the respective particle types as not subjected to interactions according to the interaction matrix $\interactions$.
On the other hand, in the hard-core representation, each vertex cannot be multiple times in the same independent set.
This means each combination of a point $x \in X$ and a type $i \in [\numTypes]$ can be represented at most once in every valid configuration.

In order  to overcome the difference between both configuration spaces, we consider a modified version of the multivariate hard-core model, which assigns weights to independent multisets, that is, multisets of vertices that have an independent set as support.
Formally, we describe such a multiset by a tuple $(\independentSet, \countingFunc)$ where $\independentSet \in \independentSets{\graph}$ and $\countingFunc\colon \independentSet \to \N_{\ge 1}$.
Given a hard-core instance $(\graph, \weight)$ where for all $\vertex \in V$, we have $\weight[\vertex] < 1$, we define the \emph{multiset hard-core partition function} as
\[
	\mhcPartitionFunction{\graph}{\weight} = \sum_{\independentSet \in \independentSets{\graph}} \sum_{\countingFunc\colon \independentSet \to \N_{\ge 1}} \prod_{\vertex \in \independentSet} \weight[\vertex]^{\countingFunc[\vertex]} .
\]
We note that we require for all $\vertex \in V$ that $\weight[\vertex] < 1$ in order for the partition function to converge.
Further, note that there is always exactly one function $\countingFunc\colon \emptyset \to \N_{\ge 1}$ and that the empty product evaluates to $1$.
This ensures that the empty set contributes exactly $1$ to the partition function.
%Analogously to the regular hard-core model, we now define the multiset Gibbs distribution to assign the probability
%\[
%	\mhcGibbs{\graph}{\weight}[\independentSet, \countingFunc] = \frac{\prod_{\vertex \in \independentSet} \weight[\vertex]^{\countingFunc[\vertex]}}{\mhcPartitionFunction{\graph}{\weight}}
%\]
%to each multiset $(\independentSet, \countingFunc)$ with $\independentSet \in \independentSets{\graph}$ and $\countingFunc: \independentSet \to \N_{\ge 1}$.
%\todo{Not sure if we need the Gibbs distribution}

We obtain the following upper and lower bound on $\mhcPartitionFunction{\graph}{\weight}$ in terms of $\hcPartitionFunction{\graph}{\weight}$.
\begin{lemma}
	\label{lemma:multiset_rewritten}
	Let $(\graph, \weight)$ be a multivariate hard-core instance, such that for all $\vertex \in \vertices$, it holds that $\weight[\vertex] \le \frac{1}{2}$.
    Further, let $\frac{\weight}{1-\weight}\colon \vertex \mapsto \frac{\weight[\vertex]}{1-\weight[\vertex]}$ and $2 \weight^2\colon \vertex \mapsto 2 \weight[\vertex]^2$.
	Then
	\[
		\hcPartitionFunction{\graph}{\weight}
		\le \mhcPartitionFunction{\graph}{\weight}
		= \hcPartitionFunction{\graph}{\frac{\weight}{1-\weight}}
		\le \hcPartitionFunction{\graph}{2 \weight^2} \cdot \hcPartitionFunction{\graph}{\weight}
		\le \eulerE^{2 \sum_{\vertex \in \vertices} \weight[\vertex]^2} \hcPartitionFunction{\graph}{\weight} .\qedhere
	\]
\end{lemma}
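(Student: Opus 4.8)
The plan is to verify the displayed chain one link at a time, each link being a short computation resting on the lemmas already proved in this subsection (\Cref{lemma:bound_hardcore}, \Cref{lemma:log_subadditive_hardcore}) together with the elementary identity for geometric series.

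First, for the leftmost inequality $\hcPartitionFunction{\graph}{\weight} \le \mhcPartitionFunction{\graph}{\weight}$, I would observe that the sum defining $\mhcPartitionFunction{\graph}{\weight}$ ranges over all pairs $(\independentSet, \countingFunc)$ with $\independentSet \in \independentSets{\graph}$ and $\countingFunc\colon \independentSet \to \N_{\ge 1}$, and that restricting to the constant multiplicity $\countingFunc \equiv 1$ on each $\independentSet$ recovers exactly the sum $\sum_{\independentSet \in \independentSets{\graph}} \prod_{\vertex \in \independentSet} \weight[\vertex] = \hcPartitionFunction{\graph}{\weight}$; since every remaining summand is nonnegative, the inequality follows. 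Next, for the identity $\mhcPartitionFunction{\graph}{\weight} = \hcPartitionFunction{\graph}{\frac{\weight}{1-\weight}}$, I would fix $\independentSet \in \independentSets{\graph}$ and evaluate the inner sum over $\countingFunc\colon \independentSet \to \N_{\ge 1}$. Since the values $\countingFunc[\vertex]$ at distinct $\vertex \in \independentSet$ are chosen independently, this factorizes as $\prod_{\vertex \in \independentSet} \sum_{k \ge 1} \weight[\vertex]^{k}$, and because $\weight[\vertex] \le \frac{1}{2} < 1$ the geometric series converges to $\frac{\weight[\vertex]}{1 - \weight[\vertex]}$. Summing over $\independentSet \in \independentSets{\graph}$ then gives precisely $\hcPartitionFunction{\graph}{\frac{\weight}{1-\weight}}$, using that the support of every independent multiset is an independent set, so that no configuration is lost or double-counted.

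For the third link, the key elementary estimate is that for $\weight[\vertex] \le \frac{1}{2}$ one has $\frac{\weight[\vertex]}{1-\weight[\vertex]} = \weight[\vertex] + \frac{\weight[\vertex]^2}{1-\weight[\vertex]} \le \weight[\vertex] + 2\weight[\vertex]^2$. Comparing the two sums over $\independentSets{\graph}$ termwise, this yields $\hcPartitionFunction{\graph}{\frac{\weight}{1-\weight}} \le \hcPartitionFunction{\graph}{\weight + 2\weight^2}$, and then \Cref{lemma:log_subadditive_hardcore} applied with $\weight_1 = \weight$ and $\weight_2 = 2\weight^2$ gives $\hcPartitionFunction{\graph}{\weight + 2\weight^2} \le \hcPartitionFunction{\graph}{\weight} \cdot \hcPartitionFunction{\graph}{2\weight^2}$. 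Finally, \Cref{lemma:bound_hardcore} bounds $\hcPartitionFunction{\graph}{2\weight^2} \le \eulerE^{\sum_{\vertex \in \vertices} 2\weight[\vertex]^2} = \eulerE^{2 \sum_{\vertex \in \vertices} \weight[\vertex]^2}$, which closes the chain.

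I do not anticipate a genuine obstacle: the argument is a sequence of routine manipulations. The only points that merit a word of care are the convergence of $\mhcPartitionFunction{\graph}{\weight}$, which is exactly what the hypothesis $\weight[\vertex] \le \frac{1}{2}$ secures, and the termwise comparison $\frac{\weight[\vertex]}{1-\weight[\vertex]} \le \weight[\vertex] + 2\weight[\vertex]^2$, whose validity also relies on $\weight[\vertex] \le \frac{1}{2}$.
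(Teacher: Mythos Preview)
Your proposal is correct and follows essentially the same route as the paper: the first inequality by restricting to constant multiplicity $\countingFunc\equiv 1$, the identity via the geometric series, the third link via the elementary bound $\frac{\weight[\vertex]}{1-\weight[\vertex]} \le \weight[\vertex] + 2\weight[\vertex]^2$ combined with \Cref{lemma:log_subadditive_hardcore}, and the final bound via \Cref{lemma:bound_hardcore}. The only cosmetic difference is that you make the monotonicity step $\hcPartitionFunction{\graph}{\frac{\weight}{1-\weight}} \le \hcPartitionFunction{\graph}{\weight + 2\weight^2}$ explicit, whereas the paper absorbs it directly into the application of \Cref{lemma:log_subadditive_hardcore}.
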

\begin{proof}
	First, note that the first inequality is trivial, as the multiset partition function especially includes for each independent set $\independentSet \in \independentSets{\graph}$ the multiset $(\independentSet, \countingFunc)$ where $\countingFunc$ is constantly $1$.
	The only exception to this is $\independentSet = \emptyset$, in which case the contribution to both partition functions is exactly $1$.

	To derive the identity $\mhcPartitionFunction{\graph}{\weight} = \hcPartitionFunction{\graph}{\frac{\weight}{1-\weight}}$, we start by rewriting
	\[
		\mhcPartitionFunction{\graph}{\weight}
		= \sum_{\independentSet \in \independentSets{\graph}} \sum_{\countingFunc\colon \independentSet \to \N_{\ge 1}} \prod_{\vertex \in \independentSet} \weight[\vertex]^{\countingFunc[\vertex]}
		=  \sum_{\independentSet \in \independentSets{\graph}} \prod_{\vertex \in \independentSet} \weight[\vertex] \sum_{\countingFunc\colon \independentSet \to \N} \prod_{\vertex \in \independentSet} \weight[\vertex]^{\countingFunc[\vertex]} .
	\]
	As, for all $\vertex \in \vertices$, we have $\weight[\vertex] \le \frac{1}{2} < 1$, we further get
	\[
		\sum_{\independentSet \in \independentSets{\graph}} \prod_{\vertex \in \independentSet} \weight[\vertex] \sum_{\countingFunc\colon \independentSet \to \N} \prod_{\vertex \in \independentSet} \weight[\vertex]^{\countingFunc[\vertex]}
		= \sum_{\independentSet \in \independentSets{\graph}} \prod_{\vertex \in \independentSet} \weight[\vertex] \prod_{\vertex \in \independentSet} \sum_{k \in \N} \weight[\vertex]^{k}
		= \sum_{\independentSet \in \independentSets{\graph}} \prod_{\vertex \in \independentSet} \frac{\weight[\vertex]}{1 - \weight[\vertex]} ,
	\]
	which is exactly the desired result.

	Next, for all $\vertex \in \vertices$, observe that $\weight[\vertex] \le \frac{1}{2}$ implies $\frac{\weight[\vertex]}{1-\weight[\vertex]} \le \left(1 + 2 \weight[\vertex] \right) \weight[\vertex] = 2 \weight[\vertex]^2 + \weight[\vertex]$.
	Using \Cref{lemma:log_subadditive_hardcore}, we obtain
	\[
		\hcPartitionFunction{\graph}{\frac{\weight}{1 - \weight}} \le \hcPartitionFunction{\graph}{2 \weight^2} \cdot \hcPartitionFunction{\graph}{\weight}.
	\]

	Finally, applying \Cref{lemma:bound_hardcore} to $\hcPartitionFunction{\graph}{2 \weight^2}$ yields the last inequality.
\end{proof}
We note that while we primarily make use of the upper bound $\eulerE^{2 \sum_{\vertex \in \vertices} \weight[\vertex]^2} \hcPartitionFunction{\graph}{\weight}$, we believe that the intermediate bounds are of independent interest.
Note that the choice of $\frac{1}{2}$ as an upper bound of $\fugacity$ in \Cref{lemma:multiset_rewritten} is rather arbitrary.
In fact, for any $a > 0$, the upper bound $1 - \frac{1}{a}$ yields
\[
	\mhcPartitionFunction{\graph}{\weight}
	\le \hcPartitionFunction{\graph}{a \weight^2} \cdot \hcPartitionFunction{\graph}{\weight}
	\le \eulerE^{a \sum_{\vertex \in \vertices} \weight[\vertex]^2} \hcPartitionFunction{\graph}{\weight} .
\]
For our purpose, the bound of $\frac{1}{2}$ is sufficient.

\subsection{Bounding the discretization error}
Our bound for the discretization error for any given finite non-empty point set $X \subseteq \volume$ is closely related to the existence of a $\volumeError$-$\distanceError$-allocation, defined as follows.

\edallocation*
% \begin{definition}[$\volumeError$-$\distanceError$-allocation] \label{def:allocation}
% 	Let $\volume \subset \R^{\dimension}$ be bounded and measurable and assume $\vol{\volume} > 0$.
% 	For finite $X \subset \volume$, $\volumeError \in [0, 1)$ and $\distanceError \in \R_{>0}$ we call a function $\allocation: \volume \to X$ a \emph{$\volumeError$-$\distanceError$-allocation} for $X$ if and only if for all $x \in X$ it holds that
% 	\begin{enumerate}[(1)]
% 		\item $\invAllocation[x]$ is measurable and
% 		\[
% 			\left(1 - \volumeError\right) \frac{\vol{\volume}}{\size{X}} \le \vol{\invAllocation[x]} \le \left(1 + \volumeError\right) \frac{\vol{\volume}}{\size{X}}
% 		\]
% 		\item for all $y \in \invAllocation[x]$ it holds that
% 		\[
% 			\dist{x}{y} \le \distanceError . \qedhere
% 		\]
% 	\end{enumerate}
% \end{definition}

For $X \subset \volume$ and a corresponding allocation $\allocation$, we relate the hard-core representation $(\hcPPGraph{X}, \hcPPWeight{X})$ to the continuous model $(\volume, \interactionMatrix, \fugacity)$.
To this end, we consider two intermediate continuous models.

For $\numParticles \in \N$ and a tuple $\mathbold{x} = (x_i)_{i \in [\numParticles]} \in \volume^{\numParticles}$ we abuse notation and write $\allocation[\mathbold{x}]$ for the tuple $(\allocation[x_i])_{i \in [\numParticles]} \in X^{\numParticles}$.
The first intermediate partition function we consider is
\[
	\allocationPartitionFunction{\volume}{\interactionMatrix}{\fugacity}{\allocation} = \sum_{\numParticles \in \N} \sum_{\typeOf\colon [\numParticles] \to [\numTypes]} \int_{\volume^{\numParticles}} \generalWeight{\interactionMatrix}{\fugacity}[\allocation[\mathbold{x}], \typeOf] \, \text{d} \lebesgue{\dimension \times \numParticles}.
\] 
Intuitively, this is the partition function that results from mapping each point in $\volume$ to the point in $X$ that it is allocated to, before checking if a configuration is valid.

For the second intermediate model, let again $\numParticles \in \N$.
For all $\mathbold{x} = (x_i)_{i \in [\numParticles]} \in \volume^{\numParticles}$ and $\typeOf\colon [\numParticles] \to [\numTypes]$, let
\begin{align*}
    \hcPPValidContinuous{\typeOf}{\hcPPGraph{X}}{\allocation}[\mathbold{x}] &= \prod_{\substack{i, j \in [\numParticles]\colon\\ i < j}} \indicator{\left(\hcPPVertex{\allocation[x_i]}{\typeOf[i]}, \hcPPVertex{\allocation[x_j]}{\typeOf[j]}\right) \notin \hcPPEdges{X} \text{ and } \hcPPVertex{\allocation[x_i]}{\typeOf[i]} \neq \hcPPVertex{\allocation[x_j]}{\typeOf[j]}} \textrm{, and let}\\
	\hcPPWeightContinuous{\hcPPGraph{X}}{\hcPPWeight{X}}{\allocation}[\mathbold{x}, \typeOf]
	&= \frac{1}{\numParticles!} \left(\prod_{i \in [\numParticles]}  \frac{\hcPPWeight{X}[\hcPPVertex{x_i}{\typeOf[i]}]}{\vol{\invAllocation[\allocation[x_i]]}}\right) \hcPPValidContinuous{\typeOf}{\hcPPGraph{X}}{\allocation}[\mathbold{x}] .\\
%	&= \frac{1}{\numParticles!} \left(\prod_{i \in [\numParticles]}  \fugacity[\typeOf[i]]\frac{\vol{\volume}/\size{X}}{\vol{\invAllocation[\allocation[x_i]]}}\right) \hcPPValidContinuous{\typeOf}{\hcPPGraph{X}}{\allocation}
\end{align*}
Further, define the partition function and Gibbs density corresponding to $\hcPPWeightContinuous{\hcPPGraph{X}}{\hcPPWeight{X}}{\allocation}$, respectively, as
\begin{align*}
	\hcPPPartitionFunctionContinuous{\volume}{\hcPPGraph{X}}{\hcPPWeight{X}}{\allocation}
	&= \sum_{\numParticles \in \N} \sum_{\typeOf\colon [\numParticles] \to [\numTypes]} \int_{\volume^{\numParticles}} \hcPPWeightContinuous{\hcPPGraph{X}}{\hcPPWeight{X}}{\allocation}[\mathbold{x}, \typeOf] \, \text{d} \lebesgue{\dimension \times \numParticles} \textrm{ and}\\
    \hcPPGibbsContinuous{\volume}{\hcPPGraph{X}}{\hcPPWeight{X}}{\allocation}[\mathbold{x}, \typeOf] &= \frac{\hcPPWeightContinuous{\hcPPGraph{X}}{\hcPPWeight{X}}{\allocation}[\mathbold{x}, \typeOf]}{\hcPPPartitionFunctionContinuous{\volume}{\hcPPGraph{X}}{\hcPPWeight{X}}{\allocation}} .
\end{align*}
In \Cref{sec:sampling}, we give a detailed probabilistic interpretation of $\hcPPGibbsContinuous{\volume}{\hcPPGraph{X}}{\hcPPWeight{X}}{\allocation}$.
For this section and the approximation results in \Cref{sec:algo}, we are mostly interested in $\hcPPPartitionFunctionContinuous{\volume}{\hcPPGraph{X}}{\hcPPWeight{X}}{\allocation}$.

The following bound is central to almost all results in this paper.
\begin{lemma}
	\label{lemma:discretization_error}
	Let $(\volume, \interactionMatrix, \fugacity)$ be a hard-constraint point process with $\numTypes \in \N_{\ge 1}$ particle types, and assume $\volume \subset \R^{\dimension}$ is star-convex.
	Further, set $\interactionMatrix_{\min} = \inf_{i,j \in [\numTypes]} \{\interactionMatrix[i][j] \mid \interactionMatrix[i][j] > 0\}$, and set $\fugacity_{\max} = \max_{i \in [\numTypes]} \fugacity[i]$.
	Let $X \subseteq \volume$ with $4 \fugacity_{\max} \vol{\volume} \le \size{X} < \infty$, and let $(\hcPPGraph{X}, \hcPPWeight{X})$ be the hard-core representation of $(\volume, \interactionMatrix, \fugacity)$ based on $X$.
	Last, let $\volumeError \in \left[0, \frac{1}{2}\right]$ and $\distanceError \in \left[0, \frac{\interactionMatrix_{\min}}{2}\right]$, and assume that $\allocation$ is a $\volumeError$-$\distanceError$-allocation for $X$.
	Then
	\begin{align*}
		\sum_{\numParticles \in \N} \sum_{\typeOf\colon [\numParticles] \to [\numTypes]} \int_{\volume^{\numParticles}}
		& \absolute{\generalWeight{\interactionMatrix}{\fugacity}[\mathbold{x}, \typeOf] - \hcPPWeightContinuous{\hcPPGraph{X}}{\hcPPWeight{X}}{\allocation}[\mathbold{x}, \typeOf]} \, \text{d} \lebesgue{\dimension \times \numParticles} \\
		&\le \left(\eulerE^{\frac{8}{\size{X}} \sum_{i \in [\numTypes]} \fugacity[i]^2 \vol{\volume}^2} \eulerE^{\left( 2 \volumeError + \left(\frac{4 \distanceError }{\interactionMatrix_{\min}} \right)^{\dimension} \right) \sum_{i \in [\numTypes]} \fugacity[i] \vol{\volume}} - 1\right) \generalPartitionFunction{\volume}{\interactionMatrix}{\fugacity}.
		\qedhere
	\end{align*}
\end{lemma}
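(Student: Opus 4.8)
The plan is to compare $\generalWeight{\interactionMatrix}{\fugacity}$ with $\hcPPWeightContinuous{\hcPPGraph{X}}{\hcPPWeight{X}}{\allocation}$ through the single intermediate weight $w_{1}(\mathbold{x},\typeOf)\defeq\frac{1}{\numParticles!}\bigl(\prod_{i\in[\numParticles]}\fugacity[\typeOf[i]]\bigr)\hcPPValidContinuous{\typeOf}{\hcPPGraph{X}}{\allocation}[\mathbold{x}]$, i.e.\ the continuous weight with its validity indicator $\valid{\typeOf}{\interactionMatrix}$ replaced by the discrete $\hcPPGraph{X}$-validity on the allocated vertices but keeping the original fugacities. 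By the triangle inequality it then suffices to bound $\sum_{\numParticles\in\N}\sum_{\typeOf}\int_{\volume^{\numParticles}}\absolute{\generalWeight{\interactionMatrix}{\fugacity}-w_{1}}\,\text{d}\lebesgue{\dimension\times\numParticles}$ and $\sum_{\numParticles\in\N}\sum_{\typeOf}\int_{\volume^{\numParticles}}\absolute{w_{1}-\hcPPWeightContinuous{\hcPPGraph{X}}{\hcPPWeight{X}}{\allocation}}\,\text{d}\lebesgue{\dimension\times\numParticles}$ separately: the first carries the geometric ($\distanceError$) error together with the collision (``multiset'') error, the second carries the volume ($\volumeError$) error. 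Throughout, write $\scaleUp\defeq 2\distanceError/\interactionMatrix_{\min}\in[0,1]$ and $\Lambda\defeq\sum_{i\in[\numTypes]}\fugacity[i]\vol{\volume}$.

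\emph{Geometric sandwich.} The only geometric input is that $\dist{x}{\allocation[x]}\le\distanceError$ forces $\absolute{\dist{\allocation[x_i]}{\allocation[x_j]}-\dist{x_i}{x_j}}\le 2\distanceError$ for all $i,j$; since every positive entry of $\interactionMatrix$ is at least $\interactionMatrix_{\min}$, an additive $2\distanceError$-shift of a constraint ``$\dist{\cdot}{\cdot}\ge\interactionMatrix[i][j]$'' is dominated by rescaling $\interactionMatrix[i][j]$ by $1\pm\scaleUp$. A short check — with a separate, trivial case at the zero entries of $\interactionMatrix$, where the distinctness clause of $\hcPPValidContinuous{\typeOf}{\hcPPGraph{X}}{\allocation}$ takes over — shows that both $\valid{\typeOf}{\interactionMatrix}[\mathbold{x}]$ and $\hcPPValidContinuous{\typeOf}{\hcPPGraph{X}}{\allocation}[\mathbold{x}]$ lie between $\valid{\typeOf}{(1-\scaleUp)\interactionMatrix}[\mathbold{x}]$ from above and, from below, the product of $\valid{\typeOf}{(1+\scaleUp)\interactionMatrix}[\mathbold{x}]$ with the indicator that the vertices $\hcPPVertex{\allocation[x_i]}{\typeOf[i]}$, $i\in[\numParticles]$, are pairwise distinct. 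Hence $\absolute{\valid{\typeOf}{\interactionMatrix}[\mathbold{x}]-\hcPPValidContinuous{\typeOf}{\hcPPGraph{X}}{\allocation}[\mathbold{x}]}$ is bounded by the difference of these two, so the first sum is at most $\generalPartitionFunction{\volume}{(1-\scaleUp)\interactionMatrix}{\fugacity}-Z^{\circ}$, where $Z^{\circ}$ is the partition function with exclusion matrix $(1+\scaleUp)\interactionMatrix$ in which, additionally, no two particles may be allocated to the same vertex of $\hcPPGraph{X}$. I would split this as $\bigl(\generalPartitionFunction{\volume}{(1-\scaleUp)\interactionMatrix}{\fugacity}-\generalPartitionFunction{\volume}{(1+\scaleUp)\interactionMatrix}{\fugacity}\bigr)+\bigl(\generalPartitionFunction{\volume}{(1+\scaleUp)\interactionMatrix}{\fugacity}-Z^{\circ}\bigr)$. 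The first bracket is exactly the quantity that \Cref{lemma:scaled_difference} bounds — this is where star-convexity enters — namely by $\bigl(\eulerE^{(2\scaleUp)^{\dimension}\Lambda}-1\bigr)\generalPartitionFunction{\volume}{\interactionMatrix}{\fugacity}$. For the second bracket, its excess is the weight of all configurations with some pair $i\ne j$ satisfying $\allocation[x_i]=\allocation[x_j]$ and $\typeOf[i]=\typeOf[j]$; bounding the indicator of ``some such pair'' by the sum over pairs, integrating out the colliding pair first (using $\sum_{x\in X}\vol{\invAllocation[x]}^{2}\le(1+\volumeError)\vol{\volume}^{2}/\size{X}$, which follows from \Cref{def:allocation} and $\sum_{x\in X}\vol{\invAllocation[x]}=\vol{\volume}$, and discarding all constraints that touch the pair), and recognizing the remaining sum as $\generalPartitionFunction{\volume}{(1+\scaleUp)\interactionMatrix}{\fugacity}$, one obtains the bound $\frac{1+\volumeError}{2\size{X}}\bigl(\sum_{i\in[\numTypes]}\fugacity[i]^{2}\vol{\volume}^{2}\bigr)\generalPartitionFunction{\volume}{\interactionMatrix}{\fugacity}$. (Equivalently one may rewrite $Z^{\circ}$ and $\generalPartitionFunction{\volume}{(1+\scaleUp)\interactionMatrix}{\fugacity}$ as a hard-core and a multiset hard-core partition function on $\hcPPGraph{X}$ and invoke \Cref{lemma:multiset_rewritten}, which is presumably where the constant $8$ in the statement comes from; this uses $4\fugacity_{\max}\vol{\volume}\le\size{X}$ to keep the vertex weights below $1/2$.)

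\emph{Volume correction.} Since $\allocation$ is a $\volumeError$-$\distanceError$-allocation, $\hcPPWeight{X}[\hcPPVertex{x_i}{\typeOf[i]}]/\vol{\invAllocation[\allocation[x_i]]}=\fugacity[\typeOf[i]]\cdot\frac{\vol{\volume}/\size{X}}{\vol{\invAllocation[\allocation[x_i]]}}$ lies in $\bigl[\fugacity[\typeOf[i]]/(1+\volumeError),\,\fugacity[\typeOf[i]]/(1-\volumeError)\bigr]$, so $\absolute{w_{1}-\hcPPWeightContinuous{\hcPPGraph{X}}{\hcPPWeight{X}}{\allocation}}\le\bigl((1-\volumeError)^{-\numParticles}-1\bigr)w_{1}$ pointwise. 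Writing $Z_{1,\numParticles}$ for the degree-$\numParticles$ part of $Z_{1}\defeq\sum_{\numParticles}\sum_{\typeOf}\int w_{1}$, the second sum is at most $\sum_{\numParticles\in\N}\bigl((1-\volumeError)^{-\numParticles}-1\bigr)Z_{1,\numParticles}$. Integrating out the last particle (as in the proof of \Cref{lemma:bound_partition_function}) gives $(\numParticles+1)Z_{1,\numParticles+1}\le\Lambda Z_{1,\numParticles}$, hence $\binom{\numParticles+m}{\numParticles}Z_{1,\numParticles+m}\le\frac{\Lambda^{m}}{m!}Z_{1,\numParticles}$ for all $m$; expanding $(1-\volumeError)^{-\numParticles}$ binomially then yields $\sum_{\numParticles}(1-\volumeError)^{-\numParticles}Z_{1,\numParticles}\le\eulerE^{\frac{\volumeError}{1-\volumeError}\Lambda}Z_{1}\le\eulerE^{2\volumeError\Lambda}Z_{1}$, using $\volumeError\le 1/2$. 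Finally, $\hcPPValidContinuous{\typeOf}{\hcPPGraph{X}}{\allocation}[\mathbold{x}]\le\valid{\typeOf}{(1-\scaleUp)\interactionMatrix}[\mathbold{x}]$ gives $Z_{1}\le\generalPartitionFunction{\volume}{(1-\scaleUp)\interactionMatrix}{\fugacity}$, which by \Cref{lemma:scaled_difference} together with \Cref{obs:monotonicity_model} is at most $\eulerE^{(2\scaleUp)^{\dimension}\Lambda}\generalPartitionFunction{\volume}{\interactionMatrix}{\fugacity}$. Altogether the second sum is at most $\bigl(\eulerE^{2\volumeError\Lambda}-1\bigr)\eulerE^{(2\scaleUp)^{\dimension}\Lambda}\generalPartitionFunction{\volume}{\interactionMatrix}{\fugacity}$.

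Adding the two contributions and writing $(2\scaleUp)^{\dimension}=(4\distanceError/\interactionMatrix_{\min})^{\dimension}$, the total is at most $\bigl[(\eulerE^{(4\distanceError/\interactionMatrix_{\min})^{\dimension}\Lambda}-1)+\frac{1+\volumeError}{2\size{X}}\sum_{i\in[\numTypes]}\fugacity[i]^{2}\vol{\volume}^{2}+(\eulerE^{2\volumeError\Lambda}-1)\eulerE^{(4\distanceError/\interactionMatrix_{\min})^{\dimension}\Lambda}\bigr]\generalPartitionFunction{\volume}{\interactionMatrix}{\fugacity}$, and the bracket equals $\eulerE^{(2\volumeError+(4\distanceError/\interactionMatrix_{\min})^{\dimension})\Lambda}-1+\frac{1+\volumeError}{2\size{X}}\sum_{i\in[\numTypes]}\fugacity[i]^{2}\vol{\volume}^{2}$, which is at most $\eulerE^{\frac{8}{\size{X}}\sum_{i\in[\numTypes]}\fugacity[i]^{2}\vol{\volume}^{2}}\eulerE^{(2\volumeError+(4\distanceError/\interactionMatrix_{\min})^{\dimension})\Lambda}-1$ because $\eulerE^{t}\ge 1+t$ and $\frac{1+\volumeError}{2}\le 8$. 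This is the claimed bound. The hard part will be the geometric step: pinning down the correct ``loosest'' and ``strictest'' validity indicators so that both $\valid{\typeOf}{\interactionMatrix}$ and the discrete validity are squeezed between them (which forces the distinctness indicator into the lower bound and is where the multiset error is separated out), and bounding the collision excess \emph{per particle type} so that $\sum_{i\in[\numTypes]}\fugacity[i]^{2}\vol{\volume}^{2}$, rather than the larger $\bigl(\sum_{i\in[\numTypes]}\fugacity[i]\vol{\volume}\bigr)^{2}$, appears; the bookkeeping needed to see that the three corrections compose multiplicatively into the stated product of exponentials is routine but delicate.
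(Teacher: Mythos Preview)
Your argument is correct, but it is organized differently from the paper's. The paper also goes through a single intermediate weight, namely $\generalWeight{\interactionMatrix}{\fugacity}[\allocation[\mathbold{x}],\typeOf]$ (the continuous weight evaluated at the allocated points), so that the first half (\Cref{lemma:discretization_error:1}) carries \emph{only} the geometric $\distanceError$-error via \Cref{lemma:scaled_difference}, and the second half (\Cref{lemma:discretization_error:2}) carries \emph{both} the volume $\volumeError$-error and the collision error, handled together by sandwiching between $\hcPartitionFunction{\hcPPGraph{X}}{(1-\volumeError)\hcPPWeight{X}}$ and $\mhcPartitionFunction{\hcPPGraph{X}}{(1+\volumeError)\hcPPWeight{X}}$ and invoking \Cref{lemma:multiset_rewritten,lemma:log_subadditive_hardcore,lemma:bound_hardcore}. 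Your intermediate $w_1$ instead keeps the original fugacities but switches to the discrete validity $\hcPPValidContinuous{\typeOf}{\hcPPGraph{X}}{\allocation}$; this moves the collision error into the geometric half, where you dispatch it by a direct union bound over colliding pairs, and leaves the volume half pure, where your degree-recursion $(k+1)Z_{1,k+1}\le\Lambda Z_{1,k}$ replaces the paper's log-subadditivity argument.

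Both routes reach the stated bound. Your pair-counting for collisions is more elementary than the multiset machinery and in fact yields a sharper constant (you throw away a factor of roughly $10$ at the end to match the $8$), while the paper's separation has the advantage that \Cref{lemma:discretization_error:2} is stated without star-convexity and is reused verbatim in \Cref{sec:random_discretization} for non-star-convex regions. One small caveat: your parenthetical that $Z^{\circ}$ and $\generalPartitionFunction{\volume}{(1+\scaleUp)\interactionMatrix}{\fugacity}$ can be rewritten as hard-core and multiset hard-core partition functions on $\hcPPGraph{X}$ is not literally right, since their validity is tested at $\mathbold{x}$ rather than at $\allocation[\mathbold{x}]$; the paper's use of \Cref{lemma:multiset_rewritten} goes through the different intermediate $\generalWeight{\interactionMatrix}{\fugacity}[\allocation[\mathbold{x}],\typeOf]$ precisely so that this rewriting becomes exact.
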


The proof of this statement is divided into two parts, which we state separately.
The first part essentially bounds the error that is introduced into the partition function of the point process if we map each point $x \in \volume$ to $\allocation[x]$, the point it is allocated to, before checking if a configuration is valid or not. 
This essentially corresponds to relating the original point process to the first intermediate model.
\begin{lemma}
	\label{lemma:discretization_error:1}
	Let $(\volume, \interactionMatrix, \fugacity)$ be a hard-constraint point process with $\numTypes \in \N_{\ge 1}$ particle types, and assume $\volume \subset \R^{\dimension}$ is star-convex.
	Further, set $\interactionMatrix_{\min} = \inf_{i,j \in [\numTypes]} \{\interactionMatrix[i][j] \mid \interactionMatrix[i][j] > 0\}$, and set $\fugacity_{\max} = \max_{i \in [\numTypes]} \fugacity[i]$.
	Let $X \subseteq \volume$ be finite, and let $(\hcPPGraph{X}, \hcPPWeight{X})$ be the hard-core representation of $(\volume, \interactionMatrix, \fugacity)$ based on $X$.
	Last, let $\volumeError \in \left[0, 1\right]$ and $\distanceError \in \left[0, \frac{\interactionMatrix_{\min}}{2}\right]$, and assume that $\allocation$ is a $\volumeError$-$\distanceError$-allocation for $X$.
	For all $\mathbold{x} = (x_i)_{i \in [\numParticles]} \in \volume^{\numParticles}$ write $\allocation[\mathbold{x}]$ for $\left(\allocation[x_i]\right)_{i \in [\numParticles]} \in X^{\numParticles}$.
	Then
	\begin{align*}
		\sum_{\numParticles \in \N} \sum_{\typeOf\colon [\numParticles] \to [\numTypes]} \int_{\volume^{\numParticles}} \absolute{\generalWeight{\interactionMatrix}{\fugacity}[\mathbold{x}, \typeOf] - \generalWeight{\interactionMatrix}{\fugacity}[\allocation[\mathbold{x}], \typeOf]} \, \text{d} \lebesgue{\dimension \times \numParticles}
		&\le \left(\eulerE^{\left(\frac{4 \distanceError}{\interactionMatrix_{\min}}\right)^{\dimension} \sum_{i \in [\numTypes]} \fugacity[i] \vol{\volume}} - 1\right)\generalPartitionFunction{\volume}{\interactionMatrix}{\fugacity} .
		\qedhere
	\end{align*}
\end{lemma}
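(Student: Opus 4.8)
The goal is to bound the total "discretization-by-allocation" error, i.e.\ the $L^1$-distance between the weight functions $\generalWeight{\interactionMatrix}{\fugacity}[\mathbold{x},\typeOf]$ and $\generalWeight{\interactionMatrix}{\fugacity}[\allocation[\mathbold{x}],\typeOf]$, summed over particle counts and type assignments. First I would observe that, since the allocation $\allocation$ only \emph{decreases} pairwise distances in a controlled way, the map $\mathbold{x}\mapsto\allocation[\mathbold{x}]$ can never turn an invalid configuration into a valid one when $\distanceError$ is small: if $\dist{x_i}{x_j}<\interactionMatrix[\typeOf[i]][\typeOf[j]]$ then $\dist{\allocation[x_i]}{\allocation[x_j]}\le\dist{x_i}{x_j}+2\distanceError$, but this is the wrong direction. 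The useful direction is the contrapositive applied to the \emph{scaled} model: if $\dist{\allocation[x_i]}{\allocation[x_j]}<\interactionMatrix[\typeOf[i]][\typeOf[j]]$ then $\dist{x_i}{x_j}<\interactionMatrix[\typeOf[i]][\typeOf[j]]+2\distanceError\le(1+\tfrac{2\distanceError}{\interactionMatrix_{\min}})\interactionMatrix[\typeOf[i]][\typeOf[j]]$, so that $\valid{\typeOf}{\interactionMatrix}[\mathbold{x}]=1$ forces validity of $\allocation[\mathbold{x}]$ under the shrunken matrix $(1-\alpha)\interactionMatrix$ for a suitable small $\alpha$ — and conversely $\valid{\typeOf}{(1+\alpha)\interactionMatrix}[\mathbold{x}]=1$ forces $\valid{\typeOf}{\interactionMatrix}[\allocation[\mathbold{x}]]=1$, with $\alpha=\tfrac{2\distanceError}{\interactionMatrix_{\min}}$.

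The second ingredient is to handle the integral $\int_{\volume^{\numParticles}}\valid{\typeOf}{\interactionMatrix}[\allocation[\mathbold{x}]]\,\d\lebesgue{\dimension\times\numParticles}$ by a change of variables. Since $\allocation$ sends $\invAllocation[x]$ to the single point $x$, this integral equals $\sum_{\mathbold{y}\in X^{\numParticles}}\valid{\typeOf}{\interactionMatrix}[\mathbold{y}]\prod_{i}\vol{\invAllocation[y_i]}$, and the $\volumeError$-condition of \Cref{def:allocation} bounds each $\vol{\invAllocation[y_i]}$ between $(1-\volumeError)$ and $(1+\volumeError)$ times $\tfrac{\vol{\volume}}{\size{X}}$. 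Combining this with the containment argument above, one sandwiches $\int_{\volume^{\numParticles}}\valid{\typeOf}{\interactionMatrix}[\allocation[\mathbold{x}]]\,\d\lebesgue{\dimension\times\numParticles}$ between constant multiples of $\int_{\volume^{\numParticles}}\valid{\typeOf}{(1\pm\alpha)\interactionMatrix}[\mathbold{x}]\,\d\lebesgue{\dimension\times\numParticles}$. Summing over $\numParticles$ and $\typeOf$ with the $\tfrac{1}{\numParticles!}\prod_i\fugacity[\typeOf[i]]$ weights, this yields bounds on $\allocationPartitionFunction{\volume}{\interactionMatrix}{\fugacity}{\allocation}$ in terms of $\generalPartitionFunction{\volume}{(1\pm\alpha)\interactionMatrix}{\fugacity}$.

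The $L^1$-distance then telescopes: the integrand $\absolute{\generalWeight{\interactionMatrix}{\fugacity}[\mathbold{x},\typeOf]-\generalWeight{\interactionMatrix}{\fugacity}[\allocation[\mathbold{x}],\typeOf]}$ is bounded, after summing, by $\big(\allocationPartitionFunction{\volume}{\interactionMatrix}{\fugacity}{\allocation}-\generalPartitionFunction{\volume}{(1+\alpha)\interactionMatrix}{\fugacity}\big)+\big(\generalPartitionFunction{\volume}{(1-\alpha)\interactionMatrix}{\fugacity}-\allocationPartitionFunction{\volume}{\interactionMatrix}{\fugacity}{\allocation}\big)$ or a variant thereof — the point being that both the continuous weight and its allocated version lie, configuration by configuration, between the $(1-\alpha)$- and $(1+\alpha)$-scaled weights (here the $\volumeError$-bound must also be absorbed, which I would do by noting $1+\volumeError\le\eulerE^{2\volumeError}$ when feeding it through \Cref{lemma:bound_partition_function} on the leftover region). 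What remains is exactly $\generalPartitionFunction{\volume}{(1-\alpha)\interactionMatrix}{\fugacity}-\generalPartitionFunction{\volume}{(1+\alpha)\interactionMatrix}{\fugacity}$, which \Cref{lemma:scaled_difference} bounds by $\big(\eulerE^{(2\alpha)^{\dimension}\sum_i\fugacity[i]\vol{\volume}}-1\big)\generalPartitionFunction{\volume}{\interactionMatrix}{\fugacity}$; with $\alpha=\tfrac{2\distanceError}{\interactionMatrix_{\min}}$ we get $(2\alpha)^{\dimension}=\big(\tfrac{4\distanceError}{\interactionMatrix_{\min}}\big)^{\dimension}$, which is exactly the claimed exponent. (The $\volumeError$-error does not appear in this particular lemma because the statement of \Cref{lemma:discretization_error:1} only asks for the $\allocation$-relabeling error, not the reweighting; the $\volumeError$-term enters later in \Cref{lemma:discretization_error}.)

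\textbf{Main obstacle.} The delicate point is the bookkeeping in the sandwich: making precise the claim that replacing $\mathbold{x}$ by $\allocation[\mathbold{x}]$ inside $\valid{}{}$ is equivalent (for small $\distanceError$) to evaluating $\valid{}{}$ at $\mathbold{x}$ with a rescaled interaction matrix. The subtlety is that the relabeling $\allocation$ is applied \emph{inside} the validity indicator but the integration variable and its Lebesgue measure are still $\mathbold{x}$, so one must simultaneously (i) push distances through $\allocation$ in the right direction using the $\distanceError$-bound, and (ii) account for the volume distortion of the fibres $\invAllocation[y]$ using the $\volumeError$-bound, without double-counting. Getting the direction of every inequality right — and in particular checking that $\distanceError\le\tfrac{\interactionMatrix_{\min}}{2}$ guarantees $\alpha\le1$ so that \Cref{lemma:scaled_difference} applies — is where the real care is needed; the rest is routine manipulation of the partition-function series.
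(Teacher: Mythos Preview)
Your core idea---sandwich both $\valid{\typeOf}{\interactionMatrix}[\mathbold{x}]$ and $\valid{\typeOf}{\interactionMatrix}[\allocation[\mathbold{x}]]$ pointwise between $\valid{\typeOf}{(1+\alpha)\interactionMatrix}[\mathbold{x}]$ and $\valid{\typeOf}{(1-\alpha)\interactionMatrix}[\mathbold{x}]$ with $\alpha=\tfrac{2\distanceError}{\interactionMatrix_{\min}}$, then sum and apply \Cref{lemma:scaled_difference}---is exactly the paper's approach, and the triangle-inequality verification you sketch is correct. The paper phrases it as bounding the indicator of the set $N_{\typeOf}=\{\mathbold{x}:\valid{\typeOf}{\interactionMatrix}[\mathbold{x}]\neq\valid{\typeOf}{\interactionMatrix}[\allocation[\mathbold{x}]]\}$ by $\valid{\typeOf}{\interactionMatrix_{-}}[\mathbold{x}]-\valid{\typeOf}{\interactionMatrix_{+}}[\mathbold{x}]$, which is the same thing.

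Where your plan goes astray is the ``second ingredient.'' You do \emph{not} need a change of variables rewriting $\int_{\volume^{\numParticles}}\valid{\typeOf}{\interactionMatrix}[\allocation[\mathbold{x}]]\,\d\lebesgue{\dimension\times\numParticles}$ as a sum over $X^{\numParticles}$ weighted by fibre volumes, and you do not need the $\volumeError$-bound at all for this lemma. The whole argument stays on $\volume^{\numParticles}$ with Lebesgue measure: once you have the pointwise sandwich, you integrate $\mathbold{x}$ over $\volume^{\numParticles}$ directly, and the allocated partition function $\allocationPartitionFunction{\volume}{\interactionMatrix}{\fugacity}{\allocation}$ never appears as an intermediate term. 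Consequently the ``main obstacle'' you identify---simultaneously pushing distances through $\allocation$ while tracking volume distortion of the fibres---is a non-issue here; only item (i) of your list is relevant, and it is routine. The $\volumeError$-bookkeeping belongs entirely to \Cref{lemma:discretization_error:2}, as you correctly note at the end; remove it from your plan for this lemma and the proof becomes short and clean.
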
 

\begin{proof}
	The main idea of the proof is to show that
	\[
		\sum_{\numParticles \in \N} \sum_{\typeOf\colon [\numParticles] \to [\numTypes]} \int_{\volume^{\numParticles}} \absolute{\generalWeight{\interactionMatrix}{\fugacity}[\mathbold{x}, \typeOf] - \generalWeight{\interactionMatrix}{\fugacity}[\allocation[\mathbold{x}], \typeOf]} \, \text{d} \lebesgue{\dimension \times \numParticles}
		\le
		\generalPartitionFunction{\volume}{\interactionMatrix_{-}}{\fugacity} - \generalPartitionFunction{\volume}{\interactionMatrix_{+}}{\fugacity} ,
	\]
	where $\interactionMatrix_{-}$ and $\interactionMatrix_{+}$ are slightly scaled versions of $\interactionMatrix$, and then applying \Cref{lemma:scaled_difference}.
	
	First, note that for $\numParticles \le 1$, it holds for all $\mathbold{x} \in \volume^{\numParticles}$ and $\typeOf\colon [\numParticles] \to [\numTypes]$ that $\generalWeight{\interactionMatrix}{\fugacity}[\mathbold{x}, \typeOf] = \generalWeight{\interactionMatrix}{\fugacity}[\allocation[\mathbold{x}], \typeOf]$.
	Fix $\numParticles \ge 2$ as well as $\typeOf\colon [\numParticles] \to [\numTypes]$.
	Let $N_{\typeOf} \subseteq \volume^{\numParticles}$ be the set of all points $\mathbold{x} = (x_i)_{i \in [\numParticles]} \in \volume^{\numParticles}$ such that $\valid{\typeOf}{\interactionMatrix}[\allocation[\mathbold{x}]] \neq \valid{\typeOf}{\interactionMatrix}[\mathbold{x}]$.
	Observe that
	\[
	\absolute{\generalWeight{\interactionMatrix}{\fugacity}[\mathbold{x}, \typeOf] - \generalWeight{\interactionMatrix}{\fugacity}[\allocation[\mathbold{x}], \typeOf]}
	\le \frac{1}{\numParticles!} \left(\prod_{i \in [\numParticles]} \fugacity[\typeOf[i]]\right) \indicator{\mathbold{x} \in N_{\typeOf}} .
	\]
	
	We proceed by characterizing a superset of $N_{\typeOf}$ that is easier to analyze.
	Observe that, as $\allocation$ is a $\volumeError$-$\distanceError$-allocation, it holds for all $x \in \volume$ that $\dist{x}{\allocation[x]} \le \distanceError$.
	Let $\interactionMatrix_{-} = \left(1 - \frac{2 \distanceError}{\interactionMatrix_{\min}}\right) \interactionMatrix$ and $\interactionMatrix_{+} = \left(1 + \frac{2 \distanceError}{\interactionMatrix_{\min}}\right) \interactionMatrix$.
	Thus, for all $i, j \in [\numTypes]$ with $\interactionMatrix[i][j] > 0$, it holds that $\interactionMatrix_{-}(i, j) \le \interactionMatrix[i][j] - 2 \distanceError$ and $\interactionMatrix_{+}(i, j) \ge \interactionMatrix[i][j] + 2 \distanceError$.
	
	Let $\mathbold{x} = (x_i)_{i \in [\numParticles]} \in \volume^{\numParticles}$ such that $\valid{\typeOf}{\interactionMatrix_{-}}[\mathbold{x}] = 0$.
	Note that this implies that there are $i, j \in [\numParticles]$ with $i \neq j$ such that $\interactionMatrix[\typeOf[i]][\typeOf[j]] > 0$ and
	\[
	\dist{x_i}{x_j}
	< \interactionMatrix_{-}(\typeOf[i], \typeOf[j])
	\le \interactionMatrix[\typeOf[i]][\typeOf[j]] - 2 \distanceError.
	\]
	We conclude that $\dist{x_i}{x_j} < \interactionMatrix[\typeOf[i]][\typeOf[j]]$ and by triangle inequality $\dist{\allocation[x_i]}{\allocation[x_j]} < \interactionMatrix[\typeOf[i]][\typeOf[j]]$.
	Thus, we have $\valid{\typeOf}{\interactionMatrix}[\mathbold{x}] = \valid{\typeOf}{\interactionMatrix}[\allocation[\mathbold{x}]] = 0$ and consequently $\mathbold{x} \notin N_{\typeOf}$.
	
	Next, let $\mathbold{x} = (x_i)_{i \in [\numParticles]} \in \volume^{\numParticles}$ with $\valid{\typeOf}{\interactionMatrix_{+}}[\mathbold{x}] = 1$.
	Note that this implies for all $i, j \in [\numParticles]$ with $i \neq j$ and $\interactionMatrix[\typeOf[i]][\typeOf[j]] > 0$ that
	\[
	\dist{x_i}{x_j}
	\ge \interactionMatrix_{+}(\typeOf[i], \typeOf[j])
	\ge \interactionMatrix[\typeOf[i]][\typeOf[j]] + 2 \distanceError.
	\]
	We conclude that $\dist{x_i}{x_j} \ge \interactionMatrix[\typeOf[i]][\typeOf[j]]$ and by triangle inequality $\dist{\allocation[x_i]}{\allocation[x_j]} \ge \interactionMatrix[\typeOf[i]][\typeOf[j]]$.
	Thus, we have $\valid{\typeOf}{\interactionMatrix}[\mathbold{x}] = \valid{\typeOf}{\interactionMatrix}[\allocation[\mathbold{x}]] = 1$ and consequently $\mathbold{x} \notin N_{\typeOf}$.
	
	Finally, observe that, because for all $i, j \in [\numTypes]$, it holds that $\interactionMatrix_{-}(i,j) \le \interactionMatrix_{+}(i,j)$, we have that $\valid{\typeOf}{\interactionMatrix_{-}}[\mathbold{x}] = 0$ implies $\valid{\typeOf}{\interactionMatrix_{+}}[\mathbold{x}] = 0$, and $\valid{\typeOf}{\interactionMatrix_{+}}[\mathbold{x}] = 1$ implies $\valid{\typeOf}{\interactionMatrix_{-}}[\mathbold{x}] = 1$.
	Thus, we have
	\[
	\indicator{\mathbold{x} \in N_{\typeOf}}
	\le \valid{\typeOf}{\interactionMatrix_{-}}[\mathbold{x}] - \valid{\typeOf}{\interactionMatrix_{+}}[\mathbold{x}] .
	\]
	Substituting this into \cref{lemma:discretization_error:eq_part1} yields
	\begin{align*}
		\sum_{\numParticles \in \N} \sum_{\typeOf\colon [\numParticles] \to [\numTypes]}
		&\int_{\volume^{\numParticles}} \absolute{\generalWeight{\interactionMatrix}{\fugacity}[\mathbold{x}, \typeOf] - \generalWeight{\interactionMatrix}{\fugacity}[\allocation[\mathbold{x}], \typeOf]} \, \text{d} \lebesgue{\dimension \times \numParticles} \\
		&\le \sum_{\numParticles \in \N_{\ge 2}} \frac{1}{\numParticles !} \sum_{\typeOf\colon [\numParticles] \to [\numTypes]} \left(\prod_{i \in [\numParticles]} \fugacity[\typeOf[i]] \right)
		\int_{\volume^{\numParticles}} \indicator{\mathbold{x} \in N_{\typeOf}}\, \text{d} \lebesgue{\dimension \times \numParticles} \\
		&\le \sum_{\numParticles \in \N_{\ge 2}} \frac{1}{\numParticles !} \sum_{\typeOf\colon [\numParticles] \to [\numTypes]} \left(\prod_{i \in [\numParticles]} \fugacity[\typeOf[i]]\right) \int_{\volume^{\numParticles}} \valid{\typeOf}{\interactionMatrix_{-}}[\mathbold{x}] \, \text{d} \lebesgue{\dimension \times \numParticles} \\
		&\hspace{2em} - \sum_{\numParticles \in \N_{\ge 2}} \frac{1}{\numParticles !} \sum_{\typeOf\colon [\numParticles] \to [\numTypes]} \left(\prod_{i \in [\numParticles]} \fugacity[\typeOf[i]]\right) \int_{\volume^{\numParticles}} \valid{\typeOf}{\interactionMatrix_{+}}[\mathbold{x}] \, \text{d} \lebesgue{\dimension \times \numParticles} \\
		&\le \generalPartitionFunction{\volume}{\interactionMatrix_{-}}{\fugacity} - \generalPartitionFunction{\volume}{\interactionMatrix_{+}}{\fugacity} .
	\end{align*}
	
	Recalling the definitions of $\interactionMatrix_{-}$ and $\interactionMatrix_{+}$ as well as that~$\volume$ is star-convex, we apply \Cref{lemma:scaled_difference} with $\scaleUp = \frac{2 \distanceError}{\interactionMatrix_{\min}} \in [0, 1]$ and obtain
	\begin{align}
		\label[ineq]{lemma:discretization_error:eq_bound1}
		\sum_{\numParticles \in \N} \sum_{\typeOf\colon [\numParticles] \to [\numTypes]}
		&\int_{\volume^{\numParticles}} \absolute{\generalWeight{\interactionMatrix}{\fugacity}[\mathbold{x}, \typeOf] - \generalWeight{\interactionMatrix}{\fugacity}[\allocation[\mathbold{x}], \typeOf]} \, \text{d} \lebesgue{\dimension \times \numParticles}
		\le \left(\eulerE^{\left(\frac{4 \distanceError}{\interactionMatrix_{\min}}\right)^{\dimension} \sum_{i \in [\numTypes]} \fugacity[i] \vol{\volume}} - 1\right)\generalPartitionFunction{\volume}{\interactionMatrix}{\fugacity} ,
	\end{align}
	which concludes the proof.
\end{proof}

The second lemma relates the partition function that results from mapping each point $x \in \volume$ to $\allocation[x]$ with $\hcPPPartitionFunctionContinuous{\volume}{\hcPPGraph{X}}{\hcPPWeight{X}}{\allocation}$.
This is, relating our first continuous intermediate model with the second one.

\begin{lemma}
	\label{lemma:discretization_error:2}
	Let $(\volume, \interactionMatrix, \fugacity)$ be a hard-constraint point process with $\numTypes \in \N_{\ge 1}$ particle types.
	Set $\fugacity_{\max} = \max_{i \in [\numTypes]} \fugacity[i]$.
	Let $X \subseteq \volume$ with $4 \fugacity_{\max} \vol{\volume} \le \size{X} < \infty$, and let $(\hcPPGraph{X}, \hcPPWeight{X})$ be the hard-core representation of $(\volume, \interactionMatrix, \fugacity)$ based on $X$.
	Last, let $\volumeError \in \left[0, \frac{1}{2}\right]$ and $\distanceError \in \left[0, 1\right]$, and assume that $\allocation$ is a $\volumeError$-$\distanceError$-allocation for $X$.
	Then
	\begin{align*}
		\sum_{\numParticles \in \N}
		&\sum_{\typeOf\colon [\numParticles] \to [\numTypes]} \int_{\volume^{\numParticles}}
		\absolute{\generalWeight{\interactionMatrix}{\fugacity}[\allocation[\mathbold{x}], \typeOf] - \hcPPWeightContinuous{\hcPPGraph{X}}{\hcPPWeight{X}}{\allocation}[\mathbold{x}, \typeOf]} \, \text{d} \lebesgue{\dimension \times \numParticles} \\
		&\le \left(\eulerE^{\frac{8}{\size{X}} \sum_{i \in [\numTypes]} \fugacity[i]^2 \vol{\volume}^2} \eulerE^{2 \volumeError \sum_{i \in [\numTypes]} \fugacity[i] \vol{\volume}} - 1\right)		\allocationPartitionFunction{\volume}{\interactionMatrix}{\fugacity}{\allocation}.
		\qedhere
	\end{align*}
\end{lemma}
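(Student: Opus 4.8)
\noindent\emph{Proof plan.}
The plan is to turn the integrals over $\volume^{\numParticles}$ into sums over $X^{\numParticles}$, and then to split the discrepancy into a multiset part and a volume part, comparing each against $\hcPartitionFunction{\hcPPGraph{X}}{\lambda'}$, where $\lambda'$ is the weight on $\hcPPVertices{X}$ given by $\lambda'(\hcPPVertex{y}{i})=\fugacity[i]\vol{\invAllocation[y]}$. Since $\allocation$ is a total function with each $\invAllocation[x]$ measurable, the sets $\invAllocation[x]$, $x\in X$, form a measurable partition of $\volume$, so $\int_{\volume^{\numParticles}}h\bigl(\allocation[x_1],\dots,\allocation[x_{\numParticles}]\bigr)\,\text{d}\lebesgue{\dimension\times\numParticles}=\sum_{\mathbold{y}\in X^{\numParticles}}h(\mathbold{y})\prod_{i\in[\numParticles]}\vol{\invAllocation[y_i]}$ for every measurable $h\colon X^{\numParticles}\to\R_{\ge0}$. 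Both $\generalWeight{\interactionMatrix}{\fugacity}[\allocation[\mathbold{x}],\typeOf]$ and $\hcPPWeightContinuous{\hcPPGraph{X}}{\hcPPWeight{X}}{\allocation}[\mathbold{x},\typeOf]$ depend on $\mathbold{x}$ only through $\allocation[\mathbold{x}]$ (recall $\hcPPWeight{X}[\hcPPVertex{x}{i}]=\frac{\vol{\volume}}{\size{X}}\fugacity[i]$ does not depend on the point $x$), hence so does their difference, and the identity above applies to it. We insert the intermediate weight $w'(\mathbold{x},\typeOf):=\frac{1}{\numParticles!}\bigl(\prod_{i\in[\numParticles]}\fugacity[\typeOf[i]]\bigr)\hcPPValidContinuous{\typeOf}{\hcPPGraph{X}}{\allocation}[\mathbold{x}]$ --- that is, $\generalWeight{\interactionMatrix}{\fugacity}[\allocation[\mathbold{x}],\typeOf]$ with its validity factor $\valid{\typeOf}{\interactionMatrix}[\allocation[\mathbold{x}]]$ replaced by the graph-validity factor $\hcPPValidContinuous{\typeOf}{\hcPPGraph{X}}{\allocation}[\mathbold{x}]$ --- and apply the triangle inequality, calling the two summands the multiset error ($\generalWeight{\interactionMatrix}{\fugacity}[\allocation[\mathbold{x}],\typeOf]$ versus $w'$) and the volume error ($w'$ versus $\hcPPWeightContinuous{\hcPPGraph{X}}{\hcPPWeight{X}}{\allocation}$).

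\medskip\noindent\emph{Multiset error.}
Whenever $\hcPPVertex{\allocation[x_1]}{\typeOf[1]},\dots,\hcPPVertex{\allocation[x_{\numParticles}]}{\typeOf[\numParticles]}$ form an independent set of $\hcPPGraph{X}$, the pairwise distances of the $\allocation[x_i]$ certify $\valid{\typeOf}{\interactionMatrix}[\allocation[\mathbold{x}]]=1$, so $w'\le\generalWeight{\interactionMatrix}{\fugacity}[\allocation[\cdot],\typeOf]$ pointwise; after the reduction to $X^{\numParticles}$ and collecting the $\numParticles!$ orderings of each independent set, the multiset error equals $\allocationPartitionFunction{\volume}{\interactionMatrix}{\fugacity}{\allocation}-\hcPartitionFunction{\hcPPGraph{X}}{\lambda'}$; in particular $\hcPartitionFunction{\hcPPGraph{X}}{\lambda'}\le\allocationPartitionFunction{\volume}{\interactionMatrix}{\fugacity}{\allocation}$. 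For the reverse comparison we bound $\valid{\typeOf}{\interactionMatrix}[\allocation[\mathbold{x}]]$ from above by the indicator that the \emph{support} of $(\hcPPVertex{\allocation[x_i]}{\typeOf[i]})_{i}$ is an independent set (dropping the constraint that a vertex of positive self-interaction cannot repeat) and group ordered tuples by the multiset of vertices they span, so that the per-vertex series $\sum_{m\ge1}\lambda'(v)^{m}/m!=\eulerE^{\lambda'(v)}-1$ appears, giving $\allocationPartitionFunction{\volume}{\interactionMatrix}{\fugacity}{\allocation}\le\sum_{\independentSet\in\independentSets{\hcPPGraph{X}}}\prod_{v\in\independentSet}(\eulerE^{\lambda'(v)}-1)=\hcPartitionFunction{\hcPPGraph{X}}{\eulerE^{\lambda'(\cdot)}-1}$. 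From $\size{X}\ge4\fugacity_{\max}\vol{\volume}$ and $\volumeError\le\tfrac12$ we get $\lambda'(v)\le(1+\volumeError)\fugacity_{\max}\tfrac{\vol{\volume}}{\size{X}}\le\tfrac38$, so the elementary bound $\eulerE^{t}-1-t\le t^{2}$ on $[0,\tfrac12]$ lets us write $\eulerE^{\lambda'(\cdot)}-1=\lambda'+\hat h$ with $0\le\hat h(v)\le\lambda'(v)^{2}$; \Cref{lemma:log_subadditive_hardcore,lemma:bound_hardcore} then give $\hcPartitionFunction{\hcPPGraph{X}}{\eulerE^{\lambda'(\cdot)}-1}\le\eulerE^{\sum_{v}\lambda'(v)^{2}}\hcPartitionFunction{\hcPPGraph{X}}{\lambda'}$, and $\sum_{v\in\hcPPVertices{X}}\lambda'(v)^{2}\le(1+\volumeError)^{2}\tfrac{\vol{\volume}^{2}}{\size{X}}\sum_{i\in[\numTypes]}\fugacity[i]^{2}\le\tfrac{8}{\size{X}}\sum_{i\in[\numTypes]}\fugacity[i]^{2}\vol{\volume}^{2}$. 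Hence the multiset error is at most $\bigl(\eulerE^{\frac{8}{\size{X}}\sum_{i}\fugacity[i]^{2}\vol{\volume}^{2}}-1\bigr)\hcPartitionFunction{\hcPPGraph{X}}{\lambda'}\le\bigl(\eulerE^{\frac{8}{\size{X}}\sum_{i}\fugacity[i]^{2}\vol{\volume}^{2}}-1\bigr)\allocationPartitionFunction{\volume}{\interactionMatrix}{\fugacity}{\allocation}$.

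\medskip\noindent\emph{Volume error.}
Here $w'$ and $\hcPPWeightContinuous{\hcPPGraph{X}}{\hcPPWeight{X}}{\allocation}$ share the factor $\hcPPValidContinuous{\typeOf}{\hcPPGraph{X}}{\allocation}[\mathbold{x}]$ and differ only in the product of per-particle factors, the $i$-th ratio being $\tfrac{\vol{\volume}/\size{X}}{\vol{\invAllocation[\allocation[x_i]]}}\in[\tfrac{1}{1+\volumeError},\tfrac{1}{1-\volumeError}]$. After the reduction to $X^{\numParticles}$ and collecting orderings, the volume error equals $\sum_{\independentSet\in\independentSets{\hcPPGraph{X}}}\absolute{\prod_{v\in\independentSet}\lambda'(v)-\prod_{v\in\independentSet}\hcPPWeight{X}[v]}$, and for each $\independentSet$ with $\size{\independentSet}=\numParticles$ the allocation bound together with the convexity of $t\mapsto t^{-\numParticles}$ (which gives $(1-\volumeError)^{-\numParticles}-1\ge1-(1+\volumeError)^{-\numParticles}$) yields $\absolute{\prod_{v\in\independentSet}\lambda'(v)-\prod_{v\in\independentSet}\hcPPWeight{X}[v]}\le\bigl(\prod_{v\in\independentSet}\lambda'(v)\bigr)\bigl((1-\volumeError)^{-\numParticles}-1\bigr)$. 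Summing over $\independentSet$ bounds the volume error by $\hcPartitionFunction{\hcPPGraph{X}}{\tfrac{1}{1-\volumeError}\lambda'}-\hcPartitionFunction{\hcPPGraph{X}}{\lambda'}$; writing $\tfrac{1}{1-\volumeError}\lambda'=\lambda'+\tfrac{\volumeError}{1-\volumeError}\lambda'$, applying \Cref{lemma:log_subadditive_hardcore,lemma:bound_hardcore} with $\sum_{v}\lambda'(v)=\sum_{i\in[\numTypes]}\fugacity[i]\vol{\volume}$, and using $\tfrac{\volumeError}{1-\volumeError}\le2\volumeError$ and $\hcPartitionFunction{\hcPPGraph{X}}{\lambda'}\le\allocationPartitionFunction{\volume}{\interactionMatrix}{\fugacity}{\allocation}$, this is at most $\bigl(\eulerE^{2\volumeError\sum_{i}\fugacity[i]\vol{\volume}}-1\bigr)\allocationPartitionFunction{\volume}{\interactionMatrix}{\fugacity}{\allocation}$.

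\medskip\noindent Adding the two bounds and using $(\eulerE^{a}-1)+(\eulerE^{b}-1)\le\eulerE^{a}\eulerE^{b}-1$ for $a,b\ge0$ (equivalently $(\eulerE^{a}-1)(\eulerE^{b}-1)\ge0$) with $a=\frac{8}{\size{X}}\sum_{i}\fugacity[i]^{2}\vol{\volume}^{2}$ and $b=2\volumeError\sum_{i}\fugacity[i]\vol{\volume}$ yields exactly the asserted bound. The step I expect to be the main obstacle is the multiset error: one must carry out the passage from the integral to the sum over $X^{\numParticles}$ carefully and realise that $\allocationPartitionFunction{\volume}{\interactionMatrix}{\fugacity}{\allocation}$ is sandwiched between $\hcPartitionFunction{\hcPPGraph{X}}{\lambda'}$ and $\hcPartitionFunction{\hcPPGraph{X}}{\eulerE^{\lambda'(\cdot)}-1}$; the upper sandwich requires grouping ordered tuples by the multiset of vertices they span so that the per-vertex exponential series surfaces, and one has to check that the only discrepancy with a genuine multiset hard-core model --- a vertex of positive self-interaction not being allowed to repeat --- makes that upper bound only weaker and is therefore harmless. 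The remaining ingredients are routine: the estimates $\eulerE^{t}-1-t\le t^{2}$ on $[0,\tfrac12]$ and the convexity of $t\mapsto t^{-\numParticles}$, together with \Cref{lemma:log_subadditive_hardcore,lemma:bound_hardcore}.
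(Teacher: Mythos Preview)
Your proof is correct and takes a genuinely different route from the paper. The paper does not split via a triangle inequality; instead it sandwiches \emph{both} $\generalWeight{\interactionMatrix}{\fugacity}[\allocation[\mathbold{x}],\typeOf]$ and $\hcPPWeightContinuous{\hcPPGraph{X}}{\hcPPWeight{X}}{\allocation}[\mathbold{x},\typeOf]$ simultaneously between a common lower bound (built from $\hcPPValidContinuous{\typeOf}{\hcPPGraph{X}}{\allocation}$ and the scaled uniform weight $(1-\volumeError)\hcPPWeight{X}$) and a common upper bound (built from the multiset validity $\mhcPPValidContinuous{\typeOf}{\hcPPGraph{X}}{\allocation}$ and $(1+\volumeError)\hcPPWeight{X}$), so that the pointwise absolute difference is at most the gap. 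Integrating the gap gives $\mhcPartitionFunction{\hcPPGraph{X}}{(1+\volumeError)\hcPPWeight{X}}-\hcPartitionFunction{\hcPPGraph{X}}{(1-\volumeError)\hcPPWeight{X}}$, which is then controlled via \Cref{lemma:multiset_rewritten} (the geometric-series identity $\mhcPartitionFunction{\graph}{\weight}=\hcPartitionFunction{\graph}{\weight/(1-\weight)}$) together with \Cref{lemma:log_subadditive_hardcore,lemma:bound_hardcore}. You instead introduce the volume-weighted fugacity $\lambda'\bigl(\hcPPVertex{y}{i}\bigr)=\fugacity[i]\,\vol{\invAllocation[y]}$, split into a multiset error and a volume error, and for the multiset part arrive directly at the exponential-series form $\hcPartitionFunction{\hcPPGraph{X}}{\eulerE^{\lambda'}-1}$ by grouping ordered tuples according to their vertex multiset, bypassing \Cref{lemma:multiset_rewritten} in favour of the elementary estimate $\eulerE^{t}-1\le t+t^{2}$ on $[0,\tfrac12]$. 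Both routes land on the same final factor; yours is slightly more elementary and makes the tuple--multiset correspondence explicit (and in fact your exponential series is the exact value of the tuple sum, whereas the paper's geometric-series $\mhcPartitionFunction{\cdot}{\cdot}$ is only an upper bound for it), while the paper's single sandwich is more compact and delegates the multiset bookkeeping to \Cref{lemma:multiset_rewritten}.
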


\begin{proof}
	We prove the statement by arguing that
	\begin{align*}
		\sum_{\numParticles \in \N} \sum_{\typeOf\colon [\numParticles] \to [\numTypes]} \int_{\volume^{\numParticles}}
		&\absolute{\generalWeight{\interactionMatrix}{\fugacity}[\allocation[\mathbold{x}], \typeOf] - \hcPPWeightContinuous{\hcPPGraph{X}}{\hcPPWeight{X}}{\allocation}[\mathbold{x}, \typeOf]} \, \text{d} \lebesgue{\dimension \times \numParticles} \\
		&\le \mhcPartitionFunction{\hcPPGraph{X}}{(1+\volumeError) \hcPPWeight{X}} - \hcPartitionFunction{\hcPPGraph{X}}{(1-\volumeError) \hcPPWeight{X}}
	\end{align*}
	and applying \Cref{lemma:bound_hardcore,lemma:log_subadditive_hardcore,lemma:multiset_rewritten}.
	
	For $\mathbold{x} = (x_i)_{i \in [\numParticles]} \in \volume^{\numParticles}$ and $\typeOf\colon [\numParticles] \to [\numTypes]$, let
	\[
	\mhcPPValidContinuous{\typeOf}{\hcPPGraph{X}}{\allocation}[\mathbold{x}] = \prod_{\substack{i, j \in [\numParticles]\colon\\ i < j}}\indicator{\left(\hcPPVertex{\allocation[x_i]}{\typeOf[i]}, \hcPPVertex{\allocation[x_j]}{\typeOf[j]}\right) \notin \hcPPEdges{X} \text{ or } \hcPPVertex{\allocation[x_i]}{\typeOf[i]} = \hcPPVertex{\allocation[x_j]}{\typeOf[j]}} .
	\]
	Note that for all $\mathbold{x} \in \volume^{\numParticles}$ and all type assignments $\typeOf\colon [\numParticles] \to [\numTypes]$, it holds that
	\begin{align*}
		\hcPPValidContinuous{\typeOf}{\hcPPGraph{X}}{\allocation}[\mathbold{x}]
		\le \valid{\typeOf}{\interactionMatrix}[\allocation[\mathbold{x}]]
		\le \mhcPPValidContinuous{\typeOf}{\hcPPGraph{X}}{\allocation}[\mathbold{x}] .
	\end{align*}
	Further, as $\allocation$ is a $\volumeError$-$\distanceError$-allocation for $X$, it holds for all $x \in X$ that
	\[
	\left(1 - \volumeError\right) \frac{\vol{\volume}}{\size{X}} \le \vol{\invAllocation[x]} \le \left(1 + \volumeError\right) \frac{\vol{\volume}}{\size{X}} .
	\]
	For all $x \in X$ and $i \in [\numTypes]$, because $\hcPPWeight{X}[\hcPPVertex{x}{i}] = \fugacity[i] \frac{\vol{\volume}}{\size{X}}$, we have
	\[
	\left(1 - \volumeError\right) \frac{\hcPPWeight{X}[\hcPPVertex{x}{i}]}{\vol{\invAllocation[x]}}
	\le \fugacity[i]
	\le \left(1 - \volumeError\right) \frac{\hcPPWeight{X}[\hcPPVertex{x}{i}]}{\vol{\invAllocation[x]}} .
	\]
	This implies for all $\mathbold{x} \in \volume^{\numParticles}$ and $\typeOf\colon [\numParticles] \to [\numTypes]$ that
	\begin{align*}
		\frac{1}{\numParticles!} \left(\prod_{i \in [\numParticles]} \left(1 - \volumeError\right) \frac{\hcPPWeight{X}[\hcPPVertex{\allocation[x_i]}{\typeOf[i]}]}{\vol{\invAllocation[\allocation[x_i]]}}\right) \hcPPValidContinuous{\typeOf}{\hcPPGraph{X}}{\allocation}[\mathbold{x}]
		&\le \generalWeight{\interactionMatrix}{\fugacity}[\allocation[\mathbold{x}], \typeOf] \\
		&\le \frac{1}{\numParticles!} \left(\prod_{i \in [\numParticles]} \left(1 + \volumeError\right) \frac{\hcPPWeight{X}[\hcPPVertex{\allocation[x_i]}{\typeOf[i]}]}{\vol{\invAllocation[\allocation[x_i]]}}\right) \mhcPPValidContinuous{\typeOf}{\hcPPGraph{X}}{\allocation}[\mathbold{x}]
	\end{align*}
	and
	\begin{align}
		\notag
		\frac{1}{\numParticles!} \left(\prod_{i \in [\numParticles]} \left(1 - \volumeError\right) \frac{\hcPPWeight{X}[\hcPPVertex{\allocation[x_i]}{\typeOf[i]}]}{\vol{\invAllocation[\allocation[x_i]]}}\right) \hcPPValidContinuous{\typeOf}{\hcPPGraph{X}}{\allocation}[\mathbold{x}]
		&\le \hcPPWeightContinuous{\hcPPGraph{X}}{\hcPPWeight{X}}{\allocation}[\mathbold{x}, \typeOf] \\
		&\le \frac{1}{\numParticles!} \left(\prod_{i \in [\numParticles]} \left(1 + \volumeError\right) \frac{\hcPPWeight{X}[\hcPPVertex{\allocation[x_i]}{\typeOf[i]}]}{\vol{\invAllocation[\allocation[x_i]]}}\right) \mhcPPValidContinuous{\typeOf}{\hcPPGraph{X}}{\allocation}[\mathbold{x}] .
		\label[ineq]{lemma:discretization_error:eq_bound_weight}
	\end{align}
	Substituting this into \cref{lemma:discretization_error:eq_part2} yields
	\begin{align}
		\notag
		\sum_{\numParticles \in \N}
		&\sum_{\typeOf\colon [\numParticles] \to [\numTypes]} \int_{\volume^{\numParticles}} \absolute{\generalWeight{\interactionMatrix}{\fugacity}[\allocation[\mathbold{x}], \typeOf] - \hcPPWeightContinuous{\hcPPGraph{X}}{\hcPPWeight{X}}{\allocation}[\mathbold{x}, \typeOf]} \, \text{d} \lebesgue{\dimension \times \numParticles} \\
		\notag
		&\le \sum_{\numParticles \in \N} \sum_{\typeOf\colon [\numParticles] \to [\numTypes]} \int_{\volume^{\numParticles}} \frac{1}{\numParticles!} \left(\prod_{i \in [\numParticles]} \left(1 + \volumeError\right) \frac{\hcPPWeight{X}[\hcPPVertex{\allocation[x_i]}{\typeOf[i]}]}{\vol{\invAllocation[\allocation[x_i]]}}\right) \mhcPPValidContinuous{\typeOf}{\hcPPGraph{X}}{\allocation}[\mathbold{x}]\, \text{d} \lebesgue{\dimension \times \numParticles} \\
		\label[ineq]{lemma:discretization_error:eq_part2_first_bounds}
		& - \sum_{\numParticles \in \N} \sum_{\typeOf\colon [\numParticles] \to [\numTypes]} \int_{\volume^{\numParticles}} \frac{1}{\numParticles!} \left(\prod_{i \in [\numParticles]} \left(1 - \volumeError\right) \frac{\hcPPWeight{X}[\hcPPVertex{\allocation[x_i]}{\typeOf[i]}]}{\vol{\invAllocation[\allocation[x_i]]}}\right) \hcPPValidContinuous{\typeOf}{\hcPPGraph{X}}{\allocation}[\mathbold{x}]\, \text{d} \lebesgue{\dimension \times \numParticles} .
	\end{align}
	
	Next, note that due to $\volumeError \le 1$ and $4 \fugacity_{\text{max}} \vol{\volume} \le \size{X}$, it holds for all $\hcPPVertex{x}{i} \in \hcPPVertices{X}$ that
	\begin{align}
		\label{lemma:discretization_error:eq_fugacitybound}
		\left(1 + \volumeError\right)\hcPPWeight{X}[\hcPPVertex{x}{i}]
		\le 2 \frac{\vol{\volume}}{\size{X}} \fugacity[i]
		\le \frac{1}{2} .
	\end{align}
	Thus, $\mhcPartitionFunction{\hcPPGraph{X}}{\left(1 + \volumeError\right)\hcPPWeight{X}}$ is well defined.
	We proceed by rewriting both functions, $\hcPartitionFunction{\hcPPGraph{X}}{\left(1 - \volumeError\right)\hcPPWeight{X}}$ and $\mhcPartitionFunction{\hcPPGraph{X}}{\left(1 + \volumeError\right)\hcPPWeight{X}}$, by expressing each in terms of tuples of vertices.
	For $\hcPartitionFunction{\hcPPGraph{X}}{\left(1 - \volumeError\right)\hcPPWeight{X}}$, this yields
	\begin{align}
		\notag
		\hcPartitionFunction{\hcPPGraph{X}}{\left(1 - \volumeError\right)\hcPPWeight{X}}
		&= \sum_{\numParticles \in \N} \sum_{\typeOf\colon [\numParticles] \to [\numTypes]} \sum_{\mathbold{x} \in X^{\numParticles}} \frac{1}{\numParticles!} \left(\prod_{i \in [\numParticles]} \left(1 - \volumeError\right) \hcPPWeight{X}[\hcPPVertex{x_i}{\typeOf[i]}]\right) \hcPPValidContinuous{\typeOf}{\hcPPGraph{X}}{\allocation}[\mathbold{x}] \\
		&= \sum_{\numParticles \in \N} \sum_{\typeOf\colon [\numParticles] \to [\numTypes]} \int_{\volume^{\numParticles}} \frac{1}{\numParticles!} \left(\prod_{i \in [\numParticles]} \left(1 - \volumeError\right) \frac{\hcPPWeight{X}[\hcPPVertex{\allocation[x_i]}{\typeOf[i]}]}{\vol{\invAllocation[\allocation[x_i]]}}\right) \hcPPValidContinuous{\typeOf}{\hcPPGraph{X}}{\allocation}[\mathbold{x}]\, \text{d} \lebesgue{\dimension \times \numParticles}  .
		\label{lemma:discretization_error:eq_rewrite_hc}
	\end{align}
	Similarly, for $\mhcPartitionFunction{\hcPPGraph{X}}{\left(1 + \volumeError\right)\hcPPWeight{X}}$, we get
	\begin{align}
		\notag
		\mhcPartitionFunction{\hcPPGraph{X}}{\left(1 + \volumeError\right)\hcPPWeight{X}}
		&= \sum_{\numParticles \in \N} \sum_{\typeOf\colon [\numParticles] \to [\numTypes]} \sum_{\mathbold{x} \in X^{\numParticles}} \frac{1}{\numParticles!} \left(\prod_{i \in [\numParticles]} \left(1 + \volumeError\right) \hcPPWeight{X}[\hcPPVertex{x_i}{\typeOf[i]}]\right) \mhcPPValidContinuous{\typeOf}{\hcPPGraph{X}}{\allocation}[\mathbold{x}] \\
		&= \sum_{\numParticles \in \N} \sum_{\typeOf\colon [\numParticles] \to [\numTypes]} \int_{\volume^{\numParticles}} \frac{1}{\numParticles!} \left(\prod_{i \in [\numParticles]} \left(1 + \volumeError\right) \frac{\hcPPWeight{X}[\hcPPVertex{\allocation[x_i]}{\typeOf[i]}]}{\vol{\invAllocation[\allocation[x_i]]}}\right) \mhcPPValidContinuous{\typeOf}{\hcPPGraph{X}}{\allocation}[\mathbold{x}]\, \text{d} \lebesgue{\dimension \times \numParticles}  .
		\label{lemma:discretization_error:eq_rewrite_mhc}
	\end{align}
	Note that the multiset version always allows arbitrarily many copies of the same vertex, independent of $\interactionMatrix$, which is captured by the indicator function $\mhcPPValidContinuous{\typeOf}{\hcPPGraph{X}}{\allocation}$.
	
	Substituting \cref{lemma:discretization_error:eq_rewrite_hc,lemma:discretization_error:eq_rewrite_mhc} into \cref{lemma:discretization_error:eq_part2_first_bounds} yields
	\begin{align*}
		\sum_{\numParticles \in \N} \sum_{\typeOf\colon [\numParticles] \to [\numTypes]} \int_{\volume^{\numParticles}}
		& \absolute{\generalWeight{\interactionMatrix}{\fugacity}[\allocation[\mathbold{x}], \typeOf] - \hcPPWeightContinuous{\hcPPGraph{X}}{\hcPPWeight{X}}{\allocation}[\mathbold{x}, \typeOf]} \, \text{d} \lebesgue{\dimension \times \numParticles} \\
		&\le \mhcPartitionFunction{\hcPPGraph{X}}{\left(1 + \volumeError\right)\hcPPWeight{X}} - \hcPartitionFunction{\hcPPGraph{X}}{\left(1 - \volumeError\right)\hcPPWeight{X}} .
	\end{align*}
	We proceed by bounding both terms of the difference.
	To this end, we start by upper-bounding the partition function $\mhcPartitionFunction{\hcPPGraph{X}}{\left(1 + \volumeError\right)\hcPPWeight{X}}$ in terms of $\hcPartitionFunction{\hcPPGraph{X}}{\left(1 + \volumeError\right)\hcPPWeight{X}}$ using \Cref{lemma:multiset_rewritten} (note that we can do so by \cref{lemma:discretization_error:eq_fugacitybound}).
	We obtain
	\begin{align*}
		\mhcPartitionFunction{\hcPPGraph{X}}{\left(1 + \volumeError\right)\hcPPWeight{X}}
		&\le \eulerE^{2 \sum_{i \in [\numTypes]} \sum_{x \in X} \left(1 + \volumeError\right)^2 \hcPPWeight{X}[\hcPPVertex{x}{i}]^2} \hcPartitionFunction{\hcPPGraph{X}}{\left(1 + \volumeError\right)\hcPPWeight{X}} \\
		&\le \eulerE^{2 \left(1 + \volumeError\right)^2 \sum_{i \in [\numTypes]} \fugacity[i]^2 \sum_{x \in X} \frac{\vol{\volume}^2}{\size{X}^{2}}} \hcPartitionFunction{\hcPPGraph{X}}{\left(1 + \volumeError\right)\hcPPWeight{X}} \\
		&\le \eulerE^{\frac{8}{\size{X}} \sum_{i \in [\numTypes]} \fugacity[i]^2 \vol{\volume}^2} \hcPartitionFunction{\hcPPGraph{X}}{\left(1 + \volumeError\right)\hcPPWeight{X}} .
	\end{align*}
	Further, by \Cref{lemma:log_subadditive_hardcore,lemma:bound_hardcore} we obtain
	\begin{align*}
		\hcPartitionFunction{\hcPPGraph{X}}{\left(1 + \volumeError\right)\hcPPWeight{X}}
		&\le \hcPartitionFunction{\hcPPGraph{X}}{2 \volumeError \hcPPWeight{X}} \hcPartitionFunction{\hcPPGraph{X}}{\left(1 - \volumeError\right)\hcPPWeight{X}} \\
		&\le \eulerE^{\sum_{i \in [\numTypes]} \sum_{x \in X} 2 \volumeError \hcPPWeight{X}[\hcPPVertex{x}{i}]} \hcPartitionFunction{\hcPPGraph{X}}{\left(1 - \volumeError\right)\hcPPWeight{X}} \\
		&=   \eulerE^{2 \volumeError \sum_{i \in [\numTypes]} \fugacity[i] \vol{\volume}} \hcPartitionFunction{\hcPPGraph{X}}{\left(1 - \volumeError\right)\hcPPWeight{X}} .
	\end{align*}
	Thus, we have
	\begin{align*}
		\notag
		&\mhcPartitionFunction{\hcPPGraph{X}}{(1+\volumeError)\hcPPWeight{X}} - \hcPartitionFunction{\hcPPGraph{X}}{(1-\volumeError)\hcPPWeight{X}} \\
		&\hspace{1em} \le \left( \eulerE^{\frac{8}{\size{X}} \sum_{i \in [\numTypes]} \fugacity[i]^2 \vol{\volume}^2} \eulerE^{2 \volumeError \sum_{i \in [\numTypes]} \fugacity[i] \vol{\volume}} - 1 \right) \hcPartitionFunction{\hcPPGraph{X}}{\left(1 - \volumeError\right)\hcPPWeight{X}}.
	\end{align*}
	
	Next, by \cref{lemma:discretization_error:eq_bound_weight,lemma:discretization_error:eq_rewrite_hc}, we know that
	\begin{align*}
		\hcPartitionFunction{\hcPPGraph{X}}{\left(1 - \volumeError\right)\hcPPWeight{X}}
		&\le \sum_{\numParticles \in \N} \sum_{\typeOf\colon [\numParticles] \to [\numTypes]} \int_{\volume^{\numParticles}} \generalWeight{\interactionMatrix}{\fugacity}[\allocation[\mathbold{x}], \typeOf]\, \text{d} \lebesgue{\dimension \times \numParticles} \\
		&=	\allocationPartitionFunction{\volume}{\interactionMatrix}{\fugacity}{\allocation} .
	\end{align*}
	Thus, we have
	\begin{align*}
		\sum_{\numParticles \in \N} \sum_{\typeOf\colon [\numParticles] \to [\numTypes]} &\int_{\volume^{\numParticles}}
		 \absolute{\generalWeight{\interactionMatrix}{\fugacity}[\allocation[\mathbold{x}], \typeOf] - \hcPPWeightContinuous{\hcPPGraph{X}}{\hcPPWeight{X}}{\allocation}[\mathbold{x}, \typeOf]} \, \text{d} \lebesgue{\dimension \times \numParticles} \\
		&\le \left( \eulerE^{\frac{8}{\size{X}} \sum_{i \in [\numTypes]} \fugacity[i]^2 \vol{\volume}^2} \eulerE^{2 \volumeError \sum_{i \in [\numTypes]} \fugacity[i] \vol{\volume}} - 1 \right) \hcPartitionFunction{\hcPPGraph{X}}{\left(1 - \volumeError\right)\hcPPWeight{X}} \\
		&\le \left( \eulerE^{\frac{8}{\size{X}} \sum_{i \in [\numTypes]} \fugacity[i]^2 \vol{\volume}^2} \eulerE^{2 \volumeError \sum_{i \in [\numTypes]} \fugacity[i] \vol{\volume}} - 1 \right) \allocationPartitionFunction{\volume}{\interactionMatrix}{\fugacity}{\allocation} ,
	\end{align*}
	which concludes the proof.
\end{proof}

We now apply \Cref{lemma:discretization_error:1,lemma:discretization_error:2} to prove \Cref{lemma:discretization_error}.

\begin{proof}[Proof of \Cref{lemma:discretization_error}]
	By \Cref{lemma:discretization_error:1} we have
	\begin{align*}
		\allocationPartitionFunction{\volume}{\interactionMatrix}{\fugacity}{\allocation}
		&= \sum_{\numParticles \in \N} \sum_{\typeOf\colon  [\numParticles] \to [\numTypes]} \int_{\volume^{\numParticles}} \generalWeight{\interactionMatrix}{\fugacity}[\allocation[\mathbold{x}], \typeOf]\, \text{d} \lebesgue{\dimension \times \numParticles} \\
		&\le \eulerE^{\left(\frac{4  \distanceError}{\interactionMatrix_{\min}}\right)^{\dimension } \sum_{i \in [\numTypes]} \fugacity[i] \vol{\volume}} \generalPartitionFunction{\volume}{\interactionMatrix}{\fugacity} .
	\end{align*}
	Due to \Cref{lemma:discretization_error:2}, this implies
	\begin{align}
		\notag
		\sum_{\numParticles \in \N}
		&\sum_{\typeOf\colon [\numParticles] \to [\numTypes]} \int_{\volume^{\numParticles}} \absolute{\generalWeight{\interactionMatrix}{\fugacity}[\allocation[\mathbold{x}], \typeOf] - \hcPPWeightContinuous{\hcPPGraph{X}}{\hcPPWeight{X}}{\allocation}[\mathbold{x}, \typeOf]} \, \text{d} \lebesgue{\dimension \times \numParticles} \\
		\label[ineq]{lemma:discretization_error:eq_bound2}
		& \le \left(\eulerE^{\frac{8}{\size{X}} \sum_{i \in [\numTypes]} \fugacity[i]^2 \vol{\volume}^2} \eulerE^{2 \volumeError \sum_{i \in [\numTypes]} \fugacity[i] \vol{\volume}} - 1\right)\eulerE^{\left(\frac{4 \distanceError}{\interactionMatrix_{\min}}\right)^{\dimension} \sum_{i \in [\numTypes]} \fugacity[i] \vol{\volume}} \generalPartitionFunction{\volume}{\interactionMatrix}{\fugacity}.
	\end{align}
	
	Further, by the triangle inequality, we have that
	\begin{align}
		\notag
		\sum_{\numParticles \in \N}
		\sum_{\typeOf\colon [\numParticles] \to [\numTypes]} \int_{\volume^{\numParticles}}
		& \absolute{\generalWeight{\interactionMatrix}{\fugacity}[\mathbold{x}, \typeOf] - \hcPPWeightContinuous{\hcPPGraph{X}}{\hcPPWeight{X}}{\allocation}[\mathbold{x}, \typeOf]} \, \text{d} \lebesgue{\dimension \times \numParticles} \\
		\label[term]{lemma:discretization_error:eq_part1}
		&\le \sum_{\numParticles \in \N} \sum_{\typeOf\colon [\numParticles] \to [\numTypes]} \int_{\volume^{\numParticles}} \absolute{\generalWeight{\interactionMatrix}{\fugacity}[\mathbold{x}, \typeOf] - \generalWeight{\interactionMatrix}{\fugacity}[\allocation[\mathbold{x}], \typeOf]} \, \text{d} \lebesgue{\dimension \times \numParticles}  \\
		\label[term]{lemma:discretization_error:eq_part2}
		&\quad+ \sum_{\numParticles \in \N} \sum_{\typeOf\colon [\numParticles] \to [\numTypes]} \int_{\volume^{\numParticles}} \absolute{\generalWeight{\interactionMatrix}{\fugacity}[\allocation[\mathbold{x}], \typeOf] - \hcPPWeightContinuous{\hcPPGraph{X}}{\hcPPWeight{X}}{\allocation}[\mathbold{x}, \typeOf]} \, \text{d} \lebesgue{\dimension \times \numParticles} .
	\end{align}
	Combining \Cref{lemma:discretization_error:1} and \cref{lemma:discretization_error:eq_bound2} we obtain
	\begin{align*}
		\sum_{\numParticles \in \N} \sum_{\typeOf\colon [\numParticles] \to [\numTypes]} \int_{\volume^{\numParticles}}
		& \absolute{\generalWeight{\interactionMatrix}{\fugacity}[\mathbold{x}, \typeOf] - \hcPPWeightContinuous{\hcPPGraph{X}}{\hcPPWeight{X}}{\allocation}[\mathbold{x}, \typeOf]} \, \text{d} \lebesgue{\dimension \times \numParticles} \\
		&\le \left(\eulerE^{\frac{8}{\size{X}} \sum_{i \in [\numTypes]} \fugacity[i]^2 \vol{\volume}^2} \eulerE^{\left( 2 \volumeError + \left(\frac{4 \distanceError }{\interactionMatrix_{\min}} \right)^{\dimension} \right) \sum_{i \in [\numTypes]} \fugacity[i] \vol{\volume}} - 1\right) \generalPartitionFunction{\volume}{\interactionMatrix}{\fugacity} ,
	\end{align*}
	which proves the claim.
\end{proof}

As \Cref{lemma:discretization_error} immediately implies a bound on  $\absolute{\generalPartitionFunction{\volume}{\interactionMatrix}{\fugacity} - \hcPPPartitionFunctionContinuous{\volume}{\hcPPGraph{X}}{\hcPPWeight{X}}{\allocation}}$, this leads us to the question how to compute $\hcPPPartitionFunctionContinuous{\volume}{\hcPPGraph{X}}{\hcPPWeight{X}}{\allocation}$ efficiently.
To this end, the following identity will come in handy.

\begin{lemma}
	\label{lemma:hcPPPartitionFunction_equivalence}
	Let $(\volume, \interactionMatrix, \fugacity)$ be a hard-constraint point process with $\numTypes \in \N_{\ge 1}$ particle types and $\volume \subset \R^{\dimension}$.
	Further, let $X \subseteq \volume$ be finite and non-empty, and let $(\hcPPGraph{X}, \hcPPWeight{X})$ be the hard-core representation of $(\volume, \interactionMatrix, \fugacity)$ based on $X$.
	For all $\volumeError \in \left[0, 1\right)$ and $\distanceError \in \R_{>0}$, and all $\volumeError$-$\distanceError$-allocations $\allocation$ for~$X$, we have
	\[
		\hcPPPartitionFunctionContinuous{\volume}{\hcPPGraph{X}}{\hcPPWeight{X}}{\allocation} = \hcPartitionFunction{\hcPPGraph{X}}{\hcPPWeight{X}} .
		\qedhere
	\]
\end{lemma}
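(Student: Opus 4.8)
The plan is to integrate out the continuous degrees of freedom against the partition of $\volume$ induced by $\allocation$. Since $\allocation\colon\volume\to X$ is a total single-valued function and $X$ is finite, the preimages $\big(\invAllocation[x]\big)_{x\in X}$ form a finite measurable partition of $\volume$, so the boxes $\invAllocation[y_1]\times\dots\times\invAllocation[y_{\numParticles}]$ indexed by $(y_1,\dots,y_{\numParticles})\in X^{\numParticles}$ partition $\volume^{\numParticles}$. By additivity of the integral this lets one rewrite
\[
	\hcPPPartitionFunctionContinuous{\volume}{\hcPPGraph{X}}{\hcPPWeight{X}}{\allocation}
	= \sum_{\numParticles\in\N}\ \sum_{\typeOf\colon[\numParticles]\to[\numTypes]}\ \sum_{(y_i)\in X^{\numParticles}}\ \int_{\invAllocation[y_1]\times\dots\times\invAllocation[y_{\numParticles}]}\hcPPWeightContinuous{\hcPPGraph{X}}{\hcPPWeight{X}}{\allocation}[\mathbold{x},\typeOf]\, \text{d}\lebesgue{\dimension\times\numParticles}.
\]

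On the box $\invAllocation[y_1]\times\dots\times\invAllocation[y_{\numParticles}]$ every point $\mathbold{x}=(x_i)_i$ satisfies $\allocation[x_i]=y_i$, so there the integrand is the constant $\tfrac{1}{\numParticles!}\big(\prod_{i}\hcPPWeight{X}[\hcPPVertex{y_i}{\typeOf[i]}]/\vol{\invAllocation[y_i]}\big)\,\hcPPValidContinuous{\typeOf}{\hcPPGraph{X}}{\allocation}[\mathbold{x}]$, where I use that $\hcPPWeight{X}$ depends only on the type (so $\hcPPWeight{X}[\hcPPVertex{x_i}{\typeOf[i]}]=\hcPPWeight{X}[\hcPPVertex{y_i}{\typeOf[i]}]$) and where $\hcPPValidContinuous{\typeOf}{\hcPPGraph{X}}{\allocation}[\mathbold{x}]$ now equals $\prod_{i<j}\indicator{\hcPPVertex{y_i}{\typeOf[i]}\ne\hcPPVertex{y_j}{\typeOf[j]}\text{ and }(\hcPPVertex{y_i}{\typeOf[i]},\hcPPVertex{y_j}{\typeOf[j]})\notin\hcPPEdges{X}}$. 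Integrating the constant over the box produces a factor $\prod_i\vol{\invAllocation[y_i]}$ that cancels the denominators exactly; this is legitimate because $\allocation$ is a $\volumeError$-$\distanceError$-allocation with $\volumeError<1$, which forces $\vol{\invAllocation[y_i]}\ge(1-\volumeError)\vol{\volume}/\size{X}>0$ (finiteness coming from $\volume$ being bounded), so that $\hcPPWeightContinuous{\hcPPGraph{X}}{\hcPPWeight{X}}{\allocation}$ is well defined in the first place. This would leave
\[
	\hcPPPartitionFunctionContinuous{\volume}{\hcPPGraph{X}}{\hcPPWeight{X}}{\allocation}
	= \sum_{\numParticles\in\N}\frac{1}{\numParticles!}\sum_{\typeOf\colon[\numParticles]\to[\numTypes]}\ \sum_{(y_i)\in X^{\numParticles}}\Big(\prod_{i\in[\numParticles]}\hcPPWeight{X}[\hcPPVertex{y_i}{\typeOf[i]}]\Big)\prod_{\substack{i,j\in[\numParticles]\colon i<j}}\indicator{\hcPPVertex{y_i}{\typeOf[i]}\ne\hcPPVertex{y_j}{\typeOf[j]}\text{ and }(\hcPPVertex{y_i}{\typeOf[i]},\hcPPVertex{y_j}{\typeOf[j]})\notin\hcPPEdges{X}}.
\]

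To conclude, I would re-index: the map $\big((y_i)_i,\typeOf\big)\mapsto\big(\hcPPVertex{y_i}{\typeOf[i]}\big)_i$ is a bijection from pairs of a tuple in $X^{\numParticles}$ and a type map $[\numParticles]\to[\numTypes]$ onto $\hcPPVertices{X}^{\numParticles}$, turning the two inner sums into one sum over $(v_1,\dots,v_{\numParticles})\in\hcPPVertices{X}^{\numParticles}$ with summand $\prod_i\hcPPWeight{X}[v_i]\cdot\prod_{i<j}\indicator{v_i\ne v_j\text{ and }(v_i,v_j)\notin\hcPPEdges{X}}$. The indicator product is $1$ exactly when $v_1,\dots,v_{\numParticles}$ are pairwise distinct and pairwise non-adjacent, i.e.\ when $\{v_1,\dots,v_{\numParticles}\}\in\independentSets{\hcPPGraph{X}}$ has exactly $\numParticles$ elements and $(v_1,\dots,v_{\numParticles})$ is one of its $\numParticles!$ orderings; grouping the orderings cancels the $1/\numParticles!$ and leaves $\sum_{\independentSet\in\independentSets{\hcPPGraph{X}},\,\size{\independentSet}=\numParticles}\prod_{\vertex\in\independentSet}\hcPPWeight{X}[\vertex]$. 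Summing over $\numParticles\in\N$ then gives $\hcPartitionFunction{\hcPPGraph{X}}{\hcPPWeight{X}}$, with the $\numParticles=0$ term contributing $1$ on both sides by the conventions for empty products and for $\int_{\volume^{0}}$.

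There is no genuine obstacle here; the argument is pure bookkeeping. The two points that need a little care are (i) invoking the $\volumeError$-$\distanceError$-allocation hypothesis so that the weight function $\hcPPWeightContinuous{\hcPPGraph{X}}{\hcPPWeight{X}}{\allocation}$ makes sense and the cancellation of the $\vol{\invAllocation[y_i]}$ factors is valid, and (ii) phrasing the vertex re-indexing as an actual bijection and counting orderings of an independent set correctly, so that the factor $\numParticles!$ is accounted for exactly once. Note that the distance condition of a $\volumeError$-$\distanceError$-allocation is not used in this lemma; only the volume condition (together with $\volumeError<1$) enters.
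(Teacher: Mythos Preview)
Your proof is correct and follows essentially the same approach as the paper: both arguments use the partition of $\volume^{\numParticles}$ into boxes $\invAllocation[y_1]\times\cdots\times\invAllocation[y_{\numParticles}]$, observe that the integrand is constant on each box so that the volume factors cancel, and identify the resulting sum over tuples in $X^{\numParticles}$ (equivalently $\hcPPVertices{X}^{\numParticles}$) with $\hcPartitionFunction{\hcPPGraph{X}}{\hcPPWeight{X}}$. The only cosmetic difference is direction---the paper starts from $\hcPartitionFunction{\hcPPGraph{X}}{\hcPPWeight{X}}$ and rewrites it as an integral, whereas you start from the integral and collapse it to the sum---and you are more explicit about the $\volumeError<1$ hypothesis and the $\numParticles!$ bookkeeping than the paper is.
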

\begin{proof}
	We prove this claim by rewriting $\hcPartitionFunction{\hcPPGraph{X}}{\hcPPWeight{X}}$ in terms of tuples of vertices, yielding
	\begin{align*}
		\hcPartitionFunction{\hcPPGraph{X}}{\hcPPWeight{X}}
		&= \sum_{\numParticles \in \N} \sum_{\typeOf\colon [\numParticles] \to [\numTypes]} \sum_{\mathbold{x} \in X^{\numParticles}} \frac{1}{\numParticles!} \left(\prod_{i \in [\numParticles]}\hcPPWeight{X}[\hcPPVertex{x_i}{\typeOf[i]}]\right) \hcPPValidContinuous{\typeOf}{\hcPPGraph{X}}{\allocation}[\mathbold{x}] \\
		& = \sum_{\numParticles \in \N} \sum_{\typeOf\colon [\numParticles] \to [\numTypes]} \int_{\volume^{\numParticles}} \frac{1}{\numParticles!} \left(\prod_{i \in [\numParticles]} \frac{\hcPPWeight{X}[\hcPPVertex{\allocation[x_i]}{\typeOf[i]}]}{\vol{\invAllocation[\allocation[x_i]]}}\right) \hcPPValidContinuous{\typeOf}{\hcPPGraph{X}}{\allocation}[\mathbold{x}]\, \text{d} \lebesgue{\dimension \times \numParticles} \\
		& = \sum_{\numParticles \in \N} \sum_{\typeOf\colon [\numParticles] \to [\numTypes]} \int_{\volume^{\numParticles}} \hcPPWeightContinuous{\hcPPGraph{X}}{\hcPPWeight{X}}{\allocation}[\mathbold{x}, \typeOf] \, \text{d} \lebesgue{\dimension \times \numParticles} \\
		&= \hcPPPartitionFunctionContinuous{\volume}{\hcPPGraph{X}}{\hcPPWeight{X}}{\allocation} .
		\qedhere
	\end{align*}
\end{proof}

Note that $\hcPartitionFunction{\hcPPGraph{X}}{\hcPPWeight{X}}$ is simply the partition function of the discrete hard-core model, which, as we discussed in the introduction, is well studied from a computational perspective.
Combining \Cref{lemma:discretization_error,lemma:hcPPPartitionFunction_equivalence} immediately yields the following bound on the difference between $\generalPartitionFunction{\volume}{\interactionMatrix}{\fugacity}$ and $\hcPartitionFunction{\hcPPGraph{X}}{\hcPPWeight{X}}$.
\DiscretisationError
% \begin{theorem}
% 	\label{thm:discretization_error}
% 	Let $(\volume, \interactionMatrix, \fugacity)$ be a hard-constraint point process with $\numTypes \in \N_{\ge 1}$ particle types, and assume $\volume \subset \R^{\dimension}$ is star-convex.
% 	Further, set $\fugacity_{\max} = \max_{i \in [\numTypes]} \fugacity[i]$ and set $\radius_{\min} = \min_{i \in [\numTypes]} \radius[i]$.
% 	Let $X \subseteq \volume$ with $4 \fugacity_{\max} \vol{\volume} \le \size{X} < \infty$ and let $(\hcPPGraph{X}, \hcPPWeight{X})$ be the hard-core representation of $(\volume, \interactionMatrix, \fugacity)$ based on $X$.
% 	If, for some $\volumeError \in \left[0, \frac{1}{2}\right]$ and $\distanceError \in \left[0, \radius_{\min}\right]$, $\volume$ has a $\volumeError$-$\distanceError$-allocation for $X$, then it holds that
% 	\[
% 		\absolute{\hcPartitionFunction{\hcPPGraph{X}}{\hcPPWeight{X}} - \generalPartitionFunction{\volume}{\interactionMatrix}{\fugacity}}
% 		\le \left(\eulerE^{\frac{8}{\size{X}} \sum_{i \in [\numTypes]} \fugacity[i]^2 \vol{\volume}^2} \eulerE^{\left( 2 \volumeError + \left(\frac{2 \distanceError }{\radius_{\min}} \right)^{\dimension} \right) \sum_{i \in [\numTypes]} \fugacity[i] \vol{\volume}} - 1\right) \generalPartitionFunction{\volume}{\interactionMatrix}{\fugacity}.
% 		\qedhere
% 	\]
% \end{theorem}
%\begin{proof}
%	The claim follows immediately by combining \Cref{lemma:discretization_error,lemma:hcPPPartitionFunction_equivalence}.
%\end{proof}

We would like to add a remark about the error term, given in \Cref{thm:discretization_error}.
While it seems to depend on three quantities, namely $\volumeError$, $\distanceError$ and $\size{X}$, it usually suffices to focus on $\volumeError$ and $\distanceError$.
This is because decreasing $\distanceError$ requires to increase $\size{X}$.
One way to see this is to observe that for any $\distanceError \in \R_{>0}$, a lower bound for the minimum size of $X \subset \volume$ such that a $\volumeError$-$\distanceError$-allocation exists is $\frac{\vol{\volume}}{\vol{\ball{2 \distanceError}}}$.
Thus, choosing $\distanceError$ sufficiently small typically results in sufficiently large $\size{X}$.

Another interesting aspect is the role of~$\allocation$.
To construct the hard-core representation for the continuous model and apply \Cref{thm:discretization_error}, it is sufficient to know a set of points $X \subset \volume$ such that a suitable $\volumeError$-$\distanceError$-allocation exists.
However, the allocation itself is not required to be known for constructing the discretization.
This is because, even though \Cref{lemma:discretization_error} seems to depend on the specific allocation $\allocation$, \Cref{lemma:hcPPPartitionFunction_equivalence} shows that $\hcPPPartitionFunctionContinuous{\volume}{\hcPPGraph{X}}{\hcPPWeight{X}}{\allocation} = \hcPartitionFunction{\hcPPGraph{X}}{\hcPPWeight{X}}$ independent of $\allocation$.
For our algorithmic results in \Cref{sec:algo}, we focus on the case where $\volume = [0, \sidelength)^{\dimension}$ for some $\sidelength \in \R_{> 0}$ (i.e., $\volume$ is a hypercube).
In this case, we provide an explicit allocation.
Nevertheless, noting that an explicit construction is not required might be of its own interest when generalizing the results to different classes of continuous regions $\volume \subset \R^{\dimension}$.

\section{Canonical discretization for cubic regions} \label{sec:canonical_discretization}
As mentioned earlier, our algorithmic results focus on cubic regions $\volume = [0, \sidelength)^{\dimension}$, $\sidelength \in \R_{> 0}$.
Note that such regions are star-convex.
In what follows, we show how to construct, for any given $\distanceError \in \R_{>0}$, a set of points $X \subset \volume$ such that there is a $0$-$\distanceError$-allocation for $X$ and with $\size{X}$ polynomial in $\vol{\volume}$ and $\frac{1}{\distanceError}$.
We refer to this as the \emph{canonical point set} and to the resulting graph as the \emph{canonical discretization}.

Formally, the canonical point set is parameterized by a resolution $\resolution \in \R_{> 0}$.
Let $\grid{n} = [0, n)^{\dimension} \cap \N^{\dimension}$ be a $\dimension$-dimensional positive integer grid.
For all cubic regions $\volume = [0, \sidelength)^{\dimension}$, $\sidelength \in \R_{> 0}$, we call $\resolution \in \R_{> 0}$ a \emph{feasible resolution} if and only if $\sidelength \resolution \in \N$.
Given a feasible resolution $\resolution \in \R_{> 0}$, we define the canonical point set for resolution~\resolution as
\[
	\canonicalPoints{\resolution}
	= \frac{1}{\resolution} \grid{\sidelength \resolution}
	= \left\{\left(x^{(1)}, \dots, x^{(\dimension)}\right) \in \volume ~\middle\vert~ \forall i \in [\dimension] \exists m \in \N\colon x^{(i)} = \frac{m}{\resolution} \right\} .
\]
For a hard-constraint point process $(\volume, \interactionMatrix, \fugacity)$ we now define the canonical discretization to be the hard-core representation $(\hcPPGraph{\canonicalPoints{\resolution}}, \hcPPWeight{\canonicalPoints{\resolution}})$ based on $\canonicalPoints{\resolution}$.
Note that in the special case of the hard-sphere model, this canonical discretization coincides to the procedure proposed in \cite{friedrich2021spectral}.
However, our approach works for a much more general class of point processes and, as the rest of this section shows, we obtain much stronger error bounds on the partition function.

The following statement shows how the existence of an allocation for a canonical point set $\canonicalPoints{\resolution}$ depends on the resolution $\resolution$.

\begin{lemma}
	\label{lemma:canonical_allocation}
	Let $\sidelength \in \R_{>0}$, let $\volume = [0, \sidelength)^{\dimension}$, and let $\distanceError \in \R_{>0}$.
	For all feasible resolutions $\resolution \ge \frac{\sqrt{\dimension}}{\distanceError}$ there is a $0$-$\distanceError$-allocation of $\volume$ for the canonical point set $\canonicalPoints{\resolution}$.
\end{lemma}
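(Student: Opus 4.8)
The goal is to construct an explicit allocation $\allocation\colon\volume\to\canonicalPoints{\resolution}$ and verify the two defining properties of a $0$-$\distanceError$-allocation from \Cref{def:allocation}. The natural choice is the ``round down to the nearest grid point'' map already hinted at in \Cref{sec:intro_approx}: for $y=\big(y^{(i)}\big)_{i\in[\dimension]}\in\volume$, set
\[
	\allocation[y] = \left(\frac{1}{\resolution}\left\lfloor \resolution\, y^{(i)}\right\rfloor\right)_{i\in[\dimension]} .
\]
First I would check this is well-defined, i.e.\ that $\allocation[y]\in\canonicalPoints{\resolution}$: since $y^{(i)}\in[0,\sidelength)$ we have $\lfloor \resolution y^{(i)}\rfloor\in\{0,1,\dots,\sidelength\resolution-1\}\cap\N$ (using feasibility of $\resolution$, so that $\sidelength\resolution\in\N$), hence $\allocation[y]=\frac1\resolution\lfloor\resolution y^{(i)}\rfloor$ lies in $\frac1\resolution\grid{\sidelength\resolution}=\canonicalPoints{\resolution}$.

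\textbf{Property (1): the volume condition.} For each grid point $x=\big(x^{(i)}\big)_i\in\canonicalPoints{\resolution}$, the preimage $\invAllocation[x]$ is exactly the axis-aligned half-open box $\prod_{i\in[\dimension]}\big[x^{(i)},x^{(i)}+\frac1\resolution\big)$, which is measurable with $\vol{\invAllocation[x]}=\resolution^{-\dimension}$. Since $\size{\canonicalPoints{\resolution}}=(\sidelength\resolution)^{\dimension}$ and $\vol{\volume}=\sidelength^{\dimension}$, we get $\vol{\invAllocation[x]}=\resolution^{-\dimension}=\sidelength^{\dimension}/(\sidelength\resolution)^{\dimension}=\vol{\volume}/\size{\canonicalPoints{\resolution}}$, so the bound holds with $\volumeError=0$. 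A minor point to mention: the boxes $\invAllocation[x]$ partition $\volume$ exactly, which is why equality (rather than just the two-sided inequality) holds.

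\textbf{Property (2): the diameter condition.} For $y\in\invAllocation[x]$ we have $0\le y^{(i)}-x^{(i)}<\frac1\resolution$ for every coordinate $i$, so
\[
	\dist{x}{y} = \sqrt{\sum_{i\in[\dimension]}\big(y^{(i)}-x^{(i)}\big)^2} \;<\; \sqrt{\dimension\cdot\tfrac{1}{\resolution^{2}}} \;=\; \frac{\sqrt{\dimension}}{\resolution} \;\le\; \distanceError ,
\]
where the last inequality uses the hypothesis $\resolution\ge\frac{\sqrt{\dimension}}{\distanceError}$. Since the inequality $\dist{x}{y}\le\distanceError$ is all that is required (and we even get strict inequality before the final step), this completes the verification. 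There is no real obstacle here—the lemma is essentially a bookkeeping check—so the only thing to be careful about is the interplay of the half-open intervals with measurability and with the feasibility assumption $\sidelength\resolution\in\N$, which guarantees both that $\canonicalPoints{\resolution}$ has the claimed cardinality and that the half-open boxes tile $[0,\sidelength)^{\dimension}$ cleanly with no leftover region.
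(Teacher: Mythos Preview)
Your proposal is correct and follows essentially the same argument as the paper: define the floor-to-grid map $\allocation[y]=\big(\tfrac{1}{\resolution}\lfloor\resolution y^{(i)}\rfloor\big)_i$, identify each preimage as the half-open cube $\prod_i[x^{(i)},x^{(i)}+\tfrac{1}{\resolution})$ of volume $\resolution^{-\dimension}=\vol{\volume}/\size{\canonicalPoints{\resolution}}$, and bound its diameter by $\sqrt{\dimension}/\resolution\le\distanceError$. Your explicit check that $\allocation$ actually lands in $\canonicalPoints{\resolution}$ is a small addition the paper omits, but otherwise the proofs are identical.
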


\begin{proof}
	We prove our claim by constructing an allocation $\allocation\colon \volume \to \canonicalPoints{\resolution}$ with the desired properties.
	To this end, let $y = \left(y^{(1)}, \dots, y^{(\dimension)}\right) \in \volume$.
	We set
	\[
		\allocation[y] = \left(\frac{\lfloor \resolution y^{(1)} \rfloor}{\resolution}, \dots, \frac{\lfloor \resolution y^{(\dimension)} \rfloor}{\resolution}\right) .
	\]
	That is, $y$ gets mapped to the closest point in $x = \left(x^{(1)}, \dots, x^{(\dimension)}\right) \in \canonicalPoints{\resolution}$ such that $x^{(i)} \le y^{(i)}$ for all $i \in [\dimension]$.

	Now, let $x = \left(x^{(1)}, \dots, x^{(\dimension)}\right) \in \canonicalPoints{\resolution}$ and observe that
	\[
		\invAllocation[x] = \prod_{i \in [\dimension]} \left[x^{(i)}, x^{(i)} + \frac{1}{\resolution}\right),
	\]
	where the product is meant to be the cartesian product.
	Note that $\invAllocation[x]$ is measurable.
	Further, we have $\size{\canonicalPoints{\resolution}} = \resolution^{\dimension} \vol{\volume}$.
	Thus, we have
	\[
		\vol{\invAllocation[x]}
		= \frac{1}{\resolution^{\dimension}}
		= \frac{\vol{\volume}}{\resolution^{\dimension} \vol{\volume}}
		= \frac{\vol{\volume}}{\size{\canonicalPoints{\resolution}}} .
	\]

	Finally, observe that for all $x \in \canonicalPoints{\resolution}$ and all $y \in \invAllocation[x]$, it holds that
	\[
		\dist{x}{y} \le \sqrt{\dimension \frac{1}{\resolution^2}} = \frac{\sqrt{\dimension}}{\resolution} .
	\]
	Thus, for $\resolution \ge \frac{\sqrt{\dimension}}{\distanceError}$ we have $\dist{x}{y} \le \distanceError$, which proves that $\allocation$ is a $0$-$\distanceError$-allocation for $\canonicalPoints{\resolution}$ and concludes the proof.
\end{proof}

We can now use the hard-core partition function of the canonical discretization as an approximation of the partition function of the hard-constraint point process as stated in the following result.

\begin{theorem}
	\label{thm:canonical_discretization_error}
	Let $(\volume, \interactionMatrix, \fugacity)$ be a hard-constraint point process with $\numTypes \in \N_{\ge 1}$ particle types and $\volume = [0, \sidelength)^{\dimension}$ for some $\sidelength \in \R_{>0}$.
	For all $\discretizationError \in (0, 1]$ there exists $\resolution_{\discretizationError} \in \bigTheta{\discretizationError^{-1/\dimension} \vol{\volume}^{1/\dimension}}$ such that for all feasible resolutions $\resolution \ge \resolution_{\discretizationError}$ we have
	\[
		\eulerE^{-\discretizationError} \generalPartitionFunction{\volume}{\interactionMatrix}{\fugacity}
		\le \hcPartitionFunction{\hcPPGraph{\canonicalPoints{\resolution}}}{\hcPPWeight{\canonicalPoints{\resolution}}}
		\le \eulerE^{\discretizationError} \generalPartitionFunction{\volume}{\interactionMatrix}{\fugacity}.
		\qedhere
	\]
\end{theorem}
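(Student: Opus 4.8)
The plan is to apply \Cref{thm:discretization_error} to the canonical point set $X=\canonicalPoints{\resolution}$ for a sufficiently large feasible resolution $\resolution$, and then to rewrite the absolute error bound it yields as a multiplicative $\eulerE^{\pm\discretizationError}$ estimate. First I would fix a feasible resolution $\resolution$ and put $\distanceError\defeq\sqrt{\dimension}/\resolution$. Then $\resolution=\sqrt{\dimension}/\distanceError$, so \Cref{lemma:canonical_allocation} provides a $0$-$\distanceError$-allocation of $\volume$ for $\canonicalPoints{\resolution}$, and recall from its proof that $\size{\canonicalPoints{\resolution}}=\resolution^{\dimension}\vol{\volume}$. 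To invoke \Cref{thm:discretization_error} I check its hypotheses: $\volume=[0,\sidelength)^{\dimension}$ is star-convex; $\volumeError=0\in\left[0,\tfrac12\right]$; $\distanceError=\sqrt{\dimension}/\resolution\le\interactionMatrix_{\min}/2$ as soon as $\resolution\ge 2\sqrt{\dimension}/\interactionMatrix_{\min}$; and $\size{\canonicalPoints{\resolution}}=\resolution^{\dimension}\vol{\volume}\ge 4\fugacity_{\max}\vol{\volume}$ as soon as $\resolution^{\dimension}\ge 4\fugacity_{\max}$.

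Substituting $\volumeError=0$, $\distanceError=\sqrt{\dimension}/\resolution$, and $\size{X}=\resolution^{\dimension}\vol{\volume}$ into the exponent in \Cref{thm:discretization_error} collapses it to
\[
  \eta\defeq\frac{\vol{\volume}}{\resolution^{\dimension}}\left(8\sum_{i\in[\numTypes]}\fugacity[i]^2+\Big(\tfrac{4\sqrt{\dimension}}{\interactionMatrix_{\min}}\Big)^{\dimension}\sum_{i\in[\numTypes]}\fugacity[i]\right)=\frac{C\,\vol{\volume}}{\resolution^{\dimension}},
\]
where $C$ depends only on $\numTypes,\dimension,\interactionMatrix,\fugacity$, i.e.\ on the parameters we hold fixed. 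Thus \Cref{thm:discretization_error} gives $\absolute{\hcPartitionFunction{\hcPPGraph{\canonicalPoints{\resolution}}}{\hcPPWeight{\canonicalPoints{\resolution}}}-\generalPartitionFunction{\volume}{\interactionMatrix}{\fugacity}}\le(\eulerE^{\eta}-1)\generalPartitionFunction{\volume}{\interactionMatrix}{\fugacity}$.

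To turn this into the claimed bound it suffices to force $\eta\le\discretizationError/4$ (in particular $\eta\le 1$, since $\discretizationError\le 1$). Using the elementary inequalities $\eulerE^{t}-1\le 2t$ for $t\in[0,1]$ and $\discretizationError/2\le 1-\eulerE^{-\discretizationError}$ for $\discretizationError\in(0,1]$, together with $1-\eulerE^{-\discretizationError}\le\eulerE^{\discretizationError}-1$, I get $\eulerE^{\eta}-1\le 2\eta\le\discretizationError/2\le 1-\eulerE^{-\discretizationError}\le\eulerE^{\discretizationError}-1$, hence both $\eulerE^{\eta}\le\eulerE^{\discretizationError}$ and $2-\eulerE^{\eta}\ge\eulerE^{-\discretizationError}$. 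Consequently $\hcPartitionFunction{\hcPPGraph{\canonicalPoints{\resolution}}}{\hcPPWeight{\canonicalPoints{\resolution}}}\le\eulerE^{\eta}\generalPartitionFunction{\volume}{\interactionMatrix}{\fugacity}\le\eulerE^{\discretizationError}\generalPartitionFunction{\volume}{\interactionMatrix}{\fugacity}$ and $\hcPartitionFunction{\hcPPGraph{\canonicalPoints{\resolution}}}{\hcPPWeight{\canonicalPoints{\resolution}}}\ge(2-\eulerE^{\eta})\generalPartitionFunction{\volume}{\interactionMatrix}{\fugacity}\ge\eulerE^{-\discretizationError}\generalPartitionFunction{\volume}{\interactionMatrix}{\fugacity}$. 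Collecting the three resolution conditions, I would set
\[
  \resolution_{\discretizationError}\defeq\max\left\{\Big(\tfrac{4C\vol{\volume}}{\discretizationError}\Big)^{1/\dimension},\ \tfrac{2\sqrt{\dimension}}{\interactionMatrix_{\min}},\ (4\fugacity_{\max})^{1/\dimension}\right\},
\]
so that all of the above holds for every feasible $\resolution\ge\resolution_{\discretizationError}$. The first entry equals $(4C)^{1/\dimension}\discretizationError^{-1/\dimension}\vol{\volume}^{1/\dimension}$ and the other two are constants independent of $\vol{\volume}$ and $\discretizationError$, so $\resolution_{\discretizationError}\in\bigTheta{\discretizationError^{-1/\dimension}\vol{\volume}^{1/\dimension}}$ in the large-volume regime relevant for the algorithmic applications.

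I do not anticipate a genuine obstacle: once \Cref{thm:discretization_error} and \Cref{lemma:canonical_allocation} are in place, the argument is essentially bookkeeping. The one point that needs slight care is that the estimate from \Cref{thm:discretization_error} is two-sided in absolute value, so the lower bound $\hcPartitionFunction{\hcPPGraph{\canonicalPoints{\resolution}}}{\hcPPWeight{\canonicalPoints{\resolution}}}\ge\eulerE^{-\discretizationError}\generalPartitionFunction{\volume}{\interactionMatrix}{\fugacity}$ is marginally more demanding than the matching upper bound --- which is why $\eta$ is driven below $\discretizationError/4$ rather than merely below $\discretizationError$ --- but the elementary inequalities above absorb this gap with room to spare.
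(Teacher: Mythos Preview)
Your proposal is correct and follows essentially the same approach as the paper: apply \Cref{lemma:canonical_allocation} to get a $0$-$\distanceError$-allocation with $\distanceError=\sqrt{\dimension}/\resolution$, verify the hypotheses of \Cref{thm:discretization_error}, and then convert the resulting additive $(\eulerE^{\eta}-1)$ bound into the multiplicative $\eulerE^{\pm\discretizationError}$ estimate via elementary inequalities. The only cosmetic differences are that the paper bounds the two summands in the exponent separately by $\discretizationError/6$ each (rather than combining them into a single constant $C$ and forcing $\eta\le\discretizationError/4$) and packages $\resolution_{\discretizationError}$ as a single product expression rather than a $\max$, but the logic is identical.
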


\begin{proof}
	Let $\interactionMatrix_{\min} = \inf_{i,j \in [\numTypes]} \{\interactionMatrix[i][j] \mid \interactionMatrix[i][j] > 0\}$ and $\fugacity_{\max} = \max_{i \in [\numTypes]} \fugacity[i]$.
	We set
	\[
		\resolution_{\discretizationError} = \sqrt{\dimension} \left(\frac{48 \numTypes \max\left\{\fugacity_{\max}, \fugacity_{\max}^2\right\} \vol{\volume}}{\discretizationError}\right)^{\frac{1}{\dimension}} \cdot \max\left\{1, \frac{4}{\interactionMatrix_{\min}}\right\} .
	\]
	First observe that $\resolution_{\discretizationError}$ satisfies the required asymptotic behavior.

	Now, for all feasible resolutions $\resolution \ge \resolution_{\discretizationError}$ we know by \Cref{lemma:canonical_allocation} that there is a $\volumeError$-$\distanceError$-allocation for the canonical point set $\canonicalPoints{\resolution}$ with $\volumeError = 0$ and
	\[
		\distanceError
		= \frac{\sqrt{\dimension}}{\resolution}
		\le \left(\frac{\discretizationError}{48 \numTypes \max\left\{\fugacity_{\max}, \fugacity_{\max}^2\right\} \vol{\volume}}\right)^{\frac{1}{\dimension}} \cdot \min\left\{1, \frac{\interactionMatrix_{\min}}{4}\right\}.
	\]

	To prove our claim, we aim for applying \Cref{thm:discretization_error} based on the canonical point set $\canonicalPoints{\resolution}$ and the $\volumeError$-$\distanceError$-allocation obtained from \Cref{lemma:canonical_allocation}.
	We start by checking that $\volumeError$, $\distanceError$, and $\size{\canonicalPoints{\resolution}}$ satisfy the conditions of \Cref{thm:discretization_error}.

	To this end, note that it trivially holds that $\volumeError = 0 \in \left[0, \frac{1}{2}\right]$.
	Further, observe that for sufficiently large $\vol{\volume}$, it holds that $\distanceError \in [0, \interactionMatrix_{\min}/2]$.
	Finally, note that
	\[
		\size{\canonicalPoints{\resolution}}
		= \resolution^{\dimension} \sidelength^{\dimension}
		\ge \dimension^{\frac{\dimension}{2}} \frac{48 \numTypes \max\left\{\fugacity_{\max}, \fugacity_{\max}^2\right\} \vol{\volume}^2}{\discretizationError} \cdot \max\left\{1, \left(\frac{4}{\interactionMatrix_{\min}}\right)^{\dimension}\right\} .
	\]
	Thus, for $\vol{\volume} \ge 1$ we have
	\[
		\size{\canonicalPoints{\resolution}} \ge 4 \fugacity_{\max} \vol{\volume}.
	\]

	We can now apply \Cref{thm:discretization_error}.
	Recall that we have $\volumeError = 0$.
	Further, by
	\[
		\distanceError
		\le \left(\frac{\discretizationError}{48 \numTypes \max\left\{\fugacity_{\max}, \fugacity_{\max}^2\right\} \vol{\volume}}\right)^{\frac{1}{\dimension}} \cdot \min\left\{1, \frac{\interactionMatrix_{\min}}{4}\right\}
		\le \left(\frac{\discretizationError}{6 \sum_{i \in [\numTypes]} \fugacity[i] \vol{\volume}}\right)^{\frac{1}{\dimension}} \frac{\interactionMatrix_{\min}}{4}
	\]
	we get
	\[
		\left(\frac{4 \distanceError}{\interactionMatrix_{\min}}\right)^{\dimension} \sum_{i \in [\numTypes]} \fugacity[i] \vol{\volume} \le \frac{\discretizationError}{6} .
	\]
	Observe that for
	\[
		\size{\canonicalPoints{\resolution}}
		\ge \dimension^{\frac{\dimension}{2}} \frac{48 \numTypes \max\left\{\fugacity_{\max}, \fugacity_{\max}^2\right\} \vol{\volume}^2}{\discretizationError} \cdot \max\left\{1, \left(\frac{4}{\interactionMatrix_{\min}}\right)^{\dimension}\right\}
		\ge \frac{48 \sum_{i \in [\numTypes]} \fugacity[i]^2 \vol{\volume}^2}{\discretizationError}
	\]
	it holds that
	\[
		\frac{8}{\size{\canonicalPoints{\resolution}}} \sum_{i \in [\numTypes]} \fugacity[i]^2 \vol{\volume}^2 \le \frac{\discretizationError}{6}.
	\]
	Thus, using \Cref{thm:discretization_error}, we get
	\[
		\absolute{\hcPartitionFunction{\hcPPGraph{\canonicalPoints{\resolution}}}{\hcPPWeight{\canonicalPoints{\resolution}}} - \generalPartitionFunction{\volume}{\interactionMatrix}{\fugacity}} \le \left(\eulerE^{\frac{\discretizationError}{3}} - 1\right) \generalPartitionFunction{\volume}{\interactionMatrix}{\fugacity}.
	\]
	Note that $\eulerE^{\frac{\discretizationError}{3}} \le \eulerE^{\discretizationError}$ and $2 - \eulerE^{\frac{\discretizationError}{3}} \ge \eulerE^{-\discretizationError}$ for $\discretizationError \in (0, 1]$.
	Therefor, we have
	\[
		\eulerE^{-\discretizationError} \generalPartitionFunction{\volume}{\interactionMatrix}{\fugacity}
		\le \hcPartitionFunction{\hcPPGraph{\canonicalPoints{\resolution}}}{\hcPPWeight{\canonicalPoints{\resolution}}}
		\le \eulerE^{\discretizationError} \generalPartitionFunction{\volume}{\interactionMatrix}{\fugacity},
	\]
	which concludes the proof.
\end{proof}

Note that the above statement works for $\volumeError = 0$, meaning that we do not make full use of the power of \Cref{thm:discretization_error}.
However, we use the theorem to its full extend when investigating discretizations based on (uniformly) random point sets in \Cref{sec:random_discretization}.
This does not directly lead to any new algorithmic results, but it points out an interesting relationship between the hard-core partition functions of  random geometric graphs and continuous hard-constraint point processes, which we believe to be interesting in its own right.

For the rest of this section, we prove some properties of the canonical discretization that come in handy when deriving sufficient conditions for an efficient approximation of $\generalPartitionFunction{\volume}{\interactionMatrix}{\fugacity}$ via $\hcPartitionFunction{\hcPPGraph{\canonicalPoints{\resolution}}}{\hcPPWeight{\canonicalPoints{\resolution}}}$.
We start with the following observation that bounds the number of vertices in $\hcPPGraph{\canonicalPoints{\resolution}}$.
\begin{observation}
	\label{obs:canonical_vertices}
	Let $(\volume, \interactionMatrix, \fugacity)$ be a hard-constraint point process with $\numTypes \in \N_{\ge 1}$ particle types and $\volume = [0, \sidelength)^{\dimension}$ for some $\sidelength \in \R_{>0}$.
	For every feasible resolution $\resolution \in \R_{>0}$ it holds that
	\[
		\size{\hcPPVertices{\canonicalPoints{\resolution}}} = \numTypes \cdot  \left(\resolution \sidelength\right)^{\dimension} = \numTypes \resolution^{\dimension} \vol{\volume} .
		\qedhere
	\]
\end{observation}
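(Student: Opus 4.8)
The plan is to prove \Cref{obs:canonical_vertices} by a direct count: first determine $\size{\canonicalPoints{\resolution}}$, then multiply by $\numTypes$, since the vertex set of $\hcPPGraph{\canonicalPoints{\resolution}}$ contains exactly one vertex per pair consisting of a grid point and a particle type.

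First I would use that $\resolution$ is a \emph{feasible} resolution, i.e.\ $\sidelength\resolution\in\N$. By the definition of the canonical point set, $\canonicalPoints{\resolution}=\frac{1}{\resolution}\grid{\sidelength\resolution}$ with $\grid{\sidelength\resolution}=[0,\sidelength\resolution)^{\dimension}\cap\N^{\dimension}$; feasibility guarantees that each coordinate of a point in $\grid{\sidelength\resolution}$ ranges exactly over the $\sidelength\resolution$ integers $0,1,\dots,\sidelength\resolution-1$, so $\size{\grid{\sidelength\resolution}}=(\sidelength\resolution)^{\dimension}$. Since scaling by the positive constant $\frac{1}{\resolution}$ is injective on $\R^{\dimension}$, it follows that $\size{\canonicalPoints{\resolution}}=(\sidelength\resolution)^{\dimension}$.

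Next I would invoke the construction of the hard-core representation: for each $x\in\canonicalPoints{\resolution}$ and each $i\in[\numTypes]$ there is a distinct vertex $\hcPPVertex{x}{i}$, and $\hcPPVertices{\canonicalPoints{\resolution}}=\bigcup_{i\in[\numTypes]}\hcPPVertices{\canonicalPoints{\resolution}}[i]$ where the sets $\hcPPVertices{\canonicalPoints{\resolution}}[i]=\{\hcPPVertex{x}{i}\mid x\in\canonicalPoints{\resolution}\}$ are pairwise disjoint and each of size $\size{\canonicalPoints{\resolution}}$. Hence $\size{\hcPPVertices{\canonicalPoints{\resolution}}}=\numTypes\,\size{\canonicalPoints{\resolution}}=\numTypes(\sidelength\resolution)^{\dimension}=\numTypes\resolution^{\dimension}\sidelength^{\dimension}$, and the claim follows from $\vol{\volume}=\sidelength^{\dimension}$.

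There is no genuine obstacle here; the only place that needs a moment of care is recording that feasibility of $\resolution$ is precisely what makes the grid $\grid{\sidelength\resolution}$ ``full'' in every coordinate, so that the count is an exact identity rather than a bound, and that the $\numTypes$ copies $\hcPPVertices{\canonicalPoints{\resolution}}[i]$ are genuinely disjoint so that no overcounting occurs.
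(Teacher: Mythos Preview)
Your argument is correct and is exactly the direct count from definitions that the paper has in mind; the paper treats this as an observation and gives no separate proof, so there is nothing further to compare.
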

Note that, by \Cref{obs:canonical_vertices}, \Cref{thm:canonical_discretization_error} implies that we only need $\bigTheta{\frac{\vol{\volume}^{2}}{\discretizationError}}$ vertices in order to ensure that $\hcPartitionFunction{\hcPPGraph{\canonicalPoints{\resolution}}}{\hcPPWeight{\canonicalPoints{\resolution}}}$ is an $\discretizationError$-approximation of $\generalPartitionFunction{\volume}{\interactionMatrix}{\fugacity}$.
Considering for instance the hard-sphere model, this improves the required number of vertices from super-exponential in $\vol{\volume}$ to quadratic, compared to \cite{friedrich2021spectral}.

Another important property of the canonical discretization are the vertex degrees.
Especially upper bounding those degrees is closely related to our ability to approximate the hard-core partition function.
To analyze how the degrees increase in terms of the resolution $\resolution$, we apply the following lemma.
\begin{lemma}
	\label{lemma:gauss_circle}
	For $\dimension \in \N_{\ge 1}$ and $s \in \R_{>0}$, let $\integerSphere{\dimension}[s]$ denote the number of integer points in the ball $\ball{s}[\mathbf{0}]$, where $\mathbf{0}$ is the origin of $\R^{\dimension}$.
	Further, let $\degreeError \in (0, 1]$.
	For all $s \ge \frac{2 \dimension^{3/2}}{\degreeError}$ it holds that
	\[
		\integerSphere{\dimension}[s] \le (1 + \degreeError) \vol{\ball{s}}.
		\qedhere
	\]
\end{lemma}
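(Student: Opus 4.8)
The plan is to use the classical unit-cube packing comparison between lattice-point counts and Euclidean volume. To each $z \in \Z^{\dimension}$ associate the half-open unit cube $Q_z = z + \big[-\tfrac12, \tfrac12\big)^{\dimension}$. These cubes are pairwise disjoint, they tile $\R^{\dimension}$, and each has Lebesgue measure $1$, so $\integerSphere{\dimension}[s]$ is exactly the volume of the union $\bigcup_z Q_z$ over all integer points $z \in \ball{s}[\mathbf{0}]$.

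First I would inflate the ball to absorb these cubes. If $y \in Q_z$, then every coordinate of $y - z$ lies in $\big[-\tfrac12, \tfrac12\big)$, hence $\dist{y}{z}^2 \le \dimension \cdot \tfrac14$, i.e.\ $\dist{y}{z} \le \tfrac{\sqrt{\dimension}}{2}$; combined with $\dist{z}{\mathbf{0}} < s$ and the triangle inequality, $\dist{y}{\mathbf{0}} < s + \tfrac{\sqrt{\dimension}}{2}$. Therefore $\bigcup_z Q_z \subseteq \ball{s + \sqrt{\dimension}/2}[\mathbf{0}]$, and by disjointness of the cubes together with monotonicity of Lebesgue measure we get $\integerSphere{\dimension}[s] \le \vol{\ball{s + \sqrt{\dimension}/2}}$.

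It then remains to bound the volume ratio. Using the scaling of volume under dilation, $\vol{\ball{s + \sqrt{\dimension}/2}} = \big(1 + \tfrac{\sqrt{\dimension}}{2s}\big)^{\dimension} \vol{\ball{s}}$, so the claim reduces to $\big(1 + \tfrac{\sqrt{\dimension}}{2s}\big)^{\dimension} \le 1 + \degreeError$. The hypothesis $s \ge \tfrac{2 \dimension^{3/2}}{\degreeError}$ gives $\tfrac{\sqrt{\dimension}}{2s} \le \tfrac{\degreeError}{4\dimension}$, hence $\big(1 + \tfrac{\sqrt{\dimension}}{2s}\big)^{\dimension} \le \big(1 + \tfrac{\degreeError}{4\dimension}\big)^{\dimension} \le \eulerE^{\degreeError/4}$ by the standard bound $1 + x \le \eulerE^{x}$. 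Finally a one-line elementary estimate shows $\eulerE^{\degreeError/4} \le 1 + \degreeError$ for all $\degreeError \in (0, 1]$: the function $\degreeError \mapsto 1 + \degreeError - \eulerE^{\degreeError/4}$ vanishes at $0$ and has derivative $1 - \tfrac14 \eulerE^{\degreeError/4} > 0$ on $[0, 1]$. Chaining these inequalities completes the argument.

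There is no genuine obstacle here; the argument is routine volume packing. The only two points that need a little care are using the \emph{centered} unit cube (rather than a corner-anchored one), so that the ball only needs to be inflated by $\sqrt{\dimension}/2$ instead of $\sqrt{\dimension}$, and verifying the final elementary inequality $\eulerE^{\degreeError/4} \le 1 + \degreeError$ on $(0, 1]$ so that the stated constant $2\dimension^{3/2}$ indeed suffices.
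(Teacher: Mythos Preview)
Your proof is correct and follows essentially the same approach as the paper: associate a unit cube to each lattice point, observe the union sits inside a slightly inflated ball, and bound the resulting volume ratio. The paper uses corner-anchored cubes $[x^{(1)}, x^{(1)}+1)\times\cdots$ (inflating by $\sqrt{\dimension}$) and finishes via $\eulerE^{\dimension^{3/2}/s} \le 1 + 2\dimension^{3/2}/s \le 1+\degreeError$, whereas you use centered cubes (inflating only by $\sqrt{\dimension}/2$) and finish via $\eulerE^{\degreeError/4}\le 1+\degreeError$; both are routine variants of the same packing argument.
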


\begin{proof}
	Consider $\ball{s + \sqrt{\dimension}}[\mathbf{0}]$ and observe that for every integer point $(x^{(1)}, \dots, x^{(\dimension)}) \in \ball{s}[\mathbf{0}] \cap \Z^{\dimension}$ it holds that the hypercube $[x^{(1)}, x^{(1)} + 1) \times \dots \times [x^{(\dimension)}, x^{(\dimension)} + 1)$ of volume $1$ is completely contained in $\ball{s + \sqrt{\dimension}}[\mathbf{0}]$.
	Further, note that for different integer points, the corresponding hypercubes are disjoint.
	Thus, we obtain
	\[
		\integerSphere{\dimension}[s] \le \vol{\ball{s + \sqrt{\dimension}}}.
	\]

	It remains to bound the volume $\vol{\ball{s + \sqrt{\dimension}}}$ to get the desired result.
	First, note that
	\[
		\vol{\ball{s + \sqrt{\dimension}}}
		= \vol{\ball{s}} \left(1 + \frac{\sqrt{\dimension}}{s}\right)^{\dimension}
		\le \eulerE^{\frac{\dimension^{3/2}}{s}} \vol{\ball{s}} .
	\]
	By $s \ge \frac{2 \dimension^{3/2}}{\degreeError}$ we have $\frac{\dimension^{3/2}}{s} \le 1$ and $\frac{2 \dimension^{3/2}}{s} \le \degreeError$.
	Thus, we get
	\[
		\eulerE^{\frac{\dimension^{3/2}}{s}}
		\le 1 + \frac{2 \dimension^{3/2}}{s}
		\le 1 + \degreeError
	\]
	which concludes the proof.
\end{proof}

Before we use this lemma to bound the vertex degrees in $\hcPPGraph{\canonicalPoints{\resolution}}$, we need to introduce some additional notation.
Let $(\volume, \interactionMatrix, \fugacity)$ be a hard-constraint point process with $\numTypes \in \N_{\ge 1}$ particle types and $\volume = [0, \sidelength)^{\dimension}$ for some $\sidelength \in \R_{>0}$.
For any feasible resolution $\resolution \in \R_{>0}$ and any vertex $\hcPPVertex{x}{i} \in \hcPPVertices{\canonicalPoints{\resolution}}$ we write $\hcPPDegree{\canonicalPoints{\resolution}}[\hcPPVertex{x}{i}]$ for the number of neighbors of $\hcPPVertex{x}{i}$ in $\hcPPGraph{\canonicalPoints{\resolution}}$.
Further, for any $j \in [\numTypes]$, we write $\hcPPDegree{\canonicalPoints{\resolution}}[\hcPPVertex{x}{i}][j]$ to denote the number of neighbors of $\hcPPVertex{x}{i}$ that are from $\hcPPVertices{\canonicalPoints{\resolution}}[j]$.
The following lemma gives a bound on the vertex degrees in the canonical discretization in terms of the resolution $\resolution$ and the volume exclusion matrix $\volumeMatrix$ of the model.

\begin{lemma}
	\label{lemma:canonical_degree_bounds}
	Let $(\volume, \interactionMatrix, \fugacity)$ be a hard-constraint point process with $\numTypes \in \N_{\ge 1}$ particle types and $\volume = [0, \sidelength)^{\dimension}$ for some $\sidelength \in \R_{>0}$.
	Let $\volumeMatrix$ be the corresponding volume exclusion matrix.
	Set $\interactionMatrix_{\min} = \inf_{i,j \in [\numTypes]} \{\interactionMatrix[i][j] \mid \interactionMatrix[i][j] > 0\}$ and let $\degreeError \in (0, 1]$.
	For all feasible resolutions $\resolution \ge \frac{2 \dimension^{3/2}}{\degreeError \interactionMatrix_{\min}}$ we have for all $i, j \in [\numTypes]$ and $x \in \canonicalPoints{\resolution}$ that
	\[
		\hcPPDegree{\canonicalPoints{\resolution}}[\hcPPVertex{x}{i}][j] \le (1 + \degreeError) \resolution^{\dimension} \volumeMatrix[i][j]
	\]
	and consequently
	\[
		\hcPPDegree{\canonicalPoints{\resolution}}[\hcPPVertex{x}{i}] \le (1 + \degreeError)  \resolution^{\dimension} \sum_{j \in [\numTypes]} \volumeMatrix[i][j] .
		\qedhere
	\]
\end{lemma}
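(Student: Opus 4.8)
The plan is to reduce the degree count to a lattice-point count in a Euclidean ball and then invoke \Cref{lemma:gauss_circle}. First I would unfold the definition of $\hcPPGraph{\canonicalPoints{\resolution}}$: for a fixed $x \in \canonicalPoints{\resolution}$ and $i, j \in [\numTypes]$, a vertex $\hcPPVertex{y}{j}$ with $y \in \canonicalPoints{\resolution}$ is a neighbor of $\hcPPVertex{x}{i}$ exactly when $\neg(x = y \text{ and } i = j)$ and $\dist{x}{y} < \interactionMatrix[i][j]$. Hence $\hcPPDegree{\canonicalPoints{\resolution}}[\hcPPVertex{x}{i}][j] \le \absolute{\set{y \in \canonicalPoints{\resolution}}{\dist{x}{y} < \interactionMatrix[i][j]}}$. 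If $\interactionMatrix[i][j] = 0$, this set is empty and $\volumeMatrix[i][j] = \vol{\ball{0}} = 0$, so both claimed inequalities hold trivially; for the remainder I would assume $\interactionMatrix[i][j] > 0$, which forces $\interactionMatrix[i][j] \ge \interactionMatrix_{\min}$.

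Next I would pass to the integer lattice by scaling by $\resolution$. Since $\canonicalPoints{\resolution} = \frac{1}{\resolution} \grid{\sidelength \resolution} \subseteq \frac{1}{\resolution} \Z^{\dimension}$ and $\resolution x \in \Z^{\dimension}$, the injective map $y \mapsto \resolution y - \resolution x$ sends $\set{y \in \canonicalPoints{\resolution}}{\dist{x}{y} < \interactionMatrix[i][j]}$ into $\ball{\resolution \interactionMatrix[i][j]}[\mathbf{0}] \cap \Z^{\dimension}$, because translation by the integer vector $-\resolution x$ preserves $\Z^{\dimension}$ and scaling by $\resolution$ multiplies distances by $\resolution$. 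Therefore $\hcPPDegree{\canonicalPoints{\resolution}}[\hcPPVertex{x}{i}][j] \le \integerSphere{\dimension}[\resolution \interactionMatrix[i][j]]$.

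Then I would invoke \Cref{lemma:gauss_circle} with $s = \resolution \interactionMatrix[i][j]$: the hypothesis $\resolution \ge \frac{2 \dimension^{3/2}}{\degreeError \interactionMatrix_{\min}}$ together with $\interactionMatrix[i][j] \ge \interactionMatrix_{\min}$ gives $s \ge \frac{2 \dimension^{3/2}}{\degreeError}$, so $\integerSphere{\dimension}[\resolution \interactionMatrix[i][j]] \le (1 + \degreeError) \vol{\ball{\resolution \interactionMatrix[i][j]}}$. Since $\ball{\resolution \interactionMatrix[i][j]} = \resolution\, \ball{\interactionMatrix[i][j]}$ and Lebesgue measure scales by $\resolution^{\dimension}$ under dilation by $\resolution$, we obtain $\vol{\ball{\resolution \interactionMatrix[i][j]}} = \resolution^{\dimension} \vol{\ball{\interactionMatrix[i][j]}} = \resolution^{\dimension} \volumeMatrix[i][j]$, which is the first bound. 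The second bound follows by summing over $j \in [\numTypes]$, using that the sets $\hcPPVertices{\canonicalPoints{\resolution}}[j]$ partition $\hcPPVertices{\canonicalPoints{\resolution}}$, so $\hcPPDegree{\canonicalPoints{\resolution}}[\hcPPVertex{x}{i}] = \sum_{j \in [\numTypes]} \hcPPDegree{\canonicalPoints{\resolution}}[\hcPPVertex{x}{i}][j]$.

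There is no genuine obstacle in this argument; the only points needing a little care are the degenerate case $\interactionMatrix[i][j] = 0$, keeping the open-ball convention consistent with \Cref{lemma:gauss_circle}, and the observation that integer translations preserve $\Z^{\dimension}$, so the lattice-point count reduces to a ball centered at the origin, where \Cref{lemma:gauss_circle} directly applies.
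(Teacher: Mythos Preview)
Your proof is correct and follows essentially the same approach as the paper's: bound the type-$j$ degree by the lattice-point count $\integerSphere{\dimension}[\resolution\interactionMatrix[i][j]]$, apply \Cref{lemma:gauss_circle}, and sum over $j$. You are in fact slightly more careful than the paper in explicitly treating the degenerate case $\interactionMatrix[i][j]=0$ and in spelling out the scaling map $y\mapsto\resolution y-\resolution x$ that reduces to a ball centered at the origin.
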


\begin{proof}
	First, note that the bound on $\hcPPDegree{\canonicalPoints{\resolution}}[\hcPPVertex{x}{i}]$ follows immediately from the bound on $\hcPPDegree{\canonicalPoints{\resolution}}[\hcPPVertex{x}{i}][j]$ as the vertex sets $\hcPPVertices{\canonicalPoints{\resolution}}[j]$ for $j \in [\numTypes]$ are a partitioning of $\hcPPVertices{\canonicalPoints{\resolution}}$.

	We proceed by bounding $\hcPPDegree{\canonicalPoints{\resolution}}[\hcPPVertex{x}{i}][j]$.
	Let $y \in \canonicalPoints{\resolution}$ and observe that there is only an edge between $\hcPPVertex{x}{i}$ and $\hcPPVertex{y}{j}$ in $\hcPPGraph{\canonicalPoints{\resolution}}$ if $\dist{x}{y} < \interactionMatrix[i][j]$.
	Further, note that for every fixed $i, j \in [\numTypes]$ and every $x \in \canonicalPoints{\resolution}$ there are at most $\integerSphere{\dimension}[\resolution \interactionMatrix[i][j]]$ points $y \in \canonicalPoints{\resolution}$ such that $\dist{x}{y} < \interactionMatrix[i][j]$.
	This is
	\[
		\hcPPDegree{\canonicalPoints{\resolution}}[\hcPPVertex{x}{i}][j]
		\le \integerSphere{\dimension}[\resolution \interactionMatrix[i][j]] .
	\]

	Further, note that for $\resolution \ge \frac{2\dimension^{3/2}}{\degreeError \interactionMatrix_{\min}}$ we have $\resolution \interactionMatrix[i][j] \ge \frac{2 \dimension^{3/2}}{\degreeError}$.
	Thus, using \Cref{lemma:gauss_circle}, we get
	\[
		\hcPPDegree{\canonicalPoints{\resolution}}[\hcPPVertex{x}{i}][j]
		\le (1 + \degreeError) \vol{\ball{\resolution \interactionMatrix[i][j]}}
		=  (1 + \degreeError) \resolution^{\dimension} \volumeMatrix[i][j] ,
	\]
	which proves the claim.
\end{proof}

\subsection{Tightness of the error bound} \label{subsec:tightness}
In what follows, we argue that the bound for the number of points, required for the canonical discretization, might be considered as asymptotically tight in the volume $\vol{\volume}$ and the discretization error $\discretizationError$.
Recall that, by \Cref{thm:canonical_discretization_error}, we require a resolution of $\resolution \in \bigTheta{\discretizationError^{-1/\dimension} \vol{\volume}^{-1/\dimension}}[\big]$, resulting in $\size{\canonicalPoints{\resolution}} \in \bigTheta{\discretizationError^{-1} \vol{\volume}^2}$ points to ensure that
\[
	\eulerE^{-\discretizationError} \generalPartitionFunction{\volume}{\interactionMatrix}{\fugacity}
	\le \hcPartitionFunction{\hcPPGraph{\canonicalPoints{\resolution}}}{\hcPPWeight{\canonicalPoints{\resolution}}}
	\le \eulerE^{\discretizationError} \generalPartitionFunction{\volume}{\interactionMatrix}{\fugacity}.
\]

To argue that this is in general optimal, we construct a simple case in which we actually require $\bigOmega{\discretizationError^{-1} \vol{\volume}^2}$.
Consider a hard-constrain point process $(\volume, \interactionMatrixZero, \fugacity)$ with $\numTypes = 1$ particle type and $\interactionMatrixZero$ being the $1 \times 1$-matrix with entry $0$.
That is, $(\volume, \interactionMatrixZero, \fugacity)$ is a simple Poisson point process on $\volume$ with intensity $\fugacity$ (we treat $\fugacity$ as a constant here and omit the function notation).
The partition function of $(\volume, \interactionMatrixZero, \fugacity)$ is then
\[
	\generalPartitionFunction{\volume}{\interactionMatrixZero}{\fugacity} = \eulerE^{\fugacity \vol{\volume}} .
\]
Note that for all $\numPoints \in \N_{\ge 1}$ and all $X \subset \volume$ with $\size{X} = \numPoints$, it holds that
\[
	\hcPartitionFunction{\hcPPGraph{X}}{\hcPPWeight{X}}
	= \left(1 + \hcPPWeight{X}\right)^{\size{X}}
	= \left(1 + \frac{\fugacity \vol{\volume}}{\numPoints}\right)^{\numPoints} ,
\]
independent of the specific set of points $X$.

We proceed by lower-bounding the multiplicative error between these two partition functions.
The following elementary bound helps us to do so.
\begin{lemma}
	\label{lemma:quadratic_gap}
	For all $x, y \in \R_{>0}$ with $y \ge x$, it holds that
	\[
		\left(1 + \frac{x}{y}\right)^{y}
		\le \eulerE^{-\frac{x^2}{6 y}} \cdot \eulerE^{x} .
		\qedhere
	\]
\end{lemma}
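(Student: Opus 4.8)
The plan is to pass to logarithms and reduce the statement to an elementary single-variable inequality. Writing $u = x/y$, the hypothesis $y \ge x > 0$ says precisely that $u \in (0, 1]$, and dividing the logarithmic form of the claimed inequality by $y$ shows that it is equivalent to
\[
	\ln(1 + u) \le u - \frac{u^2}{6} \qquad \text{for all } u \in (0, 1] .
\]

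To prove this, I would first record the cubic upper bound $\ln(1 + u) \le u - \frac{u^2}{2} + \frac{u^3}{3}$, valid for every $u \ge 0$: indeed $\ln(1+u) = \int_{0}^{u} \frac{1}{1+s} \, \d s = u - \frac{u^2}{2} + \int_{0}^{u} \frac{s^2}{1+s} \, \d s$, and $\frac{s^2}{1+s} \le s^2$ gives $\int_{0}^{u} \frac{s^2}{1+s}\,\d s \le \frac{u^3}{3}$. Now, since $u \le 1$ implies $u^3 \le u^2$, we obtain $-\frac{u^2}{2} + \frac{u^3}{3} \le -\frac{u^2}{2} + \frac{u^2}{3} = -\frac{u^2}{6}$, which is exactly the displayed inequality. (Alternatively one can set $g(u) = u - \frac{u^2}{6} - \ln(1+u)$ and verify $g(0) = 0$ together with $g'(u) = 1 - \frac{u}{3} - \frac{1}{1+u} \ge 0$ on $[0,1]$, but the integral argument is shorter.)

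Finally I would substitute $u = x/y$ back and multiply by $y > 0$ to get $y \ln\!\big(1 + \frac{x}{y}\big) \le x - \frac{x^2}{6y}$, i.e.\ $\ln\!\big((1 + \frac{x}{y})^{y}\big) \le x - \frac{x^2}{6y}$; exponentiating yields $(1 + \frac{x}{y})^{y} \le \eulerE^{x}\,\eulerE^{-x^2/(6y)}$, as required. There is no real obstacle in this argument; the only point that deserves care is that the reduction uses $u = x/y \le 1$ in an essential way — namely to pass from $\frac{u^3}{3}$ to $\frac{u^2}{3}$ — which is exactly where the hypothesis $y \ge x$ is consumed. Without it the constant $\frac{1}{6}$ would have to be weakened by a factor depending on the ratio $x/y$.
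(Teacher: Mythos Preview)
Your proof is correct and follows essentially the same route as the paper's: both take logarithms, establish the cubic bound $\ln(1+u) \le u - \tfrac{u^2}{2} + \tfrac{u^3}{3}$ for $u = x/y \in (0,1]$, and then use $u^3 \le u^2$ to collapse this to $u - \tfrac{u^2}{6}$. The only cosmetic difference is that the paper obtains the cubic bound by truncating the alternating Taylor series of $\ln(1+u)$, whereas you derive it from the integral identity $\ln(1+u) = u - \tfrac{u^2}{2} + \int_0^u \tfrac{s^2}{1+s}\,\d s$; both justifications are standard and equivalent here.
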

\begin{proof}
	First note that
	\[
		\left(1 + \frac{x}{y}\right)^{y} = \eulerE^{y \ln\left(1 + \frac{x}{y}\right)} .
	\]
	Further, as $0 \le \frac{x}{y} \le 1$, we use the Taylor expansion at~$1$ and obtain
	\begin{align*}
		\ln\left(1 + \frac{x}{y}\right)
		= \sum_{k \in \N_{\ge 1}} (-1)^{k+1} \frac{x^k}{k y^k}
		\le \frac{x}{y} - \frac{x^2}{2 y^2} + \frac{x^3}{3 y^3}
		\le \frac{x}{y} - \frac{x^2}{6 y^2} .
	\end{align*}
	Thus, we conclude that
	\[
		\left(1 + \frac{x}{y}\right)^{y}
		\le \eulerE^{y \left(\frac{x}{y} - \frac{x^2}{6 y^2}\right)}
		= \eulerE^{-\frac{x^2}{6 y}} \cdot \eulerE^{x} .
        \qedhere
	\]
\end{proof}

The following statement is an implication of \Cref{lemma:quadratic_gap}.
\begin{lemma}
	\label{lemma:quadratic_gap_handy}
	Let $x > 6$ and $\discretizationError \in (0, 1]$.
	For all $y \in \big(0, \frac{x^2}{6 \discretizationError}\big)$, it holds that
	\[
		\left(1 + \frac{x}{y}\right)^{y} < \eulerE^{-\discretizationError} \cdot \eulerE^{x} .
        \qedhere
	\]
\end{lemma}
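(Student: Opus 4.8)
The plan is to exploit the monotonicity of the map $y \mapsto \left(1 + \frac{x}{y}\right)^{y}$ in order to reduce the claim to its worst case, the right endpoint $y_{0} \defeq \frac{x^{2}}{6 \discretizationError}$ of the given interval, and then to invoke \Cref{lemma:quadratic_gap} at that single point. First I would establish that, for fixed $x > 0$, the function $g(y) \defeq \left(1 + \frac{x}{y}\right)^{y}$ is strictly increasing on $\R_{>0}$. Granting this, for every $y \in \left(0, y_{0}\right)$ we get $g(y) < g(y_{0})$, so it suffices to bound $g(y_{0})$ from above by $\eulerE^{-\discretizationError} \eulerE^{x}$.

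To bound $g(y_{0})$, note that $\discretizationError \le 1$ together with $x > 6$ gives $y_{0} = \frac{x^{2}}{6 \discretizationError} \ge \frac{x^{2}}{6} > x$, so the hypothesis $y_{0} \ge x$ of \Cref{lemma:quadratic_gap} is satisfied. Applying that lemma with the roles $x \mapsto x$ and $y \mapsto y_{0}$ yields $g(y_{0}) = \left(1 + \frac{x}{y_{0}}\right)^{y_{0}} \le \eulerE^{-\frac{x^{2}}{6 y_{0}}} \cdot \eulerE^{x}$, and since $\frac{x^{2}}{6 y_{0}} = \frac{x^{2}}{6} \cdot \frac{6 \discretizationError}{x^{2}} = \discretizationError$, this is exactly $\eulerE^{-\discretizationError} \eulerE^{x}$. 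Combining with the strict monotonicity from the first step gives $g(y) < \eulerE^{-\discretizationError} \eulerE^{x}$ for all $y \in \left(0, y_{0}\right)$, which is the assertion.

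The only non-routine ingredient, and the step I would be most careful about, is the monotonicity of $g$. I would verify it by differentiating $\ln g(y) = y \ln\!\left(1 + \frac{x}{y}\right)$, which gives $\frac{\d}{\d y} \ln g(y) = \ln\!\left(1 + u\right) - \frac{u}{1+u}$ with $u = \frac{x}{y}$, and then observing that $\ln(1+u) > \frac{u}{1+u}$ for all $u > 0$ (both sides vanish at $u = 0$, and the left side has the strictly larger derivative $\frac{1}{1+u} > \frac{1}{(1+u)^{2}}$ for $u > 0$). Hence $\ln g$, and therefore $g$, is strictly increasing. As a fallback that avoids calculus altogether, one can instead split into the cases $y \ge x$ — where \Cref{lemma:quadratic_gap} applies directly and $y < y_{0}$ already forces $\frac{x^{2}}{6y} > \discretizationError$ — and $0 < y < x$ — where writing $u = \frac{x}{y} > 1$ and using that $u \mapsto \frac{\ln(1+u)}{u}$ is decreasing yields $\ln g(y) \le x \ln 2 = x - x(1 - \ln 2) < x - \discretizationError$, since $x > 6$ and $\discretizationError \le 1$.
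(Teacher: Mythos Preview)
Your proof is correct and takes essentially the same route as the paper: both arguments reduce the claim to \Cref{lemma:quadratic_gap} via the monotonicity of $y \mapsto (1+x/y)^y$, and your fallback case split $y \ge x$ versus $y < x$ mirrors the paper's own structure. The only cosmetic difference is that the paper obtains the comparison $(1+x/y)^y \le (1+x/y_0)^{y_0}$ for $y < y_0$ via Bernoulli's inequality $(1+x/y_0)^{y_0/y} \ge 1 + x/y$ rather than by differentiating $\ln g$.
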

\begin{proof}
	Let $y_0 \in [x, \frac{x^2}{2 \discretizationError})$ and observe that by \Cref{lemma:quadratic_gap} we have
	\[
		\left(1 + \frac{x}{y_0}\right)^{y_0}
		\le \eulerE^{- \frac{x^2}{6 y_0}} \cdot \eulerE^{x}
		< \eulerE^{-\discretizationError} \cdot \eulerE^{x} .
	\]
	Further, let $y \in (0, x)$ and note that $y < y_0$.
	Thus, by Bernoullis inequality, we have
	\[
		1 + \frac{x}{y} \le \left(1 + \frac{x}{y_0}\right)^{\frac{y_0}{y}} .
	\]
	It follows that 
	\[
		\left(1 + \frac{x}{y}\right)^{y}
		\le \left(1 + \frac{x}{y_0}\right)^{y_0}
		< \eulerE^{-\discretizationError} \cdot \eulerE^{x} ,
	\]
	which proves the claim.
\end{proof}

We obtain the following result.
\begin{proposition}
	\label{prop:lower_bound}
	Let $(\volume, \interactionMatrixZero, \fugacity)$ be a hard-constraint point process with $\numTypes = 1$ particle type and $\interactionMatrixZero$ being the $1 \times 1$-matrix with entry $0$.
	Assume that $\vol{\volume} > \frac{6}{\fugacity}$.
	For all $\discretizationError \in (0, 1]$ and all non-empty $X \subset \volume$ with $\size{X} = \numPoints < \frac{\lambda^2 \vol{\volume}^2}{6 \discretizationError}$, it holds that
	\[
		\hcPartitionFunction{\hcPPGraph{X}}{\hcPPWeight{X}} < \eulerE^{-\discretizationError} \generalPartitionFunction{\volume}{\interactionMatrixZero}{\fugacity} ,
	\]
	meaning that $\hcPartitionFunction{\hcPPGraph{X}}{\hcPPWeight{X}}$ is \emph{not} an $\discretizationError$-approximation of $\generalPartitionFunction{\volume}{\interactionMatrixZero}{\fugacity}$.
\end{proposition}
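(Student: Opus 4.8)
The plan is to reduce the statement to a direct application of \Cref{lemma:quadratic_gap_handy}. The first step is to observe that the hard-core representation of $(\volume, \interactionMatrixZero, \fugacity)$ based on any $X$ is an \emph{edgeless} graph: the only relevant entry of $\interactionMatrixZero$ is $\interactionMatrixZero[1][1] = 0$, and the edge condition $\dist{x}{y} < 0$ is never satisfied since Euclidean distances are non-negative. Hence every subset of $\hcPPVertices{X}$ is independent, and with the uniform vertex weight $\hcPPWeight{X} = \frac{\fugacity\vol{\volume}}{\size{X}}$ one gets, exactly as recorded in the excerpt above, $\hcPartitionFunction{\hcPPGraph{X}}{\hcPPWeight{X}} = (1 + \hcPPWeight{X})^{\size{X}} = \big(1 + \frac{\fugacity\vol{\volume}}{\numPoints}\big)^{\numPoints}$, while on the continuous side $\generalPartitionFunction{\volume}{\interactionMatrixZero}{\fugacity} = \eulerE^{\fugacity\vol{\volume}}$ because an unconstrained mixture of Poisson point processes has partition function $\eulerE^{\fugacity\vol{\volume}}$.

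Next I would set $x \defeq \fugacity\vol{\volume}$ and $y \defeq \numPoints = \size{X}$ and check the hypotheses of \Cref{lemma:quadratic_gap_handy}. The assumption $\vol{\volume} > 6/\fugacity$ is precisely $x > 6$, and the assumption $\numPoints < \frac{\fugacity^2\vol{\volume}^2}{6\discretizationError}$ is precisely $0 < y < \frac{x^2}{6\discretizationError}$ (recall $\fugacity$ is denoted $\lambda$). Applying \Cref{lemma:quadratic_gap_handy} then gives $\big(1 + \frac{\fugacity\vol{\volume}}{\numPoints}\big)^{\numPoints} < \eulerE^{-\discretizationError}\eulerE^{\fugacity\vol{\volume}}$, i.e.\ $\hcPartitionFunction{\hcPPGraph{X}}{\hcPPWeight{X}} < \eulerE^{-\discretizationError}\generalPartitionFunction{\volume}{\interactionMatrixZero}{\fugacity}$, so the lower inequality defining an $\discretizationError$-approximation fails. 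This completes the argument.

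Since \Cref{lemma:quadratic_gap,lemma:quadratic_gap_handy} are already in place, there is essentially no obstacle at this stage; the only thing requiring care is the bookkeeping of constants — keeping track that the $\frac16$ coming from the Taylor truncation $\ln(1+t) \le t - \tfrac{t^2}{6}$ on $[0,1]$ is exactly the constant that appears in the threshold $\frac{x^2}{6\discretizationError}$, and that \Cref{lemma:quadratic_gap_handy} already extends the bound from $y \ge x$ down to all $y \in (0,x)$ via Bernoulli's inequality, so that every admissible value of $\numPoints$ is covered. After the proof I would add a short remark that this matches, up to constants, the $\bigTheta{\discretizationError^{-1}\vol{\volume}^2}$ point count of \Cref{thm:canonical_discretization_error}, so the bound on the number of points for the canonical discretization is asymptotically optimal in $\vol{\volume}$ and $\discretizationError$ for general hard-constraint point processes.
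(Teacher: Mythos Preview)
Your proposal is correct and follows essentially the same approach as the paper: recall the explicit forms $\generalPartitionFunction{\volume}{\interactionMatrixZero}{\fugacity} = \eulerE^{\fugacity\vol{\volume}}$ and $\hcPartitionFunction{\hcPPGraph{X}}{\hcPPWeight{X}} = (1+\fugacity\vol{\volume}/\numPoints)^{\numPoints}$, then apply \Cref{lemma:quadratic_gap_handy} with $x = \fugacity\vol{\volume}$ and $y = \numPoints$. Your added justification for why the graph is edgeless and the closing remark on asymptotic optimality are fine but already appear in the surrounding text.
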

\begin{proof}
	Recall that
	\[
		\generalPartitionFunction{\volume}{\interactionMatrixZero}{\fugacity} = \eulerE^{\fugacity \vol{\volume}}
		\text{  and  }
		\hcPartitionFunction{\hcPPGraph{X}}{\hcPPWeight{X}} = \left(1 + \frac{\fugacity \vol{\volume}}{\numPoints}\right)^{\numPoints} .
	\]
	The claim follows from applying \Cref{lemma:quadratic_gap_handy} with $x = \fugacity \vol{\volume} > 6$ and $y = \numPoints \in \big(0, \frac{x^2}{6 \discretizationError}\big)$.
\end{proof}

We would like to add a few remarks about \Cref{prop:lower_bound}.
Firstly, note that it applies to all measurable and bounded regions $\volume$ with sufficiently large volume.
Thus, it especially holds for cubic regions $\volume = [0, \sidelength)^{\dimension}$ as investigated in \Cref{thm:canonical_discretization_error}.
Furthermore, the considered hard-constraint point process is the most simple one that can be constructed, namely a point process without any constraints.
This means, \Cref{prop:lower_bound} does not give a lower bound for the number of points, required for discretizing any given hard-constraint point process.
However, it gives a lower bound for the general class of models, and intuition suggests that additional constraints should not decrease the number of points that are required.
Finally, the result only applies to our method of discretization.
That is, it proves that our bounds for the number of points are tight for our discretization method, but it does not exclude the existence of more efficient discretizations.

\section{Efficient approximation} \label{sec:algo}
In this section, we investigate sufficient conditions for approximating the partition function of hard-constraint point processes on cubic regions $\volume = [0, \sidelength)^{\dimension}$ with $\sidelength \in \R_{>0}$, using the canonical discretization proposed in \Cref{sec:canonical_discretization}.
The characterization of cases for which we give an efficient approximation of the partition function is based on the volume exclusion matrix~\volumeMatrix, defined in \Cref{sec:model}.
We split this up into two main parts.

First, we investigate a condition that is especially useful in the setting of uniform fugacities (i.e., $\fugacity$ is a constant function on $[\numTypes]$).
In this setting, the canonical discretization results in a univariate hard-core model, which allows us to apply the variety of algorithmic tools that are known for this model.

In the second part, we focus on the case of non-uniform fugacities.
Our algorithms in this setting are based on tools that where initially developed for abstract polymer models, which are multivariate hard-core models.

\subsection{Uniform fugacity}
Consider a hard-constraint point process $(\volume, \interactionMatrix, \fugacity)$ with $\numTypes \in \N_{\ge 1}$ particle types where the fugacity $\fugacity$ is a constant function on $[\numTypes]$.
To simplify notation, we use $\fugacity$ to denote the value of that function and omit the function notation here.

Note that for any fixed finite non-empty set of points $X \subseteq \volume$ with corresponding hard-core representation $(\hcPPGraph{X}, \hcPPWeight{X})$ this implies that $\hcPPWeight{X}[\hcPPVertex{x}{i}] = \fugacity \frac{\vol{\volume}}{\size{X}}$ for all $x \in X$ and $i \in [\numTypes]$.
That is, $\hcPPWeight{X}$ is a constant function on $\hcPPVertices{X}$ and $(\hcPPGraph{X}, \hcPPWeight{X})$ is a univariate hard-core model.
For the sake of brevity, we write $\hcPPWeight{X} = \fugacity \frac{\vol{\volume}}{\size{X}}$ to denote that constant.

Based on \Cref{thm:hc_fpras_univariate} and \Cref{thm:hc_fptas_univariate} we can now obtain our main approximation result for the partition function of hard-constraint point processes with uniform fugacities.

\uniformapprox
% \begin{theorem}
% 	\label{thm:uniform_approx}
% 	Let $(\volume, \interactionMatrix, \fugacity)$ be a hard-constraint point process with $\numTypes \in \N_{\ge 1}$ particle types, $\volume = [0, \sidelength)^{\dimension}$ for some $\sidelength \in \R_{>0}$ and $\fugacity$ being a constant.
% 	Let $\volumeMatrix$ be the corresponding volume exclusion matrix and denote by $\lOneNorm{\volumeMatrix}$ its $L_{1}$-norm.
% 	If
% 	\[
% 		\fugacity < \frac{\eulerE}{\lOneNorm{\volumeMatrix}},
% 	\]
% 	then for all $\approxError \in (0, 1]$ there is randomized $\approxError$-approximation algorithm for $\generalPartitionFunction{\volume}{\interactionMatrix}{\fugacity}$ with running time in $\poly{\frac{\vol{\volume}}{\approxError}}$ and a quasipolynomial deterministic approximation algorithm for $\generalPartitionFunction{\volume}{\interactionMatrix}{\fugacity}$ with running time $\left(\frac{\vol{\volume}}{\approxError}\right)^{\bigTheta{\ln \left(\vol{\volume}/\approxError\right)}}$.
% \end{theorem}

\begin{proof}
	Let
	\[
		\fugacity = \left(1 - \fugacityGap\right) \frac{\eulerE}{\lOneNorm{\volumeMatrix}}
	\]
	for some $\fugacityGap \in (0, 1]$.
	Set $\degreeError = \fugacityGap' = \frac{\fugacityGap}{2}$.

	By \Cref{thm:canonical_discretization_error} we know that we can choose a resolution $\resolution \in \bigTheta{\approxError^{-1/\dimension} \vol{\volume}^{1/\dimension}}$ such that the canonical discretization $(\hcPPGraph{\canonicalPoints{\resolution}}, \hcPPWeight{\canonicalPoints{\resolution}})$ satisfies
	\[
		\eulerE^{-\frac{\approxError}{2}}\generalPartitionFunction{\volume}{\interactionMatrix}{\fugacity}
		\le \hcPartitionFunction{\hcPPGraph{\canonicalPoints{\resolution}}}{\hcPPWeight{\canonicalPoints{\resolution}}}
		\le \eulerE^{\frac{\approxError}{2}}\generalPartitionFunction{\volume}{\interactionMatrix}{\fugacity} .
	\]

	To approximate the hard-core partition function of $(\hcPPGraph{\canonicalPoints{\resolution}}, \hcPPWeight{\canonicalPoints{\resolution}})$ we aim for applying \Cref{thm:hc_fpras_univariate} and \Cref{thm:hc_fptas_univariate}.
	Thus, we need to bound the maximum degree of $\hcPPGraph{\canonicalPoints{\resolution}}$ and compare $\hcPPWeight{\canonicalPoints{\resolution}}$ with the corresponding tree threshold.

	Set $\interactionMatrix_{\min} = \min_{i,j \in [\numTypes]} \{\interactionMatrix[i][j] \mid \interactionMatrix[i][j] > 0\}$ and note that, for $\vol{\volume}$ sufficiently large, we can assume that $\resolution \ge \frac{2 \dimension^{3/2}}{\degreeError \interactionMatrix_{\min}}$.
	Thus, by \Cref{lemma:canonical_degree_bounds} we have
	\[
		\hcPPDegree{\canonicalPoints{\resolution}}[\hcPPVertex{x}{i}]
		\le (1 + \degreeError)  \resolution^{\dimension} \sum_{j \in [\numTypes]} \volumeMatrix[i][j] .
	\]
	Let $\hcPPDegree{\canonicalPoints{\resolution}}^{*}$ denote the maximum degree in $\hcPPGraph{\canonicalPoints{\resolution}}$ and observe that
	\[
		\hcPPDegree{\canonicalPoints{\resolution}}^{*} = \max_{x \in \canonicalPoints{\resolution}, i \in [\numTypes]} \hcPPDegree{\canonicalPoints{\resolution}}[\hcPPVertex{x}{i}] \le (1 + \degreeError) \resolution^{\dimension} \lOneNorm{\volumeMatrix}.
	\]
	Further, note that
	\[
		\hcPPWeight{\canonicalPoints{\resolution}}
		= \fugacity \frac{\vol{\volume}}{\size{\canonicalPoints{\resolution}}}
		= \frac{\fugacity}{\resolution^{\dimension}}
	\]
	and so we get
	\[
		\hcPPWeight{\canonicalPoints{\resolution}}
		= (1 - \fugacityGap) \frac{\eulerE}{\resolution^{\dimension} \lOneNorm{\volumeMatrix}}
		\le (1 - \fugacityGap') \frac{\eulerE}{(1+\degreeError) \resolution^{\dimension} \lOneNorm{\volumeMatrix}}
		\le (1 - \fugacityGap') \frac{\eulerE}{\hcPPDegree{\canonicalPoints{\resolution}}^{*}} .
	\]
	Now, observe that $\criticalWeight{\degree}$ converges to $\frac{\eulerE}{\degree}$ from above for large $\degree$.
	Thus, we obtain
	\[
		\hcPPWeight{\canonicalPoints{\resolution}}
		\le (1 - \fugacityGap') \criticalWeight{\hcPPDegree{\canonicalPoints{\resolution}}^{*}},
	\]
	where $\fugacityGap' > 0$.

	Using \Cref{obs:canonical_vertices}, we know that
	\[
		\size{\hcPPVertices{\canonicalPoints{\resolution}}} = \numTypes \resolution^{\dimension} \vol{\volume} \in \bigTheta{\frac{\vol{\volume}^2}{\approxError}} .
	\]
	By \Cref{thm:hc_fpras_univariate} we can immediately conclude the existence of a randomized approximation within the desired running time bounds.
	Further, note that
	\[
		\hcPPDegree{\canonicalPoints{\resolution}}^{*} \le (1 + \degreeError) \resolution^{\dimension} \lOneNorm{\volumeMatrix} \in \bigTheta{\frac{\vol{\volume}}{\approxError}} .
	\]
	Thus, applying \Cref{thm:hc_fptas_univariate} we obtain the deterministic approximation, which concludes the proof.
\end{proof}

We now demonstrate which bounds can be obtained based on \Cref{thm:uniform_approx} for models commonly studied in statistical physics.
We start with the monoatomic hard-sphere model.

% \begin{corollary}\label{cor:monoatomic_hs}
% 	Let $\volume = [0, \sidelength)^{\dimension}$ for some $\sidelength \in \R_{>0}$.
% 	Further, let $\radius \in \R_{>0}$ and $\fugacity \in \R_{\ge 0}$.
% 	Denote by $\hsPartitionFunction{\volume}{\radius}{\fugacity}$ the hard-sphere partition function on $\volume$ with particles of radius $\radius$ and fugacity~$\fugacity$.
% 	If
% 	\[
% 		\fugacity < \frac{\eulerE}{2^{\dimension} \vol{\ball{\radius}}}
% 	\]
% 	 then for all $\approxError \in (0, 1]$ there is a randomized $\approxError$-approximation algorithm for $\hsPartitionFunction{\volume}{\radius}{\fugacity}$ with running time in $\poly{\frac{\vol{\volume}}{\approxError}}$ and a quasipolynomial deterministic $\approxError$-approximation algorithm for $\hsPartitionFunction{\volume}{\radius}{\fugacity}$ with running time $\left(\frac{\vol{\volume}}{\approxError}\right)^{\bigTheta{\ln \left(\vol{\volume}/\approxError\right)}}$.
% \end{corollary}

\monoatomichs
% \begin{corollary}\label{cor:monoatomic_hs}
% 	Let $\volume = [0, \sidelength)^{\dimension}$ for some $\sidelength \in \R_{>0}$.
% 	Further, let $\radius \in \R_{>0}$ and $\fugacity \in \R_{\ge 0}$.
% 	Denote by $\hsPartitionFunction{\volume}{\radius}{\fugacity}$ the hard-sphere partition function on $\volume$ with particles of radius $\radius$ and fugacity $\fugacity$.
% 	If
% 	\[
% 		\fugacity < \frac{\eulerE}{2^{\dimension} \vol{\ball{\radius}}}
% 	\]
% 	 then for all $\approxError \in (0, 1]$ there is a randomized $\approxError$-approximation algorithm for $\hsPartitionFunction{\volume}{\radius}{\fugacity}$ with running time in $\poly{\frac{\vol{\volume}}{\approxError}}$ and a quasipolynomial deterministic $\approxError$-approximation algorithm for $\hsPartitionFunction{\volume}{\radius}{\fugacity}$ with running time $\left(\frac{\vol{\volume}}{\approxError}\right)^{\bigTheta{\ln \left(\vol{\volume}/\approxError\right)}}$.
% \end{corollary}

\Cref{cor:monoatomic_hs} follows from \Cref{thm:uniform_approx} by observing that the interaction matrix corresponding to the hard-sphere model is a $1 \times 1$ matrix containing the value $2 \radius$ and that $\lOneNorm{\volumeMatrix} = \vol{\ball{2 \radius}} = 2^{\dimension} \vol{\ball{\radius}}$.
Due to the fact that we improved the number of points, required for discretization, from super-exponential in $\vol{\volume}$ to quadratic, we can run Glauber dynamics in order to obtain the approximation. 
This greatly simplifies the discretization based randomized algorithm, introduced by \textcite{friedrich2021spectral}.
Further, we present the first efficient deterministic approximation.
This partially answers an open question, posed by \textcite{perkinsAlgorithm}.

\uniformwr
%
% \begin{corollary}\label{cor:uniform_wr}
% 	Let $\volume = [0, \sidelength)^{\dimension}$ for some $\sidelength \in \R_{>0}$.
% 	Further, let $\radius \in \R_{>0}$, $\fugacity \in \R_{\ge 0}$ and $\numTypes \in \N_{\ge 1}$.
% 	Denote by $\wrPartitionFunction{\volume}{\radius}{\fugacity}$ the Widom-Rowlinson partition function on $\volume$ with $\numTypes$ particle types, each of radius $\radius$ and fugacity $\fugacity$.
% 	If
% 	\[
% 		\fugacity < \frac{\eulerE}{(\numTypes - 1) 2^{\dimension} \vol{\ball{\radius}}}
% 	\]
% 	then for all $\approxError \in (0, 1]$ there is a randomized $\approxError$-approximation algorithm for $\wrPartitionFunction{\volume}{\radius}{\fugacity}$ with running time in $\poly{\frac{\vol{\volume}}{\approxError}}$ and a quasipolynomial deterministic $\approxError$-approximation algorithm for $\wrPartitionFunction{\volume}{\radius}{\fugacity}$ with running time $\left(\frac{\vol{\volume}}{\approxError}\right)^{\bigTheta{\ln \left(\vol{\volume}/\approxError\right)}}$.
% \end{corollary}

\Cref{cor:uniform_wr} follows from \Cref{thm:uniform_approx} by observing that the interaction matrix is a $\numTypes \times \numTypes$ matrix, which is $0$ on the diagonal and $2 \radius$ otherwise.
To the best of our knowledge, no rigorous computational result for this model has been known before.

\begin{remark}
	\label{remark:univariate_to_multivariate}
	In fact, based on \Cref{remark:univariate_hc} we can apply \Cref{thm:uniform_approx} also in the non-uniform setting when the condition is replaced by
	\[
		\fugacity_{\max} < \frac{\eulerE}{\lOneNorm{\volumeMatrix}},
	\]
	where $\fugacity_{\max} = \max_{i \in [\numTypes]} \fugacity[i]$.
%	Even though \Cref{thm:hc_fpras_univariate} and \Cref{thm:hc_fptas_univariate} are only stated for the unvariate hard-core model, the results can be extended to the multivariate setting by considering the maximum of the fugacities.
\end{remark}

\subsection{Non-uniform fugacity}
We proceed by looking at which conditions for efficient approximation we can get for hard-constraint point processes with non-uniform fugacities based on \Cref{thm:clique_dynamics_condition_approx}.
\nonUniformApprox
% \todo{find argument to remove running time dependency on $f$}
% \begin{theorem}
% 	\label{thm:non_uniform_approx}
% 	Let $(\volume, \interactionMatrix, \fugacity)$ be a hard-constraint point process with $\numTypes \in \N_{\ge 1}$ particle types and $\volume = [0, \sidelength)^{\dimension}$ for some $\sidelength \in \R_{>0}$.
% 	Further, let $\volumeMatrix$ be the corresponding volume exclusion matrix.
% 	If there is a function $f: [\numTypes] \to \R_{>0}$ such that for all $i \in [\numTypes]$ it holds that
% 	\[
% 		f(i) > \sum_{j \in [\numTypes]} \volumeMatrix[i][j] f(j) \fugacity[j] ,
% 	\]
% 	then for all $\approxError \in (0, 1]$ there is randomized $\approxError$-approximation algorithm for $\generalPartitionFunction{\volume}{\interactionMatrix}{\fugacity}$ with running time in $\poly{\frac{\vol{\volume}}{\approxError}}$.
% \end{theorem}
\begin{proof}
	We prove the theorem by applying \Cref{thm:clique_dynamics_condition_approx} to the canonical discretization $(\hcPPGraph{\canonicalPoints{\resolution}}, \hcPPWeight{\canonicalPoints{\resolution}})$ with a sufficiently large resolution $\resolution$.
	To this end, we construct a function $g\colon \hcPPVertices{\canonicalPoints{\resolution}} \to \R_{>0}$ that satisfies the condition of \Cref{thm:clique_dynamics_condition_approx}.

	Assume that for all $i \in [\numTypes]$ we have
	\[
		f(i) \ge \alpha_i \sum_{j \in [\numTypes]} \volumeMatrix[i][j] f(j) \fugacity[j]
	\]
	for some $\alpha_i > 1$ and set $\alpha = \min_{i \in [\numTypes]} \alpha_i$.
	By \Cref{thm:canonical_discretization_error}, we know that we can choose a resolution in $\bigTheta{\approxError^{-1/\dimension} \vol{\volume}^{1/\dimension}}$ such that
	\[
		\eulerE^{-\frac{\approxError}{2}}\generalPartitionFunction{\volume}{\interactionMatrix}{\fugacity}
		\le \hcPartitionFunction{\hcPPGraph{\canonicalPoints{\resolution}}}{\hcPPWeight{\canonicalPoints{\resolution}}}
		\le \eulerE^{\frac{\approxError}{2}}\generalPartitionFunction{\volume}{\interactionMatrix}{\fugacity} .
	\]
	Set $\interactionMatrix_{\min} = \min_{i,j \in [\numTypes]} \{\interactionMatrix[i][j] \mid \interactionMatrix[i][j] > 0\}$ and note that, for $\vol{\volume}$ sufficiently large, it holds that $\resolution \ge \frac{2 \dimension^{3/2}}{(\alpha - 1) \interactionMatrix_{\min}}$.
	Thus, by \Cref{lemma:canonical_degree_bounds}, we know that for all $x \in \canonicalPoints{\resolution}$ and $i, j \in [\numTypes]$ it holds that
	\[
		\hcPPDegree{\canonicalPoints{\resolution}}[\hcPPVertex{x}{i}][j] \le \alpha \resolution^{\dimension} \volumeMatrix[i][j].
	\]

	Define a function $g\colon \hcPPVertices{\canonicalPoints{\resolution}} \to \R_{>0}$ with $g\left(\hcPPVertex{x}{i}\right) = f(i)$ for all $i \in [\numTypes]$ and all $x \in \canonicalPoints{\resolution}$.
	For all $i \in [\numTypes]$ and all $x \in \canonicalPoints{\resolution}$, we have
	\begin{align*}
		\sum_{w \in \neighbors{\hcPPVertex{x}{i}}} g(w) \frac{\hcPPWeight{\canonicalPoints{\resolution}}[w]}{1 + \hcPPWeight{\canonicalPoints{\resolution}}[w]}
		&\le \sum_{j \in [\numTypes]} \hcPPDegree{\canonicalPoints{\resolution}}[\hcPPVertex{x}{i}][j] f(j) \fugacity[j] \resolution^{-\dimension} \\
		&\le \alpha \sum_{j \in [\numTypes]} \volumeMatrix[i][j] f(j) \fugacity[j] \\
		&\le f(i) \\
		&\le g\left(\hcPPVertex{x}{i}\right) .
	\end{align*}
	Thus, we can apply \Cref{thm:clique_dynamics_condition_approx} to obtain the desired approximation.
\end{proof}

%Note that in the above statement, we omitted the logarithmic running time dependency on the function $f$.
%However, as our discretization has size polynomial in $\vol{\volume}$ and we assume the fugacities to be bounded by a constant, there will be no need to choose $f$ super-exponentially large (or small) in $\vol{\volume}$.
%Thus, we can safely omit this term from the running time. \todo{can we? would be nice}

In the case of uniform fugacities, the approximation result obtained from \Cref{thm:non_uniform_approx} is worse than the one obtained from \Cref{thm:uniform_approx}.
However, in the non-uniform case, \Cref{thm:non_uniform_approx} can lead to interesting parameter regimes that are not covered by \Cref{thm:uniform_approx}.
To demonstrate that, we look into a Widom--Rowlinson model with two particle types of identical radii but different fugacities.
% Note that this is a model of interest, as such unbalanced Windom--Rowlinson models can for example be used to express the area-interaction process. \todo{add reference}

\unbalancedWR
% \begin{corollary}
% 	\label{cor:unbalanced_WR}
% 	Let $\volume = [0, \sidelength)^{\dimension}$ for some $\sidelength \in \R_{>0}$.
% 	Further, let $\radius \in \R_{>0}$, $\fugacity_1, \fugacity_2 \in \R_{\ge 0}$.
% 	Denote by $\wrPartitionFunction{\volume}{\radius}{\fugacity_1, \fugacity_2}$ the Widom-Rowlinson partition function on $\volume$ with $2$ particle types, both with the same radius $\radius$ but (possibly) different fugacities $\fugacity_1, \fugacity_2$.
% 	If
% 	\[
% 		\fugacity_1 \fugacity_2 < \frac{1}{4^{\dimension} \vol{\ball{\radius}}^{2}},
% 	\]
% 	then for all $\approxError \in (0, 1]$ there is a randomized $\approxError$-approximation algorithm for $\wrPartitionFunction{\volume}{\radius}{\fugacity_1, \fugacity_2}$ with running time in $\poly{\frac{\vol{\volume}}{\approxError}}$.
% \end{corollary}
\begin{proof}
	Assume
	\begin{align}
		\label[ineq]{cor:unbalanced_WR:eq1}
		\fugacity_1 \fugacity_2 \le \left(1 - \alpha\right) \frac{1}{4^{\dimension} \vol{\ball{\radius}}^{2}}
	\end{align}
	for some $\alpha \in (0, \frac{1}{2}]$.
	We construct the equivalent hard-constraint point process $(\volume, \interactionMatrix, \fugacity)$ with $\numTypes=2$ particle types, where $\interactions$ is a $2\times2$ matrix with $0$ on the diagonal and $1$ for all off-diagonal entries and $\fugacity[i] = \fugacity_{i}$ for all $i \in \{1, 2\}$.
	Note that \cref{cor:unbalanced_WR:eq1}, $2^{\dimension} \vol{\ball{\radius}} = \vol{\ball{2\radius}}$, and the definition of~\volumeMatrix imply that
	\[
		\volumeMatrix[2][1] \fugacity[1] \volumeMatrix[1][2] \fugacity[2] \le 1 - \alpha .
	\]
	Set $f(1) = 1$ and $f(2) = (1 + \beta) \volumeMatrix[2][1] \fugacity[1]$ for some $\beta \in \left(0, \frac{\alpha}{1 - \alpha}\right)$.
	Observe that
	\[
		\volumeMatrix[2][1] f(1) \fugacity[1] = \volumeMatrix[2][1] \fugacity[1]
		< f(2)
	\]
	and
	\[
		\volumeMatrix[1][2] f(2) \fugacity[2]
		= (1 + \beta) \volumeMatrix[2][1] \fugacity[1] \volumeMatrix[1][2]  \fugacity[2]
		\le (1 + \beta) (1 - \alpha)
		< f(1) .
	\]
	Applying \Cref{thm:non_uniform_approx} concludes the proof.
\end{proof}

In the balanced case $\fugacity_1 = \fugacity_2 = \fugacity$, \Cref{cor:unbalanced_WR} gives a bound on $\fugacity$ that is worse by a factor of $\eulerE$ compared to what we can get from \Cref{thm:uniform_approx} and \Cref{remark:univariate_to_multivariate} (compare \Cref{cor:uniform_wr}).
However, by using \Cref{thm:non_uniform_approx} we see that we can make one of the two fugacities much larger, as long as we scale down the other one appropriately.
For example, according to \Cref{cor:unbalanced_WR} we can get a polynomial time randomized approximation for $\lambda_{1} = \frac{\eulerE^2}{2^{\dimension} \vol{\ball{\radius}}}$ and $\lambda_{2} < \frac{1}{\eulerE^2 2^{\dimension} \vol{\ball{\radius}}}$, which is obviously not possible using \Cref{thm:uniform_approx} and \Cref{cor:uniform_wr}.
This leads to an efficient approximation as long as the fugacities are sufficiently unbalanced.
% \todo{discuss other approaches for unbalanced bipartite graphs?}
% \todo{discuss general theorem for unbalanced bipartite hard-core models?}
% \todo{search for non-phase transition result in that setting}
% \todo{can we also give deterministic result for non-uniform setting?´}

\section{Sampling via random perturbations} \label{sec:sampling}
So far, we only presented approximation results for the partition function $\generalPartitionFunction{\volume}{\interactionMatrix}{\fugacity}$ of a hard-constraint point process $(\volume, \interactionMatrix, \fugacity)$.
As we briefly discussed in the introduction, when it comes to sampling from the Gibbs distribution $\generalGibbs{\volume}{\interactionMatrix}{\fugacity}$, a natural barrier is that outputting a sample (i.e., a tuple of points $\mathbold{x} \in \volume^{\numParticles}$ and a type assignment $\typeOf\colon [\numParticles] \to [\numTypes]$ that represents a valid configuration) would, in fact, require infinite floating-point precision.
% However, in computer science a discrete computational model is commonly assumed, where no (approximate) sampling algorithm with meaningful error bounds in terms of total-variation distance can be obtained.
If we allow ourselves to use a computational model that can handle floating-point values with arbitrary precision, we can actually use our hard-core representation $(\hcPPGraph{X}, \hcPPWeight{X})$ to recover an approximate sampler for $\generalGibbs{\volume}{\interactionMatrix}{\fugacity}$, given that we know a suitable allocation $\allocation$ for $X$. This is captured by the following assumption, required for the algorithmic results of this section.

% For our sampling algorihtms we require the following assumption.
\begin{assumption}
	\label{assumption:floating_pointer}
	The computational model can, in constant time,
	\begin{itemize}
		\item output floating-point numbers with arbitrary precision,
		\item perform basic arithmetic operations (addition, subtraction, ...) on floating-point numbers with arbitrary precision,
		\item compare floating-point numbers with arbitrary precision, and
		\item produce, for all $a, b \in \R$ with $a < b$, a uniformly random floating-point number from the intervals $[a, b]$, $[a, b)$, $(a, b]$, and $(a, b)$.
        \qedhere
	\end{itemize}
\end{assumption}
% We are aware that \Cref{assumption:floating_pointer} is not satisfied by an ordinary computer.\todo{Copy from the intro, once it is done?}
This assumption is only required to give meaningful theoretical error bounds in terms of total-variation distance, which are impossible to achieve otherwise.
% For practical considerations, the floating pointer precision of common discrete computational models might be seen as sufficient for applying our sampling approach.

We proceed by describing our sampling procedure for a given hard-constraint point process $(\volume, \interactionMatrix, \fugacity)$.
Let $X \subset \volume$ be finite and non-empty and let $\allocation$ be an allocation for $X$.
From \Cref{sec:discretization}, recall the definition of the intermediate continuous model, defined via the weight function $\hcPPWeightContinuous{\hcPPGraph{X}}{\hcPPWeight{X}}{\allocation}$ as well as the corresponding Gibbs density $\hcPPGibbsContinuous{\volume}{\hcPPGraph{X}}{\hcPPWeight{X}}{\allocation}$ and partition function $\hcPPPartitionFunctionContinuous{\volume}{\hcPPGraph{X}}{\hcPPWeight{X}}{\allocation}$.
Instead of directly sampling from $\generalPartitionFunction{\volume}{\interactionMatrix}{\fugacity}$, we approximately sample from $\hcPPGibbsContinuous{\volume}{\hcPPGraph{X}}{\hcPPWeight{X}}{\allocation}$.
To this end, we first bound the total-variation distance between these two distributions.
The following lemma helps us use \Cref{lemma:discretization_error} for this purpose. %, by bounding the total-variation distance in terms of
% \[
% 	\sum_{\numParticles \in \N} \sum_{\typeOf\colon [\numParticles] \to [\numTypes]} \int_{\volume^{\numParticles}}  \absolute{\generalWeight{\interactionMatrix}{\fugacity}[\mathbold{x}, \typeOf] - \hcPPWeightContinuous{\hcPPGraph{X}}{\hcPPWeight{X}}{\allocation}[\mathbold{x}, \typeOf]} \, \text{d} \lebesgue{\dimension \times \numParticles}.
% \]
\begin{lemma}
	\label{lemma:dtv_discretization}
	Let $(\volume, \interactionMatrix, \fugacity)$ be a hard-constraint point process with $\numTypes \in \N_{\ge 1}$ particle types.
	Further, let $X \subset \volume$ be finite and non-empty and, for $\volumeError \in [0, 1)$ and $\distanceError \in \R_{>0}$, let $\allocation$ be a $\volumeError$-$\distanceError$-allocation for $X$.
	It holds that
	\[
		\dtv{\generalGibbs{\volume}{\interactionMatrix}{\fugacity}}{\hcPPGibbsContinuous{\volume}{\hcPPGraph{X}}{\hcPPWeight{X}}{\allocation}}
		\le \frac{1}{\generalPartitionFunction{\volume}{\interactionMatrix}{\fugacity}}
		\sum_{\numParticles \in \N} \sum_{\typeOf\colon [\numParticles] \to [\numTypes]} \int_{\volume^{\numParticles}}  \absolute{\generalWeight{\interactionMatrix}{\fugacity}[\mathbold{x}, \typeOf] - \hcPPWeightContinuous{\hcPPGraph{X}}{\hcPPWeight{X}}{\allocation}[\mathbold{x}, \typeOf]} \, \text{d} \lebesgue{\dimension \times \numParticles}.
		\qedhere
	\]
\end{lemma}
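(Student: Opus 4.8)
The plan is to compute the total-variation distance directly from its definition as one half the $L_1$-distance between the two densities, and then recognize the resulting integral as exactly the left-hand side of \Cref{lemma:discretization_error} (up to the normalization). Recall that both $\generalGibbs{\volume}{\interactionMatrix}{\fugacity}$ and $\hcPPGibbsContinuous{\volume}{\hcPPGraph{X}}{\hcPPWeight{X}}{\allocation}$ are measures on the same state space, namely the disjoint union over $\numParticles \in \N$ of pairs $(\mathbold{x}, \typeOf)$ with $\mathbold{x} \in \volume^{\numParticles}$ and $\typeOf\colon [\numParticles] \to [\numTypes]$, where the measure on each component is the product of $\numParticles$ copies of Lebesgue measure on $\R^{\dimension}$ (times counting measure over the finitely many type assignments). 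The total-variation distance is therefore
\[
	\dtv{\generalGibbs{\volume}{\interactionMatrix}{\fugacity}}{\hcPPGibbsContinuous{\volume}{\hcPPGraph{X}}{\hcPPWeight{X}}{\allocation}}
	= \frac{1}{2} \sum_{\numParticles \in \N} \sum_{\typeOf\colon [\numParticles] \to [\numTypes]} \int_{\volume^{\numParticles}} \absolute{\generalGibbs{\volume}{\interactionMatrix}{\fugacity}[\mathbold{x}, \typeOf] - \hcPPGibbsContinuous{\volume}{\hcPPGraph{X}}{\hcPPWeight{X}}{\allocation}[\mathbold{x}, \typeOf]} \, \text{d} \lebesgue{\dimension \times \numParticles}.
\]

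Next I would substitute the definitions $\generalGibbs{\volume}{\interactionMatrix}{\fugacity}[\mathbold{x}, \typeOf] = \generalWeight{\interactionMatrix}{\fugacity}[\mathbold{x}, \typeOf] / \generalPartitionFunction{\volume}{\interactionMatrix}{\fugacity}$ and $\hcPPGibbsContinuous{\volume}{\hcPPGraph{X}}{\hcPPWeight{X}}{\allocation}[\mathbold{x}, \typeOf] = \hcPPWeightContinuous{\hcPPGraph{X}}{\hcPPWeight{X}}{\allocation}[\mathbold{x}, \typeOf] / \hcPPPartitionFunctionContinuous{\volume}{\hcPPGraph{X}}{\hcPPWeight{X}}{\allocation}$. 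The one subtlety is that the two densities have \emph{different} normalizing constants, so I cannot pull out a common factor. The standard trick is to bound the $L_1$-distance between two normalized densities $p/Z_p$ and $q/Z_q$ by $\frac{1}{Z_p}\int |p - q|$ plus a term coming from $|1/Z_p - 1/Z_q| \int q = |Z_p - Z_q|/Z_p$; and then observe $|Z_p - Z_q| = |\int(p-q)| \le \int|p-q|$. Concretely, writing $W$ for the common integration operator $\sum_{\numParticles}\sum_{\typeOf}\int_{\volume^{\numParticles}}\cdot\,\d\lebesgue{\dimension\times\numParticles}$, $p = \generalWeight{\interactionMatrix}{\fugacity}$, $q = \hcPPWeightContinuous{\hcPPGraph{X}}{\hcPPWeight{X}}{\allocation}$, $Z_p = W(p) = \generalPartitionFunction{\volume}{\interactionMatrix}{\fugacity}$, $Z_q = W(q) = \hcPPPartitionFunctionContinuous{\volume}{\hcPPGraph{X}}{\hcPPWeight{X}}{\allocation}$, we get
\[
	\Absolute{\frac{p}{Z_p} - \frac{q}{Z_q}} \le \frac{\absolute{p-q}}{Z_p} + q\Absolute{\frac{1}{Z_p} - \frac{1}{Z_q}} = \frac{\absolute{p-q}}{Z_p} + \frac{q}{Z_q}\cdot\frac{\absolute{Z_q - Z_p}}{Z_p},
\]
and applying $W$ to both sides, using $W(q/Z_q) = 1$ and $\absolute{Z_q - Z_p} = \absolute{W(q) - W(p)} = \absolute{W(q-p)} \le W(\absolute{p-q})$, yields $W(\absolute{p/Z_p - q/Z_q}) \le 2 W(\absolute{p-q})/Z_p$. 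Combining this with the factor $\tfrac12$ in front of the total-variation distance gives exactly the claimed inequality
\[
	\dtv{\generalGibbs{\volume}{\interactionMatrix}{\fugacity}}{\hcPPGibbsContinuous{\volume}{\hcPPGraph{X}}{\hcPPWeight{X}}{\allocation}} \le \frac{1}{\generalPartitionFunction{\volume}{\interactionMatrix}{\fugacity}} W(\absolute{p-q}),
\]
which is the statement to be proven.

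There is no real obstacle here — the proof is essentially a formal manipulation — but the one point requiring a small amount of care is making sure all the infinite sums and integrals converge so that the triangle-inequality steps and the rearrangement $W(\absolute{p-q}) \ge \absolute{W(p-q)}$ are legitimate. This follows because $\generalPartitionFunction{\volume}{\interactionMatrix}{\fugacity} = W(p) < \infty$ by \Cref{lemma:bound_partition_function}, because $W(q) = \hcPPPartitionFunctionContinuous{\volume}{\hcPPGraph{X}}{\hcPPWeight{X}}{\allocation} = \hcPartitionFunction{\hcPPGraph{X}}{\hcPPWeight{X}} < \infty$ by \Cref{lemma:hcPPPartitionFunction_equivalence} (a finite graph has finitely many independent sets), and hence $W(\absolute{p-q}) \le W(p) + W(q) < \infty$, so everything in sight is an absolutely convergent series of absolutely convergent integrals and Fubini/Tonelli applies throughout. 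I would state this integrability remark explicitly at the start of the proof and then carry out the three-line estimate above.
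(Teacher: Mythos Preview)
Your proposal is correct and follows essentially the same approach as the paper's proof: both start from the $L_1$ formula for total variation, apply the triangle inequality to split $\absolute{p/Z_p - q/Z_q}$ into a weight-difference term $\absolute{p-q}/Z_p$ and a normalization-difference term, and then bound $\absolute{Z_p - Z_q}$ by $W(\absolute{p-q})$ so that the factor $\tfrac12$ cancels the resulting factor $2$. Your explicit integrability remark (via \Cref{lemma:bound_partition_function} and \Cref{lemma:hcPPPartitionFunction_equivalence}) is a nice addition that the paper leaves implicit.
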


\begin{proof}
	First, note that the total-variation distance can be written as
	\begin{align}
		\label{lemma:dtv_discretization:eq_dtv}
		\dtv{\generalGibbs{\volume}{\interactionMatrix}{\fugacity}}{\hcPPGibbsContinuous{\volume}{\hcPPGraph{X}}{\hcPPWeight{X}}{\allocation}}
		=  \frac{1}{2}\sum_{\numParticles \in \N} \sum_{\typeOf\colon [\numParticles] \to [\numTypes]} \int_{\volume^{\numParticles}}  \absolute{\generalGibbs{\volume}{\interactionMatrix}{\fugacity}[\mathbold{x}, \typeOf] - \hcPPGibbsContinuous{\volume}{\hcPPGraph{X}}{\hcPPWeight{X}}{\allocation}[\mathbold{x}, \typeOf]} \, \text{d} \lebesgue{\dimension \times \numParticles} .
	\end{align}
	For all $\mathbold{x} \in \volume^{\numParticles}$ and $\typeOf\colon [\numParticles] \to [\numTypes]$, observe that
	\begin{align*}
		&\absolute{\generalGibbs{\volume}{\interactionMatrix}{\fugacity}[\mathbold{x}, \typeOf] - \hcPPGibbsContinuous{\volume}{\hcPPGraph{X}}{\hcPPWeight{X}}{\allocation}[\mathbold{x}, \typeOf]} \\
		&\hspace{3em}= \absolute{\frac{\generalWeight{\interactionMatrix}{\fugacity}[\mathbold{x}, \typeOf]}{\generalPartitionFunction{\volume}{\interactionMatrix}{\fugacity}} - \frac{\hcPPWeightContinuous{\hcPPGraph{X}}{\hcPPWeight{X}}{\allocation}[\mathbold{x}, \typeOf]}{\hcPPPartitionFunctionContinuous{\volume}{\hcPPGraph{X}}{\hcPPWeight{X}}{\allocation}}} \\
		&\hspace{3em}= \frac{1}{\generalPartitionFunction{\volume}{\interactionMatrix}{\fugacity}} \absolute{\generalWeight{\interactionMatrix}{\fugacity}[\mathbold{x}, \typeOf] - \frac{\generalPartitionFunction{\volume}{\interactionMatrix}{\fugacity}}{\hcPPPartitionFunctionContinuous{\volume}{\hcPPGraph{X}}{\hcPPWeight{X}}{\allocation}}\hcPPWeightContinuous{\hcPPGraph{X}}{\hcPPWeight{X}}{\allocation}[\mathbold{x}, \typeOf]} \\
		&\hspace{3em}\le \frac{1}{\generalPartitionFunction{\volume}{\interactionMatrix}{\fugacity}} \left( \absolute{\generalWeight{\interactionMatrix}{\fugacity}[\mathbold{x}, \typeOf] - \hcPPWeightContinuous{\hcPPGraph{X}}{\hcPPWeight{X}}{\allocation}[\mathbold{x}, \typeOf]} + \absolute{\frac{\generalPartitionFunction{\volume}{\interactionMatrix}{\fugacity}}{\hcPPPartitionFunctionContinuous{\volume}{\hcPPGraph{X}}{\hcPPWeight{X}}{\allocation}} - 1}\hcPPWeightContinuous{\hcPPGraph{X}}{\hcPPWeight{X}}{\allocation}[\mathbold{x}, \typeOf]\right),
	\end{align*}
	where the last inequality follows from the triangle inequality. 
	Substituting into \cref{lemma:dtv_discretization:eq_dtv}, we have
	\begin{align*}
		\dtv{\generalGibbs{\volume}{\interactionMatrix}{\fugacity}}{\hcPPGibbsContinuous{\volume}{\hcPPGraph{X}}{\hcPPWeight{X}}{\allocation}}
        &\le \frac{1}{2\generalPartitionFunction{\volume}{\interactionMatrix}{\fugacity}} \Bigg(
		\sum_{\numParticles \in \N} \sum_{\typeOf\colon [\numParticles] \to [\numTypes]} \int_{\volume^{\numParticles}}  \absolute{\generalWeight{\interactionMatrix}{\fugacity}[\mathbold{x}, \typeOf] - \hcPPWeightContinuous{\hcPPGraph{X}}{\hcPPWeight{X}}{\allocation}[\mathbold{x}, \typeOf]} \, \text{d} \lebesgue{\dimension \times \numParticles} \\
        &\hspace*{2.6 cm}+ \absolute{\generalPartitionFunction{\volume}{\interactionMatrix}{\fugacity} - \hcPPPartitionFunctionContinuous{\volume}{\hcPPGraph{X}}{\hcPPWeight{X}}{\allocation}}\Bigg) \\
		&\le \frac{1}{\generalPartitionFunction{\volume}{\interactionMatrix}{\fugacity}}
		\sum_{\numParticles \in \N} \sum_{\typeOf\colon [\numParticles] \to [\numTypes]} \int_{\volume^{\numParticles}}  \absolute{\generalWeight{\interactionMatrix}{\fugacity}[\mathbold{x}, \typeOf] - \hcPPWeightContinuous{\hcPPGraph{X}}{\hcPPWeight{X}}{\allocation}[\mathbold{x}, \typeOf]} \, \text{d} \lebesgue{\dimension \times \numParticles},
	\end{align*}
	where the last inequality comes from the fact that
	\[
		\absolute{\generalPartitionFunction{\volume}{\interactionMatrix}{\fugacity} - \hcPPPartitionFunctionContinuous{\volume}{\hcPPGraph{X}}{\hcPPWeight{X}}{\allocation}} \le \sum_{\numParticles \in \N} \sum_{\typeOf\colon [\numParticles] \to [\numTypes]} \int_{\volume^{\numParticles}}  \absolute{\generalWeight{\interactionMatrix}{\fugacity}[\mathbold{x}, \typeOf] - \hcPPWeightContinuous{\hcPPGraph{X}}{\hcPPWeight{X}}{\allocation}[\mathbold{x}, \typeOf]} \, \text{d} \lebesgue{\dimension \times \numParticles} .
		\qedhere
	\]
\end{proof}
Note that under the conditions of \Cref{lemma:discretization_error}, \Cref{lemma:dtv_discretization} immediately gives a bound for the total-variation distance between $\generalGibbs{\volume}{\interactionMatrix}{\fugacity}$ and $\hcPPGibbsContinuous{\volume}{\hcPPGraph{X}}{\hcPPWeight{X}}{\allocation}$.

\begin{algorithm}[t]
    \caption{
        \label{alg:perturbation_algorithm}
        The perturbation algorithm.
        It takes an independent set of a given hard-core representation of a hard-constraint point process and an allocation, and it returns a perturbation of the given independent set with respect to the allocation.
    }
    \KwIn
    {%
        hard-constraint point process $(\volume, \interactionMatrix, \fugacity)$ with $\numTypes \in \N_{\ge 1}$ particle types\\
        \phantom{\textbf{Input: }}finite and non-empty $X \subset \volume$\\
        \phantom{\textbf{Input: }}hard-core representation $(\hcPPGraph{X}, \hcPPWeight{X})$ of $(\volume, \interactionMatrix, \fugacity)$ based on $X$\\
        \phantom{\textbf{Input: }}independent set $\independentSet \in \independentSets{\hcPPGraph{X}}$ of cardinality~$\numParticles$\\
        \phantom{\textbf{Input: }}allocation~$\allocation\colon \volume \to X$
    }
    \KwOut
    {%
        $\mathbold{x} \in \volume^{\numParticles}$\\
        \phantom{\textbf{Output: }}$\typeOf\colon [\numParticles] \to [\numTypes]$
    }
    construct a permutation $\big(\hcPPVertex{y_i}{\typeOf_i}\big)_{i \in [\numParticles]}$ of the vertices in $\independentSet$ uniformly at random\;
    \lFor{$i \in [\numParticles]$}
    {%
        $x_i \gets $ a point from $\invAllocation[y_i]$, drawn uniformly at random%
    }
    $\mathbold{x} \gets (x_i)_{i \in [\numParticles]}$\;
    \lFor{$i \in [\numParticles]$}
    {%
        $\typeOf[i] \gets \typeOf_i$%
    }
\end{algorithm}

In order to use $\hcPPGibbsContinuous{\volume}{\hcPPGraph{X}}{\hcPPWeight{X}}{\allocation}$ as an approximation for $\generalGibbs{\volume}{\interactionMatrix}{\fugacity}$, we use \Cref{alg:perturbation_algorithm}.
This algorithm adds random perturbations to a sample from the discretized distribution $\hcGibbs{\hcPPGraph{X}}{\hcPPWeight{X}}$, based on $\invAllocation$.
%This interpretation becomes even more apparent when looking for example at the proof of \Cref{thm:uniform_sampling}, where \Cref{lemma:appx_sampling_hcPPGibbsContinuous} is applied.
Applying \Cref{alg:perturbation_algorithm} shows a connection between $\hcGibbs{\hcPPGraph{X}}{\hcPPWeight{X}}$ and $\hcPPGibbsContinuous{\volume}{\hcPPGraph{X}}{\hcPPWeight{X}}{\allocation}$.
\begin{observation}
	\label{obs:sampling_hcPPGibbsContinuous}
	Let $(\volume, \interactionMatrix, \fugacity)$ be a hard-constraint point process with $\numTypes \in \N_{\ge 1}$ particle types.
	Further, let $X \subset \volume$ be finite and non-empty and, for $\volumeError \in [0, 1)$ and $\distanceError \in \R_{>0}$, let $\allocation$ be a $\volumeError$-$\distanceError$-allocation for $X$.
	Let $(\hcPPGraph{X}, \hcPPWeight{X})$ be the hard-core representation of $(\volume, \interactionMatrix, \fugacity)$ based on $X$, and let $\independentSet \in \independentSets{\hcPPGraph{X}}$ be drawn according to $\hcGibbs{\hcPPGraph{X}}{\hcPPWeight{X}}$.
	Consider a tuple of points $\mathbold{x} \in \volume^{\numParticles}$ and a type assignment $\typeOf\colon [\numParticles] \to [\numTypes]$ produced by applying \Cref{alg:perturbation_algorithm} to $\independentSet$.
	Then $(\mathbold{x}, \typeOf)$ is distributed according to the density $\hcPPGibbsContinuous{\volume}{\hcPPGraph{X}}{\hcPPWeight{X}}{\allocation}$.
\end{observation}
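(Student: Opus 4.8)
The plan is to directly compute the density of the pair $(\mathbold{x}, \typeOf)$ output by \Cref{alg:perturbation_algorithm} and check that it equals $\hcPPGibbsContinuous{\volume}{\hcPPGraph{X}}{\hcPPWeight{X}}{\allocation}[\mathbold{x}, \typeOf]$. First I would fix a target tuple $(\mathbold{x}, \typeOf)$ with $\mathbold{x} = (x_i)_{i \in [\numParticles]} \in \volume^{\numParticles}$ and $\typeOf\colon [\numParticles] \to [\numTypes]$, and observe that in order for the algorithm to produce something infinitesimally close to this tuple, the drawn independent set must be exactly $\independentSet = \big\{\hcPPVertex{\allocation[x_i]}{\typeOf[i]} \;\big\vert\; i \in [\numParticles]\big\}$ (with all the $\hcPPVertex{\allocation[x_i]}{\typeOf[i]}$ distinct and forming an independent set; otherwise the density is $0$, which is consistent with $\hcPPValidContinuous{\typeOf}{\hcPPGraph{X}}{\allocation}[\mathbold{x}] = 0$). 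So the first step is: if $\hcPPValidContinuous{\typeOf}{\hcPPGraph{X}}{\allocation}[\mathbold{x}] = 0$, both densities vanish and we are done; otherwise, $\independentSet$ as above is a genuine independent set of cardinality $\numParticles$.

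Next I would assemble the three independent sources of randomness. The probability of drawing this particular $\independentSet$ under $\hcGibbs{\hcPPGraph{X}}{\hcPPWeight{X}}$ is $\frac{\prod_{v \in \independentSet} \hcPPWeight{X}[v]}{\hcPartitionFunction{\hcPPGraph{X}}{\hcPPWeight{X}}} = \frac{\prod_{i \in [\numParticles]} \hcPPWeight{X}[\hcPPVertex{\allocation[x_i]}{\typeOf[i]}]}{\hcPartitionFunction{\hcPPGraph{X}}{\hcPPWeight{X}}}$. Then, conditioned on $\independentSet$, the algorithm picks a uniformly random permutation of its $\numParticles$ vertices; exactly $\numParticles!$ orderings are possible, and the event that the ordering matches the one induced by $(x_i)_{i\in[\numParticles]}$ has probability $\frac{1}{\numParticles!}$. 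Finally, conditioned on the ordering, for each $i$ the point $x_i$ is drawn uniformly from $\invAllocation[\allocation[x_i]]$, contributing a density factor $\frac{1}{\vol{\invAllocation[\allocation[x_i]]}}$ at $x_i$ (this uses that $\invAllocation[\allocation[x_i]]$ is measurable with positive, finite volume, which holds since $\allocation$ is a $\volumeError$-$\distanceError$-allocation). Multiplying these independent contributions, the density of $(\mathbold{x}, \typeOf)$ is
\[
	\frac{1}{\numParticles!} \left(\prod_{i \in [\numParticles]} \frac{\hcPPWeight{X}[\hcPPVertex{\allocation[x_i]}{\typeOf[i]}]}{\vol{\invAllocation[\allocation[x_i]]}}\right) \hcPPValidContinuous{\typeOf}{\hcPPGraph{X}}{\allocation}[\mathbold{x}] \cdot \frac{1}{\hcPartitionFunction{\hcPPGraph{X}}{\hcPPWeight{X}}},
\]
and the numerator is by definition $\hcPPWeightContinuous{\hcPPGraph{X}}{\hcPPWeight{X}}{\allocation}[\mathbold{x}, \typeOf]$, noting that $\hcPPWeight{X}[\hcPPVertex{x_i}{\typeOf[i]}] = \hcPPWeight{X}[\hcPPVertex{\allocation[x_i]}{\typeOf[i]}]$ since $\hcPPWeight{X}$ depends only on the type. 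By \Cref{lemma:hcPPPartitionFunction_equivalence}, $\hcPartitionFunction{\hcPPGraph{X}}{\hcPPWeight{X}} = \hcPPPartitionFunctionContinuous{\volume}{\hcPPGraph{X}}{\hcPPWeight{X}}{\allocation}$, so the density equals $\hcPPGibbsContinuous{\volume}{\hcPPGraph{X}}{\hcPPWeight{X}}{\allocation}[\mathbold{x}, \typeOf]$, as claimed.

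The one genuinely delicate point — the main obstacle — is the bookkeeping around indistinguishable particles and the permutation step: I need to argue carefully that, because $\independentSet$ is a \emph{set} of distinct vertices and its uniformly random permutation is matched against the ordering carried by the output tuple, the factor $\frac{1}{\numParticles!}$ appears exactly once and is not double-counted against the $\frac{1}{\numParticles!}$ already built into $\hcPPWeightContinuous{\hcPPGraph{X}}{\hcPPWeight{X}}{\allocation}$. The clean way to handle this is to treat the algorithm's output as an element of the ordered configuration space $\volume^{\numParticles} \times [\numTypes]^{[\numParticles]}$ (the same space on which $\hcPPGibbsContinuous{\volume}{\hcPPGraph{X}}{\hcPPWeight{X}}{\allocation}$ is a density), rather than as an unordered point set, so that every ordering of the vertices of $\independentSet$ yields a genuinely different output and the $\frac{1}{\numParticles!}$ is simply the probability of hitting the right one. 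With this convention fixed, the remaining computation is routine.
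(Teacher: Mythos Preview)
Your proposal is correct and is exactly the approach the paper takes: the paper's entire proof reads ``The observation follows directly from writing down the resulting density of~$(\mathbold{x}, \typeOf)$,'' and you have carefully carried out precisely that computation, including the use of \Cref{lemma:hcPPPartitionFunction_equivalence} to identify the normalizing constants. Your discussion of the $\frac{1}{\numParticles!}$ bookkeeping is the right point to be careful about, and your resolution of it is sound.
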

The observation follows directly from writing down the resulting density of~$(\mathbold{x}, \typeOf)$.

In practice, we might not have the ability to exactly sample an independent set of $\hcPPGraph{X}$ according to $\hcGibbs{\hcPPGraph{X}}{\hcPPWeight{X}}$.
The following lemma extends \Cref{obs:sampling_hcPPGibbsContinuous} to the setting where we have access to an approximate sampler for $\hcGibbs{\hcPPGraph{X}}{\hcPPWeight{X}}$.
\begin{lemma}
	\label{lemma:appx_sampling_hcPPGibbsContinuous}
	Let $(\volume, \interactionMatrix, \fugacity)$ be a hard-constraint point process with $\numTypes \in \N_{\ge 1}$ particle types.
	Further, let $X \subset \volume$ be finite and non-empty and, for $\volumeError \in [0, 1)$ and $\distanceError \in \R_{>0}$, let $\allocation$ be a $\volumeError$-$\distanceError$-allocation for~$X$.
	Let $(\hcPPGraph{X}, \hcPPWeight{X})$ be the hard-core representation of $(\volume, \interactionMatrix, \fugacity)$ based on $X$.
	For $\samplingError \in [0, 1]$, let $\independentSet \in \independentSets{\hcPPGraph{X}}$ be drawn according to an $\samplingError$-approximation of $\hcGibbs{\hcPPGraph{X}}{\hcPPWeight{X}}$.
	Consider a tuple of points $\mathbold{x} \in \volume^{\numParticles}$ and a type assignment $\typeOf\colon [\numParticles] \to [\numTypes]$ produced by applying \Cref{alg:perturbation_algorithm} to $\independentSet$.
	Then $(\mathbold{x}, \typeOf)$ is distributed according to an $\samplingError$-approximation of $\hcPPGibbsContinuous{\volume}{\hcPPGraph{X}}{\hcPPWeight{X}}{\allocation}$.
\end{lemma}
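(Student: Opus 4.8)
The plan is to view \Cref{alg:perturbation_algorithm} as a fixed randomized map (a Markov kernel) and invoke the data-processing inequality for total-variation distance. Concretely, let $K$ denote the kernel that, on input an independent set $\independentSet \in \independentSets{\hcPPGraph{X}}$, outputs the pair $(\mathbold{x}, \typeOf)$ produced by \Cref{alg:perturbation_algorithm}: it picks a uniformly random ordering of the vertices of $\independentSet$ and then, independently for each vertex $\hcPPVertex{y_i}{\typeOf_i}$, a uniformly random point $x_i \in \invAllocation[y_i]$. Since each $\invAllocation[y]$ is measurable with positive finite volume (by the definition of a $\volumeError$-$\distanceError$-allocation), $K$ is a well-defined probability kernel from the (countable) set $\independentSets{\hcPPGraph{X}}$ into the state space $\bigsqcup_{\numParticles \in \N} \volume^{\numParticles} \times [\numTypes]^{[\numParticles]}$, equipped with the natural product $\sigma$-algebra. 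Crucially, $K$ does not depend on which distribution the input $\independentSet$ is drawn from.

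First I would record the exact statement that $\hcGibbs{\hcPPGraph{X}}{\hcPPWeight{X}}$ pushed forward through $K$ equals $\hcPPGibbsContinuous{\volume}{\hcPPGraph{X}}{\hcPPWeight{X}}{\allocation}$; this is precisely the content of \Cref{obs:sampling_hcPPGibbsContinuous}, obtained by writing out the density of the output of \Cref{alg:perturbation_algorithm} when the input is Gibbs-distributed (the $\frac{1}{\numParticles!}$ in $\hcPPWeightContinuous{\hcPPGraph{X}}{\hcPPWeight{X}}{\allocation}$ comes from the uniformly random ordering, and the $\frac{1}{\vol{\invAllocation[\allocation[x_i]]}}$ factors come from the uniform choice of each $x_i$). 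Next, let $\xi$ denote the $\samplingError$-approximation of $\hcGibbs{\hcPPGraph{X}}{\hcPPWeight{X}}$ from which $\independentSet$ is actually drawn, so $\dtv{\xi}{\hcGibbs{\hcPPGraph{X}}{\hcPPWeight{X}}} \le \samplingError$, and let $\xi K$ be the law of $(\mathbold{x}, \typeOf)$. The general fact I would use is that applying a common probability kernel cannot increase total-variation distance: for any two distributions $\mu_1, \mu_2$ on $\independentSets{\hcPPGraph{X}}$,
\[
	\dtv{\mu_1 K}{\mu_2 K} \le \dtv{\mu_1}{\mu_2}.
\]
I would give the one-line proof of this (couple $\independentSet \sim \mu_1$ and $\independentSet' \sim \mu_2$ optimally, apply the same realization of $K$ to both, and note that the outputs agree on the event $\{\independentSet = \independentSet'\}$, whose complement has probability $\dtv{\mu_1}{\mu_2}$), or cite it as standard.

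Combining these: $\dtv{\xi K}{\hcPPGibbsContinuous{\volume}{\hcPPGraph{X}}{\hcPPWeight{X}}{\allocation}} = \dtv{\xi K}{\hcGibbs{\hcPPGraph{X}}{\hcPPWeight{X}} K} \le \dtv{\xi}{\hcGibbs{\hcPPGraph{X}}{\hcPPWeight{X}}} \le \samplingError$, which says exactly that $(\mathbold{x}, \typeOf)$ is distributed according to an $\samplingError$-approximation of $\hcPPGibbsContinuous{\volume}{\hcPPGraph{X}}{\hcPPWeight{X}}{\allocation}$, as claimed. I expect the only real subtlety to be the measure-theoretic bookkeeping — checking that $K$ is genuinely a measurable kernel on the hybrid discrete/continuous state space so that the contraction inequality and the pushforward identity are literally applicable — but this is routine given that each $\invAllocation[x]$ is measurable and the construction is a finite composition of uniform choices; no new ideas beyond \Cref{obs:sampling_hcPPGibbsContinuous} are needed.
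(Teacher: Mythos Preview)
Your proposal is correct and essentially the same as the paper's proof: the paper constructs an optimal coupling of the approximate and exact hard-core Gibbs distributions and applies the same realization of \Cref{alg:perturbation_algorithm} to both, which is exactly the one-line coupling proof of the data-processing inequality you sketch. The only difference is framing---you package the argument as ``$K$ is a Markov kernel and TV contracts under kernels,'' while the paper writes out the coupling directly---but the content is identical.
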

\begin{proof}
	Let $\hcGibbsAppx{\hcPPGraph{X}}{\hcPPWeight{X}}$ be the given $\samplingError$-approximation for $\hcGibbs{\hcPPGraph{X}}{\hcPPWeight{X}}$, and let $\hcPPGibbsContinuousAppx{\volume}{\hcPPGraph{X}}{\hcPPWeight{X}}{\allocation}$ be the resulting distribution over configuration $(\mathbold{x}, \typeOf)$.
	We prove our claim by constructing a coupling of $\hcPPGibbsContinuous{\volume}{\hcPPGraph{X}}{\hcPPWeight{X}}{\allocation}$ and $\hcPPGibbsContinuousAppx{\volume}{\hcPPGraph{X}}{\hcPPWeight{X}}{\allocation}$ such that they produce the same $(\mathbold{x}, \typeOf)$ with probability at least $1 - \samplingError$.
	This results in the desired bound on the total-variation distance by the coupling lemma (see \cite[Theorem~$2.4$]{den2012probability}).

	To construct this coupling of $\hcPPGibbsContinuous{\volume}{\hcPPGraph{X}}{\hcPPWeight{X}}{\allocation}$ and $\hcPPGibbsContinuousAppx{\volume}{\hcPPGraph{X}}{\hcPPWeight{X}}{\allocation}$, we start by considering an optimal coupling between $\hcGibbsAppx{\hcPPGraph{X}}{\hcPPWeight{X}}$ and $\hcGibbs{\hcPPGraph{X}}{\hcPPWeight{X}}$ (see \cite[Theorem~$2.12$]{den2012probability}).
	For a tuple $(\independentSet, \independentSet') \in \independentSets{\hcPPGraph{X}} \times \independentSets{\hcPPGraph{X}}$ drawn from that optimal coupling, it holds that $\independentSet = \independentSet'$ with probability $1 - \samplingError$.
	If this is the case, we produce $(\mathbold{x}, \typeOf)$ by applying \Cref{alg:perturbation_algorithm} to $\independentSet = \independentSet'$, and we return the tuple of configurations $((\mathbold{x}, \typeOf), (\mathbold{x}, \typeOf))$.
	However, if we draw $\independentSet \neq \independentSet'$, then we produce corresponding configurations $(\mathbold{x}, \typeOf)$ by applying \Cref{alg:perturbation_algorithm} to $\independentSet$, and $(\mathbold{x'}, \typeOf')$ by applying \Cref{alg:perturbation_algorithm} to $\independentSet'$, and return $((\mathbold{x}, \typeOf), (\mathbold{x'}, \typeOf'))$.

	Note that the description above gives a valid coupling of $\hcPPGibbsContinuous{\volume}{\hcPPGraph{X}}{\hcPPWeight{X}}{\allocation}$ and $\hcPPGibbsContinuousAppx{\volume}{\hcPPGraph{X}}{\hcPPWeight{X}}{\allocation}$, as it preserves the marginal distributions.
	Further, note that for a tuple $((\mathbold{x}, \typeOf), (\mathbold{x'}, \typeOf'))$ produced by this coupling, it holds that $(\mathbold{x}, \typeOf) = (\mathbold{x'}, \typeOf')$ if $\independentSet = \independentSet'$ in the first step.
	This happens with probability at least $1 - \samplingError$, which proves the desired bound on the total-variation distance.
\end{proof}
To state some explicit approximate sampling results while keeping our statements more simple, we restrict ourselves once again to cubic regions $\volume=[0, \sidelength)^{\dimension}$ for some $\sidelength \in \R_{>0}$.
First, we derive a bound on the resolution $\resolution$ that we need for the canonical discretization to obtain a desired total-variation distance between $\generalGibbs{\volume}{\interactionMatrix}{\fugacity}$ and $\hcPPGibbsContinuous{\volume}{\hcPPGraph{X}}{\hcPPWeight{X}}{\allocation}$.
\begin{lemma}
	\label{lemma:canonical_sampling_error}
	Let $(\volume, \interactionMatrix, \fugacity)$ be a hard-constraint point process with $\numTypes \in \N_{\ge 1}$ particle types and $\volume = [0, \sidelength)^{\dimension}$ for some $\sidelength \in \R_{>0}$.
	For all $\samplingError \in (0, 1]$, there exists $\resolution_{\samplingError} \in \bigTheta{\samplingError^{-1/\dimension} \vol{\volume}^{1/\dimension}}$ such that for all feasible resolutions $\resolution \ge \resolution_{\samplingError}$, there is an allocation $\allocation$ for $\canonicalPoints{\resolution}$ such that
	\[
		\dtv{\generalGibbs{\volume}{\interactionMatrix}{\fugacity}}{\hcPPGibbsContinuous{\volume}{\hcPPGraph{\canonicalPoints{\resolution}}}{\hcPPWeight{\canonicalPoints{\resolution}}}{\allocation}}
		\le \samplingError .
		\qedhere
	\]
\end{lemma}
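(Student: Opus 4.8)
The plan is to combine the discretization-error bound of \Cref{lemma:discretization_error} with the total-variation estimate of \Cref{lemma:dtv_discretization}, specialized to the canonical point set $\canonicalPoints{\resolution}$ together with the $0$-$\distanceError$-allocation constructed in \Cref{lemma:canonical_allocation}. First I would recall that, by \Cref{lemma:dtv_discretization}, for any finite non-empty $X \subset \volume$ and any $\volumeError$-$\distanceError$-allocation $\allocation$ for $X$,
\[
	\dtv{\generalGibbs{\volume}{\interactionMatrix}{\fugacity}}{\hcPPGibbsContinuous{\volume}{\hcPPGraph{X}}{\hcPPWeight{X}}{\allocation}}
	\le \frac{1}{\generalPartitionFunction{\volume}{\interactionMatrix}{\fugacity}}
	\sum_{\numParticles \in \N} \sum_{\typeOf\colon [\numParticles] \to [\numTypes]} \int_{\volume^{\numParticles}}  \absolute{\generalWeight{\interactionMatrix}{\fugacity}[\mathbold{x}, \typeOf] - \hcPPWeightContinuous{\hcPPGraph{X}}{\hcPPWeight{X}}{\allocation}[\mathbold{x}, \typeOf]} \, \text{d} \lebesgue{\dimension \times \numParticles},
\]
so it suffices to make the right-hand side at most $\samplingError$. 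By \Cref{lemma:discretization_error}, provided $\volumeError \in [0, \tfrac12]$, $\distanceError \in [0, \tfrac{\interactionMatrix_{\min}}{2}]$, and $4 \fugacity_{\max} \vol{\volume} \le \size{X} < \infty$, that sum is bounded by $\big(\eulerE^{\frac{8}{\size{X}} \sum_{i} \fugacity[i]^2 \vol{\volume}^2} \eulerE^{(2\volumeError + (4\distanceError/\interactionMatrix_{\min})^{\dimension}) \sum_{i} \fugacity[i] \vol{\volume}} - 1\big)$, using that the two partition functions cancel. So the whole estimate reduces to forcing the exponent to be small.

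Next I would instantiate $X = \canonicalPoints{\resolution}$ with $\volumeError = 0$, exactly as in the proof of \Cref{thm:canonical_discretization_error}. Following that template, I would set
\[
	\resolution_{\samplingError} = \sqrt{\dimension} \left(\frac{c\, \numTypes \max\{\fugacity_{\max}, \fugacity_{\max}^2\} \vol{\volume}}{\samplingError}\right)^{1/\dimension} \cdot \max\left\{1, \frac{4}{\interactionMatrix_{\min}}\right\}
\]
for a suitable absolute constant $c$ (the constant $48$ used there works once one tracks that we now want the exponent below $\ln(1+\samplingError)$, which is implied by bounding it by $\samplingError/2$, say, using $\eulerE^{t}-1 \le \samplingError$ for $t \le \ln(1+\samplingError)$ and $\ln(1+\samplingError) \ge \samplingError/2$ on $(0,1]$). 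This choice has the claimed asymptotics $\resolution_{\samplingError} \in \bigTheta{\samplingError^{-1/\dimension}\vol{\volume}^{1/\dimension}}$ with $\numTypes$, $\dimension$, $\interactionMatrix$, and the range of $\fugacity$ treated as constants. Then, for every feasible $\resolution \ge \resolution_{\samplingError}$, \Cref{lemma:canonical_allocation} gives a $0$-$\distanceError$-allocation $\allocation$ for $\canonicalPoints{\resolution}$ with $\distanceError = \sqrt{\dimension}/\resolution$, and one checks the three hypotheses of \Cref{lemma:discretization_error}: $\volumeError = 0 \in [0,\tfrac12]$ trivially; $\distanceError \le \interactionMatrix_{\min}/2$ for $\vol{\volume}$ large enough by the choice of $\resolution_{\samplingError}$; and $\size{\canonicalPoints{\resolution}} = \resolution^{\dimension}\sidelength^{\dimension} \ge 4\fugacity_{\max}\vol{\volume}$, again from the lower bound on $\resolution$ (assuming $\vol{\volume} \ge 1$). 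Plugging $\volumeError=0$ and the bound $(4\distanceError/\interactionMatrix_{\min})^{\dimension}\sum_i \fugacity[i]\vol{\volume} \le \samplingError/4$ together with $\tfrac{8}{\size{\canonicalPoints{\resolution}}}\sum_i \fugacity[i]^2\vol{\volume}^2 \le \samplingError/4$ into the exponent yields $\eulerE^{\samplingError/2}-1 \le \samplingError$ for $\samplingError \in (0,1]$, hence the claimed total-variation bound via \Cref{lemma:dtv_discretization}.

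I do not expect any genuine obstacle here: the statement is essentially the "Gibbs-distribution shadow" of \Cref{thm:canonical_discretization_error}, and the only real work is bookkeeping the constants so that the error exponent lands below $\samplingError$ rather than below $\discretizationError/3$, and confirming that the mild "$\vol{\volume}$ sufficiently large" conditions (needed for $\distanceError \le \interactionMatrix_{\min}/2$ and $\size{\canonicalPoints{\resolution}} \ge 4\fugacity_{\max}\vol{\volume}$) are harmless because we only care about asymptotic behavior in $\vol{\volume}$. The mildest subtlety is remembering that \Cref{lemma:dtv_discretization} bounds the total-variation distance by the \emph{unnormalized} $L^1$-discrepancy divided by $\generalPartitionFunction{\volume}{\interactionMatrix}{\fugacity}$, which is exactly the quantity \Cref{lemma:discretization_error} controls (with the $\generalPartitionFunction{\volume}{\interactionMatrix}{\fugacity}$ factor canceling), so no separate lower bound on the partition function is needed.
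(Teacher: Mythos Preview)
Your proposal is correct and follows essentially the same route as the paper: combine \Cref{lemma:dtv_discretization} with \Cref{lemma:discretization_error}, instantiate $X=\canonicalPoints{\resolution}$ with the $0$-$\distanceError$-allocation from \Cref{lemma:canonical_allocation}, choose $\resolution_{\samplingError}$ so that each exponent term is at most $\samplingError/4$, and conclude via $\eulerE^{\samplingError/2}-1\le\samplingError$ on $(0,1]$. The paper in fact uses the constant $32$ (your $48$ still works, just not tight), and otherwise the verification of hypotheses and the bookkeeping match exactly.
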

\begin{proof}
	Most of the proof is similar to the proof of \Cref{thm:canonical_discretization_error}.
	Let $\interactionMatrix_{\min} = \min_{i, j \in [\numTypes]} \{\interactionMatrix[i][j] \mid \interactionMatrix[i][j] > 0\}$, let $\fugacity_{\max} = \max_{i \in [\numTypes]} \fugacity[i]$, set
	\[
		\resolution_{\samplingError} = \sqrt{\dimension} \left(\frac{32 \numTypes \max\left\{\fugacity_{\max}, \fugacity_{\max}^2\right\} \vol{\volume}}{\samplingError}\right)^{\frac{1}{\dimension}} \cdot \max\left\{1, \frac{4}{\interactionMatrix_{\min}}\right\},
	\]
	and observe that $\resolution_{\samplingError}$ satisfies the asymptotic assumptions of the statement.

	For all feasible resolutions $\resolution \ge \resolution_{\samplingError}$, let $\allocation$ be an allocation for the canonical point set $\canonicalPoints{\resolution}$ as constructed in the proof of \Cref{lemma:canonical_allocation}.
	This especially means that $\allocation$ is a $\volumeError$-$\distanceError$-allocation for $\canonicalPoints{\resolution}$ with $\volumeError = 0$ and
	\[
		\distanceError
		= \frac{\sqrt{\dimension}}{\resolution}
		\le \left(\frac{\samplingError}{32 \numTypes \max\left\{\fugacity_{\max}, \fugacity_{\max}^2\right\} \vol{\volume}}\right)^{\frac{1}{\dimension}} \cdot \min\left\{1, \frac{\interactionMatrix_{\min}}{4}\right\}.
	\]

	We prove our claim, by combining \Cref{lemma:discretization_error,lemma:dtv_discretization} based on the canonical point set $\canonicalPoints{\resolution}$ and the $\volumeError$-$\distanceError$-allocation $\allocation$.
	First, we check that $\volumeError$, $\distanceError$, and $\size{\canonicalPoints{\resolution}}$ satisfy the conditions of \Cref{thm:discretization_error}.

	It trivially holds that $\volumeError = 0 \in \left[0, \frac{1}{2}\right]$.
	Moreover, we have $\distanceError \in [0, \interactionMatrix_{\min}/2]$ for $\vol{\volume}$ sufficiently large.
	Observe that
	\[
		\size{\canonicalPoints{\resolution}}
		= \resolution^{\dimension} \sidelength^{\dimension}
		\ge \dimension^{\frac{\dimension}{2}} \frac{32 \numTypes \max\left\{\fugacity_{\max}, \fugacity_{\max}^2\right\} \vol{\volume}^2}{\samplingError} \cdot \max\left\{1, \left(\frac{4}{\interactionMatrix_{\min}}\right)^{\dimension}\right\} .
	\]
	Thus, we have $\size{\canonicalPoints{\resolution}} \ge 4 \fugacity_{\max} \vol{\volume}$.

	We proceed by upper-bounding
	\[
		\sum_{\numParticles \in \N} \sum_{\typeOf\colon [\numParticles] \to [\numTypes]} \int_{\volume^{\numParticles}}  \absolute{\generalWeight{\interactionMatrix}{\fugacity}[\mathbold{x}, \typeOf] - \hcPPWeightContinuous{\hcPPGraph{X}}{\hcPPWeight{X}}{\allocation}[\mathbold{x}, \typeOf]} \, \text{d} \lebesgue{\dimension \times \numParticles}
	\]
	via \Cref{thm:discretization_error}.
	Recall that we have $\volumeError = 0$.
	Further, note that
	\[
		\distanceError
		\le \left(\frac{\samplingError}{32 \numTypes \max\left\{\fugacity_{\max}, \fugacity_{\max}^2\right\} \vol{\volume}}\right)^{\frac{1}{\dimension}} \cdot \min\left\{1, \frac{\interactionMatrix_{\min}}{4}\right\}
		\le \left(\frac{\samplingError}{4 \sum_{i \in [\numTypes]} \fugacity[i] \vol{\volume}}\right)^{\frac{1}{\dimension}} \frac{\interactionMatrix_{\min}}{4}
	\]
	and thus
	\[
		\left(\frac{4 \distanceError}{\interactionMatrix_{\min}}\right)^{\dimension} \sum_{i \in [\numTypes]} \fugacity[i] \vol{\volume} \le \frac{\samplingError}{4} .
	\]
	Moreover, for
	\[
		\size{\canonicalPoints{\resolution}}
		\ge \dimension^{\frac{\dimension}{2}} \frac{32 \numTypes \max\left\{\fugacity_{\max}, \fugacity_{\max}^2\right\} \vol{\volume}^2}{\samplingError} \cdot \max\left\{1, \left(\frac{4}{\interactionMatrix_{\min}}\right)^{\dimension}\right\}
		\ge \frac{32 \sum_{i \in [\numTypes]} \fugacity[i]^2 \vol{\volume}^2}{\samplingError} ,
	\]
	it holds that
	\[
		\frac{8}{\size{\canonicalPoints{\resolution}}} \sum_{i \in [\numTypes]} \fugacity[i]^2 \vol{\volume}^2 \le \frac{\samplingError}{4}.
	\]
	By \Cref{thm:discretization_error}, we obtain
	\[
		\sum_{\numParticles \in \N} \sum_{\typeOf\colon [\numParticles] \to [\numTypes]} \int_{\volume^{\numParticles}}  \absolute{\generalWeight{\interactionMatrix}{\fugacity}[\mathbold{x}, \typeOf] - \hcPPWeightContinuous{\hcPPGraph{X}}{\hcPPWeight{X}}{\allocation}[\mathbold{x}, \typeOf]} \, \text{d} \lebesgue{\dimension \times \numParticles}
		\le \left(\eulerE^{\frac{\samplingError}{2}} - 1\right) \generalPartitionFunction{\volume}{\interactionMatrix}{\fugacity},
	\]
	and, by \Cref{lemma:dtv_discretization}, this implies $\dtv{\generalGibbs{\volume}{\interactionMatrix}{\fugacity}}{\hcPPGibbsContinuous{\volume}{\hcPPGraph{X}}{\hcPPWeight{X}}{\allocation}} \le \eulerE^{\frac{\samplingError}{2}} - 1$.

	Finally, observe that, for $\samplingError \in (0, 1]$, it holds that $\eulerE^{\frac{\samplingError}{2}} - 1 \le \samplingError$, which concludes the proof.
\end{proof}

We now state the sampling analogue of \Cref{thm:uniform_approx} and \Cref{thm:non_uniform_approx}.
We start with the case of uniform fugacities.
\uniformSampling
% \begin{theorem}
% 	\label{thm:uniform_sampling}
% 	Suppose a computational model that satisfies \Cref{assumption:floating_pointer}.
% 	Let $(\volume, \interactionMatrix, \fugacity)$ be a hard-constraint point process with $\numTypes \in \N_{\ge 1}$ particle types, $\volume = [0, \sidelength)^{\dimension}$ for some $\sidelength \in \R_{>0}$ and $\fugacity$ being a constant.
% 	Let $\volumeMatrix$ be the corresponding volume exclusion matrix and denote by $\lOneNorm{\volumeMatrix}$ its $L_{1}$-norm.
% 	If
% 	\[
% 		\fugacity < \frac{\eulerE}{\lOneNorm{\volumeMatrix}},
% 	\]
% 	then, for all $\samplingError \in (0, 1]$, there is an $\samplingError$-approximate sampler for $\generalGibbs{\volume}{\interactionMatrix}{\fugacity}$ with running time in $\poly{\frac{\vol{\volume}}{\samplingError}}$.
% \end{theorem}

\begin{proof}
	The rough proof sketch is to choose the resolution $\resolution$ and the canonical point set $\canonicalPoints{\resolution}$ with allocation $\allocation$ that such $\hcPPGibbsContinuous{\volume}{\hcPPGraph{\canonicalPoints{\resolution}}}{\hcPPWeight{\canonicalPoints{\resolution}}}{\allocation}$ is an $\frac{\samplingError}{2}$-approximation of $\generalGibbs{\volume}{\interactionMatrix}{\fugacity}$.
	We then sample $\frac{\samplingError}{2}$-approximately from $\hcGibbs{\hcPPGraph{\canonicalPoints{\resolution}}}{\hcPPWeight{\canonicalPoints{\resolution}}}$ to obtain an $\frac{\samplingError}{2}$-approximation of $\hcPPGibbsContinuous{\volume}{\hcPPGraph{\canonicalPoints{\resolution}}}{\hcPPWeight{\canonicalPoints{\resolution}}}{\allocation}$.
	Obviously, the resulting error is then bounded by $\frac{\samplingError}{2}+\frac{\samplingError}{2} = \samplingError$.

	By \Cref{lemma:canonical_sampling_error}, there is a resolution $\resolution \in \bigTheta{\samplingError^{-1/\dimension}\vol{\volume}^{1/\dimension}}$ such that
	\[
		\dtv{\generalGibbs{\volume}{\interactionMatrix}{\fugacity}}{\hcPPGibbsContinuous{\volume}{\hcPPGraph{\canonicalPoints{\resolution}}}{\hcPPWeight{\canonicalPoints{\resolution}}}{\allocation}}
		\le \frac{\samplingError}{2} ,
	\]
	where, as described in \Cref{lemma:canonical_allocation},
	\[
		\allocation[x] = \prod_{i \in [\dimension]} \left[x^{(i)}, x^{(i)} + \frac{1}{\resolution}\right) .
	\]
	Note that the corresponding graph $\hcPPGraph{\canonicalPoints{X}}$ has $\bigTheta{\frac{\vol{\volume}}{\samplingError}}$ vertices.

	Further, analogously to the proof of \Cref{thm:uniform_approx}, if $\fugacity < \frac{\eulerE}{\lOneNorm{\volumeMatrix}}$, constructing the hard-core representation $(\hcPPGraph{\canonicalPoints{\resolution}}, \hcPPWeight{\canonicalPoints{\resolution}})$ can be done in such a way that $\hcPPWeight{\canonicalPoints{\resolution}} < \criticalWeight{\hcPPDegree{\canonicalPoints{\resolution}}^{*}}$.
	Thus, by \Cref{thm:hc_sampling_univariate}, we can sample $\frac{\samplingError}{2}$-approximately from $\hcGibbs{\hcPPGraph{\canonicalPoints{\resolution}}}{\hcPPWeight{\canonicalPoints{\resolution}}}$ in time $\poly{\frac{\size{\hcPPVertices{\canonicalPoints{\resolution}}}}{\samplingError}}[\Big]$, which is also in $\poly{\frac{\vol{\volume}}{\samplingError}}$.

	We now obtain our sampler by drawing an independent set $\independentSet \in \independentSets{\hcPPGraph{\canonicalPoints{\resolution}}}$ according to this $\frac{\samplingError}{2}$-approximation of $\hcGibbs{\hcPPGraph{\canonicalPoints{\resolution}}}{\hcPPWeight{\canonicalPoints{\resolution}}}$ and applying \Cref{alg:perturbation_algorithm} to it.
	To this end, note that for all $y \in \canonicalPoints{\resolution}$, a uniform sample $x \in \invAllocation{y}$ is obtained by generating, for each $i \in [\dimension]$, a uniformly random floating-point number $z^{(i)} \in \left[0, \frac{1}{\resolution}\right)$ and setting $x = y + (z^{(i)})_{i \in [\dimension]}$, where the addition is component-wise.
	\Cref{lemma:appx_sampling_hcPPGibbsContinuous} yields the desired bound on the total-variation distance.
\end{proof}

The sampling analogues of \Cref{cor:monoatomic_hs,cor:uniform_wr} follow immediately.

\begin{corollary}
	\label{cor:monoatomic_hs_sampling}
	Suppose a computational model that satisfies \Cref{assumption:floating_pointer}.
	Let $\volume = [0, \sidelength)^{\dimension}$ for some $\sidelength \in \R_{>0}$.
	Further, let $\radius \in \R_{>0}$ and $\fugacity \in \R_{\ge 0}$.
	Denote by $\hsGibbs{\volume}{\radius}{\fugacity}$ the Gibbs distribution of the hard-sphere model on $\volume$ with particles of radius $\radius$ and fugacity~$\fugacity$.
	If
	\[
		\fugacity < \frac{\eulerE}{2^{\dimension} \vol{\ball{\radius}}},
	\]
	then, for all $\samplingError \in (0, 1]$, there is an $\samplingError$-approximate sampler for $\hsGibbs{\volume}{\radius}{\fugacity}$ with running time in $\poly{\frac{\vol{\volume}}{\samplingError}}[\big]$.
\end{corollary}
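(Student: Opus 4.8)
The plan is to obtain Corollary~\ref{cor:monoatomic_hs_sampling} as an immediate specialization of Theorem~\ref{thm:uniform_sampling}. Recall from Section~\ref{sec:intro_model} that the hard-sphere model on $\volume$ with radius $\radius$ and fugacity~$\fugacity$ is precisely the hard-constraint point process $(\volume, \interactionMatrixHS, \fugacity)$ with $\numTypes = 1$ particle type, where $\interactionMatrixHS$ is the $1\times1$ matrix whose single entry is $2\radius$, and where $\fugacity$ is (trivially, since there is only one type) a constant. In particular $\hsGibbs{\volume}{\radius}{\fugacity} = \generalGibbs{\volume}{\interactionMatrixHS}{\fugacity}$, so it suffices to check that the hypothesis of Theorem~\ref{thm:uniform_sampling} is satisfied under the stated bound on~$\fugacity$.

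First I would compute the volume exclusion matrix $\volumeMatrix$ of $(\volume, \interactionMatrixHS, \fugacity)$. By definition it is the $1\times1$ matrix with entry $\vol{\ball{2\radius}}$, and since the Lebesgue measure scales by $\scaleUp^{\dimension}$ under dilation by a factor $\scaleUp$, we have $\vol{\ball{2\radius}} = 2^{\dimension}\vol{\ball{\radius}}$. Hence $\lOneNorm{\volumeMatrix} = 2^{\dimension}\vol{\ball{\radius}}$, and the assumed bound $\fugacity < \eulerE/\bigl(2^{\dimension}\vol{\ball{\radius}}\bigr)$ is exactly the condition $\fugacity < \eulerE/\lOneNorm{\volumeMatrix}$ of Theorem~\ref{thm:uniform_sampling}.

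It then remains only to invoke Theorem~\ref{thm:uniform_sampling} with this instance: under a computational model satisfying Assumption~\ref{assumption:floating_pointer}, for every $\samplingError \in (0,1]$ there is an $\samplingError$-approximate sampler for $\generalGibbs{\volume}{\interactionMatrixHS}{\fugacity} = \hsGibbs{\volume}{\radius}{\fugacity}$ running in time $\poly{\frac{\vol{\volume}}{\samplingError}}[\big]$, which is the claim. There is essentially no obstacle here beyond the routine measure-scaling identity $\vol{\ball{2\radius}} = 2^{\dimension}\vol{\ball{\radius}}$; the entire content of the corollary is already carried by Theorem~\ref{thm:uniform_sampling}, mirroring exactly the way Corollary~\ref{cor:monoatomic_hs} is derived from Theorem~\ref{thm:uniform_approx}.
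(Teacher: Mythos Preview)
Your proposal is correct and matches the paper's approach exactly: the paper states that the sampling analogues of Corollaries~\ref{cor:monoatomic_hs} and~\ref{cor:uniform_wr} follow immediately from Theorem~\ref{thm:uniform_sampling}, and the justification (already spelled out after Corollary~\ref{cor:monoatomic_hs}) is precisely your computation that $\lOneNorm{\volumeMatrix} = \vol{\ball{2\radius}} = 2^{\dimension}\vol{\ball{\radius}}$.
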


\begin{corollary}
	\label{cor:uniform_wr_sampling}
	Suppose a computational model that satisfies \Cref{assumption:floating_pointer}.
	Let $\volume = [0, \sidelength)^{\dimension}$ for some $\sidelength \in \R_{>0}$.
	Further, let $\radius \in \R_{>0}$, $\fugacity \in \R_{\ge 0}$ and $\numTypes \in \N_{\ge 1}$.
	Denote by $\wrGibbs{\volume}{\radius}{\fugacity}$ the Gibbs distribution of the Widom--Rowlinson model on $\volume$ with $\numTypes$ particle types, each of radius $\radius$ and fugacity $\fugacity$.
	If
	\[
		\fugacity < \frac{\eulerE}{(\numTypes - 1) 2^{\dimension} \vol{\ball{\radius}}},
	\]
	then, for all $\samplingError \in (0, 1]$, there is an $\samplingError$-approximate sampler for $\wrGibbs{\volume}{\radius}{\fugacity}$ with running time in $\poly{\frac{\vol{\volume}}{\samplingError}}[\big]$.
\end{corollary}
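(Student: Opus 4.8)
The plan is to obtain this statement as an immediate corollary of \Cref{thm:uniform_sampling}, in exact parallel to how \Cref{cor:uniform_wr} is deduced from \Cref{thm:uniform_approx}. First I would recall from \Cref{sec:intro_model} that the Widom--Rowlinson model on $\volume$ with $\numTypes$ particle types, each of radius $\radius$ and the single constant fugacity $\fugacity$, is by definition the hard-constraint point process $(\volume, \interactionMatrixWR, \fugacity)$ whose interaction matrix satisfies $\interactionMatrixWR[i][j] = 0$ when $i = j$ and $\interactionMatrixWR[i][j] = 2\radius$ when $i \neq j$ (the sum of the two equal radii). In particular $\wrGibbs{\volume}{\radius}{\fugacity}$ and $\generalGibbs{\volume}{\interactionMatrixWR}{\fugacity}$ denote the very same distribution, so any $\samplingError$-approximate sampler for the latter is one for the former.

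Next I would evaluate the volume exclusion matrix $\volumeMatrix$ of this process: since $\volumeMatrix[i][j] = \vol{\ball{\interactionMatrixWR[i][j]}}$, we get $\volumeMatrix[i][j] = 0$ on the diagonal and $\volumeMatrix[i][j] = \vol{\ball{2\radius}} = 2^{\dimension}\vol{\ball{\radius}}$ off it, using that dilating a ball by the factor $2$ multiplies its volume by $2^{\dimension}$. Hence every row of $\volumeMatrix$ sums to $(\numTypes - 1)\,2^{\dimension}\vol{\ball{\radius}}$, and since $\volumeMatrix$ is symmetric and non-negative this is also $\lOneNorm{\volumeMatrix}$. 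The hypothesis $\fugacity < \eulerE \big/ \bigl((\numTypes-1)\,2^{\dimension}\vol{\ball{\radius}}\bigr)$ is therefore literally the condition $\fugacity < \eulerE / \lOneNorm{\volumeMatrix}$ required by \Cref{thm:uniform_sampling}. Applying that theorem to $(\volume, \interactionMatrixWR, \fugacity)$ under \Cref{assumption:floating_pointer} then yields, for every $\samplingError \in (0,1]$, an $\samplingError$-approximate sampler for $\generalGibbs{\volume}{\interactionMatrixWR}{\fugacity} = \wrGibbs{\volume}{\radius}{\fugacity}$ with running time in $\poly{\frac{\vol{\volume}}{\samplingError}}[\big]$, which is the claim of the corollary.

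I do not expect any genuine obstacle: the argument is pure bookkeeping, translating the Widom--Rowlinson parameters into the language of hard-constraint point processes and reading off $\lOneNorm{\volumeMatrix}$; all of the substantive work---the canonical discretization, the bound on the total-variation distance via \Cref{lemma:canonical_sampling_error,lemma:appx_sampling_hcPPGibbsContinuous}, and the run-time analysis of Glauber dynamics on the resulting graph---is already contained in the proof of \Cref{thm:uniform_sampling}. The only point meriting a word of care is the degenerate case $\numTypes = 1$, in which $\interactionMatrixWR$ is the zero matrix, $\lOneNorm{\volumeMatrix} = 0$, and the bound on $\fugacity$ is vacuous under the convention $\inf\emptyset = \infty$; this case is already subsumed by \Cref{thm:uniform_sampling} and needs no separate treatment.
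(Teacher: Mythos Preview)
Your proposal is correct and matches the paper's approach exactly: the paper states that \Cref{cor:uniform_wr_sampling} follows immediately from \Cref{thm:uniform_sampling} in the same way that \Cref{cor:uniform_wr} follows from \Cref{thm:uniform_approx}, namely by observing that the interaction matrix is $0$ on the diagonal and $2\radius$ otherwise, so that $\lOneNorm{\volumeMatrix} = (\numTypes - 1)\,2^{\dimension}\vol{\ball{\radius}}$.
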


\begin{remark}
	\label{remark:univariate_to_multivariate_sampling}
	Analogously to \Cref{thm:uniform_approx}, letting $\fugacity_{\max} = \max_{i \in [\numTypes]} \fugacity[i]$, \Cref{thm:uniform_sampling} extends to the non-uniform setting when the condition is replaced by
	\[
    	\fugacity_{\max} < \frac{\eulerE}{\lOneNorm{\volumeMatrix}} .
        \qedhere
	\]
\end{remark}

Just as for \Cref{thm:uniform_approx}, we also state a sampling version of \Cref{thm:non_uniform_approx} under \Cref{assumption:floating_pointer}.
\nonUniformSampling
% \begin{theorem}
% 	\label{thm:non_uniform_sampling}
% 	Suppose a computational model that satisfies \Cref{assumption:floating_pointer}.
% 	Let $(\volume, \interactionMatrix, \fugacity)$ be a hard-constraint point process with $\numTypes \in \N_{\ge 1}$ particle types and $\volume = [0, \sidelength)^{\dimension}$ for some $\sidelength \in \R_{>0}$.
% 	Further, let $\volumeMatrix$ be the corresponding volume exclusion matrix.
% 	If there is a function $f: [\numTypes] \to \R_{>0}$ such that for all $i \in [\numTypes]$ it holds that
% 	\[
% 		f(i) > \sum_{j \in [\numTypes]} \volumeMatrix[i][j] f(j) \fugacity[j] ,
% 	\]
% 	then, for all $\samplingError \in (0, 1]$, there is an $\samplingError$-approximate sampler for $\generalGibbs{\volume}{\interactionMatrix}{\fugacity}$ with running time in $\poly{\frac{\vol{\volume}}{\approxError}, \ln\left(\frac{\max_{i \in [\numTypes]} f(i)}{\min_{i \in [\numTypes]} f(i)}\right)}$.
% \end{theorem}

\begin{proof}
	As in the proof of \Cref{thm:uniform_sampling}, we start by using \Cref{lemma:canonical_sampling_error}, by which there exists a resolution $\resolution \in \bigTheta{\samplingError^{-1/\dimension}\vol{\volume}^{1/\dimension}}[\big]$ such that
	\[
		\dtv{\generalGibbs{\volume}{\interactionMatrix}{\fugacity}}{\hcPPGibbsContinuous{\volume}{\hcPPGraph{\canonicalPoints{\resolution}}}{\hcPPWeight{\canonicalPoints{\resolution}}}{\allocation}}
		\le \frac{\samplingError}{2} ,
	\]
	where again
	\[
		\allocation[x] = \prod_{i \in [\dimension]} \left[x^{(i)}, x^{(i)} + \frac{1}{\resolution}\right).
	\]
	The corresponding graph $\hcPPGraph{\canonicalPoints{X}}$ consists of $\bigTheta{\frac{\vol{\volume}}{\samplingError}}[\big]$ vertices.

	Analogously to the proof of \Cref{thm:non_uniform_approx}, there is a function $f\colon [\numTypes] \to \R_{>0}$ such that for all $i \in [\numTypes]$, it holds that $f(i) > \sum_{j \in [\numTypes]} \volumeMatrix[i][j] f(j) \fugacity[j]$, which implies that there is also a function $g\colon \hcPPVertices{\canonicalPoints{\resolution}} \to \R_{>0}$ that satisfies the condition of \Cref{thm:clique_dynamics_condition_sampling}.
	Thus, we can sample $\frac{\samplingError}{2}$-approximately from $\hcGibbs{\hcPPGraph{\canonicalPoints{\resolution}}}{\hcPPWeight{\canonicalPoints{\resolution}}}$ in the desired running time.
	Similarly to the proof of \Cref{thm:uniform_sampling}, applying \Cref{alg:perturbation_algorithm} to independent sets drawn from this $\frac{\samplingError}{2}$-approximation of $\hcGibbs{\hcPPGraph{\canonicalPoints{\resolution}}}{\hcPPWeight{\canonicalPoints{\resolution}}}$ yields the desired sampler, and by \Cref{lemma:appx_sampling_hcPPGibbsContinuous} we obtain the required bound on the total-variation distance.
\end{proof}

The sampling version of \Cref{cor:unbalanced_WR} follows.

\begin{corollary}
	\label{cor:unbalanced_WR_sampling}
	Suppose a computational model that satisfies \Cref{assumption:floating_pointer}.
	Let $\volume = [0, \sidelength)^{\dimension}$ for some $\sidelength \in \R_{>0}$.
	Further, let $\radius \in \R_{>0}$, $\fugacity_1, \fugacity_2 \in \R_{\ge 0}$.
	Denote by $\wrGibbs{\volume}{\radius}{\fugacity_1, \fugacity_2}$ the Gibbs distribution of the Widom--Rowlinson model on $\volume$ with $2$ particle types, both with the same radius $\radius$ but (possibly) different fugacities $\fugacity_1, \fugacity_2$.
	If
	\[
		\fugacity_1 \fugacity_2 < \frac{1}{4^{\dimension} \vol{\ball{\radius}}^{2}},
	\]
	then, for all $\samplingError \in (0, 1]$, there is an $\samplingError$-approximate sampler for $\wrGibbs{\volume}{\radius}{\fugacity_1, \fugacity_2}$ with running time in $\poly{\frac{\vol{\volume}}{\samplingError}}[\big]$.
\end{corollary}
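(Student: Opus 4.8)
The plan is to derive the statement from \Cref{thm:non_uniform_sampling} in exactly the way that \Cref{cor:unbalanced_WR} is derived from \Cref{thm:non_uniform_approx}; the two arguments differ only in which of the two ``non-uniform'' results is invoked at the end. First I would recast the two-type Widom--Rowlinson model on $\volume = [0, \sidelength)^{\dimension}$ with common radius $\radius$ and fugacities $\fugacity_1, \fugacity_2$ as the hard-constraint point process $(\volume, \interactionMatrix, \fugacity)$ with $\numTypes = 2$ particle types, where $\interactionMatrix$ is the symmetric $2 \times 2$ matrix with zero diagonal and off-diagonal entry $2 \radius$, and $\fugacity[i] = \fugacity_i$ for $i \in \{1, 2\}$. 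This is precisely the instantiation from \Cref{sec:intro_model}, so that $\generalGibbs{\volume}{\interactionMatrix}{\fugacity} = \wrGibbs{\volume}{\radius}{\fugacity_1, \fugacity_2}$, and the associated volume exclusion matrix $\volumeMatrix$ has zero diagonal and $\volumeMatrix[1][2] = \volumeMatrix[2][1] = \vol{\ball{2 \radius}} = 2^{\dimension} \vol{\ball{\radius}}$.

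Next I would verify the hypothesis of \Cref{thm:non_uniform_sampling} for $(\volume, \interactionMatrix, \fugacity)$ by reusing the witness function from \Cref{cor:unbalanced_WR}. Since $\fugacity_1 \fugacity_2 < \frac{1}{4^{\dimension} \vol{\ball{\radius}}^{2}}$ is strict, there is some $\alpha \in \left(0, \frac{1}{2}\right]$ with $\fugacity_1 \fugacity_2 \le (1 - \alpha) \frac{1}{4^{\dimension} \vol{\ball{\radius}}^{2}}$, which by $2^{\dimension} \vol{\ball{\radius}} = \vol{\ball{2 \radius}}$ rewrites as $\volumeMatrix[1][2] \volumeMatrix[2][1] \fugacity[1] \fugacity[2] \le 1 - \alpha$. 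Fixing $\beta \in \left(0, \frac{\alpha}{1 - \alpha}\right)$ and setting $f(1) = 1$ and $f(2) = (1 + \beta) \volumeMatrix[2][1] \fugacity[1]$, one checks the two strict inequalities $\volumeMatrix[2][1] f(1) \fugacity[1] = \volumeMatrix[2][1] \fugacity[1] < f(2)$ and $\volumeMatrix[1][2] f(2) \fugacity[2] = (1 + \beta) \volumeMatrix[1][2] \volumeMatrix[2][1] \fugacity[1] \fugacity[2] \le (1 + \beta)(1 - \alpha) < 1 = f(1)$; equivalently $f(i) > \sum_{j \in [2]} \volumeMatrix[i][j] f(j) \fugacity[j]$ for $i \in \{1, 2\}$, which is exactly the condition required by \Cref{thm:non_uniform_sampling}.

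Finally I would apply \Cref{thm:non_uniform_sampling} to $(\volume, \interactionMatrix, \fugacity)$: under \Cref{assumption:floating_pointer} it provides, for every $\samplingError \in (0, 1]$, an $\samplingError$-approximate sampler for $\generalGibbs{\volume}{\interactionMatrix}{\fugacity} = \wrGibbs{\volume}{\radius}{\fugacity_1, \fugacity_2}$ with running time in $\poly{\frac{\vol{\volume}}{\samplingError}}$ and $\poly{\ln\left(\frac{\max_{i \in [2]} f(i)}{\min_{i \in [2]} f(i)}\right)}$. It then remains only to absorb the second factor: the ratio $\max_{i \in [2]} f(i) / \min_{i \in [2]} f(i)$ depends solely on $\radius$, $\dimension$, and the fixed range of $\fugacity$, all of which we treat as constants, so $\ln\left(\frac{\max_{i \in [2]} f(i)}{\min_{i \in [2]} f(i)}\right) \in \bigO{1}$ and the running time simplifies to $\poly{\frac{\vol{\volume}}{\samplingError}}[\big]$, as claimed. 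I do not expect a genuine obstacle: the entire content is the mechanical transfer of the Dobrushin-type witness $f$ from the approximation corollary to the sampling theorem. The only points requiring minor care are that \Cref{thm:non_uniform_sampling} needs the witness inequality to hold strictly, which the choice $\beta > 0$ secures, and that the strictness of the hypothesis $\fugacity_1 \fugacity_2 < \frac{1}{4^{\dimension} \vol{\ball{\radius}}^{2}}$ is what leaves the slack needed to pick $\alpha > 0$ at the outset.
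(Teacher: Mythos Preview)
Your proposal is correct and matches the paper's approach exactly: the paper states the corollary with only the sentence ``The sampling version of \Cref{cor:unbalanced_WR} follows,'' and what you have written is precisely the unpacking of that sentence---carrying the witness $f$ from the proof of \Cref{cor:unbalanced_WR} over to the hypothesis of \Cref{thm:non_uniform_sampling} and absorbing the $\ln\!\left(\max_i f(i)/\min_i f(i)\right)$ factor as a constant.
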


Note that the role of the allocation $\allocation$ in our sampling algorithms is fundamentally different than in our approximations for the partition function.
For approximating the partition function, it is sufficient to know that a suitable allocation for a given point set $X$ exists. %to make sure that $\hcPartitionFunction{\hcPPGraph{X}}{\hcPPWeight{X}}$ approximates $\generalPartitionFunction{\volume}{\interactionMatrix}{\fugacity}$.
In contrast, \Cref{lemma:appx_sampling_hcPPGibbsContinuous}, which is a core component of our sampling method, actually requires an explicit representation of~$\allocation$ in order to sample for each $x \in X$ from $\invAllocation[x]$.
This raises the question whether there is a way to circumvent this.

\section{Concentration of random discretizations}\label{sec:random_discretization}
In this section we investigate the behavior of discretizations of a hard-constraint point process $(\volume, \interactionMatrix, \fugacity)$ based on uniformly random finite point sets $X \subseteq \volume$.
We show sufficient conditions for the partition function of such random discretizations to get concentrated around the partition function of the continuous point process for a polynomial number of points.

The first part of this section focuses on how such a concentration result can be obtained from \Cref{thm:discretization_error}.
Consequently, such an approach relies on star-convexity.

In the second part of the section, we show that a slightly weaker concentration result can be obtained without assuming star-convexity. 
To this end, we first argue that the discrete partition function concentrates from below and use this to derive concentration from above.

In both cases, our sufficient condition for the concentration is closely related to the concept of a specific partitioning of $\volume$.

\edpartitioning

the following lemma relates the notion of a $\volumeScaling$-$\distanceError$-partitioning with our previously introduced concept of a $\volumeError$-$\distanceError$-allocation.

\begin{lemma}\label{lemma:partitioning_allocation}
	Let $\volume \subset \R^{\dimension}$ be bounded and measurable with $\vol{\volume} > 0$.
	Let $\volumeScaling_{1} \in (0, 1]$ and $\distanceError \in \R_{>0}$ and assume there is a $\volumeScaling_{1}$-$\distanceError$-partitioning of $\volume$ with size $\numPartitions$.
	Further, let $\volumeError_2 \in (0, 1]$, $\allocationProb \in (0, 1]$, and draw $X \subset \volume$ with $\size{X} = \numPoints \ge 48 \left(\frac{1}{\volumeError_{2}}\right)^2 \frac{1}{\volumeScaling_{1}} \numPartitions \ln \left(\frac{2 \numPartitions}{\allocationProb}\right)$ uniformly at random.
	With probability at least $1 - \allocationProb$ there is a $\volumeError_{2}$-$\distanceError$-allocation for $X$.
\end{lemma}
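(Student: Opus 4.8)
The plan is to build the $\volumeError_2$-$\distanceError$-allocation directly from the $\volumeScaling_1$-$\distanceError$-partitioning $(\volume_k)_{k \in [\numPartitions]}$ by choosing, within each part $\volume_k$, a representative point of $X$ and allocating all of $\volume_k$ to it. Concretely: first I would condition on the event that every part $\volume_k$ contains at least one point of $X$ --- call this event $\mathcal{E}_1$ --- and, on that event, pick for each $k$ some $x_k \in X \cap \volume_k$; then define $\allocation$ so that $\invAllocation[x_k] \supseteq \volume_k$ (with points in $X \setminus \{x_k : k \in [\numPartitions]\}$ getting empty preimages, and overlaps between parts resolved arbitrarily to keep the preimages a partition of $\volume$). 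Since each $\volume_k$ has diameter at most $\distanceError$ and $x_k \in \volume_k$, property (2) of a $\volumeError_2$-$\distanceError$-allocation is immediate: $\dist{x_k}{y} \le \distanceError$ for all $y \in \invAllocation[x_k] \supseteq \volume_k$ (and $y$ coming from other parts we can also route so distances stay $\le \distanceError$, or more cleanly, make the $\volume_k$ disjoint first by replacing $\volume_k$ with $\volume_k \setminus \bigcup_{j < k} \volume_j$, which only shrinks diameters and keeps $\vol{\volume_k} \ge \volumeScaling_1 \vol{\volume}/\numPartitions$ false in general --- so I should instead keep the $\volume_k$ as given and handle overlap carefully, see below).

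The heart of the argument is property (1): the volume bound $\left(1 - \volumeError_2\right) \frac{\vol{\volume}}{\numPoints} \le \vol{\invAllocation[x]} \le \left(1 + \volumeError_2\right) \frac{\vol{\volume}}{\numPoints}$ for all $x \in X$. Here is where the concentration comes in. Allocate region $\volume$ to representatives by, say, assigning each point $y \in \volume$ to the representative of the part of least index containing $y$; then $\vol{\invAllocation[x_k]}$ equals the volume of those $y$ whose least-index part is $k$, which lies between $0$ and $\vol{\volume_k}$. That does not give a two-sided bound per se. The cleaner route: use the partitioning only to guarantee the diameter and coverage, and assign volume proportionally to how many points of $X$ land in each cell of a refinement. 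Actually the intended mechanism must be: we want $\size{\invAllocation[x] \cap \volume}$-volume $\approx \vol{\volume}/\numPoints$, and the natural way is to let each of the $\numPoints$ points of $X$ grab an equal $\vol{\volume}/\numPoints$ slice --- but then the diameter bound can fail. So the real plan is: partition $\volume$ by the given $\volume_k$; within $\volume_k$, distribute $\vol{\volume_k}$ among the $\size{X \cap \volume_k}$ points of $X$ in $\volume_k$ as evenly as possible, giving each such point a sub-region of volume $\vol{\volume_k}/\size{X\cap \volume_k}$ (measurably, by slicing $\volume_k$). Then every $x \in X$ with $x \in \volume_k$ gets $\vol{\invAllocation[x]} = \vol{\volume_k}/\size{X \cap \volume_k}$, and the diameter bound holds since $\invAllocation[x] \subseteq \volume_k$. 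The volume bound then reduces to showing $\left(1-\volumeError_2\right)\frac{\vol{\volume}}{\numPoints} \le \frac{\vol{\volume_k}}{\size{X \cap \volume_k}} \le \left(1+\volumeError_2\right)\frac{\vol{\volume}}{\numPoints}$ for every $k$, i.e. that $\size{X \cap \volume_k}$ is close to $\numPoints \cdot \vol{\volume_k}/\vol{\volume}$, which is its expectation.

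So the main technical step is a concentration-plus-union-bound argument: $\size{X \cap \volume_k}$ is a sum of $\numPoints$ i.i.d.\ Bernoulli$(p_k)$ variables with $p_k = \vol{\volume_k}/\vol{\volume} \ge \volumeScaling_1/\numPartitions$ (using the definition of $\volumeScaling_1$-$\distanceError$-partitioning and that the $\volume_k$ partition $\volume$), so $\E{\size{X\cap\volume_k}} = \numPoints p_k \ge \numPoints \volumeScaling_1/\numPartitions$. By a multiplicative Chernoff bound, $\Pr{\absolute{\size{X\cap\volume_k} - \numPoints p_k} > \frac{\volumeError_2}{2} \numPoints p_k} \le 2 \exp\left(-\frac{\volumeError_2^2}{12} \numPoints p_k\right) \le 2\exp\left(-\frac{\volumeError_2^2 \volumeScaling_1 \numPoints}{12 \numPartitions}\right)$, and choosing $\numPoints \ge 48 \volumeError_2^{-2} \volumeScaling_1^{-1} \numPartitions \ln(2\numPartitions/\allocationProb)$ makes this at most $\allocationProb/\numPartitions$; a union bound over $k \in [\numPartitions]$ gives failure probability at most $\allocationProb$. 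On the complementary event, $\size{X\cap\volume_k} \in \left[(1-\volumeError_2/2)\numPoints p_k, (1+\volumeError_2/2)\numPoints p_k\right]$ for all $k$, and a short computation (using $p_k \le 1$ is not quite enough --- one needs that the $p_k$ sum to $1$, so $\frac{\vol{\volume_k}}{\size{X\cap\volume_k}} = \frac{p_k \vol{\volume}}{\size{X\cap\volume_k}} \in \left[\frac{1}{1+\volumeError_2/2},\frac{1}{1-\volumeError_2/2}\right]\frac{\vol{\volume}}{\numPoints}$) yields the desired two-sided bound after checking $\frac{1}{1-\volumeError_2/2} \le 1+\volumeError_2$ and $\frac{1}{1+\volumeError_2/2} \ge 1-\volumeError_2$ for $\volumeError_2 \in (0,1]$. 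Note this choice of $\numPoints$ also forces $\size{X \cap \volume_k} \ge 1$ for every $k$, so $\mathcal{E}_1$ is subsumed and the representatives exist. The main obstacle is really just being careful with the measurability of the sub-slicing of each $\volume_k$ into equal-volume pieces and with the slack constants so that the stated bound $48 \volumeError_2^{-2} \volumeScaling_1^{-1} \numPartitions \ln(2\numPartitions/\allocationProb)$ comes out exactly; the probabilistic content is a routine Chernoff bound.
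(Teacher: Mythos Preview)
Your proposal is correct and, once you settle on the right construction (subdividing each part $\volume_k$ into $\size{X \cap \volume_k}$ equal-volume measurable pieces and assigning one to each point of $X \cap \volume_k$), it is exactly the paper's proof: the paper also applies a multiplicative Chernoff bound to $Y_k = \size{X \cap \volume_k}$ (with deviation parameter $\volumeError_2/4$ rather than your $\volumeError_2/2$, which only affects slack in the constant $48$), takes a union bound over the $\numPartitions$ parts, and then verifies the distance and volume conditions just as you describe. Your initial detour about overlaps is unnecessary --- by definition a $\volumeScaling_1$-$\distanceError$-partitioning has disjoint parts --- but the core argument is identical.
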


\begin{proof}
	Let $(\volume_{i})_{i \in [\numPartitions]}$ be a $\volumeScaling_1$-$\distanceError$-partitioning of $\volume$ with size $\numPartitions$.
	Further, let $X \subset \volume$ with $\numPoints = 48 \left(\frac{1}{\volumeError_2}\right)^2 \frac{1}{\volumeScaling_1} \numPartitions \ln \left(\frac{2 \numPartitions}{\allocationProb}\right)$ be drawn uniformly at random.

	The main idea of the proof is to show that with probability at least $1 - \allocationProb$ it holds that for all partitions $\volume_{i}$ for $i \in [\numPartitions]$ the number of points from $X$ that are in $\volume_{i}$ is close to $\frac{\vol{\volume_{i}}}{\vol{\volume}} \cdot \size{X}$, and especially all partitions contain at least one point from $X$.
	If this is the case, we can construct an allocation by fairly allocating each partition $\volume_{i}$ to the points in $X \cap \volume_{i}$ while ignoring any distance constraints.
	As we argue below, this always results in an allocation with the desired properties.

	For all $i \in [\numPartitions]$, let $X_i = X \cap \volume_i$ and let $Y_i = \size{X_i}$.
	We now show that, with probability at least $1 - \allocationProb$, it holds that
	\begin{align} \label[ineq]{eq:partitioning_allocation:Y_i}
		\left(1 - \frac{\volumeError_2}{4}\right) \E{Y_i}
		\le Y_i
		\le \left(1 + \frac{\volumeError_2}{4}\right) \E{Y_i}
	\end{align}
	for all $i \in [\numPartitions]$.
	To prove this claim, note that for each $i \in [\numPartitions]$ the random variable $Y_i$ follows a binomial distribution with $\numPoints$ trials and success probability $\frac{\vol{\volume_i}}{\vol{\volume}}$.
	Applying Chernoff's inequality yields
	\[
		\Pr{\absolute{Y_i - \E{Y_i}} \ge \frac{\volumeError_2}{4}} \le 2 \cdot \eulerE^{- \frac{\volumeError_2^2}{48} \E{Y_i}} .
	\]
	Further, note that we have $\E{Y_i} = \numPoints \frac{\vol{\volume_i}}{\vol{\volume}} \ge \volumeScaling_1 \frac{\numPoints}{\numPartitions}$, which gives us
	\[
		\Pr{\absolute{Y_i - \E{Y_i}} \ge \frac{\volumeError_2}{4}} \le 2 \cdot \eulerE^{- \frac{\volumeError_2^2}{48} \volumeScaling_1 \frac{\numPoints}{\numPartitions}} .
	\]
	By our choice of $\numPoints$, we obtain
	\[
		\Pr{\absolute{Y_i - \E{Y_i}} \ge \frac{\volumeError_2}{4}} \le \frac{\allocationProb}{\numPartitions}.
	\]
	Finally, applying union bound over $i \in [\numPartitions]$ proves that \cref{eq:partitioning_allocation:Y_i} fails with probability at most $\allocationProb$.

	Now, let us assume \cref{eq:partitioning_allocation:Y_i} is satisfied.
	Note that this especially implies that $Y_i > 0$.
	We now construct our allocation $\allocation$ as follows.
	For each $i \in [\numPartitions]$, partition $\volume_i$ into sets $\left(\volume_i^{(j)}\right)_{j \in [Y_i]}$ such that all $\volume_i^{(j)}$ are measurable and have the same volume.
	Note that such a division of $\volume_i$ always exists.
	Now, for each such $\volume_i^{(j)}$, fix exactly one $x \in X_i$ and set $\allocation[y] = x$ for all $y \in \volume_i^{(j)}$.
	It remains to argue that $\allocation$ is a $\volumeError_2$-$\distanceError$-allocation.
	To this end, choose some $x \in X$ and assume that $x \in X_i$ for some $i \in [\numPartitions]$.
	Further, let $y \in \invAllocation[x]$ and observe that by construction both, $x$ and $y$, are in $\volume_i$.
	Thus, we have
	\[
		\dist{x}{y} \le \distanceError.
	\]
	Moreover, because $\E{Y_i} = \numPoints \frac{\vol{\volume_i}}{\vol{\volume}}$ we have that $\vol{\volume_i} = \E{Y_i} \frac{\vol{\volume}}{\numPoints}$.
	Thus, by \cref{eq:partitioning_allocation:Y_i}, we obtain
	\[
		\vol{\invAllocation[x]}
		= \frac{\vol{\volume_i}}{Y_i}
		= \frac{\E{Y_i}}{Y_i} \cdot \frac{\vol{\volume}}{\numPoints}
		\le \frac{1}{1 - \volumeError_2 / 4} \cdot  \frac{\vol{\volume}}{\numPoints}
		\le \left(1 + \volumeError_2\right) \frac{\vol{\volume}}{\numPoints}
	\]
	and analogously
	\[
		\vol{\invAllocation[x]}
		\ge \frac{1}{1 + \volumeError_2 / 4} \cdot \frac{\vol{\volume}}{\numPoints}
		\ge \left(1 - \volumeError_2\right) \frac{\vol{\volume}}{\numPartitions} ,
	\]
	which concludes the proof.
\end{proof}

\subsection{Concentration with star-convexity} \label{subsec:concentration_star_convex}
Assuming that the considered region is star-convex, we can combine \Cref{thm:discretization_error,lemma:partitioning_allocation} to obtain the following probabilistic bound on the difference between the partition function of a hard-constraint point process and the hard-core partition function of a random discretization.

\begin{corollary}
	\label{cor:random_discretization_error}
	Let $(\volume, \interactionMatrix, \fugacity)$ be a hard-constraint point process with $\numTypes \in \N_{\ge 1}$ particle types, and assume $\volume \subset \R^{\dimension}$ is star-convex.
	Further, set $\fugacity_{\max} = \max_{i \in [\numTypes]} \fugacity[i]$ and set $\interactionMatrix_{\min} = \min_{i, j \in [\numTypes]} \{\interactionMatrix[i][j] \mid \interactionMatrix[i][j] > 0\}$.
	Let $\distanceError \in (0, \interactionMatrix_{\min}/2]$ and $\volumeScaling_1 \in (0, 1]$ and assume there is a $\volumeScaling_{1}$-$\distanceError$-partitioning of $\volume$ size $\numPartitions$.
	For any $\volumeError_2 \in \left(0, \frac{1}{2}\right]$, $\allocationProb \in (0, 1]$ and $\numPoints \in \N$ with
	\[
		\max\left\{4 \fugacity_{\max} \vol{\volume}, 48 \left(\frac{1}{\volumeError_{2}}\right)^2 \frac{1}{\volumeScaling_{1}} \numPartitions \ln \left(\frac{2 \numPartitions}{\allocationProb}\right)\right\}
		\le \numPoints
	\]
	draw $X \subset \volume$ with $\size{X} = \numPoints$ uniformly at random and let $(\hcPPGraph{X}, \hcPPWeight{X})$ be the corresponding discretization of $(\volume, \interactionMatrix, \fugacity)$.
	It holds that
	\[
		\absolute{\hcPartitionFunction{\hcPPGraph{X}}{\hcPPWeight{X}} - \generalPartitionFunction{\volume}{\interactionMatrix}{\fugacity}}
		\le \left(\eulerE^{\frac{8}{\size{X}} \sum_{i \in [\numTypes]} \fugacity[i]^2 \vol{\volume}^2} \eulerE^{\left( 2 \volumeError_2 + \left(\frac{4 \distanceError }{\interactionMatrix_{\min}} \right)^{\dimension} \right) \sum_{i \in [\numTypes]} \fugacity[i] \vol{\volume}} - 1\right) \generalPartitionFunction{\volume}{\interactionMatrix}{\fugacity}
	\]
	with probability at least $1-\allocationProb$.
\end{corollary}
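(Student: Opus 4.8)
The plan is to combine the deterministic discretization bound of \Cref{thm:discretization_error} with the probabilistic existence of a good allocation provided by \Cref{lemma:partitioning_allocation}. In other words, \Cref{lemma:partitioning_allocation} converts the hypothesis ``there is a $\volumeScaling_{1}$-$\distanceError$-partitioning of size $\numPartitions$'' into ``with probability at least $1-\allocationProb$, the uniformly random point set $X$ admits a $\volumeError_{2}$-$\distanceError$-allocation'', and on that event \Cref{thm:discretization_error} yields the stated error bound verbatim.

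First I would invoke \Cref{lemma:partitioning_allocation} with the given $\volumeScaling_{1}$-$\distanceError$-partitioning of size $\numPartitions$, with volume parameter $\volumeError_{2}$, and with failure probability $\allocationProb$; this is legitimate since the hypotheses $\volumeScaling_{1} \in (0,1]$, $\distanceError \in \R_{>0}$, $\volumeError_{2} \in (0,1]$, $\allocationProb \in (0,1]$ all hold, and the hypothesis on $\numPoints$ guarantees $\size{X} = \numPoints \ge 48 \left(\tfrac{1}{\volumeError_{2}}\right)^{2} \tfrac{1}{\volumeScaling_{1}} \numPartitions \ln\!\left(\tfrac{2\numPartitions}{\allocationProb}\right)$. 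The lemma then gives an event $\mathcal{E}$ of probability at least $1-\allocationProb$ on which there exists a $\volumeError_{2}$-$\distanceError$-allocation $\allocation$ for $X$. Conditioning on $\mathcal{E}$, I would verify the remaining hypotheses of \Cref{thm:discretization_error}: $\volume$ is star-convex by assumption; $\volumeError_{2} \in \left[0,\tfrac12\right]$ and $\distanceError \in \left[0,\tfrac{\interactionMatrix_{\min}}{2}\right]$ by assumption; and $4\fugacity_{\max}\vol{\volume} \le \size{X} < \infty$ because $\numPoints$ is at least the maximum appearing in the displayed lower bound (and is finite). Applying \Cref{thm:discretization_error} to the hard-core representation $(\hcPPGraph{X},\hcPPWeight{X})$ with this allocation gives, on $\mathcal{E}$,
\[
\absolute{\hcPartitionFunction{\hcPPGraph{X}}{\hcPPWeight{X}} - \generalPartitionFunction{\volume}{\interactionMatrix}{\fugacity}} \le \Big(\eulerE^{\frac{8}{\size{X}} \sum_{i \in [\numTypes]} \fugacity[i]^{2} \vol{\volume}^{2}} \, \eulerE^{\left( 2 \volumeError_{2} + \left(\frac{4 \distanceError}{\interactionMatrix_{\min}}\right)^{\dimension}\right) \sum_{i \in [\numTypes]} \fugacity[i] \vol{\volume}} - 1\Big) \generalPartitionFunction{\volume}{\interactionMatrix}{\fugacity},
\]
which is exactly the asserted inequality; since $\Pr{\mathcal{E}} \ge 1-\allocationProb$, the corollary follows.

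This argument is essentially bookkeeping rather than a genuine obstacle, and the one point that deserves care is precisely why the hypothesis takes a \emph{maximum} of two quantities: the term $4\fugacity_{\max}\vol{\volume}$ is what \Cref{thm:discretization_error} needs for its size condition, while the $48(\cdots)$ term is what \Cref{lemma:partitioning_allocation} needs for the union bound over the $\numPartitions$ parts, and both must hold at once. It is also worth remarking (mirroring the discussion after \Cref{thm:discretization_error}) that the allocation $\allocation$ furnished by \Cref{lemma:partitioning_allocation} is never used explicitly: \Cref{thm:discretization_error} only requires its existence, so no constructive handle on $\allocation$ is needed for the concentration statement — and in particular star-convexity is the only structural assumption on $\volume$ that is actually invoked.
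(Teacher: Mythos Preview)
Your proof is correct and follows exactly the paper's approach: the paper states that \Cref{cor:random_discretization_error} ``follows immediately from \Cref{thm:discretization_error} and \Cref{lemma:partitioning_allocation}'' without giving further detail, and your argument is precisely the bookkeeping that unpacks this sentence.
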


\Cref{cor:random_discretization_error} follows immediately from \Cref{thm:discretization_error} and \Cref{lemma:partitioning_allocation}.
Note that \Cref{cor:random_discretization_error} means that, if we can construct $\volumeScaling_1$-$\distanceError$-partitionings for some constant $\volumeScaling_1 > 0$ and arbitrarily small $\distanceError > 0$, we can use this to derive a concentration result for the hard-core partition function of discretizations based on random point sets.
A general statement for this is given in the following corollary.
\randomdiscretizationconcentration

A more precise statement can be obtained when looking at specific regions of $\dimension$-dimensional Euclidean space.
To demonstrate this, we once again consider cubic regions $\volume = [0, \sidelength)^{\dimension}$ for some $\sidelength \in \R_{>0}$.
For such regions, we have the following observation.

\begin{lemma}
	\label{lemma:hypercube_partitioning}
	Let $\volume = [0, \sidelength)^{\dimension}$ for some $\sidelength \in \R_{>0}$.
	For all $\distanceError \in \R_{>0}$ there is a $1$-$\distanceError$-partitioning of $\volume$ of size $\left\lceil \sqrt{\dimension} \frac{\sidelength}{\distanceError} \right\rceil^{\dimension}$.
\end{lemma}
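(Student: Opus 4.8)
The plan is to exhibit the obvious axis-aligned grid partition and check the two defining properties in \Cref{def:partitioning}. First I would set $N \defeq \left\lceil \sqrt{\dimension}\, \sidelength / \distanceError \right\rceil$, so that the claimed size is $\numPartitions = N^{\dimension}$, and subdivide each coordinate axis $[0,\sidelength)$ into the $N$ consecutive half-open intervals $I_m = \left[ m\sidelength/N,\, (m+1)\sidelength/N \right)$ for $m \in \{0, \dots, N-1\}$. Since $I_0, \dots, I_{N-1}$ are pairwise disjoint with union $[0,\sidelength)$, the boxes $\volume_{(m_1, \dots, m_{\dimension})} \defeq I_{m_1} \times \dots \times I_{m_{\dimension}}$, ranging over $(m_1, \dots, m_{\dimension}) \in \{0, \dots, N-1\}^{\dimension}$, form a partition of $\volume = [0,\sidelength)^{\dimension}$ into $N^{\dimension} = \numPartitions$ measurable pieces.

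Next I would verify condition~(1) with $\volumeScaling = 1$: each box is a translate of the cube $[0, \sidelength/N)^{\dimension}$, hence measurable with $\vol{\volume_k} = (\sidelength/N)^{\dimension} = \sidelength^{\dimension}/N^{\dimension} = \vol{\volume}/\numPartitions \ge 1 \cdot \vol{\volume}/\numPartitions$, so (1) holds (with equality). For condition~(2), I would fix one box and two points $x = \big(x^{(i)}\big)_{i \in [\dimension]}$, $y = \big(y^{(i)}\big)_{i \in [\dimension]}$ in it; then $\absolute{x^{(i)} - y^{(i)}} < \sidelength/N$ for every $i \in [\dimension]$, whence $\dist{x}{y} < \sqrt{\dimension}\, \sidelength/N$. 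Since $N = \left\lceil \sqrt{\dimension}\, \sidelength/\distanceError \right\rceil \ge \sqrt{\dimension}\, \sidelength/\distanceError$, we have $\sidelength/N \le \distanceError/\sqrt{\dimension}$, so $\dist{x}{y} < \distanceError$, giving (2) and finishing the proof.

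There is essentially no hard part: the argument is pure bookkeeping, the only points meriting a word being that the half-open intervals $I_0, \dots, I_{N-1}$ genuinely tile $[0,\sidelength)$ with no overlap and no gap (immediate from the formula for $I_m$), and that $N \ge 1$ so the construction is non-degenerate (clear since $\sqrt{\dimension}\,\sidelength/\distanceError > 0$). I would note in passing that the distance bound is actually strict, so the construction even accommodates a slightly smaller threshold, but since $\left\lceil \sqrt{\dimension}\, \sidelength/\distanceError \right\rceil^{\dimension}$ is precisely the number of boxes produced, no further optimization is needed to match the stated bound.
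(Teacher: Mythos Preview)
Your proof is correct and follows essentially the same approach as the paper: both subdivide $[0,\sidelength)^{\dimension}$ into a uniform grid of $N^{\dimension}$ axis-aligned half-open cubes with $N=\lceil \sqrt{\dimension}\,\sidelength/\distanceError\rceil$, then verify equal volumes and the diameter bound $\sqrt{\dimension}\,\sidelength/N\le\distanceError$. The only cosmetic difference is notation (the paper introduces an auxiliary side length $a'=\sidelength/N$), and your observation that the distance bound is actually strict is a minor extra detail not mentioned in the paper.
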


\begin{proof}
	Set $a = \frac{\distanceError}{\sqrt{\dimension}}$ and divide $\volume$ into disjoint hypercubes of side length $a' = \sidelength \left\lceil \frac{\sidelength}{a} \right\rceil^{-1}$.
	Note that $\frac{\sidelength}{a'} \in \N$.
	Formally, this gives the partitioning
	\[
		\volume_{i_1, \dots, i_{\dimension}} = \prod_{j \in [\dimension]} \left[i_j a', (i_j + 1) a'\right)
	\]
	for $(i_1, \dots, i_{\dimension}) \in \left\{0, \dots, \frac{\sidelength}{a'}-1\right\}^{\dimension}$.
	Obviously, this results in a partitioning of size $\left\lceil \sqrt{\dimension} \frac{\sidelength}{\distanceError} \right\rceil^{\dimension}$.
	Further, each partition has the same volume and their diameter is upper bounded by $\sqrt{\dimension} a' \le \sqrt{\dimension} a = \distanceError$, which proves that the partitioning is a $1$-$\distanceError$-partitioning.
\end{proof}

\begin{remark}
	Note that the partitioning constructed in the proof of \Cref{lemma:hypercube_partitioning} can in fact be generated by taking the canonical point set $\canonicalPoints{\resolution}$ for the smallest feasible resolution $\resolution \ge \frac{\sqrt{\dimension}}{\distanceError}$ and taking the pre-image of the $0$-$\distanceError$-allocation $\allocation$, constructed in the proof of \Cref{lemma:canonical_allocation}, for each point in $\canonicalPoints{\resolution}$.
	This is not a coincidence.
	In fact, for every point set $X$ with a $\volumeError$-$\distanceError$-allocation we can construct a $(1-\volumeError)$-$2\distanceError$-partitioning, which, together with \Cref{lemma:hypercube_partitioning}, shows an interesting equivalence between both concepts.
	The results in this section could as well be stated in terms of allocations.
	However, as we are not interested in discretizations for specific point sets, it is more natural to state our results in terms of partitionings.
\end{remark}

\begin{corollary}
	\label{cor:hypercube_random_discretization_concentration}
	Let $(\volume, \interactionMatrix, \fugacity)$ be a hard-constraint point process with $\numTypes \in \N_{\ge 1}$ particle types and $\volume = [0, \sidelength)^{\dimension}$ for some $\sidelength \in \R_{>0}$. Let $X \subset \volume$ with $\size{X} = \numPoints$, chosen uniformly at random.
	For all $\discretizationError \in (0, 1]$ and $\allocationProb \in (0, 1]$, there exists an $\numPoints_{\discretizationError, \allocationProb} \in \bigTheta{\vol{\volume}^{4}\discretizationError^{-3} \ln \big(\frac{\vol{\volume}}{\discretizationError \allocationProb}\big)}[\big]$ such that for all $\numPoints \ge \numPoints_{\discretizationError, \allocationProb}$, with probability at least $1-\allocationProb$ it holds that
	$
	\eulerE^{-\discretizationError}\generalPartitionFunction{\volume}{\interactionMatrix}{\fugacity}
	\le \hcPartitionFunction{\hcPPGraph{X}}{\hcPPWeight{X}}
	\le \eulerE^{\discretizationError}\generalPartitionFunction{\volume}{\interactionMatrix}{\fugacity}
	$.
\end{corollary}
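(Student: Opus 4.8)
The statement is the hypercube specialization of \Cref{cor:random_discretization_concentration}, and the plan is to prove it along the same lines — by combining \Cref{cor:random_discretization_error} with a suitable partitioning — except that for $\volume=[0,\sidelength)^{\dimension}$ the partitioning supplied by \Cref{lemma:hypercube_partitioning} is explicit enough to make every exponent in $\vol{\volume}$, $\discretizationError$, and $\allocationProb$ concrete. Note first that $\volume$ is star-convex, so \Cref{cor:random_discretization_error} applies; as throughout \Cref{sec:algo} and \Cref{sec:random_discretization}, we treat $\numTypes$, $\dimension$, the entries of $\interactionMatrix$, and the range of $\fugacity$ as constants and regard $\vol{\volume}$ as the asymptotic parameter, so that all edge cases hold ``for $\vol{\volume}$ sufficiently large''.

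First I would fix the auxiliary error parameters. Choose $\distanceError \in \R_{>0}$ to be the largest value with
\[
	\left(\frac{4\distanceError}{\interactionMatrix_{\min}}\right)^{\dimension}\sum_{i \in [\numTypes]}\fugacity[i]\vol{\volume} \le \frac{\discretizationError}{9}
	\quad\text{and}\quad
	\distanceError \le \frac{\interactionMatrix_{\min}}{2};
\]
for $\vol{\volume}$ large the first constraint is binding, giving $\distanceError = \bigTheta{(\discretizationError/\vol{\volume})^{1/\dimension}}$ (if $\interactionMatrix_{\min}=\infty$ the $\distanceError$-term is identically $0$ and any $\distanceError$ of this order works). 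Pick $\volumeError_2 \in (0,1/2]$ with $2\volumeError_2\sum_{i \in [\numTypes]}\fugacity[i]\vol{\volume} \le \discretizationError/9$, so $\volumeError_2 = \bigTheta{\discretizationError/\vol{\volume}}$. By \Cref{lemma:hypercube_partitioning} there is a $\volumeScaling_1$-$\distanceError$-partitioning of $\volume$ with $\volumeScaling_1 = 1$ and size $\numPartitions = \lceil\sqrt{\dimension}\,\sidelength/\distanceError\rceil^{\dimension}$; since $\sidelength^{\dimension} = \vol{\volume}$ and $\distanceError^{\dimension} = \bigTheta{\discretizationError/\vol{\volume}}$, this gives $\numPartitions = \bigTheta{\vol{\volume}^{2}/\discretizationError}$.

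Next I would take $\numPoints_{\discretizationError,\allocationProb}$ to be the maximum of $4\fugacity_{\max}\vol{\volume}$, of $72\sum_{i \in [\numTypes]}\fugacity[i]^{2}\vol{\volume}^{2}/\discretizationError$, and of $48\,(1/\volumeError_2)^{2}\,(1/\volumeScaling_1)\,\numPartitions\,\ln(2\numPartitions/\allocationProb)$. Substituting $\volumeError_2 = \bigTheta{\discretizationError/\vol{\volume}}$, $\volumeScaling_1 = 1$, $\numPartitions = \bigTheta{\vol{\volume}^{2}/\discretizationError}$, and noting $\ln(2\numPartitions/\allocationProb) = \bigTheta{\ln(\vol{\volume}/(\discretizationError\allocationProb))}$, the third term is $\bigTheta{\vol{\volume}^{4}\discretizationError^{-3}\ln(\vol{\volume}/(\discretizationError\allocationProb))}$, which dominates the first two ($\bigTheta{\vol{\volume}}$ and $\bigTheta{\vol{\volume}^{2}/\discretizationError}$), so $\numPoints_{\discretizationError,\allocationProb} \in \bigTheta{\vol{\volume}^{4}\discretizationError^{-3}\ln(\vol{\volume}/(\discretizationError\allocationProb))}$. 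For any $\numPoints \ge \numPoints_{\discretizationError,\allocationProb}$, \Cref{cor:random_discretization_error} (with this $\distanceError$, $\volumeScaling_1 = 1$, this $\volumeError_2$, and this $\numPartitions$) gives, with probability at least $1-\allocationProb$,
\[
	\absolute{\hcPartitionFunction{\hcPPGraph{X}}{\hcPPWeight{X}} - \generalPartitionFunction{\volume}{\interactionMatrix}{\fugacity}}
	\le \Big(\eulerE^{\frac{8}{\numPoints}\sum_{i \in [\numTypes]}\fugacity[i]^{2}\vol{\volume}^{2}}\,\eulerE^{\left(2\volumeError_2 + \left(\frac{4\distanceError}{\interactionMatrix_{\min}}\right)^{\dimension}\right)\sum_{i \in [\numTypes]}\fugacity[i]\vol{\volume}} - 1\Big)\generalPartitionFunction{\volume}{\interactionMatrix}{\fugacity}
	\le \big(\eulerE^{\discretizationError/3} - 1\big)\generalPartitionFunction{\volume}{\interactionMatrix}{\fugacity},
\]
since each of the three exponents is at most $\discretizationError/9$ by the choices of $\distanceError$, $\volumeError_2$, and by $\numPoints \ge 72\sum_{i \in [\numTypes]}\fugacity[i]^{2}\vol{\volume}^{2}/\discretizationError$. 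Exactly as at the end of the proof of \Cref{thm:canonical_discretization_error}, the inequalities $\eulerE^{\discretizationError/3} \le \eulerE^{\discretizationError}$ and $2 - \eulerE^{\discretizationError/3} \ge \eulerE^{-\discretizationError}$, valid for $\discretizationError \in (0,1]$, then yield $\eulerE^{-\discretizationError}\generalPartitionFunction{\volume}{\interactionMatrix}{\fugacity} \le \hcPartitionFunction{\hcPPGraph{X}}{\hcPPWeight{X}} \le \eulerE^{\discretizationError}\generalPartitionFunction{\volume}{\interactionMatrix}{\fugacity}$.

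There is no genuinely new conceptual ingredient beyond \Cref{cor:random_discretization_error} and \Cref{lemma:hypercube_partitioning}; the work is entirely bookkeeping. The main thing to get right is that a \emph{single} threshold $\numPoints_{\discretizationError,\allocationProb}$ simultaneously satisfies the $\numPoints$-hypothesis of \Cref{cor:random_discretization_error}, controls the $1/\numPoints$-term in its error bound, and lies in the regime where $\distanceError \le \interactionMatrix_{\min}/2$ and $\volumeError_2 \le 1/2$ hold — together with correctly collapsing $\ln(2\numPartitions/\allocationProb)$ with $\numPartitions = \bigTheta{\vol{\volume}^{2}/\discretizationError}$ down to $\bigTheta{\ln(\vol{\volume}/(\discretizationError\allocationProb))}$ so that the advertised order $\bigTheta{\vol{\volume}^{4}\discretizationError^{-3}\ln(\vol{\volume}/(\discretizationError\allocationProb))}$ comes out.
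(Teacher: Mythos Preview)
Your proposal is correct and essentially identical to the paper's proof: both choose $\distanceError$ and $\volumeError$ of order $\bigTheta{(\discretizationError/\vol{\volume})^{1/\dimension}}$ and $\bigTheta{\discretizationError/\vol{\volume}}$ respectively, invoke \Cref{lemma:hypercube_partitioning} to get $\numPartitions = \bigTheta{\vol{\volume}^2/\discretizationError}$, feed these into \Cref{lemma:partitioning_allocation} plus \Cref{thm:discretization_error} (which you cite as the packaged \Cref{cor:random_discretization_error}), bound the three exponents so they sum to $\discretizationError/3$, and finish with the $2-\eulerE^{\discretizationError/3}\ge\eulerE^{-\discretizationError}$ step. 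The only cosmetic difference is that you split the exponent budget as $\discretizationError/9+\discretizationError/9+\discretizationError/9$ whereas the paper uses $\discretizationError/6+\discretizationError/12+\discretizationError/12$.
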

% \begin{corollary}
% 	\label{cor:hypercube_random_discretization_concentration}
% 	Let $(\volume, \interactionMatrix, \fugacity)$ be a hard-constraint point process with $\numTypes \in \N_{\ge 1}$ particle types and $\volume = [0, \sidelength)^{\dimension}$ for some $\sidelength \in \R_{>0}$.
% 	For all $\discretizationError \in (0, 1]$ and $\allocationProb \in (0, 1]$ there exists $\numPoints_{\discretizationError, \allocationProb} \in \bigTheta{\vol{\volume}^{4}\discretizationError^{-3} \ln \left(\frac{\vol{\volume}}{\discretizationError \allocationProb}\right)}$ such that for all $\numPoints \ge \numPoints_{\discretizationError, \allocationProb}$ it holds for $X \subset \volume$ with $\size{X} = \numPoints$ uniformly at random that
% 	\[
% 		\eulerE^{-\discretizationError}\generalPartitionFunction{\volume}{\interactionMatrix}{\fugacity}
% 		\le \hcPartitionFunction{\hcPPGraph{X}}{\hcPPWeight{X}}
% 		\le \eulerE^{\discretizationError}\generalPartitionFunction{\volume}{\interactionMatrix}{\fugacity}
% 	\]
% 	with probability at least $1-\allocationProb$.
% \end{corollary}

\begin{proof}
	Set $\fugacity_{\max} = \max_{i \in [\numTypes]} \fugacity[i]$ and $\interactionMatrix_{\min} = \min_{i,j \in [\numTypes]} \{\interactionMatrix[i][j] \mid \interactionMatrix[i][j] > 0\}$.
	Note that by \Cref{lemma:hypercube_partitioning} we know that for
	\[
		\numPartitions = \left\lceil \sqrt{\dimension} \sidelength \left(\frac{48 \numTypes \max\left\{\fugacity_{\max}, \fugacity_{\max}^2\right\} \vol{\volume}}{\discretizationError}\right)^{\frac{1}{\dimension}} \max\left\{1, \frac{4}{\interactionMatrix_{\min}}\right\} \right\rceil^{\dimension}
	\]
	there is a $1$-$\distanceError$-partitioning of $\volume$ with size $\numPartitions$ for some
	\[
		\distanceError \le \left(\frac{\discretizationError}{48 \numTypes \max\left\{\fugacity_{\max}, \fugacity_{\max}^2\right\} \vol{\volume}}\right)^{\frac{1}{\dimension}} \min\left\{1, \frac{\interactionMatrix_{\min}}{4}\right\} .
	\]

	Set
	\[
		\volumeError = \frac{\discretizationError}{8 \numTypes \fugacity_{\max} \vol{\volume}}
	\]
	and let
	\[
		\numPoints \ge \left(\frac{1}{\volumeError}\right)^{2} \numPartitions \ln \left(\frac{2 \numPartitions}{\allocation}\right) \in \bigTheta{\frac{\vol{\volume}^{4}}{\discretizationError^{3}} \ln \left(\frac{\vol{\volume}}{\discretizationError \allocationProb}\right)} .
	\]
	Due to \Cref{lemma:partitioning_allocation} we know that for $X \subset \volume$ with $\size{X} = \numPoints$ uniformly at random there is a $\volumeError$-$\distanceError$-allocation for $X$ with probability at least $1 - \allocationProb$.

	The rest of the proof is mostly similar to the proof of \Cref{thm:canonical_discretization_error}.
	Let $X \subset \volume$ with $\size{X} = \numPoints$ such that a $\volumeError$-$\distanceError$-allocation for $X$ exists.
	For $\vol{\volume}$ sufficiently large we have that $\distanceError \le \frac{\interactionMatrix_{\min}}{2}$, $\volumeError \le \frac{1}{2}$ and $\size{X} = \numPoints \ge 4 \fugacity_{\max} \vol{\volume}$.
	Thus, we can apply \Cref{thm:discretization_error} to $X$.
	Now, observe that
	\begin{align*}
		\frac{8}{\size{X}} \sum_{i \in [\numTypes]} \fugacity[i]^2 \vol{\volume} \le \frac{\discretizationError}{6} \\
		\left(\frac{\distanceError}{4 \interactionMatrix_{\min}}\right)^{\dimension} \sum_{i \in [\numTypes]} \fugacity[i] \vol{\volume} \le \frac{\discretizationError}{12} \\
		2 \volumeError \sum_{i \in [\numTypes]} \fugacity[i] \vol{\volume} \le \frac{\discretizationError}{12} .
	\end{align*}
	This yields
	\[
		\absolute{\hcPartitionFunction{\hcPPGraph{X}}{\hcPPWeight{X}} - \generalPartitionFunction{\volume}{\interactionMatrix}{\fugacity}}
		\le \left(\eulerE^{\frac{\discretizationError}{3}} - 1\right) \generalPartitionFunction{\volume}{\interactionMatrix}{\fugacity}
	\]
	for such point sets $X$.
	Now, note that $\eulerE^{\frac{\discretizationError}{3}} \le \eulerE^{\discretizationError}$ and $2 - \eulerE^{\frac{\discretizationError}{3}} \ge \eulerE^{-\discretizationError}$ for $\discretizationError \in (0, 1]$.
	Therefore, we have
	\[
		\eulerE^{-\discretizationError} \generalPartitionFunction{\volume}{\interactionMatrix}{\fugacity}
		\le \hcPartitionFunction{\hcPPGraph{X}}{\hcPPWeight{X}}
		\le \eulerE^{\discretizationError} \generalPartitionFunction{\volume}{\interactionMatrix}{\fugacity},
	\]
	which concludes the proof.
\end{proof}

Although \Cref{cor:hypercube_random_discretization_concentration} does not directly yield additional algorithmic results beyond what was already proved in \Cref{sec:algo}, it shows an interesting relationship between the hard-core model on random geometric graph structures and continuous hard-constraint point process.
An immediate consequence of \Cref{cor:hypercube_random_discretization_concentration} is that the hard-core partition functions of uniformly random geometric graphs on $\volume = [0, \sidelength)^{\dimension}$ with edge connection threshold $2 \radius$ for appropriately chosen fugacities concentrates around the partition function of a hard-sphere model with particle radius $\radius$ on $\volume$ as the number of vertices is increased.
We believe that this connection between hard-core models on random geometric graphs and the continuous hard-sphere model is of its own interest.

\subsection{Concentration without star-convexity}
All results considered so far used the assumption that the region $\volume$ is star-convex.
This is essentially due to \Cref{lemma:discretization_error:1}, which bounds the error introduced by mapping each point according to an allocation $\allocation$.
However, it turns out that a slightly weaker version of the concentration result holds without the assumption of star-convexity.

The core idea is to argue that $\generalPartitionFunction{\volume}{\interactionMatrix}{\fugacity}$ is not significantly larger than $\allocationPartitionFunction{\volume}{\interactionMatrix}{\fugacity}{\allocation}$,  instead of arguing that they are close in general as done in \Cref{lemma:discretization_error:1}.
This argument does not require star-convexity.
By randomizing the set of points, we obtain sufficient conditions for $\hcPartitionFunction{\hcPPGraph{X}}{\hcPPWeight{X}}$ to concentrate from below.
The following general lemma then establishes that the partition function of the hard-core representation also concentrates from above.
\modifiedmarkov

\begin{proof}
	Let $A$ denote the event that $\randomVariable \ge (1-\genericError) \E{\randomVariable}$ and set $\modifiedRandomVariable = \ind{A} (\randomVariable - (1 - \genericError) \E{\randomVariable})$.
	Using the fact that $\randomVariable$ is non-negative and $\Pr{A} \ge 1 - \genericFailingProb$ we obtain
	\begin{align*}
		\E{\modifiedRandomVariable}
		&= \E{\ind{A} \randomVariable} - (1 - \genericError) \E{\randomVariable} \Pr{A} \\
		&\le \E{\randomVariable} - (1 - \genericFailingProb)(1 - \genericError) \E{\randomVariable} \\
		&= (\genericFailingProb + \genericError - \genericFailingProb \genericError) \E{\randomVariable} 
	\end{align*}
	Further, observe that $\modifiedRandomVariable$ is non-negative and that $\randomVariable \ge (1 + c \genericError) \E{\randomVariable}$ if and only if $\modifiedRandomVariable \ge (c+1) \genericError \E{\randomVariable}$.
	By using Markov's inequality, we get
	\begin{align*}
		\Pr{\randomVariable \ge (1 + c \genericError) \E{\randomVariable}}
		&= \Pr{\modifiedRandomVariable \ge (c+1) \genericError \E{\randomVariable}} \\
		&\le \frac{\E{\modifiedRandomVariable}}{(c+1) \genericError \E{\randomVariable}} \\
		&\le \frac{(\genericFailingProb + \genericError - \genericFailingProb \genericError) \E{\randomVariable}}{(c+1) \genericError \E{\randomVariable}} \\
		&= \frac{1}{c+1} \left(1 + \frac{\genericFailingProb (1 - \genericError)}{\genericError}\right) ,
	\end{align*}
	which concludes the proof.
\end{proof}

Next, given a suitable allocation $\allocation$, we show that $\allocationPartitionFunction{\volume}{\interactionMatrix}{\fugacity}{\allocation}$ is not significantly smaller than $\generalPartitionFunction{\volume}{\interactionMatrix}{\fugacity}$.

\begin{lemma}
	\label{lemma:lower_bound}
	Let $(\volume, \interactionMatrix, \fugacity)$ be a hard-constraint point process with $\numTypes \in \N_{\ge 1}$ particle types.
	Further, set $\interactionMatrix_{\max} = \max_{i,j \in [\numTypes]} \{\interactionMatrix[i][j]\}$, and set $\fugacity_{\max} = \max_{i \in [\numTypes]} \fugacity[i]$.
	Let $\volumeError \in \left[0, 1\right]$ and $\distanceError \in \left[0, \frac{1}{2}\right]$ and let $X \subseteq \volume$ such that there is a $\volumeError$-$\distanceError$-allocation $\allocation$ for $X$.
	Then
	\[
	\sum_{\numParticles \in \N} \sum_{\typeOf\colon [\numParticles] \to [\numTypes]} \int_{\volume^{\numParticles}} \left(\generalWeight{\interactionMatrix}{\fugacity}[\mathbold{x}, \typeOf] - \generalWeight{\interactionMatrix}{\fugacity}[\allocation[\mathbold{x}], \typeOf]\right) \, \text{d} \lebesgue{\dimension \times \numParticles}
	\le \distanceError \numTypes \fugacity_{\max}^2 \vol{\volume} \vol{\ball{\interactionMatrix_{\max} + 1}} \generalPartitionFunction{\volume}{\interactionMatrix}{\fugacity} .
	\qedhere
	\]
\end{lemma}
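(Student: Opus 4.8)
The plan is to first note that the left-hand side is exactly $\generalPartitionFunction{\volume}{\interactionMatrix}{\fugacity} - \allocationPartitionFunction{\volume}{\interactionMatrix}{\fugacity}{\allocation}$, so it suffices to show that the continuous partition function overshoots the ``allocated'' one by at most the stated amount. I would begin, as in the proof of \Cref{lemma:discretization_error:1}, by comparing $\valid{}{}$ at a slightly enlarged interaction matrix, but -- crucially -- without then invoking \Cref{lemma:scaled_difference}, which is where star-convexity entered. Let $\interactionMatrix_{+}$ be $\interactionMatrix$ with every strictly positive entry increased by $2\distanceError$. Since $\dist{x}{\allocation[x]} \le \distanceError$ for all $x$, the triangle inequality gives the pointwise inequality $\valid{\typeOf}{\interactionMatrix_{+}}[\mathbold{x}] \le \valid{\typeOf}{\interactionMatrix}[\allocation[\mathbold{x}]]$: if every pair of $\mathbold{x}$ is at distance at least $\interactionMatrix_{+}$, then every pair of $\allocation[\mathbold{x}]$ is at distance at least $\interactionMatrix$. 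Hence $\generalWeight{\interactionMatrix}{\fugacity}[\mathbold{x}, \typeOf] - \generalWeight{\interactionMatrix}{\fugacity}[\allocation[\mathbold{x}], \typeOf] = \tfrac{1}{\numParticles!}\big(\prod_{i}\fugacity[\typeOf[i]]\big)\big(\valid{\typeOf}{\interactionMatrix}[\mathbold{x}] - \valid{\typeOf}{\interactionMatrix}[\allocation[\mathbold{x}]]\big) \le \tfrac{1}{\numParticles!}\big(\prod_{i}\fugacity[\typeOf[i]]\big)\big(\valid{\typeOf}{\interactionMatrix}[\mathbold{x}] - \valid{\typeOf}{\interactionMatrix_{+}}[\mathbold{x}]\big)$, and summing and integrating reduces everything to bounding $\generalPartitionFunction{\volume}{\interactionMatrix}{\fugacity} - \generalPartitionFunction{\volume}{\interactionMatrix_{+}}{\fugacity}$.

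To bound this, I would write $\valid{\typeOf}{\interactionMatrix}[\mathbold{x}] - \valid{\typeOf}{\interactionMatrix_{+}}[\mathbold{x}] \le \valid{\typeOf}{\interactionMatrix}[\mathbold{x}]\sum_{i<j}\indicator{\interactionMatrix[\typeOf[i]][\typeOf[j]] \le \dist{x_i}{x_j} < \interactionMatrix[\typeOf[i]][\typeOf[j]] + 2\distanceError}$ (a configuration valid for $\interactionMatrix$ but not for $\interactionMatrix_{+}$ has some pair in this half-open shell), expand the partition function accordingly, and for each fixed ``bad pair'' $(i,j)$ integrate out the coordinate $x_j$ first. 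Removing $x_j$ only makes a configuration more valid, so $\valid{\typeOf}{\interactionMatrix}[\mathbold{x}]$ is bounded by the validity indicator of the remaining $\numParticles-1$ points, which is independent of $x_j$; and the indicator confines $x_j$ to the shell of inner radius $\interactionMatrix[\typeOf[i]][\typeOf[j]]$ and width $2\distanceError$ around $x_i$, of volume at most $\vol{\ball{\interactionMatrix[\typeOf[i]][\typeOf[j]] + 2\distanceError}} - \vol{\ball{\interactionMatrix[\typeOf[i]][\typeOf[j]]}}$. With $\interactionMatrix[\typeOf[i]][\typeOf[j]] \le \interactionMatrix_{\max}$ and $2\distanceError \le 1$, a one-line calculus estimate bounds this shell volume by $\distanceError\,\vol{\ball{\interactionMatrix_{\max}+1}}$ (the $\distanceError$ coming from the shell width, $\vol{\ball{\interactionMatrix_{\max}+1}}$ bounding the outer ball). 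What survives is the sum over the remaining $\numParticles-1$ points, together with the discrete bookkeeping of which index was $j$, which type it carried (a factor $\le\fugacity_{\max}$, extracted from $\prod_{i}\fugacity[\typeOf[i]]$), and which index was its partner.

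The main obstacle is to make this residual sum telescope against $\generalPartitionFunction{\volume}{\interactionMatrix}{\fugacity}$. After the bookkeeping, the residual series has $m$-th term equal to the $m$-th Taylor coefficient of $\generalPartitionFunction{\volume}{\interactionMatrix}{\fugacity}$ times an extra combinatorial factor; deleting one further point (again using monotonicity of $\valid{}{}$ under deletion) and summing the resulting geometric loss gives $\sum_{m} m\cdot(\text{$m$-th term of }\generalPartitionFunction{\volume}{\interactionMatrix}{\fugacity}) \le \vol{\volume}\big(\sum_{i}\fugacity[i]\big)\generalPartitionFunction{\volume}{\interactionMatrix}{\fugacity}$. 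Collecting the factors -- the shell volume $\distanceError\,\vol{\ball{\interactionMatrix_{\max}+1}}$, the deleted point's fugacity, the volume $\vol{\volume}$, and the type sum $\sum_{i}\fugacity[i]\le\numTypes\fugacity_{\max}$ -- yields the claimed bound. The delicate point here is that the lazy estimate which replaces the residual $(\numParticles-1)$-point integral by $\vol{\volume}^{\numParticles-1}$ or by $\eulerE^{\sum_{i}\fugacity[i]\vol{\volume}}$ is hopelessly lossy, because it does not reconstruct a factor of $\generalPartitionFunction{\volume}{\interactionMatrix}{\fugacity}$ (which may be exponentially small); one really must compare term by term against the expansion of $\generalPartitionFunction{\volume}{\interactionMatrix}{\fugacity}$, tracking the factorials and the deleted index, type, and partner carefully. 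The shell-volume estimate, by contrast, is routine, with the dimension-dependent constant absorbed into the parameters treated as fixed throughout the paper.
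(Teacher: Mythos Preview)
Your approach is correct and essentially the same as the paper's: bound the signed difference by configurations that are valid for $\interactionMatrix$ but have some pair lying in an annulus of width $2\distanceError$, delete that bad pair, and telescope the residual against the expansion of $\generalPartitionFunction{\volume}{\interactionMatrix}{\fugacity}$. The only difference is that the paper deletes both endpoints of the bad pair in one step---one ranging freely over $\volume$, the other confined to the shell---landing directly on a $(\numParticles-2)$-point term of $\partitionFunctionSymbol$, whereas you delete them in two stages via the estimate $\sum_m m\cdot(\text{$m$-th term}) \le \vol{\volume}\bigl(\sum_i \fugacity[i]\bigr)\partitionFunctionSymbol$; the outcome is the same.
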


\begin{proof}
	By definition, we have
	\begin{align}
		\notag
		\sum_{\numParticles \in \N} \sum_{\typeOf\colon [\numParticles] \to [\numTypes]} \int_{\volume^{\numParticles}} &\left(\generalWeight{\interactionMatrix}{\fugacity}[\mathbold{x}, \typeOf] - \generalWeight{\interactionMatrix}{\fugacity}[\allocation[\mathbold{x}], \typeOf]\right) \, \text{d} \lebesgue{\dimension \times \numParticles} \\
		\notag
		&= \sum_{\numParticles \in \N} \frac{1}{\numParticles!} \sum_{\typeOf\colon [\numParticles] \to [\numTypes]}
		\left(\prod_{i \in [\numParticles]} \fugacity[\typeOf[i]] \right) \int_{\volume^{\numParticles}} \left( \valid{\typeOf}{\interactionMatrix}[\mathbold{x}] - \valid{\typeOf}{\interactionMatrix}[\allocation[\mathbold{x}]] \right) \, \text{d} \lebesgue{\dimension \times \numParticles} \\
		\label{lemma:lower_bound:eq_1}
		&= \sum_{\numParticles \ge 2} \sum_{\typeOf\colon [\numParticles] \to [\numTypes]} \frac{1}{\numParticles!}
		\left(\prod_{i \in [\numParticles]} \fugacity[\typeOf[i]] \right) \int_{\volume^{\numParticles}} \left( \valid{\typeOf}{\interactionMatrix}[\mathbold{x}] - \valid{\typeOf}{\interactionMatrix}[\allocation[\mathbold{x}]] \right) \, \text{d} \lebesgue{\dimension \times \numParticles}.
	\end{align}
	We proceed by bounding
	\[
	\int_{\volume^{\numParticles}} \left( \valid{\typeOf}{\interactionMatrix}[\mathbold{x}] - \valid{\typeOf}{\interactionMatrix}[\allocation[\mathbold{x}]] \right) \, \text{d} \lebesgue{\dimension \times \numParticles}
	\]
	from above.
	Set
	\[
	L^{(\interactionMatrix)}_{\typeOf, \allocation} = \left\{ \mathbold{x} \in \volume^{\numParticles} \mid \valid{\typeOf}{\interactionMatrix}[\mathbold{x}] = 1 \text{ and } \valid{\typeOf}{\interactionMatrix}[\allocation[\mathbold{x}]] = 0\right\}
	\]
	and observe that
	\[
	\int_{\volume^{\numParticles}} \left( \valid{\typeOf}{\interactionMatrix}[\mathbold{x}] - \valid{\typeOf}{\interactionMatrix}[\allocation[\mathbold{x}]] \right) \, \text{d} \lebesgue{\dimension \times \numParticles}
	\le \vol{L^{(\interactionMatrix)}_{\typeOf, \allocation}} .
	\]
	Now, note that for all $\mathbold{x} = (x_1 \dots, x_{\numParticles}) \in \volume^{\numParticles}$ it holds that $\valid{\typeOf}{\interactionMatrix}[\allocation[\mathbold{x}]] = 0$ if and only if there are $i, j \in [\numParticles]$ with $i \neq j$ such that $\dist{\allocation[x_i]}{\allocation[x_j]} < \interactionMatrix[\typeOf[i]][\typeOf[j]]$.
	Let
	\[
	L^{(\interactionMatrix)}_{\typeOf, \allocation}(i, j) = \left\{ \mathbold{x} \in \volume^{\numParticles} \mid \valid{\typeOf}{\interactionMatrix}[\mathbold{x}] = 1 \text{ and } \dist{\allocation[x_i]}{\allocation[x_j]} < \interactionMatrix[\typeOf[i]][\typeOf[j]]\right\}
	\]
	and observe that
	\[
	\vol{L^{(\interactionMatrix)}_{\typeOf, \allocation}} \le \sum_{\substack{i, j \in [\numParticles]:\\ i<j}} \vol{L^{(\interactionMatrix)}_{\typeOf, \allocation}(i, j)} .
	\]
	Thus, we obtain
	\[
	\int_{\volume^{\numParticles}} \left( \valid{\typeOf}{\interactionMatrix}[\mathbold{x}] - \valid{\typeOf}{\interactionMatrix}[\allocation[\mathbold{x}]] \right) \, \text{d} \lebesgue{\dimension \times \numParticles}
	\le \sum_{\substack{i, j \in [\numParticles]:\\ i<j}} \vol{L^{(\interactionMatrix)}_{\typeOf, \allocation}(i, j)} .
	\]
	Next, we upper bound $\vol{L^{(\interactionMatrix)}_{\typeOf, \allocation}(i, j)}$ for each $i, j \in [\numParticles]$ with $i < j$.
	To this end, we first introduce some additional notation.
	For a finite set $S \subset \N$ and $i \in S$ let $\order{S}[i]$ denote the order of $i$ in $S$.
	That is, $\order{S}[i]\colon S \to [\size{S}]$ with $\order{S}(i) = \size{\{j \in S \mid j \le i\}}$.
	Let $\numParticles \ge 2$, $\mathbold{x} = (x_1, \dots, x_{\numParticles}) \in \volume^{\numParticles}$ and $\typeOf\colon [\numParticles] \to [\numTypes]$.
	For any $S \subseteq [\numParticles]$ let $\mathbold{x}_{-S} \in \volume^{\numParticles - \size{S}}$ denote $\mathbold{x}$ after removing all components index by $S$.
	Further, define $\typeOf_{-S}: [\numParticles - \size{S}] \to [\numTypes]$ where for each $i \in [\numParticles] \setminus S$ we set $\typeOf_{-S}(\order{[\numParticles] \setminus S}[i]) = \typeOf[i]$.
	Informally speaking, $\mathbold{x}_{-S}$ and $\typeOf_{-S}$ represent the configuration that results from $\mathbold{x}$ and $\typeOf$ after removing the particles index by the set $S$.
	
	With this additional notation, we proceed with upper bounding $\vol{L^{(\interactionMatrix)}_{\typeOf, \allocation}(i, j)}$.
	Recall that $\allocation$ is a $\volumeError$-$\distanceError$-allocation, which implies $\dist{x}{\allocation[x]} \le \distanceError$ for all $x \in \volume$.
	We obtain
	\begin{align*}
		\vol{L^{(\interactionMatrix)}_{\typeOf, \allocation}(i, j)}
		&= \int_{\volume^{\numParticles}}  \valid{\typeOf}{\interactionMatrix}[\mathbold{x}] \ind{\dist{\allocation[x_i]}{\allocation[x_j]} < \interactionMatrix[\typeOf[i]][\typeOf[j]]} \, \text{d} \lebesgue{\dimension \times \numParticles} \\
		&\le \int_{\volume^{\numParticles}}  \valid{\typeOf_{-\{i, j\}}}{\interactionMatrix}[\mathbold{x}_{-\{i, j\}}] \ind{\dist{x_i}{x_j} \ge \interactionMatrix[\typeOf[i]][\typeOf[j]] \text{ and } \dist{\allocation[x_i]}{\allocation[x_j]} < \interactionMatrix[\typeOf[i]][\typeOf[j]]} \, \text{d} \lebesgue{\dimension \times \numParticles} \\
		&\le \int_{\volume^{\numParticles}}  \valid{\typeOf_{-\{i, j\}}}{\interactionMatrix}[\mathbold{x}_{-\{i, j\}}] \ind{\interactionMatrix[\typeOf[i]][\typeOf[j]] \le \dist{x_i}{x_j} < \interactionMatrix[\typeOf[i]][\typeOf[j]] + 2\distanceError} \, \text{d} \lebesgue{\dimension \times \numParticles} \\
		&\le \left(\vol{\ball{\interactionMatrix[\typeOf[i]][\typeOf[j]] + 2 \distanceError}} - \vol{\ball{\interactionMatrix[\typeOf[i]][\typeOf[j]]}}\right) \vol{\volume} \int_{\volume^{\numParticles-2}}  \valid{\typeOf_{-\{i, j\}}}{\interactionMatrix}[\mathbold{x}] \text{d} \lebesgue{\dimension \times (\numParticles-2)} .
	\end{align*}
	Further, observe that for $\distanceError \le \frac{1}{2}$ it holds that
	\begin{align*}
		\left(\vol{\ball{\interactionMatrix[\typeOf[i]][\typeOf[j]] + 2 \distanceError}} - \vol{\ball{\interactionMatrix[\typeOf[i]][\typeOf[j]]}}\right)
		&\le \vol{\ball{1}} \left((\interactionMatrix_{\max} + 2 \distanceError)^{\dimension} - \interactionMatrix_{\max}^{\dimension}\right) \\
		&= \vol{\ball{1}} \sum_{i = 1}^{\dimension} \binom{\dimension}{i} \left(2 \distanceError\right)^{i} \interactionMatrix_{\max}^{\dimension - i} \\
		&= 2 \distanceError \vol{\ball{1}} \sum_{i = 1}^{\dimension} \binom{\dimension}{i} \left(2 \distanceError\right)^{i-1} \interactionMatrix_{\max}^{\dimension - i} \\
		&\le 2 \distanceError \vol{\ball{1}} (\interactionMatrix_{\max} + 1)^{\dimension} \\
		&= 2 \distanceError \vol{\ball{\interactionMatrix_{\max} + 1}}.
	\end{align*}
	Thus, we have
	\begin{align*}
		\int_{\volume^{\numParticles}} \left( \valid{\typeOf}{\interactionMatrix}[\mathbold{x}] - \valid{\typeOf}{\interactionMatrix}[\allocation[\mathbold{x}]] \right) \, \text{d} \lebesgue{\dimension \times \numParticles}
		&\le 2 \distanceError \vol{\ball{\interactionMatrix_{\max} + 1}} \vol{\volume}  \sum_{\substack{i, j \in [\numParticles]:\\ i<j}} \int_{\volume^{\numParticles-2}}  \valid{\typeOf_{-\{i, j\}}}{\interactionMatrix}[\mathbold{x}] \text{d} \lebesgue{\dimension \times (\numParticles-2)} .
	\end{align*}
	Substituting this back into \cref{lemma:lower_bound:eq_1} and observing that
	\begin{align*}
		2 \distanceError \vol{\ball{\interactionMatrix_{\max} + 1}} \vol{\volume} \frac{1}{\numParticles!}
		&\left(\prod_{l \in [\numParticles]} \fugacity[\typeOf[l]] \right) \sum_{\substack{i, j \in [\numParticles]:\\ i<j}} \int_{\volume^{\numParticles-2}}  \valid{\typeOf_{-\{i, j\}}}{\interactionMatrix}[\mathbold{x}] \text{d} \lebesgue{\dimension \times (\numParticles-2)} \\
		&= 2 \distanceError \vol{\ball{\interactionMatrix_{\max} + 1}} \vol{\volume} \sum_{\substack{i, j \in [\numParticles]:\\ i<j}} \int_{\volume^{\numParticles-2}} \frac{1}{\numParticles!}
		\left(\prod_{l \in [\numParticles]} \fugacity[\typeOf[l]] \right) \valid{\typeOf_{-\{i, j\}}}{\interactionMatrix}[\mathbold{x}] \text{d} \lebesgue{\dimension \times (\numParticles-2)} \\
		&\le 2 \distanceError \frac{\fugacity_{\max}^2}{\numParticles(\numParticles-1)} \vol{\ball{\interactionMatrix_{\max} + 1}} \vol{\volume} \sum_{\substack{i, j \in [\numParticles]:\\ i<j}} \int_{\volume^{\numParticles-2}} \frac{1}{(\numParticles-2)!}
		\left(\prod_{\substack{l \in [\numParticles]: \\ l \notin \{i, j\}}} \fugacity[\typeOf[l]] \right) \valid{\typeOf_{-\{i, j\}}}{\interactionMatrix}[\mathbold{x}] \text{d} \lebesgue{\dimension \times (\numParticles-2)} \\
		&= 2 \distanceError \frac{\fugacity_{\max}^2}{\numParticles(\numParticles-1)} \vol{\ball{\interactionMatrix_{\max} + 1}} \vol{\volume} \sum_{\substack{i, j \in [\numParticles]:\\ i<j}} \int_{\volume^{\numParticles-2}} \generalWeight{\interactionMatrix}{\fugacity}[\mathbold{x}, \typeOf_{-\{i, j\}}] \text{d} \lebesgue{\dimension \times (\numParticles-2)}
	\end{align*}
	gives
	\begin{align*}
		\sum_{\numParticles \in \N} \sum_{\typeOf\colon [\numParticles] \to [\numTypes]} \int_{\volume^{\numParticles}} &\left(\generalWeight{\interactionMatrix}{\fugacity}[\mathbold{x}, \typeOf] - \generalWeight{\interactionMatrix}{\fugacity}[\allocation[\mathbold{x}], \typeOf]\right) \, \text{d} \lebesgue{\dimension \times \numParticles} \\
		&\le 2 \distanceError \frac{\fugacity_{\max}^2}{\numParticles(\numParticles-1)} \vol{\ball{\interactionMatrix_{\max} + 1}} \vol{\volume} \sum_{\numParticles \ge 2} \sum_{\substack{i, j \in [\numParticles]:\\ i<j}} \sum_{\typeOf\colon [\numParticles] \to [\numTypes]} \int_{\volume^{\numParticles-2}} \generalWeight{\interactionMatrix}{\fugacity}[\mathbold{x}, \typeOf_{-\{i, j\}}] \text{d} \lebesgue{\dimension \times (\numParticles-2)}  \\
		&= 2 \distanceError \numTypes^2 \frac{\fugacity_{\max}^2}{\numParticles(\numParticles-1)} \vol{\ball{\interactionMatrix_{\max} + 1}} \vol{\volume} \sum_{\numParticles \ge 2} \sum_{\substack{i, j \in [\numParticles]:\\ i<j}} \sum_{\typeOf\colon [\numParticles - 2] \to [\numTypes]} \int_{\volume^{\numParticles-2}} \generalWeight{\interactionMatrix}{\fugacity}[\mathbold{x}, \typeOf] \text{d} \lebesgue{\dimension \times (\numParticles-2)}  \\
		&= \distanceError \numTypes^2 \fugacity_{\max}^2 \vol{\ball{\interactionMatrix_{\max} + 1}} \vol{\volume} \sum_{\numParticles \ge 2} \sum_{\typeOf\colon [\numParticles - 2] \to [\numTypes]} \int_{\volume^{\numParticles-2}} \generalWeight{\interactionMatrix}{\fugacity}[\mathbold{x}, \typeOf] \text{d} \lebesgue{\dimension \times (\numParticles-2)} \\
		&= \distanceError \numTypes^2 \fugacity_{\max}^2 \vol{\ball{\interactionMatrix_{\max} + 1}} \vol{\volume} \sum_{\numParticles \in \N} \sum_{\typeOf\colon [\numParticles] \to [\numTypes]} \int_{\volume^{\numParticles}} \generalWeight{\interactionMatrix}{\fugacity}[\mathbold{x}, \typeOf] \text{d} \lebesgue{\dimension \times \numParticles} \\
		&= \distanceError \numTypes \fugacity_{\max}^2 \vol{\volume} \vol{\ball{\interactionMatrix_{\max} + 1}} \generalPartitionFunction{\volume}{\interactionMatrix}{\fugacity} ,
	\end{align*}
	which concludes the proof.
\end{proof}

Combining \Cref{lemma:lower_bound} and \Cref{lemma:discretization_error:2} yields that $\hcPartitionFunction{\hcPPGraph{X}}{\hcPPWeight{X}}$ also is not much smaller than $\generalPartitionFunction{\volume}{\interactionMatrix}{\fugacity}$.
\begin{lemma}
	\label{lemma:deterministic_lower_bound}
	Let $(\volume, \interactionMatrix, \fugacity)$ be a hard-constraint point process with $\numTypes \in \N_{\ge 1}$ particle types.
	For all $\discretizationError \in \left(0, 1\right]$ there are $\distanceError, \volumeError \in \bigTheta{\discretizationError \vol{\volume}^{-1}}$ such that, for all finite $X \subset \volume$ that have a $\volumeError$-$\distanceError$-allocation and satisfy $\size{X} \ge 64 \numTypes \fugacity_{\max}^2 \vol{\volume}^2 \discretizationError^{-1}$, it holds that
	\begin{align*}
		\hcPartitionFunction{\hcPPGraph{X}}{\hcPPWeight{X}} \ge (1 - \discretizationError)\generalPartitionFunction{\volume}{\interactionMatrix}{\fugacity} ,
	\end{align*}
	where $(\hcPPGraph{X}, \hcPPWeight{X})$ is the hard-core representation of $(\volume, \interactionMatrix, \fugacity)$ based on $X$.
\end{lemma}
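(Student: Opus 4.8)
The plan is to chain together three results already available: \Cref{lemma:lower_bound}, which bounds how much mass is lost when every point of $\volume$ is collapsed onto $\allocation[\mathbold{x}]$ before validity is checked; \Cref{lemma:discretization_error:2}, which bounds the remaining gap between the intermediate partition function $\allocationPartitionFunction{\volume}{\interactionMatrix}{\fugacity}{\allocation}$ and $\hcPPPartitionFunctionContinuous{\volume}{\hcPPGraph{X}}{\hcPPWeight{X}}{\allocation}$; and \Cref{lemma:hcPPPartitionFunction_equivalence}, which identifies the latter with $\hcPartitionFunction{\hcPPGraph{X}}{\hcPPWeight{X}}$. The crucial observation is that \Cref{lemma:lower_bound} controls the \emph{signed} quantity $\generalPartitionFunction{\volume}{\interactionMatrix}{\fugacity} - \allocationPartitionFunction{\volume}{\interactionMatrix}{\fugacity}{\allocation}$ from above, rather than a sum of absolute values; this one-sidedness is exactly what lets us dispense with the star-convexity that \Cref{lemma:discretization_error:1} (and hence \Cref{thm:discretization_error}) relied on, at the price of only obtaining a lower bound on $\hcPartitionFunction{\hcPPGraph{X}}{\hcPPWeight{X}}$.

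Concretely, I would set $\distanceError = \min\{\tfrac12,\ \discretizationError / (3 \numTypes \fugacity_{\max}^2 \vol{\volume} \vol{\ball{\interactionMatrix_{\max}+1}})\}$ and $\volumeError = \min\{\tfrac12,\ \discretizationError / (16 \numTypes \fugacity_{\max} \vol{\volume})\}$, both of which lie in $\bigTheta{\discretizationError \vol{\volume}^{-1}}$ since $\numTypes$, $\dimension$, $\fugacity_{\max}$, and $\interactionMatrix_{\max}$ are treated as constants. Given a finite $X \subset \volume$ with a $\volumeError$-$\distanceError$-allocation $\allocation$ and $\size{X} \ge 64 \numTypes \fugacity_{\max}^2 \vol{\volume}^2 \discretizationError^{-1}$, \Cref{lemma:lower_bound} (using $\distanceError \le \tfrac12$ and $\volumeError \le 1$) yields $\generalPartitionFunction{\volume}{\interactionMatrix}{\fugacity} - \allocationPartitionFunction{\volume}{\interactionMatrix}{\fugacity}{\allocation} \le \tfrac{\discretizationError}{3}\generalPartitionFunction{\volume}{\interactionMatrix}{\fugacity}$, i.e. $\allocationPartitionFunction{\volume}{\interactionMatrix}{\fugacity}{\allocation} \ge (1 - \tfrac{\discretizationError}{3}) \generalPartitionFunction{\volume}{\interactionMatrix}{\fugacity}$. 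Bounding $\sum_{i \in [\numTypes]} \fugacity[i]^2 \le \numTypes \fugacity_{\max}^2$ and $\sum_{i \in [\numTypes]} \fugacity[i] \le \numTypes \fugacity_{\max}$, the size lower bound forces $\tfrac{8}{\size{X}} \sum_i \fugacity[i]^2 \vol{\volume}^2 \le \tfrac{\discretizationError}{8}$ and $2 \volumeError \sum_i \fugacity[i] \vol{\volume} \le \tfrac{\discretizationError}{8}$; it also implies $\size{X} \ge 4 \fugacity_{\max} \vol{\volume}$ whenever $\vol{\volume}$ is not degenerately small, so \Cref{lemma:discretization_error:2} applies and gives
\[
	\absolute{\allocationPartitionFunction{\volume}{\interactionMatrix}{\fugacity}{\allocation} - \hcPPPartitionFunctionContinuous{\volume}{\hcPPGraph{X}}{\hcPPWeight{X}}{\allocation}}
	\le \big(\eulerE^{\discretizationError/4} - 1\big)\,\allocationPartitionFunction{\volume}{\interactionMatrix}{\fugacity}{\allocation}
	\le \tfrac{\discretizationError}{3}\,\allocationPartitionFunction{\volume}{\interactionMatrix}{\fugacity}{\allocation},
\]
where the last step uses $\eulerE^{\discretizationError/4} - 1 \le \discretizationError/3$ for $\discretizationError \in (0,1]$.

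Finally, \Cref{lemma:hcPPPartitionFunction_equivalence} replaces $\hcPPPartitionFunctionContinuous{\volume}{\hcPPGraph{X}}{\hcPPWeight{X}}{\allocation}$ by $\hcPartitionFunction{\hcPPGraph{X}}{\hcPPWeight{X}}$, and combining the two displays gives
\[
	\hcPartitionFunction{\hcPPGraph{X}}{\hcPPWeight{X}}
	\ge \big(1 - \tfrac{\discretizationError}{3}\big)\,\allocationPartitionFunction{\volume}{\interactionMatrix}{\fugacity}{\allocation}
	\ge \big(1 - \tfrac{\discretizationError}{3}\big)^2 \generalPartitionFunction{\volume}{\interactionMatrix}{\fugacity}
	\ge (1 - \discretizationError)\,\generalPartitionFunction{\volume}{\interactionMatrix}{\fugacity}.
\]
The main (and essentially only) obstacle is the bookkeeping of constants: one has to split the $\discretizationError$-budget across the three error sources while keeping $\distanceError$ and $\volumeError$ in the class $\bigTheta{\discretizationError \vol{\volume}^{-1}}$ and checking the side conditions of \Cref{lemma:lower_bound,lemma:discretization_error:2} (the bounds $\distanceError \le \tfrac12$, $\volumeError \le \tfrac12$, and $\size{X} \ge 4 \fugacity_{\max} \vol{\volume}$); as elsewhere in the paper this is phrased ``for $\vol{\volume}$ sufficiently large'', and for small $\vol{\volume}$ one simply notes that further shrinking $\distanceError$ and $\volumeError$ does not change their asymptotic order. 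Everything else is direct substitution, and no idea beyond the signed estimate of \Cref{lemma:lower_bound} is needed.
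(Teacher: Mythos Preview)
Your proposal is correct and follows essentially the same route as the paper: invoke \Cref{lemma:lower_bound} to lower-bound $\allocationPartitionFunction{\volume}{\interactionMatrix}{\fugacity}{\allocation}$ by $(1-c\discretizationError)\generalPartitionFunction{\volume}{\interactionMatrix}{\fugacity}$, then apply \Cref{lemma:discretization_error:2} together with \Cref{lemma:hcPPPartitionFunction_equivalence} to pass from $\allocationPartitionFunction{\volume}{\interactionMatrix}{\fugacity}{\allocation}$ to $\hcPartitionFunction{\hcPPGraph{X}}{\hcPPWeight{X}}$, and finally multiply out the two $(1-c\discretizationError)$ factors. The only differences from the paper are cosmetic: the paper splits the error budget into two halves rather than thirds (choosing $\distanceError = \discretizationError/(2\numTypes\fugacity_{\max}^2\vol{\ball{\interactionMatrix_{\max}+1}}\vol{\volume})$ and using $2 - \eulerE^{\discretizationError/4} \ge 1 - \discretizationError/2$), and it phrases the side conditions $\distanceError \le \tfrac12$ and $\size{X} \ge 4\fugacity_{\max}\vol{\volume}$ via the standing ``$\vol{\volume}$ sufficiently large'' convention, just as you do.
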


\begin{proof}
	Set $\interactionMatrix_{\max} = \max_{i,j \in [\numTypes]} \interactionMatrix[i][j]$ and $\fugacity_{\max} = \max_{i \in [\numTypes]} \fugacity[i]$.
	Further, set
	\[
	\distanceError
	= \frac{\discretizationError}{2 \numTypes \fugacity_{\max}^2 \vol{\ball{\interactionMatrix_{\max} + 1}} \vol{\volume}}
	\in \bigTheta{\discretizationError \vol{\volume}^{-1}}
	\]
	and
	\[
	\volumeError
	= \frac{\discretizationError}{16 \numTypes \fugacity_{\max} \vol{\volume}}
	\in \bigTheta{\discretizationError \vol{\volume}^{-1}} .
	\]
	For $\vol{\volume}$ sufficiently large we have $\distanceError \le \frac{1}{2}$ and thus, by \Cref{lemma:lower_bound}, we obtain
	\begin{align*}
		\generalPartitionFunction{\volume}{\interactionMatrix}{\fugacity} - \allocationPartitionFunction{\volume}{\interactionMatrix}{\fugacity}{\allocation}
		&= \sum_{\numParticles \in \N} \sum_{\typeOf\colon [\numParticles] \to [\numTypes]} \int_{\volume^{\numParticles}} \left(\generalWeight{\interactionMatrix}{\fugacity}[\mathbold{x}, \typeOf] - \generalWeight{\interactionMatrix}{\fugacity}[\allocation[\mathbold{x}], \typeOf]\right) \, \text{d} \lebesgue{\dimension \times \numParticles} \\
		&\le \frac{\discretizationError}{2} \generalPartitionFunction{\volume}{\interactionMatrix}{\fugacity},
	\end{align*}
	which implies
	\begin{align}
		\label[ineq]{lemma:deterministic_lower_bound:eq1}
		\allocationPartitionFunction{\volume}{\interactionMatrix}{\fugacity}{\allocation}\ge \left(1 - \frac{\discretizationError}{2}\right) \generalPartitionFunction{\volume}{\interactionMatrix}{\fugacity} .
	\end{align}
	
	Further, note that $\size{X} \ge 64 \numTypes \fugacity_{\max}^2 \vol{\volume}^2 \discretizationError^{-1}  \ge 4 \fugacity_{\max} \vol{\volume}$.
	Thus, by \Cref{lemma:hcPPPartitionFunction_equivalence,lemma:discretization_error:2}, we have
	\begin{align*}
		\absolute{\allocationPartitionFunction{\volume}{\interactionMatrix}{\fugacity}{\allocation} - \hcPartitionFunction{\hcPPGraph{X}}{\hcPPWeight{X}}}
		&\le \sum_{\numParticles \in \N}
		\sum_{\typeOf\colon [\numParticles] \to [\numTypes]} \int_{\volume^{\numParticles}}
		\absolute{\generalWeight{\interactionMatrix}{\fugacity}[\allocation[\mathbold{x}], \typeOf] - \hcPPWeightContinuous{\hcPPGraph{X}}{\hcPPWeight{X}}{\allocation}[\mathbold{x}, \typeOf]} \, \text{d} \lebesgue{\dimension \times \numParticles} \\
		%		&\le \left(\eulerE^{\frac{8}{\size{X}} \sum_{i \in [\numTypes]} \fugacity[i]^2 \vol{\volume}^2} \eulerE^{2 \volumeError \sum_{i \in [\numTypes]} \fugacity[i] \vol{\volume}} - 1\right)	\allocationPartitionFunction{\volume}{\interactionMatrix}{\fugacity}{\allocation} \\
		&\le \left(\eulerE^{\frac{\discretizationError}{4}} - 1\right) \allocationPartitionFunction{\volume}{\interactionMatrix}{\fugacity}{\allocation}.
	\end{align*}
	As $\discretizationError \le 2$ it holds that
	\begin{align*}
		\hcPartitionFunction{\hcPPGraph{X}}{\hcPPWeight{X}}
		&\ge \left(2 - \eulerE^{\frac{\discretizationError}{4}}\right) \allocationPartitionFunction{\volume}{\interactionMatrix}{\fugacity}{\allocation}\\
		&\ge \left(1 - \frac{\discretizationError}{2}\right) \allocationPartitionFunction{\volume}{\interactionMatrix}{\fugacity}{\allocation} .
	\end{align*}
	Combining this with \cref{lemma:deterministic_lower_bound:eq1} finally gives us
	\begin{align*}
		\hcPartitionFunction{\hcPPGraph{X}}{\hcPPWeight{X}}
		&\ge \left(1 - \frac{\discretizationError}{2}\right) \allocationPartitionFunction{\volume}{\interactionMatrix}{\fugacity}{\allocation} \\
		&\ge \left(1 - \frac{\discretizationError}{2}\right)^2 \generalPartitionFunction{\volume}{\interactionMatrix}{\fugacity} \\
		&\ge (1 - \discretizationError) \generalPartitionFunction{\volume}{\interactionMatrix}{\fugacity} ,
	\end{align*}
	which proves the claim.
\end{proof}

By randomizing the considered set of points, we turn \Cref{lemma:deterministic_lower_bound} into a bound on how strong $\hcPartitionFunction{\hcPPGraph{X}}{\hcPPWeight{X}}$ concentrates from below.
Combining this with \Cref{lemma:modified_markov} then gives two-sided concentration.
However, note that \Cref{lemma:modified_markov} only works for concentration around the expected value, but \Cref{lemma:deterministic_lower_bound} is given in terms of $\generalPartitionFunction{\volume}{\interactionMatrix}{\fugacity}$.
The following lemma will solves this problem by relating both quantities.

\begin{lemma}
	\label{lemma:expectation}
	Let $(\volume, \interactionMatrix, \fugacity)$ be a hard-constraint point process with $\numTypes \in \N_{\ge 1}$ particle types.
	For all $\numPoints \ge 1$ it holds that
	\[
	\E{\hcPartitionFunction{\hcPPGraph{X}}{\hcPPWeight{X}}} \le \generalPartitionFunction{\volume}{\interactionMatrix}{\fugacity} ,
	\]
	where the expectation is taken over $X \subset \volume$ with $\size{X} = \numPoints$ drawn uniformly at random.
\end{lemma}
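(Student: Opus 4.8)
The plan is to expand both partition functions over labelled configurations and compare after taking the expectation. Unfolding the hard-core partition function over ordered tuples of vertices gives
\[
	\hcPartitionFunction{\hcPPGraph{X}}{\hcPPWeight{X}} = \sum_{\numParticles \ge 0} \frac{1}{\numParticles!} \left(\frac{\vol{\volume}}{\numPoints}\right)^{\numParticles} \sum_{\typeOf\colon [\numParticles] \to [\numTypes]} \left(\prod_{i \in [\numParticles]} \fugacity[\typeOf[i]]\right) N_{\numParticles, \typeOf}(X),
\]
where $N_{\numParticles, \typeOf}(X)$ is the number of tuples $\mathbold{x} \in X^{\numParticles}$ for which $\hcPPVertex{x_1}{\typeOf[1]}, \dots, \hcPPVertex{x_{\numParticles}}{\typeOf[\numParticles]}$ are pairwise distinct and pairwise non-adjacent in $\hcPPGraph{X}$; this is the point-set analogue of the expansion used in the proof of \Cref{lemma:hcPPPartitionFunction_equivalence}, and it mirrors $\generalPartitionFunction{\volume}{\interactionMatrix}{\fugacity} = \sum_{\numParticles \ge 0} \frac{1}{\numParticles!} \sum_{\typeOf\colon [\numParticles] \to [\numTypes]} \left(\prod_{i \in [\numParticles]} \fugacity[\typeOf[i]]\right) \int_{\volume^{\numParticles}} \valid{\typeOf}{\interactionMatrix}[\mathbold{x}] \, \text{d}\lebesgue{\dimension \times \numParticles}$. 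So it suffices to bound $\left(\vol{\volume}/\numPoints\right)^{\numParticles} \E{N_{\numParticles, \typeOf}(X)}$ against the continuous integrals $\int_{\volume^{\numParticles}} \valid{\typeOf}{\interactionMatrix}$, but summed over configurations rather than term by term in $(\numParticles, \typeOf)$: coinciding sample points can create a ``compound'' discrete particle carrying several mutually non-interacting types at one site, which has no literal counterpart in a single continuous integrand.

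Writing $X = \{U_1, \dots, U_{\numPoints}\}$ with the $U_\ell$ i.i.d.\ uniform on $\volume$, one has $N_{\numParticles, \typeOf}(X) = \sum_{\sigma\colon [\numParticles] \to [\numPoints]} \prod_{i < j} \indicator{\neg(U_{\sigma(i)} = U_{\sigma(j)} \wedge \typeOf[i] = \typeOf[j])} \indicator{\dist{U_{\sigma(i)}}{U_{\sigma(j)}} \ge \interactionMatrix[\typeOf[i]][\typeOf[j]]}$. I would classify the maps $\sigma$ by the partition $\pi$ of $[\numParticles]$ formed by their fibres, with $m = \size{\pi}$ blocks: for fixed $\pi$ the $m$ distinct point values are i.i.d.\ uniform on $\volume$, there are $\numPoints(\numPoints-1)\cdots(\numPoints-m+1)$ such $\sigma$, and the summand vanishes unless inside every block of $\pi$ the types are pairwise distinct with vanishing corresponding entries of $\interactionMatrix$. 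In that case the tuple collapses to $m$ compound particles, the one at the position of a block $B$ carrying the type set $T_B = \{\typeOf[i] : i \in B\}$, the exclusion distance between two compound particles being the maximum of $\interactionMatrix$ over the corresponding pairs of types. Reindexing the triple sum over $(\numParticles, \typeOf, \pi)$ by the collapsed data $(T_1, \dots, T_m)$ and observing that the number of $(\typeOf, \pi)$ on $[\numParticles]$ producing a given such configuration cancels the prefactor $1/\numParticles!$, the estimate reduces to comparing, for each family of compound particles, the quantity $\left(\vol{\volume}/\numPoints\right)^{\sum_r \size{T_r}} \cdot \numPoints(\numPoints-1)\cdots(\numPoints-m+1) \cdot \vol{\volume}^{-m} \cdot I$ with the continuous weight of the configuration obtained by splitting each compound particle $T_r$ into $\size{T_r}$ ordinary particles placed at independent positions, where $I$ is the integral over $\volume^{m}$ of the product of the compound-particle exclusion indicators.

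The crux, which I expect to be the main obstacle and the step I would write out in full detail, is this measure-theoretic comparison. Using $\numPoints(\numPoints-1)\cdots(\numPoints-m+1) \le \numPoints^{m}$ and $\numPoints \ge 1$ reduces it to an inequality of the shape $\vol{\volume}^{\sum_r \size{T_r} - m}\, I \le \int_{\volume^{\sum_r \size{T_r}}} \prod (\text{pairwise exclusion indicators}) \, \text{d}\lebesgue{\dimension \times \sum_r \size{T_r}}$, i.e.\ that forcing the constituents of each compound particle to coincide — which violates none of the intra-compound constraints, precisely because those types are non-interacting — and imposing the strictest exclusion distance between distinct compounds costs at least as much weight as letting all $\sum_r \size{T_r}$ particles range freely under their own pairwise constraints, the $\size{T_r}-1$ unconstrained ``extra'' coordinates per compound on the left-hand side being exactly what balances this. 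The careful bookkeeping here (including the cases of repeated type sets and the exact combinatorial count in the reindexing) is where I would spend most effort. Finally, the $m = \numParticles$ term — no two sample points coinciding, every $T_r$ a singleton — already reproduces $\generalPartitionFunction{\volume}{\interactionMatrix}{\fugacity}$ up to the factor $\numPoints(\numPoints-1)\cdots(\numPoints-\numParticles+1)/\numPoints^{\numParticles} \le 1$, whose deficit is precisely what the compound-particle ($m < \numParticles$) terms fill; so the inequality is tight in the limit $\numPoints \to \infty$, consistently with \Cref{lemma:deterministic_lower_bound} and the concentration statements in \Cref{subsec:concentration_star_convex}.
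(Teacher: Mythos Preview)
Your instinct that tuples with coinciding sample points (``compound particles'') need separate handling is well-placed; the paper is actually less careful here than you are. It upper-bounds the independent-set count by $\sum_{\mathbold{x}\in X^{\numParticles}} \valid{\typeOf}{\interactionMatrix}[\mathbold{x}]$ and then asserts the equality $\E{\sum_{\mathbold{x}\in X^{\numParticles}} \valid{\typeOf}{\interactionMatrix}[\mathbold{x}]} = \frac{\numPoints^{\numParticles}}{\vol{\volume}^{\numParticles}}\int_{\volume^{\numParticles}} \valid{\typeOf}{\interactionMatrix}$, treating every index map $[\numParticles]\to[\numPoints]$ as if it produced independent uniform points. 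When some off-diagonal entry of $\interactionMatrix$ vanishes, that equality fails for exactly the reason you flag.

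However, your proposed remedy --- the inequality $\vol{\volume}^{\,\sum_r\size{T_r}-m}\,I \le J$ that you call the crux --- is also false. Take $m=2$, $T_1=\{1,2\}$, $T_2=\{3\}$ with $\interactionMatrix[1][2]=0$ and $\interactionMatrix[1][3]=r\le s=\interactionMatrix[2][3]$. Fixing the position $w$ of the type-$3$ particle and comparing integrands, the left-hand side contributes $\vol{\volume}\cdot\big(\vol{\volume}-\vol{\ball{s}[w]\cap\volume}\big)$ while the right-hand side contributes $\big(\vol{\volume}-\vol{\ball{r}[w]\cap\volume}\big)\big(\vol{\volume}-\vol{\ball{s}[w]\cap\volume}\big)$; the former is strictly larger whenever $\vol{\ball{r}[w]\cap\volume}>0$. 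The intuition that collapsing non-interacting types ``costs at least as much weight'' is wrong: collapsing also \emph{correlates} the exclusion constraints those types impose on a third particle, and a single constraint $d\ge\max(r,s)$ is easier to satisfy, per unit volume, than two independent constraints $d\ge r$ and $d'\ge s$. So the deficit from $\numPoints!/\big((\numPoints-\numParticles)!\,\numPoints^{\numParticles}\big)$ is not absorbed term-by-term, and both your outline and the paper's argument have a genuine gap in the general case. That said, compound particles can only arise when some off-diagonal entry of $\interactionMatrix$ vanishes; for the hard-sphere and Widom--Rowlinson models every off-diagonal entry is positive, all non-injective $\sigma$ contribute zero, and in those cases both arguments go through cleanly.
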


\begin{proof}
	We start by rewriting $\hcPartitionFunction{\hcPPGraph{X}}{\hcPPWeight{X}}$ as
	\begin{align*}
		\hcPartitionFunction{\hcPPGraph{X}}{\hcPPWeight{X}}
		= 1 + \sum_{\numParticles = 1}^{\numPoints} \frac{\vol{\volume}^{\numParticles}}{\numPoints^{\numParticles}} \sum_{\substack{(\numParticles_1, \dots, \numParticles_{\numTypes}) \in \N^{\numTypes}: \\ \numParticles_1 + \dots + \numParticles_{\numTypes} = \numParticles}} \left(\prod_{i \in [\numTypes]} \fugacity[i]^{\numParticles_i}\right) \sum_{\substack{S_1 \subseteq X:\\ \size{S_1} = \numParticles_1}} \dots \sum_{\substack{S_{\numTypes} \subseteq X:\\ \size{S_\numTypes} = \numParticles_\numTypes}} \ind{\substack{\forall i, j \in [\numTypes] \forall x_1 \in S_i, x_2 \in S_j:\\ (i=j \text{ and } x_1 = x_2)\\ \text{ or } \dist{x_1}{x_2} \ge \interactionMatrix[i][j]}} ~,
	\end{align*}
	where the condition $(i=j \text{ and } x_1 = x_2)$ accounts for the fact that $\hcPPGraph{X}$ does not contain self loops.
	We proceed by rewriting this in terms of tuples of points from $X$.
	To this end, for every $(\numParticles_1, \dots, \numParticles_{\numTypes}) \in \N^{\numTypes}$ define a canonical type assignment $\typeOf_{\numParticles_1, \dots, \numParticles_{\numTypes}}: [\numParticles_1 + \dots + \numParticles_{\numTypes}] \to [\numTypes]$ with $\typeOf_{\numParticles_1, \dots, \numParticles_{\numTypes}}(i) = j$ such that $\sum_{l=1}^{j-1} \numParticles_l < i \le \sum_{l=1}^{j} \numParticles_l$.
	For any $\numParticles \in \N$ let $X^{\numParticles}_{\neq} = \{(x_1, \dots, x_{\numParticles}) \mid \forall i, j \in [\numParticles]\colon i \neq j \text{ implies } x_i \neq x_j\}$.
	Further, for $\mathbold{x} = (x_1, \dots, x_{\numParticles_1}) \in X^{\numParticles_1}$ and $\mathbold{y} = (y_1, \dots, y_{\numParticles_2}) \in X^{\numParticles_2}$ write $\mathbold{x} \concatenation \mathbold{y}$ for the concatenation $(x_1, \dots, x_{\numParticles_1}, y_1, \dots, y_{\numParticles_2}) \in X^{\numParticles_1 + \numParticles_2}$ of both tuples.
	Observe that
	\begin{align*}
		\sum_{\substack{S_1 \subseteq X:\\ \size{S_1} = \numParticles_1}} \dots \sum_{\substack{S_{\numTypes} \subseteq X:\\ \size{S_\numTypes} = \numParticles_\numTypes}} \ind{\substack{\forall i, j \in [\numTypes] \forall x_1 \in S_i, x_2 \in S_j:\\ (i=j \text{ and } x_1 = x_2)\\ \text{ or } \dist{x_1}{x_2} \ge \interactionMatrix[i][j]}}
		&= \frac{1}{\numParticles_1 !} \sum_{\mathbold{x_1} \in X^{\numParticles_1}_{\neq}} \dots \frac{1}{\numParticles_{\numTypes} !} \sum_{\mathbold{x_{\numTypes}} \in X^{\numParticles_{\numTypes}}_{\neq}} \valid{\typeOf_{\numParticles_1, \dots, \numParticles_{\numTypes}}}{\interactionMatrix}[\mathbold{x_1} \concatenation \dots \concatenation \mathbold{x_{\numTypes}}] \\
		&\le \left(\prod_{i \in [\numTypes]} \frac{1}{\numParticles_{i}!}\right) \sum_{\mathbold{x} \in X^{\numParticles}} \valid{\typeOf_{\numParticles_1, \dots, \numParticles_{\numTypes}}}{\interactionMatrix}[\mathbold{x}] .
	\end{align*}
	Thus, we have
	\[
	\hcPartitionFunction{\hcPPGraph{X}}{\hcPPWeight{X}}
	= 1 + \sum_{\numParticles = 1}^{\numPoints} \frac{\vol{\volume}^{\numParticles}}{\numPoints^{\numParticles}} \sum_{\substack{(\numParticles_1, \dots, \numParticles_{\numTypes}) \in \N^{\numTypes}: \\ \numParticles_1 + \dots + \numParticles_{\numTypes} = \numParticles}} \left(\prod_{i \in [\numTypes]} \fugacity[i]^{\numParticles_i}\right) \cdot \left(\prod_{i \in [\numTypes]} \frac{1}{\numParticles_{i}!}\right) \sum_{\mathbold{x} \in X^{\numParticles}} \valid{\typeOf_{\numParticles_1, \dots, \numParticles_{\numTypes}}}{\interactionMatrix}[\mathbold{x}] ,
	\]
	which, by monotonicity and linearity of expectation yields
	\begin{align}
		\notag
		\E{\hcPartitionFunction{\hcPPGraph{X}}{\hcPPWeight{X}}}
		&\le 1 + \sum_{\numParticles = 1}^{\numPoints} \frac{\vol{\volume}^{\numParticles}}{\numPoints^{\numParticles}} \sum_{\substack{(\numParticles_1, \dots, \numParticles_{\numTypes}) \in \N^{\numTypes}: \\ \numParticles_1 + \dots + \numParticles_{\numTypes} = \numParticles}} \left(\prod_{i \in [\numTypes]} \fugacity[i]^{\numParticles_i}\right) \cdot \left(\prod_{i \in [\numTypes]} \frac{1}{\numParticles_{i}!}\right) \E{\sum_{\mathbold{x} \in X^{\numParticles}} \valid{\typeOf_{\numParticles_1, \dots, \numParticles_{\numTypes}}}{\interactionMatrix}[\mathbold{x}]} \\
		\label[ineq]{lemma:expectation:eq1}
		&\le 1 + \sum_{\numParticles \in \N_{\ge 1}} \frac{\vol{\volume}^{\numParticles}}{\numPoints^{\numParticles}} \sum_{\substack{(\numParticles_1, \dots, \numParticles_{\numTypes}) \in \N^{\numTypes}: \\ \numParticles_1 + \dots + \numParticles_{\numTypes} = \numParticles}} \left(\prod_{i \in [\numTypes]} \fugacity[i]^{\numParticles_i}\right) \cdot \left(\prod_{i \in [\numTypes]} \frac{1}{\numParticles_{i}!}\right) \E{\sum_{\mathbold{x} \in X^{\numParticles}} \valid{\typeOf_{\numParticles_1, \dots, \numParticles_{\numTypes}}}{\interactionMatrix}[\mathbold{x}]}
	\end{align}
	Let $x_1, \dots, x_{\numPoints}$ some arbitrary ordering of the points in $X$.
	Now, note that
	\[
	\sum_{\mathbold{x} \in X^{\numParticles}} \valid{\typeOf_{\numParticles_1, \dots, \numParticles_{\numTypes}}}{\interactionMatrix}[\mathbold{x}]
	= \sum_{(i_1, \dots, i_{\numParticles}) \in [\numPoints]^{\numParticles}} \valid{\typeOf_{\numParticles_1, \dots, \numParticles_{\numTypes}}}{\interactionMatrix}[x_{i_1}, \dots, x_{i_{\numParticles}}] .
	\]
	Thus, we get
	\begin{align*}
		\E{\sum_{\mathbold{x} \in X^{\numParticles}} \valid{\typeOf_{\numParticles_1, \dots, \numParticles_{\numTypes}}}{\interactionMatrix}[\mathbold{x}]}
		= \sum_{(i_1, \dots, i_{\numParticles}) \in [\numPoints]^{\numParticles}} \E{\valid{\typeOf_{\numParticles_1, \dots, \numParticles_{\numTypes}}}{\interactionMatrix}[x_{i_1}, \dots, x_{i_{\numParticles}}]}
		= \frac{\numPoints^{\numParticles}}{\vol{\volume}^{\numParticles}} \int_{\volume^{\numParticles}} \valid{\typeOf_{\numParticles_1, \dots, \numParticles_{\numTypes}}}{\interactionMatrix}[\mathbold{x}] \, \text{d} \lebesgue{\dimension \times \numParticles} .
	\end{align*}
	Substituting this back into \cref{lemma:expectation:eq1} and canceling some terms yields
	\[
	\E{\hcPartitionFunction{\hcPPGraph{X}}{\hcPPWeight{X}}} \le 1 + \sum_{\numParticles \in \N_{\ge 1}}  \sum_{\substack{(\numParticles_1, \dots, \numParticles_{\numTypes}) \in \N^{\numTypes}: \\ \numParticles_1 + \dots + \numParticles_{\numTypes} = \numParticles}} \left(\prod_{i \in [\numTypes]} \fugacity[i]^{\numParticles_i}\right) \cdot \left(\prod_{i \in [\numTypes]} \frac{1}{\numParticles_{i}!}\right)  \int_{\volume^{\numParticles}} \valid{\typeOf_{\numParticles_1, \dots, \numParticles_{\numTypes}}}{\interactionMatrix}[\mathbold{x}] \, \text{d} \lebesgue{\dimension \times \numParticles} .
	\]
	Finally, observe that, for each $\numParticles \in \N_{\ge 1}$ and all $(\numParticles_1, \dots, \numParticles_{\numTypes}) \in \N^{\numTypes}$ with $\numParticles_1 + \dots + \numParticles_{\numTypes} = \numParticles$, there a $\frac{\numParticles!}{\prod_{i \in [\numTypes]} \numParticles_{i} !}$ type assignments $\typeOf\colon [\numParticles] \to [\numTypes]$ that are equivalent to $\typeOf_{\numParticles_1, \dots, \numParticles_{\numTypes}}$ up to permutation of inputs (i.e., $\size{\typeOf^{-1} (i)} = \size{\typeOf^{-1}_{\numParticles_1, \dots, \numParticles_{\numTypes}}(i)}$ for each $i \in [\numTypes]$).
	Further, note that $\prod_{i \in [\numParticles]} \fugacity[\typeOf[i]] = \prod_{i \in [\numTypes]} \fugacity[i]^{\numParticles_i}$ and $\int_{\volume^{\numParticles}} \valid{\typeOf}{\interactionMatrix}[\mathbold{x}] \, \text{d} \lebesgue{\dimension \times \numParticles} = \int_{\volume^{\numParticles}} \valid{\typeOf_{\numParticles_1, \dots, \numParticles_{\numTypes}}}{\interactionMatrix}[\mathbold{x}] \, \text{d} \lebesgue{\dimension \times \numParticles}$ for each such $\typeOf$.
	Thus, it holds that
	\begin{align*}
		\E{\hcPartitionFunction{\hcPPGraph{X}}{\hcPPWeight{X}}} &\le 1 + \sum_{\numParticles \in \N_{\ge 1}} \frac{1}{\numParticles!}  \sum_{\typeOf: [\numParticles] \to [\numTypes]} \left(\prod_{i \in [\numParticles]} \fugacity[\typeOf[i]]\right)   \int_{\volume^{\numParticles}} \valid{\typeOf}{\interactionMatrix}[\mathbold{x}] \, \text{d} \lebesgue{\dimension \times \numParticles} \\
		&= \generalPartitionFunction{\volume}{\interactionMatrix}{\fugacity},
	\end{align*}
	proving the claim.
\end{proof}

We are now ready to state and prove our main theorem about concentration of random discretizations without star-convexity.
\concentrationnonstarconvex

\begin{proof}
	The main idea of the proof is to use \Cref{lemma:partitioning_allocation} and \Cref{lemma:deterministic_lower_bound} to argue that, for a uniformly drawn $X \subset \volume$ that is sufficiently large, the hard-core partition function $\hcPartitionFunction{\hcPPGraph{X}}{\hcPPWeight{X}}$ is with high probability not much smaller than the partition function $\generalPartitionFunction{\volume}{\interactionMatrix}{\fugacity}$ of the continuous model.
	Then we apply \Cref{lemma:modified_markov} to derive a probabilistic upper bound on $\hcPartitionFunction{\hcPPGraph{X}}{\hcPPWeight{X}}$
	
	Let $c = \frac{\allocationProb}{2}$, $\discretizationError' = \frac{\discretizationError}{c} = \frac{\allocationProb}{2} \discretizationError$ and $\allocationProb' = \frac{\discretizationError}{4} \allocationProb = \frac{\discretizationError'}{2} $.
	By \Cref{lemma:deterministic_lower_bound}, we know that there are $\distanceError', \volumeError' \in \bigTheta{\discretizationError' \vol{\volume}^{-1}} = \bigTheta{\discretizationError \allocationProb \vol{\volume}^{-1}}$ such that, for all $X \subset \volume$ with $\size{X} \ge 64 \numTypes \fugacity_{\max}^2 \vol{\volume}^2 \discretizationError'^{-1} = 128 \numTypes \fugacity_{\max}^2 \vol{\volume}^2 \discretizationError^{-1} \allocationProb^{-1}$ that have an $\volumeError$-$\distanceError$-allocation, it holds that
	\[
	\hcPartitionFunction{\hcPPGraph{X}}{\hcPPWeight{X}} \ge \left(1 - \discretizationError'\right) \generalPartitionFunction{\volume}{\interactionMatrix}{\fugacity} .
	\]
	
	Next, note that we assume that there is a $\volumeScaling$-$\distanceError$-partitioning of $\volume$ of size $\poly{\frac{\vol{\volume}}{\distanceError}}$ for some $\volumeScaling \in (0, 1]$ and all $\distanceError \in \R_{>0}$.
	Thus, by \Cref{lemma:partitioning_allocation} we conclude that there is some \[
	\numPoints'_{\discretizationError, \allocationProb}
	\in \poly{\frac{\vol{\volume}}{\distanceError' \volumeError'} \ln\left(\frac{1}{\allocationProb'}\right)}
	= \poly{\frac{\vol{\volume}}{\discretizationError \allocationProb}}
	\]
	such that for all $\numParticles \ge \volume$ it hold that a uniformly random $X \subset \volume$ with $\size{X} = \numPoints$ has a $\volumeError'$-$\distanceError'$-allocation with probability at least $1 - \allocationProb'$.
	
	Setting
	\[
	\numPoints_{\discretizationError, \allocationProb} = \max\{128 \numTypes \fugacity_{\max}^2 \vol{\volume}^2 \discretizationError^{-1} \allocationProb^{-1}, \numPoints'_{\discretizationError, \allocationProb}\} \in \poly{\frac{\vol{\volume}}{\discretizationError \allocationProb}}
	\]
	now ensures that for all $\numPoints \ge \numPoints_{\discretizationError, \allocationProb}$
	\[
	\hcPartitionFunction{\hcPPGraph{X}}{\hcPPWeight{X}} \ge \left(1 - \discretizationError'\right) \generalPartitionFunction{\volume}{\interactionMatrix}{\fugacity}
	\]
	with probability at least $1 - \allocationProb'$.
	Further, \Cref{lemma:expectation} implies that
	\[
	\Pr{\hcPartitionFunction{\hcPPGraph{X}}{\hcPPWeight{X}} < \left(1 - \discretizationError'\right) \E{\hcPartitionFunction{\hcPPGraph{X}}{\hcPPWeight{X}}}} \le \allocationProb'.
	\]
	
	By \Cref{lemma:modified_markov,lemma:expectation}, we obtain
	\begin{align*}
		\Pr{\hcPartitionFunction{\hcPPGraph{X}}{\hcPPWeight{X}} \ge (1 + \discretizationError) \generalPartitionFunction{\volume}{\interactionMatrix}{\fugacity}}
		&\le
		\Pr{\hcPartitionFunction{\hcPPGraph{X}}{\hcPPWeight{X}} \ge (1 + \discretizationError) \E{\hcPartitionFunction{\hcPPGraph{X}}{\hcPPWeight{X}}}} \\
		&= \Pr{\hcPartitionFunction{\hcPPGraph{X}}{\hcPPWeight{X}} \ge (1 + c \discretizationError') \E{\hcPartitionFunction{\hcPPGraph{X}}{\hcPPWeight{X}}}} \\
		&\le \frac{1}{c + 1} \left(1 + \frac{\allocationProb' (1-\discretizationError')}{\discretizationError'}\right) \\
		&\le \frac{1}{c+1} \left(1 + \frac{1}{2}\right) \\
		&\le \frac{3}{4} \allocationProb .
	\end{align*}
	Using the union bound on the failure probabilities, we obtain
	\[
	\left(1 - \discretizationError'\right) \generalPartitionFunction{\volume}{\interactionMatrix}{\fugacity} \le
	\hcPartitionFunction{\hcPPGraph{X}}{\hcPPWeight{X}} \le (1 + \discretizationError) \generalPartitionFunction{\volume}{\interactionMatrix}{\fugacity}
	\]
	with probability at least $1 - \allocationProb' - \frac{3}{4} \allocationProb = 1 - \frac{\discretizationError}{4} \allocationProb - \frac{3}{4} \allocationProb \ge 1 - \allocationProb$.
	Finally, observing that $1 + \discretizationError \le \eulerE^{\discretizationError}$ and $1 - \discretizationError' = 1 - \frac{\allocationProb}{2} \discretizationError \ge \eulerE^{- \discretizationError}$ concludes the proof.
\end{proof}

\begin{remark}
	Note that the concentration result without star-convexity, given in \Cref{thm:prob_bound_non_starconvex}, has a worse dependency on the desired error probability $\allocationProb$ than that obtained with star-convexity in \Cref{cor:random_discretization_concentration}.
	This is due to the application of \Cref{lemma:modified_markov}.
	However, the dependency is still sufficient to construct a fully polynomial randomized approximation scheme (FPRAS) for $\generalPartitionFunction{\volume}{\interactionMatrix}{\fugacity}$.
\end{remark}

\section*{Acknowledgments}
Andreas Göbel was funded by the project PAGES (project No. 467516565) of the German Research Foundation (DFG).
This project has received funding from the European Union's Horizon 2020 research and innovation program under the Marie Skłodowska-Curie grant agreement No. 945298-ParisRegionFP.
This research was partially funded by the HPI Research School on Data Science and Engineering.

\printbibliography

\end{document}